\newtheorem{theorem}{Theorem}[section]
\newtheorem{lemma}[theorem]{Lemma}
\newtheorem{proposition}[theorem]{Proposition}
\newtheorem{definition}[theorem]{Definition}
\newtheorem{result}{Result}
\newtheorem{remark}[theorem]{Remark}
\let\oldremark\remark
\renewcommand{\remark}{\oldremark\upshape}
\newtheorem{corollary}[theorem]{Corollary}
\newcommand{\bigket}[1]{\Big| #1 \Big\rangle}
\newcommand{\bigbra}[1]{\Big \langle #1 \Big |}
\def \d {\mathrm{d}} 
\newcommand{\D}{\mathcal{D}}
\newcommand{\N}{\mathcal{N}}
\newcommand{\M}{\mathcal{M}}
\newcommand{\widebar}[1]{\overline{\mkern-4mu#1\mkern-1mu}}
\DeclareMathOperator*{\opt}{opt}
\DeclareMathOperator*{\Trm}{Tr}
\DeclareMathOperator*{\Op}{Op \!\!\,}
\begin{document}

\title[Quantum conditional entropies]{\Large Quantum conditional entropies from convex trace functionals}

\begin{abstract}
    We study geometric properties of trace functionals that generalize those in [Zhang, Adv. Math. 365:107053 (2020)], arising from a novel family of conditional entropies with applications in quantum information theory. Building on new convexity results for these functionals, we establish data-processing inequalities and additivity properties for our entropies, demonstrating their operational significance. We further prove completeness under duality, chain rules, and various monotonicity properties for this family. Our proofs draw on tools from complex interpolation theory, multivariate Araki--Lieb--Thirring inequalities, variational characterizations of trace functionals, and spectral pinching techniques.

\end{abstract}

\author{Roberto Rubboli}
\email{roberto.rubboli@u.nus.edu}
\affiliation{Centre for Quantum Technologies, National University of Singapore, Singapore 117543, Singapore}
\affiliation{Department of Mathematical Sciences, University of Copenhagen, Universitetsparken 5, 2100 Denmark}

\author{Milad M. Goodarzi}
\affiliation{Centre for Quantum Technologies, National University of Singapore, Singapore 117543, Singapore}

\author{Marco Tomamichel}
\affiliation{Department of Electrical and Computer Engineering,
National University of Singapore, Singapore 117583, Singapore}
\affiliation{Centre for Quantum Technologies, National University of Singapore, Singapore 117543, Singapore}

\maketitle


\section{Introduction}
Quantum conditional entropies quantify the uncertainty about the state of a quantum system $A$ from the perspective of an observer with access to a potentially correlated quantum system $B$, often referred to as side information. These measures of conditional uncertainty play a central role in quantum information theory, with applications ranging from quantum cryptography~\cite{pirandola20_review} and quantum Shannon theory~\cite{wilde13_book, hayashi17_book, holevo12_book} to resource theories~\cite{chitambar19_review}.
While von Neumann entropy often takes a prominent role in such applications, other notions of conditional entropy are necessary to give a more fine-grained picture when we move beyond asymptotic scenarios. In cryptography, conditional min-entropy quantifies how much secret key can be extracted from a random variable $A$ when an eavesdropper has access to correlated quantum side information $B$ that may assist in estimating the secret key~\cite{bennett95_privacyamp}. A larger family of quantum R\'enyi entropies was employed recently in~\cite{arqand2025generalized} to give the tightest known security analysis for quantum key distribution.
Applications of general conditional entropies in information theory include recent applications in the analysis of error exponents for source compression and channel coding~\cite{renes2025_reliability,li2025_reliability}.

Most prominent notions of quantum conditional entropy in the literature arise from an underlying quantum R\'enyi divergence and admit two canonical constructions. To make the discussion concrete, consider the family of $\alpha$-$z$ relative entropies~\cite{jaksic2012entropic,audenaert13_alphaz}\footnote{In~\cite{jaksic2012entropic}, this quantity was introduced in the context of quantum dynamical systems and entropic fluctuations in non-equilibrium statistical mechanics, while in~\cite{audenaert13_alphaz}, it was studied from an information-theoretic perspective.} (see Definition~\ref{alpha-z relative entropy}) that unifies the most prominent quantum divergences. The two canonical conditional entropies of $A$ conditioned on $B$ are then defined as follows:
\begin{equation}\label{dualdefinition}
H_{\alpha,z}^\downarrow(A|B)_\rho = -D_{\alpha,z}(\rho_{AB}\|I_A \otimes \rho_B), \qquad H_{\alpha,z}^\uparrow(A|B)_\rho = -\inf_{\sigma_B}D_{\alpha,z}(\rho_{AB}\|I_A \otimes \sigma_B),
\end{equation}
where the infimum is taken over all quantum states $\sigma_B$ on system $B$ and $\rho_B$ denotes the marginal of $\rho_{AB}$ on $B$. In particular, these include the two subfamilies of Petz-type entropies~\cite{tomamichel08_aep} for $z = 1$, as well as the sandwiched-type entropies~\cite{lennert13_renyi} for $z = \alpha$. 

In this work, we go beyond this framework and define a three-parameter family of entropies
\begin{equation}
\label{eq:hazlambda}
    H_{\alpha,z}^\lambda(A|B)_\rho = \frac{1}{1-\alpha} \opt_{\sigma_B} \log \Trm\left(\rho_{AB}^\frac{\alpha}{2z} \left(I_A \otimes \rho_B^{\frac{(1-\lambda)(1-\alpha)}{2z}}\sigma_B^{\frac{\lambda(1-\alpha)}{z}} \rho_B^{\frac{(1-\lambda)(1-\alpha)}{2z}}\right) \rho_{AB}^\frac{\alpha}{2z}\right)^z,
\end{equation}
where $\opt$ denotes an infimum (resp., supremum) when the exponent of $\sigma_B$ is negative (resp., non-negative). This definition recovers the entropies in~\eqref{dualdefinition} for $\lambda \in \{0,1\}$ and can therefore be viewed as an interpolation (or extrapolation) between these two cases. (Its precise definition and connection to previously studied quantities are discussed in Section~\ref{sec:notation} and illustrated in Figure~\ref{tree of conditional entropies}.)

Our investigation is motivated by both structural and operational considerations. On the structural side, the previously studied Petz-type and sandwiched-type conditional R\'enyi entropies arise from distinct constructions and parameter regimes, yet they are connected via entropic duality relations~\cite{tomamichel13_duality}, suggesting the presence of a common underlying structure that our framework reveals. From this viewpoint, our new conditional entropies provide a natural complement to the previously studied families, completing the picture and unifying several arguments. On the other hand, in the commutative setting, it was already known that certain families of conditional entropies with operational meaning (see, e.g.,~\cite{hayashi2016equivocations}) do not arise from the canonical constructions. Our framework recovers these as commutative special cases. See Section~\ref{sec:classical} for further discussion.

\medskip
\paragraph*{Note added:} After the electronic preprint of the present work appeared, the new conditional entropies were already used to characterize strong converse exponents for randomness extraction, both in the commutative~\cite{li2025two} and in the general quantum~\cite{rubboli2026strong} setting. Additional applications of our quantities can therefore be expected, further motivating their study here.

\medskip
The goal of this work is to establish several fundamental mathematical properties of the entropies in~\eqref{eq:hazlambda}, including the data-processing inequality (DPI), additivity under tensor products, duality, and chain rules, which are essential for their use in quantum information theory. These results are summarized in Section~\ref{sec:results}. We then discuss the relationship between properties of entropic quantities and geometric properties of certain trace functionals in Section~\ref{Multivariate trace functionals}.

\subsection{Main results}
\label{sec:results}
The following results hold for all parameter triples $(\alpha,z,\lambda)$ in the DPI region of $H_{\alpha,z}^\lambda$, denoted $\mathcal{D}$ (cf.~Definition~\ref{def:DPIregion} and Figure~\ref{Figure alpha,z,lambda}), consisting of the union of the two sets $\mathcal{D}_1 = \{0 < \alpha < 1, \ 1-z \leq \alpha \leq  z, \ 1-\frac{z}{1-\alpha} \leq \lambda \leq 1 \}$ and $\mathcal{D}_2 = \{1<\alpha<\infty, \ \alpha-1 \leq  z \leq \alpha \leq 2z, \ 1+\frac{z}{1-\alpha} \leq \lambda \leq 1 \}$. With this notation, we can state our main results.

\medskip
\paragraph*{Data-processing inequality:}
A fundamental requirement for conditional entropies as measures of conditional uncertainty is monotonicity under bipartite channels of the form $\mathcal{M} \otimes \mathcal{N}$, where $\mathcal{M}$ is a convex combination of unitary channels acting on system~$A$ (a mixing channel), and $\mathcal{N}$ is an arbitrary channel acting on system~$B$. This is necessary to be consistent with the interpretation as a measure of conditional uncertainty: uncertainty cannot decrease when mixing $A$ or acting on the side information $B$. In addition, one must allow for the possibility that the channel on~$A$ depends on the outcome of a measurement performed on~$B$. The reader is referred to Section~\ref{sec:DPI} for a detailed discussion of the operational significance of these requirements.

\begin{figure}
\centering

\begin{subfigure}[t]{0.3\textwidth}
\centering
\begin{tikzpicture}
\node[inner sep=0] (img1)
{\includegraphics[width=0.9\linewidth]{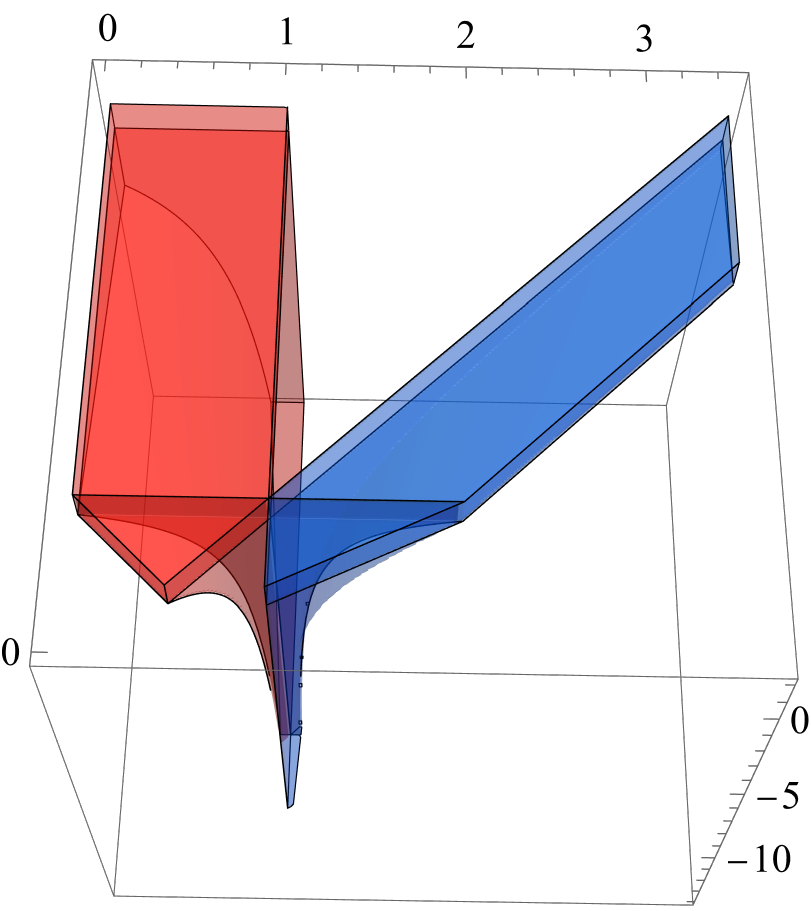}};

\node[anchor=north] at ([yshift=0pt]img1.south) {\ensuremath{\alpha}};

\node[anchor=east, rotate=0] at ([xshift=10pt,yshift=22pt]img1.west) { $z$};

\node[anchor=west, rotate=0] at ([xshift=-3pt,yshift=-60pt]img1.east) { $\lambda$};

\end{tikzpicture}
\end{subfigure}
\hspace{0.1\textwidth}
\begin{subfigure}[t]{0.4\textwidth}
\centering
\begin{tikzpicture}
\node[inner sep=0] (img2)
{\includegraphics[width=.7\linewidth]{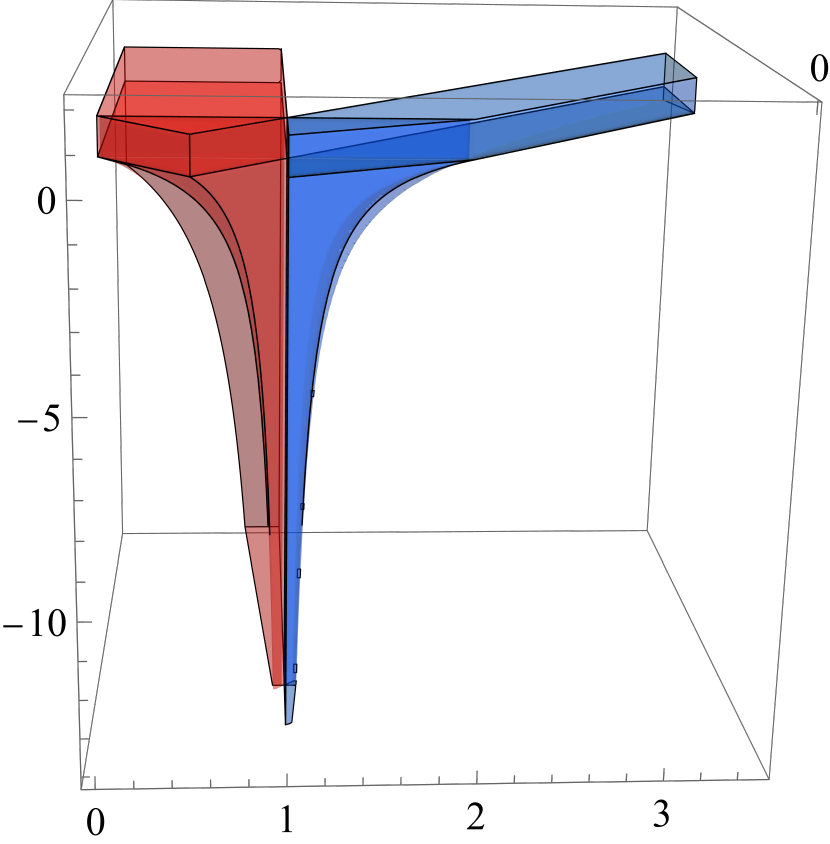}};

\node[anchor=north] at ([yshift=7pt]img2.south) {\ensuremath{\alpha}};

\node[anchor=east, rotate=0] at ([xshift=10pt,yshift=60pt]img2.west) {\ensuremath{z}};

\node[anchor=west, rotate=0] at ([xshift=-5pt,yshift=0pt]img2.east) {\ensuremath{\lambda}};

\end{tikzpicture}
\end{subfigure}

\caption{The figure illustrates the DPI region $\mathcal{D}$ in the parameter space $(\alpha, z, \lambda)$. (See Definition~\ref{def:DPIregion}.) The red region corresponds to $\mathcal{D}_1$, while the blue region corresponds to $\mathcal{D}_2$. The cross-section of $\mathcal{D}$ at $\lambda = t$ is independent of $t$ for $t \in [0,1]$ and corresponds to the DPI region of $D_{\alpha,z}$.}
\label{Figure alpha,z,lambda}
\end{figure}

\begin{result}[cf.\ Theorem~\ref{DPItheorem}]
Let $(\alpha,z,\lambda) \in \D$, and let $\rho_{AB}$ be a quantum state and
$\sigma_{A'B'} = \sum_i \mathcal{M}^i_{A\rightarrow A'}\otimes \mathcal{N}^i_{B\rightarrow B'}(\rho_{AB})$,
where each $\mathcal{M}^i_{A\rightarrow A'}$ is a sub-unital channel and $\{\mathcal{N}^i_{B\rightarrow B'}\}_i$ is a quantum instrument. Then,
\begin{equation}
    H^{\lambda}_{\alpha, z}(A|B)_{\rho} \leq H^{\lambda}_{\alpha, z}(A'|B')_{\sigma}.
\end{equation}

\end{result}

We remark that most studies in the existing literature establish the DPI for conditional entropies in canonical form by leveraging the DPI of the underlying divergence. However, this approach is not applicable in our case, as the underlying trivariate trace functional fails to satisfy monotonicity under quantum channels for the relevant parameter ranges (see Remark~\ref{counterexample}). To overcome this difficulty, our strategy is to first establish the convexity or concavity of the function
$\rho \mapsto \exp\big((1-\alpha) H^\lambda_{\alpha,z}(A|B)_{\rho}\big)$ in certain parameter ranges using Theorem \ref{log-concavity2}, and then apply standard arguments to promote this property to monotonicity under quantum channels.

\medskip
\paragraph*{Additivity:}
A key property is additivity with respect to tensor products. We prove
\begin{result}[cf.~Theorem~\ref{Additivity}]
Let $(\alpha,z,\lambda) \in \D$, and let $\rho_{AB}$ and $\sigma_{A'B'}$ be two quantum states. Then,
    \begin{equation}\label{additivityintro}
		H^\lambda_{\alpha,z}(AA'|BB')_{\rho \otimes \sigma} = H^\lambda_{\alpha,z}(A|B)_{\rho} + H^\lambda_{\alpha,z}(A'|B')_{\sigma}.
\end{equation}
\end{result}
To prove this, we first characterize the optimizers $\sigma_B$ using a fixed-point equation (cf.\ Theorem~\ref{fixed-point}), which we can then leverage to show additivity.

\medskip
\paragraph*{Duality:}

Another feature of conditional entropies is their duality relations.
 Let $\rho_{ABC}$ be a pure quantum state with marginals $\rho_{AB}$ and $\rho_{AC}$. The following relations hold~\cite{tomamichel08_aep,lennert13_renyi,tomamichel13_duality}:
\begin{align}
 	\label{relation 1}
 	\widetilde{H}^\uparrow_{\alpha}(A|B)_\rho + \widetilde{H}^\uparrow_{\hat{\alpha}}(A|C)_\rho &= 0, \quad \alpha^{-1} + \hat{\alpha}^{-1} = 2;\\ 
 	\label{relation 2}
 	\widebar{H}^\downarrow_{\alpha}(A|B)_\rho + \widebar{H}^\downarrow_{\hat{\alpha}}(A|C)_\rho &= 0, \quad \alpha + \hat{\alpha} = 2;\\ 
 	\label{relation 3}
 	\widetilde{H}^\downarrow_{\alpha}(A|B)_\rho + \widebar{H}^\uparrow_{\hat{\alpha}}(A|C)_\rho &= 0, \quad \alpha \hat{\alpha} = 1.
\end{align}
Duality relations have found many applications in cryptography as they relate the relative knowledge two parties can have about a quantum system.
Such relations can be used to define a dual to any conditional entropy, and we may ask if a particular family of conditional entropies is complete under duality, that is, if the duality relation is bijective. It is easy to see that the entropies $H_{\alpha,z}$ defined in \eqref{dualdefinition} are not complete. On the other hand, we show that this is the case for our entropies.

\begin{result}[cf.~Theorem \ref{Duality}]
Let $(\alpha,z,\lambda),(\hat{\alpha},\hat{z},\hat{\lambda}) \in \D$, and let $\rho_{ABC}$ be a pure quantum state. Then 
\begin{equation}\label{duality1intro}
    H^{\lambda}_{\alpha,z}(A|B)_\rho + H^{\hat{\lambda}}_{\hat{\alpha},\hat{z}}(A|C)_\rho = 0,
\end{equation}
whenever the parameters satisfy
\begin{equation}
    \frac{z}{1-\alpha}+\frac{\hat{z}}{1-\hat{\alpha}}=0, \qquad \frac{1-z}{1-\alpha}=\hat{\lambda}, \qquad \frac{1-\hat{z}}{1-\hat{\alpha}}=\lambda.
\end{equation}
\end{result}
    
This result has several implications for quantum conditional entropy. Computing the dual of a given conditional entropy is generally difficult, and the duals of $H_{\alpha,z}^{\uparrow}$ and $H_{\alpha,z}^{\downarrow}$ for general $z$ have remained unknown. We provide a resolution to this problem: If $(\alpha,z,0) \in \mathcal{D}$ (or equivalently, if $(\alpha,z,1) \in \mathcal{D}$), then
\begin{align}
&H_{\alpha,z}^{\uparrow}(A|B)_{\rho} + H^{\hat{\lambda}}_{\hat{\alpha},\hat{\alpha}}(A|C)_{\rho} = 0, \qquad \text{where} \qquad \hat{\alpha}=\frac{z}{z+\alpha-1},\qquad \hat{\lambda}=\frac{z-1}{\alpha-1};\\
&H_{\alpha,z}^{\downarrow}(A|B)_{\rho} + H^{\hat{\lambda}}_{\hat{\alpha},1}(A|C)_{\rho} = 0, \qquad \text{where} \qquad \hat{\alpha}=\frac{z-\alpha+1}{z},\qquad \hat{\lambda}=\frac{z-1}{\alpha-1}.
\end{align}
This result surprisingly reveals that the duals of $H_{\alpha,z}^{\uparrow}$ and $H_{\alpha,z}^{\downarrow}$ do not admit the canonical forms given in \eqref{dualdefinition} as $\hat{\lambda} \not\in \{0, 1\}$, in stark contrast to the established dualities for $\widebar{H}_\alpha$ and $\widetilde{H}_\alpha$. This implies that not all conditional entropies arise in canonical forms from an underlying divergence and resolves an open question posed in~\cite{ji2025fundamental}.

\medskip
\paragraph*{Chain rules:}
Chain rules reflect the compositional structure of multipartite systems. They generalize the equality $H(AB|C) = H(A|BC) + H(B|C)$ for the von Neumann entropy to R\'enyi entropies; however, in the latter case, only inequalities can be shown.
\begin{result}[cf.~Theorem~\ref{Final chain rule}]
    Let $(\alpha,z,\lambda), (\beta,w,\mu), (\gamma,v,\lambda) \in \D$, and let $\rho_{ABC}$ be a tripartite quantum state. Suppose the parameters satisfy the relations
	\begin{equation}\label{conditions intro}
		\frac{z}{1-\alpha}=\frac{w}{1-\beta}+\frac{v}{1-\gamma}, \qquad \frac{1-z}{1-\alpha} = \frac{1-w}{1-\beta}, \qquad \mu=\frac{1-v}{1-\gamma}.
	\end{equation}
	If $(\alpha-1)(\beta-1)(\gamma-1) > 0$, the following chain rule holds:
	\begin{equation}
		H^\lambda_{\alpha,z}(AB|C)_\rho\geq	H_{\beta,w}^\mu(A|BC)_\rho+	H^\lambda_{\gamma,v}(B|C)_\rho.
	\end{equation}
	The inequality is reversed when $(\alpha-1)(\beta-1)(\gamma-1)<0$.
\end{result}
The chain rules for $\widetilde{H}_\alpha^\uparrow$ and $\widetilde{H}_\alpha^\downarrow$ were first established in~\cite{dupuis2015chain} (see also \cite{beigi2023operator} for an alternative proof). In contrast, chain rules for $\widebar{H}_\alpha^\uparrow$ and $\widebar{H}_\alpha^\downarrow$, though desirable, have remained unknown. The general relation reveals several new chain rules that involve both entropies as special cases. If $(\alpha-1)(\beta-1)(\gamma-1) > 0$, we have the following new relations:
\begin{align}
		&\widebar{H}^\downarrow_{\alpha}(AB|C)_{\rho}\geq	\widebar{H}^\downarrow_{\beta}(A|BC)_{\rho}+	\widebar{H}^\downarrow_{\gamma}(B|C)_{\rho}, \ \ \ \text{when} \ \ \ \frac{1}{1-\alpha}=\frac{1}{1-\beta}+\frac{1}{1-\gamma};\\
	\label{eq:intro/petzchainrule}
		&\widebar{H}^\uparrow_{\alpha}(AB|C)_{\rho}\geq	\widebar{H}^\uparrow_{\beta}(A|BC)_{\rho}+	\widetilde{H}^\uparrow_{\gamma}(B|C)_{\rho}, \ \ \ \text{when} \ \ \ \frac{1}{1-\alpha}=\frac{1}{1-\beta}+\frac{\gamma}{1-\gamma};\\
        &\widetilde{H}^\downarrow_{\alpha}(AB|C)_{\rho}\geq	\widetilde{H}^\downarrow_{\beta}(A|BC)_{\rho}+	\widebar{H}^\downarrow_{\gamma}(B|C)_{\rho}, \ \ \ \text{when}  \ \ \ \frac{\alpha}{1-\alpha}=\frac{\beta}{1-\beta}+\frac{1}{1-\gamma}.
\end{align}
In each of the preceding inequalities, the arrows on the entropies indexed by $\alpha$ and $\gamma$ may simultaneously be reversed under the same parameter relation. The inequalities themselves are reversed when $(\alpha - 1)(\beta - 1)(\gamma - 1) < 0$ (see Corollaries~\ref{col:chain1} and~\ref{col:chain2}).

\subsection{Relation to multivariate trace functionals}\label{Multivariate trace functionals}

As will become clear below, the analysis of the three-parameter family $H_{\alpha,z}^\lambda$ is closely related to structural properties of certain optimized trace functionals. We will briefly overview this connection and its history here.

\smallskip
Let $p$, $q$, and $s$ be real parameters, and let $K$ be an arbitrary but fixed matrix. Consider trace functionals of the form
\begin{equation}\label{maineq}
\Psi_{p,q,s}(A, B) = \Tr \left(A^{\frac{p}{2}} K B^q K^\dagger A^{\frac{p}{2}}\right)^s,
\end{equation}
where $A$ and $B$ are positive semidefinite matrices. These trace functionals arise in the formulation of various entropic quantities in mathematical physics and quantum information, often in reparametrized or disguised forms. Their geometric properties have inspired the development of numerous methods and techniques in matrix analysis.

For instance, the celebrated Lieb concavity theorem \cite{lieb1973convex} asserts that the map $(A, B) \mapsto \Psi_{p,q,1}(A, B)$ is jointly concave whenever $p, q \geq 0$ and $p + q \leq 1$. Lieb's proof relies on an analytic extension of the functional and the Hadamard three-line theorem. As an application, he resolved the Wigner--Yanase--Dyson conjecture \cite{wigner1963information,lieb1973convex}, and, in collaboration with Ruskai ~\cite{lieb73subadd}, established the strong subadditivity of quantum entropy. Ando~\cite{ando79convex} later gave an alternative proof based on integral representations of operator convex and concave functions, and also showed joint convexity when $-1 \leq p \leq 0$ and $p + q = 1$.

Motivated by the introduction of sandwiched Rényi divergences in quantum information~\cite{lennert13_renyi,Wilde3}, further special cases with $s \neq 1$ were explored by Frank and Lieb~\cite{frank13_sandwiched}, using variational characterizations, and by Beigi~\cite{beigi13_sandwiched}, employing complex interpolation techniques. Extending these results, Zhang~\cite{zhang20_alphaz} confirmed a conjecture in~\cite{carlen18_alphaz} and fully characterized the parameter regimes for which the map $(A, B)\mapsto \Psi_{p,q,s}(A, B)$ is jointly convex or concave. Moreover, observe that $D_{\alpha,z}(\rho \| \sigma) = \frac{1}{\alpha-1}\log{\Psi_{p,q,s}(\rho, \sigma)}$, where $K=I$, and $p=\frac{\alpha}{z}, q=\frac{1-\alpha}{z}, s=z$. Using his joint convexity/concavity characterization of $\Psi$, Zhang determined the exact parameter ranges for which $D_{\alpha,z}$ is monotone under completely positive trace-preserving (CPTP) maps.

In a closely related but distinct direction, Carlen, Frank, and Lieb~\cite{carlen2016some} also studied trivariate trace functionals of the form
\begin{equation}\label{eq:trivariate}
\Phi_{p,q,r,s}(A, B, C) = \Tr \left(A^{\frac{p}{2}} K B^{\frac{q}{2}} C^r B^{\frac{q}{2}} K^\dagger A^{\frac{p}{2}}\right)^s,
\end{equation}
where $A$, $B$, and $C$ are positive semidefinite matrices. They showed in \cite[Corollary 3.3]{carlen2016some} that when $K=I$ and $p,q,r \neq 0$, the trivariate functional $(A, B, C) \mapsto \Phi_{p,q,r,1}(A, B, C)$ is never jointly concave, and is jointly convex only under the conditions $q = 2$, $p, r < 0$, and $-1 \leq p + r \leq 0$.

These pathological results for $\Phi$ suggest that, although $\Phi$ and $\Psi$ are formally similar, their behavior with respect to joint convexity/concavity is quite different.
Nonetheless, as we shall see (see Corollary~\ref{Psi-Phi} and Remark~\ref{Remark:PsiPhi}), $\Psi$ can be formulated as a variational expression involving $\Phi$, namely, for $t > s > 0$, we may write
\begin{equation}
\label{equivalence optimization}
        \Psi_{p,q,t}(A,B) = \opt_C \Phi_{p,q,\frac{1}{s}-\frac{1}{t},s}^{\frac{t}{s}}(A,B,C),
\end{equation}
where the optimization, denoted by $\opt$, is taken to be a supremum or an infimum over density matrices (i.e., positive matrices with unit trace), depending on whether the problem is concave or convex in $C$. This indicates that $\Phi$---despite its pathological behavior as a trivariate functional---exhibits desirable convexity properties once optimized over $C$.

Motivated by the applications considered in this paper, namely, the formulation and analysis of quantum conditional entropies, we adopt a more general framework. Specifically, we examine a setting in which the trace functional $\Phi$ acts on matrices defined on finite-dimensional bipartite systems, with the optimization carried out locally on only one of the subsystems, i.e., we study the functional
$(A,B) \mapsto \opt_{C} \Phi_{p,q,r,s}(A,I \otimes B, I \otimes C)$,
where $C$ is a density matrix on the second subsystem. In contrast to~\eqref{equivalence optimization}, when both subsystems are non-trivial, the optimization over $C$ no longer admits a closed-form solution. In this more general setting, we establish convexity/concavity properties for this new bivariate trace functional.
Our conditional entropies can then be expressed in terms of $\Phi$ as 
\begin{equation}\label{eq:intro-cond}
\exp\big( (1-\alpha) H_{\alpha,z}^\lambda(A|B)_\rho \big) = \opt_{\sigma_B} \Phi_{ \frac{\alpha}{z},\frac{\bar{\lambda}\bar{\alpha}}{z},\frac{\lambda \bar{\alpha}}{z},z}(\rho_{AB}, I_A \otimes \rho_B, I_A \otimes \sigma_B),
\end{equation}
where $\bar{\alpha} = 1-\alpha$ and $\bar{\lambda} = 1-\lambda$. Hence, several properties of the conditional entropies are connected to those of the trace functionals.

\medskip
\textbf{Layout of the paper.} In Section~\ref{sec:notation}, we formally define our new family of quantum conditional entropies, and explore its relationship with related quantities in the literature. In Section~\ref{Main technical results}, we formulate and prove our results concerning the convexity properties of trivariate trace functionals. Section~\ref{sec:additivity} examines properties of the optimizer and proves additivity of the entropy for tensor product states. In Section~\ref{sec:DPI}, we prove the data processing inequality over the relevant parameter ranges. In Section~\ref{sec:duality}, we establish the relevant duality relations. Section~\ref{sec:reparametrization} gives a reparameterization that allows us to define the DPI region and duality using linear equations. Section~\ref{sec:characterization} presents alternative formulations of $H^\lambda_{\alpha,z}$ as an asymptotic limit, which are essential for further analysis. In Section~\ref{sec:chainrules}, we derive a general chain rule, and in Section~\ref{sec:mono}, we investigate monotonicity with respect to the parameters. Section~\ref{sec:classical} treats the commutative case, and Section~\ref{section limits} considers pointwise limits of our entropies.
We conclude in Section~\ref{sec:conclusion} with a discussion of open problems.

\section{A new concept of conditional entropy} \label{sec:notation}

In this section, we formally define the new conditional entropy that forms the central focus of this work, and explore its relationship with related quantities in the literature.

\medskip
Throughout, $A$ and $B$ are finite-dimensional systems of possibly different dimensions. We denote by $\mathcal{L}(A,B)$ the set of all matrices acting from $A$ to $B$, and write $\mathcal{L}(A) := \mathcal{L}(A,A)$. We denote by $\mathcal{P}(A)$ the set of positive semidefinite matrices on $A$, and by $\mathcal{S}(A) \subset \mathcal{P}(A)$ the subset of unit-trace matrices, representing quantum states. We use the terms \textit{density matrix} and \textit{quantum state} interchangeably. For a bipartite matrix $X_{AB}$, we define the marginals $X_A = \Trm_B(X_{AB})$ and $X_B = \Trm_A(X_{AB})$ via partial trace over the respective subsystems. We adopt the convention that identity matrices are inserted implicitly in tensor products. In particular, for matrices $X_{AB}$ and $Y_B$, the product $X_{AB}Y_B$ is shorthand for $X_{AB}(I_A \otimes Y_B)$, where $I_A$ denotes the identity on $A$.

\subsection{A three-parameter family}

In the following definition, we define the negative powers in the sense of generalized inverses; i.e., when we take the negative powers of positive operators, we simply ignore the kernel. Moreover, $\rho^0$ equals the projector on the support of $\rho$.
\begin{definition}
\label{trivariate}
    Let $\alpha \in (0,1) \cup (1,\infty)$, $z \in (0,\infty)$, $\lambda \in \mathbb{R}$. Let $\rho \in \mathcal{S}(A)$ and $\tau, \sigma \in \mathcal{P}(A)$ such that $\rho \ll \tau$. If $(\lambda(1-\alpha)\geq 0 \wedge \tau \not \perp \sigma) \vee \rho \ll \sigma$, then 
\begin{equation}
\Upsilon_{\alpha,z}^\lambda(\rho, \tau,\sigma):=
\frac{1}{\alpha-1}\log{\Trm\Big(\rho^\frac{\alpha}{2z}\tau^{\frac{(1-\lambda)(1-\alpha)}{2z}}\sigma^{\frac{\lambda(1-\alpha)}{z}}\tau^{\frac{(1-\lambda)(1-\alpha)}{2z}}\rho^\frac{\alpha}{2z}\Big)^z}\,.
\end{equation}
In all other cases, $\Upsilon_{\alpha,z}^\lambda(\rho, \tau,\sigma) = + \infty$.
\end{definition}
With this in mind, and noting that $\rho_{AB} \ll I_A \otimes \rho_B$, we proceed to define the new conditional entropy as follows 
\begin{definition}
\label{new entropy}
Let $\alpha \in (0,1) \cup (1,\infty)$, $z \in (0,\infty)$, and $\lambda \in \mathbb{R}$. For a bipartite quantum state \(\rho_{AB}\), we define the following quantity:
\begin{equation}
	H^\lambda_{\alpha,z}(A|B)_{\rho} := 
    \begin{cases}
\sup_{\sigma_B} - \Upsilon^\lambda_{\alpha,z}\big(\rho_{AB},I_A \otimes \rho_B, I_A \otimes \sigma_B\big) & \lambda > 0 \\
\inf_{\sigma_B} -\Upsilon^\lambda_{\alpha,z}\big(\rho_{AB},I_A \otimes \rho_B, I_A \otimes \sigma_B\big) & \lambda < 0 
    \end{cases}
\end{equation}
For $\lambda = 0$, the optimization is taken to be a supremum if $\alpha<1$ and an infimum if $\alpha>1$.
\end{definition}
In the following, for compactness, we often denote the optimization with ``$\opt$" as
\begin{equation}
       H^\lambda_{\alpha,z}(A|B)_{\rho} =\opt_{\sigma_B}- \Upsilon^\lambda_{\alpha,z}\big(\rho_{AB},I_A \otimes \rho_B, I_A \otimes \sigma_B\big).
\end{equation}
Here, the optimization denoted by ``$\opt$" is defined as the supremum over the set of all quantum states \(\sigma_B\) when \(\lambda > 0\), and as the infimum when \(\lambda < 0\). The choice between infimum and supremum depends on the sign of the exponents in the states being optimized: if the optimization is concave, it represents a supremum; otherwise, it represents an infimum.

\begin{remark}
    Note that, based on a standard argument for lower/upper semicontinuity, in the region $\D$, the optimization over the states $\sigma_B$ in Definition~\ref{new entropy} is always achieved, allowing the infimum and supremum to be replaced by the minimum and maximum, respectively. We refer to Appendix~\ref{lower semicontinuity} for further details. Moreover, throughout the paper, when proving properties of $H_{\alpha,z}^\lambda(A|B)_\rho$, we typically restrict attention to full-rank states $\rho$. This entails no loss of generality, since the extension to arbitrary states follows from the continuity of $H_{\alpha,z}^\lambda(A|B)_\rho$ in $\rho$ (see Proposition~\ref{Continuity}).
\end{remark}
Finally, we introduce some definitions that will be used in the sequel.
We define
\begin{equation}\label{Q}
Q_{\alpha,z}^\lambda(\rho_{AB}|\sigma_B):=\exp{(\alpha-1)\Upsilon(\rho_{AB},I_A \otimes \rho_B, I_A \otimes \sigma_B)},
\end{equation}
and denote its optimized version by
\begin{equation}\label{Q2}
Q_{\alpha,z}^\lambda(A|B)_\rho := \opt_{\sigma_B} Q_{\alpha,z}^\lambda(\rho_{AB}|\sigma_B). \end{equation}
In this case, the symbol ``$\opt$" denotes the infimum over quantum states $\sigma_B$ when $\lambda(1-\alpha) < 0$, and the supremum otherwise.

\subsection{Connections to known quantities}
We now demonstrate that the new conditional entropy in Definition~\ref{new entropy} includes all previously studied forms of quantum conditional entropies, organizing them within a cohesive mathematical framework. To this end, we first demonstrate that the studied forms are specific instances of the upward or downward \(\alpha\)-\(z\) conditional entropies. As these, in turn, are special cases of \(H_{\alpha,z}^\lambda\) for \(\lambda = 0,1\), it follows that the new conditional entropy \(H_{\alpha,z}^\lambda\) encompasses all the studied forms.

Let us first introduce the $\alpha$-$z$ R\'enyi relative entropy~\cite{audenaert13_alphaz,zhang20_alphaz}. If 
\begin{definition}[$\alpha$-$z$ R\'enyi relative entropy]
\label{alpha-z relative entropy}
    Let $ \alpha \in (0,1) \cup(1,\infty) $, $z \in (0,\infty)$, $\rho \in \mathcal{S}(A)$ and $ \sigma \in \mathcal{P}(A)$. If $(\alpha<1 \wedge \rho \not \perp \sigma) \vee \rho \ll \sigma$, then 
\begin{equation}
\label{definition}
D_{\alpha,z}(\rho \| \sigma):=
\frac{1}{\alpha-1}\log{\Trm\big(\rho^\frac{\alpha}{2z}\sigma^\frac{1-\alpha}{z}\rho^\frac{\alpha}{2z}\big)^z}\,.
\end{equation}
In all other cases, $D_{\alpha,z}(\rho \| \sigma) = + \infty$.
\end{definition}
With this definition in place, we can define the optimized (arrow up $\uparrow$) and non-optimized (arrow down $\downarrow$) $\alpha$-$z$ conditional entropies
\begin{equation} 
\label{alpha-z conditional}
H_{\alpha,z}^\uparrow(A|B)_\rho := \sup_{\sigma_B} -D_{\alpha,z}(\rho_{AB}\|I_A \otimes \sigma_B) \,,\quad 
H_{\alpha,z}^\downarrow(A|B)_\rho := -D_{\alpha,z}(\rho_{AB}\|I_A \otimes \rho_B).
\end{equation}
Next, we demonstrate that several studied forms are specific instances of these conditional entropies.
For $z=\alpha$ and $z=1$, the $\alpha$-$z$ R\'enyi relative entropy reduces to the \textit{sandwiched R\'enyi divergence} $\tilde{D}_{\alpha}(\rho \| \sigma)$~\cite{lennert13_renyi, Wilde3} and the \emph{Petz--R\'enyi divergence} $\widebar{D}_{\alpha}(\rho \| \sigma)$~\cite{petz1986quasi,tomamichel16_book}.

Accordingly, the arrow up and down Petz conditional entropies~\cite{petz1986quasi,tomamichel08_aep} and sandwiched conditional entropies~\cite{lennert13_renyi,beigi13_sandwiched} are defined in terms of the Petz and sandwiched R\'enyi divergences. We have
\begin{align}
&\widetilde{H}_{\alpha}^\uparrow(A|B)_\rho := \sup_{\sigma_B}- \widetilde{D}_{\alpha}(\rho_{AB}\|I_A \otimes \sigma_B) \,,\qquad 
\widetilde{H}_{\alpha}^\downarrow(A|B)_\rho := -\widetilde{D}_{\alpha}(\rho_{AB}\|I_A \otimes \rho_B) \,,\\
&\widebar{H}_{\alpha}^\uparrow(A|B)_\rho := \sup_{\sigma_B} - \widebar{D}_{\alpha}(\rho_{AB}\|I_A \otimes \sigma_B) \,, \qquad 
\widebar{H}_{\alpha}^\downarrow(A|B)_\rho := -\widebar{D}_{\alpha}(\rho_{AB}\|I_A \otimes \rho_B) \,.
\end{align}
Therefore, it is clear that the arrow up and down Petz and sandwiched conditional entropies are special cases of the arrow up and down $\alpha$-$z$ conditional entropies in~\eqref{alpha-z conditional}. Explicitly, 
\begin{align}
    &\widetilde{H}_{\alpha }^\uparrow(A|B)_{\rho}=H_{\alpha,\alpha}^\uparrow(A|B)_{\rho}, \qquad \widetilde{H}_{\alpha }^\downarrow(A|B)_{\rho}=H_{\alpha,\alpha}^\downarrow(A|B)_{\rho}, \\
    & \widebar{H}_{\alpha }^\uparrow(A|B)_{\rho}=H_{\alpha,1}^\uparrow(A|B)_{\rho}, \qquad \widebar{H}_{\alpha }^\downarrow(A|B)_{\rho}=H_{\alpha,1}^\downarrow(A|B)_{\rho}\,. 
\end{align}
We now discuss some limiting points of the $\alpha$-$z$ conditional entropies. 
We have the following limits
\begin{align}
\label{Fidelity}
& D_{\min}(\rho \| \sigma) := D_{1/2,1/2}(\rho\|\sigma) = -\log{F(\rho,\sigma)} \, , \\
& \widebar{D}_{0}(\rho \| \sigma) := \lim_{\alpha \rightarrow 0}D_{\alpha,1}(\rho\|\sigma)= -\log{\Trm(\Pi(\rho)\sigma)} \, , \\
\label{D}
&D(\rho \| \sigma) := \lim_{\alpha\rightarrow 1}D_{\alpha,z}(\rho\|\sigma) = \Trm[\rho (\log{\rho} - \log{\sigma})]\,, \,  \text{and}\\
\label{Dmax}
&D_{\max}(\rho \| \sigma) := \lim_{\alpha \rightarrow \infty} D_{\alpha,\alpha}(\rho\|\sigma) = \inf \{ \lambda \in \mathbb{R} : \rho \leq 2^{\lambda}\sigma \} \, ,
\end{align}
The quantity \( D_{\min} \) corresponds to the (log) fidelity, with \( F(\rho, \sigma) = \|\rho^{\frac{1}{2}} \sigma^{\frac{1}{2}}\|_1^2 \) denoting Uhlmann's fidelity. The Umegaki relative entropy \( D \) emerges as the limit \( \alpha \to 1 \) for any \( z \neq 0 \)~\cite{lin2015investigating}. Meanwhile, \( D_{\max} \) represents the \textit{max-relative entropy}~\cite{tomamichel16_book, Datta_rob2, Renner}.

Using these limits, the corresponding conditional entropies can be defined. Specifically, the von Neumann conditional entropy, the min-entropy~\cite{Renner}, and the max-entropy~\cite{konig2009operational, tomamichel16_book} are given as
\begin{align} 
&H(A|B)_{\rho}=-D(\rho_{AB}\|I_A \otimes \rho_B)=\sup_{\sigma_B}-D(\rho_{AB}\|I_A \otimes \sigma_B) &,\\
&H_{\min}(A|B)_{\rho} = \sup_{\sigma_B}-D_{\max}(\rho_{AB}\|I_A \otimes \sigma_B),\;\;\;\;H_{\max}(A|B)_{\rho} = \sup_{\sigma_B}-D_{\min}(\rho_{AB}\|I_A \otimes \sigma_B) \\
&H^{\downarrow}_{\min}(A|B)_{\rho} = -D_{\max}(\rho_{AB}\|I_A \otimes \rho_B),\;\; \quad \;\;\;\; H^{\uparrow}_{\max}(A|B)_{\rho} = \sup_{\sigma_B}-\widebar{D}_{0}(\rho_{AB}\|I_A \otimes \sigma_B)  \;  \,.
\end{align}

\begin{figure}
\centering
\resizebox{0.8\linewidth}{!}{%
\begin{forest}
    for tree={
        draw,
        rounded corners,
        align=center,
        parent anchor=south,
        child anchor=north,
        s sep=25mm,
        l=2cm
    }
    [$H_{\alpha,z}^{\lambda}$
        [$H_{\alpha,z}^{\downarrow}$, edge label={node[midway,left,xshift=-0pt,yshift=10pt]{\(\lambda = 0\)}}
            [$\widetilde{H}_{\alpha}^{\downarrow}$, edge label={node[midway,left,xshift=-0pt,yshift=10pt]{\(z = \alpha \)}}
                [ $H_{\min}^{\downarrow}$, edge label={node[midway,left,xshift=0pt,yshift=0pt]{\(\alpha = \infty\)}} ]
            ]
            [$\widebar{H}_{\alpha}^{\downarrow}$, edge label={node[midway,right,xshift=-0pt,yshift=10pt]{\(z = 1 \)}}]
        ]
       [ $ H $, edge label={node[midway,right,xshift=-7pt,yshift=-3pt]{%
    $\begin{array}{c} \textstyle \hskip-10pt\alpha=1 \\ \hskip2pt  (\lambda \neq -\infty) \end{array}$
}}, xshift=37.55mm]
        [ $H_{\alpha,z}^{\uparrow}$, edge label={node[midway,right,xshift=-0pt,yshift=10pt]{\(\lambda = 1\)}}
            [$\widetilde{H}_{\alpha}^{\uparrow}$, edge label={node[midway,left,xshift=-0pt,yshift=10pt]{\(z = \alpha \)}}
                [ $H_{\min}$, edge label={node[midway,left,xshift=-10pt,yshift=0pt]{\(\alpha = \infty\)}} ]
                [ $H_{\max}$, edge label={node[midway,right,xshift=10pt,yshift=0pt]{\(\alpha = \frac{1}{2}\)}} ]
            ]
            [$\widebar{H}_{\alpha}^{\uparrow}$, edge label={node[midway,right,xshift=-0pt,yshift=10pt]{\(z = 1 \)}}
                [ $H_{\max}^{\uparrow}$, edge label={node[midway,right,xshift=-0pt,yshift=0pt]{\(\alpha = 0\)}} ]
            ]
        ]
    ]  
\end{forest}
}
\caption{The diagram illustrates the connections between the newly introduced conditional entropy \(H^\lambda_{\alpha,z}\) and various established quantities explored in the literature.}
\label{tree of conditional entropies}
\end{figure}

We defined two min- and max-entropies, as various variants are sometimes introduced in the literature. We used the superscripts \(\downarrow\) and \(\uparrow\) in \(H^{\downarrow}_{\min}\) and \(H^{\uparrow}_{\max}\) to distinguish these quantities from the standard definitions of min- and max-conditional entropies. This notation reflects the fact that these entropies represent the minimal and maximal conditional entropies within our family of entropies (see Lemma~\ref{maximal and minimal} for further details).

These monotones just defined are known to coincide with the pointwise limits of the \(\alpha\)-\(z\) Rényi conditional entropies with up or down arrows. This means that for the up arrow case, the limit can be interchanged with the optimization over the states (see, e.g.,~\cite[Appendix A]{rubboli2022new}). Specifically, we have
 \begin{align}
    &H_{\min}(A|B)_{\rho} = \lim_{\alpha \rightarrow \infty}H^\uparrow_{\alpha,\alpha}(A|B)_{\rho}\,, \;\; H^{\uparrow}_{\max}(A|B)_{\rho} = \lim_{\alpha \rightarrow 0} H^\uparrow_{\alpha,1}(A|B)_{\rho}  \,, \;\;  H(A|B)_{\rho} = \lim_{\alpha \rightarrow 1} H^{\uparrow}_{\alpha,z}(A|B)_{\rho}  \,.
\end{align}
As stated at the beginning of the section, the \(\alpha\)-\(z\) conditional entropies with up and down arrows are specific instances of the conditional entropy \(H_{\alpha,z}^\lambda\), as defined in Definition~\ref{new entropy}. 
In particular, the down and up arrow \(\alpha\)-\(z\) conditional entropies in~\eqref{alpha-z conditional} can be recovered in the limits \(\lambda = 0\) and \(\lambda = 1\) of \(H_{\alpha,z}^\lambda\), respectively (see Proposition~\ref{limit lambda to zero and one} for a formal proof). Figure~\ref{tree of conditional entropies} summarizes all the dependencies in a diagram. Furthermore, the von Neumann entropy is obtained in the limit as \(\alpha \rightarrow 1\) for any finite values of \(\lambda\) (refer to Proposition~\ref{limit alpha to 1} for further details). We conclude this section by emphasizing that the limit \(\alpha \rightarrow 1\) not only encompasses the von Neumann conditional entropy but also extends to include more general quantities, such as the exponentiated mean of the von Neumann entropy of the conditional distribution (see Section~\ref{section limits}).

\section{Convexity properties of the trivariate trace functional}\label{Main technical results}

Let $\mathcal{H}_1$ and $\mathcal{H}_2$ be finite-dimensional Hilbert spaces. Let $A$ be a positive definite matrix on $\mathcal{H}_1 \otimes \mathcal{H}_2$, let $B$ be a positive definite matrix on $\mathcal{H}_2$, and $I$ the identity matrix on $\mathcal{H}_1$. Let $p,q,r,s$ be real parameters, and recall the functional $Q_{p,q,r,s}(A,B)$ defined by
\begin{equation}
    Q_{p,q,r,s}(A,B) = \opt_{C} \Phi_{p,q,r,s}(A,I \otimes B, I \otimes C),
\end{equation}
where $\Phi$ is defined as in \eqref{eq:trivariate} and the optimization runs over all density matrices $C$ on $\mathcal{H}_2$. For convenience, we always assume that $A, B, C$ are positive definite matrices and that $K$ is invertible, so that all negative powers are well-defined.

In this section, we establish our foundational results on the convexity properties of the function $(A,B) \mapsto Q_{p,q,r,s}(A,B)$. In particular, we show that, depending on the parameter regime, it is either convex, concave, or log-concave (Theorem \ref{log-concavity2}). Furthermore, we characterize the optimizers via a fixed-point equation that they must satisfy (Theorem \ref{fixed-point}).

As an application, these results will be used in subsequent sections to establish operational properties of certain entropic quantities, such as the data processing inequality and additivity.

The Schatten $p$-norm is defined as $\|A\|_p = \big(\Trm(AA^\dagger)^\frac{p}{2} \big)^\frac{1}{p}$ for $p \geq 1$.

\subsection{Joint convexity/concavity/log-concavity}

\begin{theorem}
\label{log-concavity2}
The function $(A,B) \mapsto Q_{p,q,r,s}(A,B)$ is 
\begin{enumerate}[itemsep=1pt,parsep=0pt]
    \item jointly concave if
    \begin{equation}
        0 < p,q,r \leq 1, \quad \text{and} \quad \frac{p}{p + q + r}, \frac{q+r}{p + q + r} \leq s \leq \frac{1}{p + q + r};
    \end{equation}
    \item  jointly log-concave if 
    \begin{align}
    &0< p,q \leq 1, \quad -1 \leq r<0, \quad \text{and} \quad  0<\frac{q + r}{p + q + r} \leq s \leq \frac{1}{p + q + r};
    \end{align}
    \item jointly convex if
    \begin{equation}
        1 \leq p \leq 2, \quad -1 \leq q,r < 0, \quad \text{and} \quad 0<\frac{1}{p+q+r} \leq  s \leq \frac{1}{1+q+r}.
    \end{equation}
\end{enumerate}
\end{theorem}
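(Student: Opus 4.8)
The plan is to exploit the variational relationship between the trivariate functional $\Phi$ and the bivariate functional $\Psi$, reducing the claim---wherever possible---to Zhang's characterization \cite{zhang20_alphaz} of the joint convexity/concavity of $\Psi$, and to supply a genuinely bipartite argument where this reduction breaks down. Recall from \eqref{equivalence optimization} that in the single-system case the optimization over $C$ turns the Schatten exponent $s$ into a new exponent $t$ with $\tfrac1t=\tfrac1s-r$, namely $\opt_C\Phi_{p,q,r,s}=\Psi_{p,q,t}^{\,s/t}$. Since $s/t=1-rs$ lies in $(0,1)$ when $r>0$ and exceeds $1$ when $r<0$, the monotone power $x\mapsto x^{s/t}$ preserves concavity (resp.\ convexity), so in the single-system case the property of $Q$ follows from that of $\Psi_{p,q,t}$. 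This already pins down the boundary exponents: Zhang's upper concavity threshold $t\le\tfrac1{p+q}$ is equivalent, via $\tfrac1t=\tfrac1s-r$, to $s\le\tfrac1{p+q+r}$, and the convex-regime window $\tfrac1{p+q+r}\le s\le\tfrac1{1+q+r}$ translates the corresponding $t$-window for convexity of $\Psi$.

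The main step is to transport this convexity into the bipartite setting, where $A$ acts on $\mathcal H_1\otimes\mathcal H_2$ while $B,C$ act only on $\mathcal H_2$ and the optimization over $C$ no longer collapses to a $\Psi$-functional. Here a naive ``supremum of fixed-$C$ concave functions'' argument cannot work: the lower thresholds depend on $r$, whereas any fixed-$C$ functional is homogeneous of degree $s(p+q)$ in $(A,B)$ and would yield $r$-independent conditions, so the optimal $C$ must adapt to $(A,B)$. I would therefore represent $Q$ as a \emph{joint} optimization over $C$ together with the auxiliary operator supplied by a variational characterization of the trace power $\Tr(\,\cdot\,)^{s}$ (of Frank--Lieb / Carlen--Frank--Lieb type), and use a minimax exchange (Sion) to align the two optimizations so that, after fixing the auxiliary variables, the integrand reduces to bilinear Lieb/Ando forms in $(A,B)$ whose joint concavity (resp.\ convexity) is known. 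The outer optimization then preserves the property---a supremum preserves concavity and, through $\log\sup_C=\sup_C\log$, log-concavity, while an infimum preserves convexity, and the optimization direction in each regime is chosen to be the one compatible with this preservation.

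Controlling the tensor structure and pinning the lower thresholds $s\ge\tfrac{p}{p+q+r}$ and $s\ge\tfrac{q+r}{p+q+r}$ is where the remaining tools enter. Working directly on $\mathcal H_1\otimes\mathcal H_2$, I would use spectral pinching to compare the partial traces over the two factors and to pass to near-commuting configurations, and the multivariate Araki--Lieb and Lieb--Thirring inequalities to reorder the noncommuting powers $A^{p/2},B^{q/2},C^{r}$ into the arrangement demanded by the bilinear estimate. The two lower thresholds should then emerge as precisely the range of H\"older/interpolation exponents on which all the invoked operator-monotonicity and operator-convexity facts---operator concavity of $x\mapsto x^{r}$ for $r\in(0,1]$ and operator convexity for $r\in[-1,0)$, together with the admissibility of the Lieb/Ando forms---hold simultaneously. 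The log-concave regime I would treat separately: rather than additive concavity I would establish concavity of $\log Q$, either by interpolating between the concave endpoint $s=\tfrac{q+r}{p+q+r}$ and the boundary $s\to\tfrac1{p+q+r}$, or directly from the multiplicative (geometric-mean) structure that the fixed-point optimizer imposes when $r<0$, again using $\log\sup_C=\sup_C\log$ so that the supremum preserves concavity of the logarithm.

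I expect the main obstacle to be exactly this bipartite step. Unlike the single-system case, one cannot collapse the optimization to a $\Psi$-functional and quote Zhang, so the convexity must be propagated through an optimization that has no closed-form solution; certifying the fixed-auxiliary joint concavity/convexity across the \emph{entire} stated $s$-window, and in particular producing the lower thresholds from the combined minimax--interpolation--Lieb-Thirring--pinching estimate, is where I anticipate the heaviest and most delicate work. By contrast, once that inner property and the minimax exchange are in hand, the preservation of concavity/convexity/log-concavity under the outer optimization is comparatively routine.
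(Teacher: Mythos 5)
Your opening reduction (Corollary~\ref{Psi-Phi} plus the monotone power $x\mapsto x^{1-rs}$) is correct, and your high-level scaffolding --- a variational characterization of the trace power, a Sion minimax exchange, and preservation of concavity/convexity under the outer optimization --- is indeed the skeleton of the paper's proof. The gap is that the two steps which actually carry the bipartite case are left open, and the tools you propose in their place would not supply them. In the paper, after the decoupling step (Lemma~\ref{variationalZhang} with $(r_0,r_1,r_2)=(2s,\tfrac{2\mu}{p},\tfrac{2\mu}{q+r})$, $\mu=s(p+q+r)$) and the linearization of the trace power via the identity \eqref{better variational form}, the optimization over $C$ is not merely ``aligned'' with the auxiliary optimization: it is solved in closed form by Lemma~\ref{optimization norm}, and the solution produces a partial trace,
\begin{equation*}
\sup_{C}\Trm\Big(X^{-\frac12}\big(I\otimes B^{\frac q2}C^{r}B^{\frac q2}\big)X^{-\frac12}Z^{1-\frac{q+r}{\mu}}\Big)=\Big(\Trm\big(B^{\frac q2}\Tr_1\big(X^{-\frac12}Z^{1-\frac{q+r}{\mu}}X^{-\frac12}\big)B^{\frac q2}\big)^{\frac{1}{1-r}}\Big)^{1-r}.
\end{equation*}
The resulting functional is then exactly of the form treated by Lemma~\ref{primitive concavity2}, namely $(A,B)\mapsto\Trm\big(B^{q/2}\Tr_1(K^\dagger A^pK)B^{q/2}\big)^{s}$ --- a genuinely bipartite extension of Zhang's theorem, proved there by the maximally entangled state embedding. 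That lemma is the ``bilinear Lieb/Ando form'' you are looking for, and the thresholds do not emerge from a H\"older/interpolation range: $s\le\tfrac1{p+q+r}$ is $\mu\le1$ (concavity of the $A$-term), $s\ge\tfrac{p}{p+q+r}$ is $\mu\ge p$ (the convexity needed to invoke Sion), and $s\ge\tfrac{q+r}{p+q+r}$ is $\mu\ge q+r$ (the validity range of \eqref{better variational form}).

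By contrast, spectral pinching and multivariate Araki--Lieb--Thirring inequalities appear nowhere in the paper's proof of Theorem~\ref{log-concavity2} and could not replace these steps: they produce inequalities between trace functionals, not joint convexity statements, and there is no mechanism by which they would reproduce the exact $r$-dependent thresholds. You flag the bipartite step as ``the heaviest and most delicate work,'' but that step \emph{is} the theorem --- everything else in your plan (single-system reduction, minimax exchange, outer-optimization preservation) is routine once it is done. The same criticism applies to your log-concave case: the paper's mechanism there is again concrete (Lemma~\ref{additive/multiplicative} converts the additive variational form into a multiplicative one, after which one checks concavity of the logarithm of each factor), whereas your proposed ``interpolation between the concave endpoint and the boundary'' presupposes endpoint properties that are themselves part of what must be proved.
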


To prove this result, we need the following auxiliary lemma.

\begin{lemma}\label{primitive concavity2}
Let $K$ be an arbitrary matrix. Then,
\begin{equation}\label{primitive concavity2 function}
(A,B) \mapsto \textup{Tr}\left(B^\frac{q}{2} \textup{Tr}_1 \left(K^\dagger A^p K \right) B^\frac{q}{2}\right)^{s} 
\end{equation}

\begin{enumerate}[itemsep=2pt,parsep=0pt]
		\item is jointly concave if $0\leq q\leq p\leq1$ and $0<s\leq\frac{1}{p+q}$;
		\item is jointly convex if $-1\leq q\leq p\leq 0$ and $s>0$;
		\item is jointly convex if $-1\leq q\leq0,~1\leq p\leq 2,~(p,q)\ne(1,-1)$ and $s\geq\frac{1}{p+q}$.
\end{enumerate}

\end{lemma}
\begin{proof}
Let us define
\begin{equation}
X_{121'} = (K_{12} \otimes I_{1'}) \ket{\psi}_{1 1'},
\end{equation}
where $\ket{\psi}_{1 1' }$ denotes the (unnormalized) maximally entangled state, and $1'$ labels a Hilbert space $\mathcal{H}_{1'}$ that is isomorphic to $\mathcal{H}_1$. Using this auxiliary matrix, the function in \eqref{primitive concavity2 function} can be rewritten as
\begin{equation}
(A,B) \mapsto \Trm\left(B^{\frac{q}{2}} X^\dagger A^p X B^{\frac{q}{2}}\right)^s.
\end{equation}
The desired result now follows directly from~\cite[Theorem 1.1]{zhang20_alphaz}. Although that theorem is formulated for invertible square matrices, the argument extends to arbitrary matrices with compatible dimensions. Indeed, any rectangular matrix can be embedded into a square matrix by padding with zeros, and any matrix can be approximated arbitrarily well by invertible ones. Since convexity is preserved under such limits~\cite{zhang20_alphaz}, the conclusion remains valid in the general case.
\end{proof}

\begin{proof}[Proof of Theorem \ref{log-concavity2}]
Note first that
\begin{equation}
    Q(A,B) = \opt_{C}\Trm\left|A^{\frac{p}{2}} K \left(I \otimes B^{\frac{q}{2}} C^{\frac{r}{2}}\right)\right|^{2s},
\end{equation}
where we omit the indices of $Q_{p,q,r,s}$ for notational simplicity.
\smallskip

\noindent
(1) To show the first case, let $\mu := s(p+q+r) \in [\min\{p, q+r\}, 1]$. Note that in this case $\opt = \sup$, since, by Lemma \ref{primitive concavity/convexity}, $\Phi$ is concave in the parameter regime specified. We apply Lemma~\ref{variationalZhang} with
\begin{equation}
    (r_0, r_1, r_2) = \left(2s, \frac{2\mu}{p}, \frac{2\mu}{q+r}\right) \quad \textup{and} \quad (X, Y) = \left(A^{\frac{p}{2}} K, I \otimes B^{\frac{q}{2}} C^{\frac{r}{2}}\right),
\end{equation}
yielding
\begin{equation}\label{equ:proof of thm psi 1}
Q(A,B)= \sup_{C} \min_{Z \in \mathcal{L}(\mathcal{H})^{\times}} \left\{ \frac{p}{p+q+r} \Trm\left|A^{\frac{p}{2}} K Z\right|^{\frac{2\mu}{p}} + \frac{q+r}{p+q+r} \Trm\left|Z^{-1}(I \otimes B^{\frac{q}{2}} C^{\frac{r}{2}})\right|^{\frac{2\mu}{q+r}} \right\}.
\end{equation}

Setting $X = Z Z^\dagger$, we can replace the optimization over invertible matrices by optimization over positive definite matrices:
\begin{equation}
Q(A,B)= \sup_{C} \min_{X > 0} \left\{ \frac{p}{p+q+r} \Trm\left|A^{\frac{p}{2}} K X^{\frac{1}{2}}\right|^{\frac{2\mu}{p}} + \frac{q+r}{p+q+r} \Trm\left|X^{-\frac{1}{2}} (I \otimes B^{\frac{q}{2}} C^{\frac{r}{2}})\right|^{\frac{2\mu}{q+r}} \right\}.
\end{equation}

We may now invoke Sion's minimax theorem~\cite{sion1958general} to exchange the supremum and minimum. To apply this, we note that the set of quantum states is convex and compact, and the set of positive operators is convex, and that the functions
\begin{equation}
    X \mapsto \Trm(K^\dagger X K)^{\frac{\mu}{p}}, \quad X \mapsto \Trm(K^\dagger X^{-1} K)^{\frac{\mu}{q+r}}
\end{equation}
are convex, as $s \geq \frac{p}{p+q+r}$ and Lemma~\ref{primitive concavity/convexity} applies.

From~\cite[Proof of Theorem~3.7 given Lemma~3.1]{zhang20_alphaz}, for $0 \leq p \leq 1$ we have the variational identity
\begin{equation}
\label{eq:variational_trace_power}
\Trm(K^\dagger A K)^{\frac{1}{p}} = \max_{Z > 0} \left\{ \frac{1}{p} \Trm(K^\dagger A K Z^{1-p}) - \frac{1-p}{p} \Trm(Z) \right\}.
\end{equation}

Applying this to the second term in the variational form of $Q(A,B)$, for $s \geq \frac{q+r}{p+q+r}$ we obtain
\begin{equation}
\Trm\left|X^{-\frac{1}{2}} (I \otimes B^{\frac{q}{2}} C^{\frac{r}{2}})\right|^{\frac{2\mu}{q+r}} = \max_{Z > 0} \left\{ \frac{\mu}{q+r} \Trm\left( X^{-\frac{1}{2}} (I \otimes B^{\frac{q}{2}} C^r B^{\frac{q}{2}}) X^{-\frac{1}{2}} Z^{1 - \frac{q+r}{\mu}} \right) - \frac{\mu - q - r}{q+r} \Tr(Z) \right\}.
\end{equation}

We now solve the supremum over $C$ in the above expression using Lemma~\ref{optimization norm}, obtaining
\begin{equation}
\sup_{C} \Trm\left( X^{-\frac{1}{2}} \left(I \otimes B^{\frac{q}{2}} C^r B^{\frac{q}{2}}\right) X^{-\frac{1}{2}} Z^{1 - \frac{q+r}{\mu}} \right) 
= \left( \Trm  \left( B^{\frac{q}{2}} \textup{Tr}_1\left(X^{-\frac{1}{2}} Z^{1 - \frac{q+r}{\mu}} X^{-\frac{1}{2}} \right) B^{\frac{q}{2}} \right)^{\frac{1}{1-r}} \right)^{1 - r}.
\end{equation}

Putting all the pieces together, we arrive at
\begin{align}
&Q(A,B) = \min_{X > 0} \Bigg\{ \frac{p}{p+q+r} \Trm\left|A^{\frac{p}{2}} K X^{\frac{1}{2}}\right|^{\frac{2\mu}{p}}\\ 
&+\frac{1}{p+q+r} \max_{Z > 0} \Bigg\{ \mu \Bigg( \Trm \left| B^{\frac{q}{2}} \left( \textup{Tr}_1\left( X^{-\frac{1}{2}} Z^{1 - \frac{q+r}{\mu}} X^{-\frac{1}{2}} \right) \right)^{\frac{1}{2}} \right|^{\frac{2}{1 - r}} \Bigg)^{1 - r} - (\mu - q - r) \Tr(Z) \Bigg\} \Bigg\}.
\end{align}

To show that $Q(A,B)$ is jointly concave in $(A,B)$, we argue as follows: The first term inside the minimization is concave in $A$ by Lemma~\ref{primitive concavity/convexity}, since $0 < \frac{\mu}{p} \leq \frac{1}{p}$ implies that
\begin{equation}
    A \mapsto \Trm\left|A^{\frac{p}{2}} K X^{\frac{1}{2}}\right|^{\frac{2\mu}{p}}
\end{equation}
    is concave. The second term involves a supremum over $Z$ of a jointly concave function (by Lemma \ref{primitive concavity2}, which applies since $q + r \geq 0$), and the outer power $1 - r$ preserves concavity. Finally, taking the infimum over $X > 0$ of concave functions preserves concavity. Hence, $Q(A,B)$ is jointly concave in $(A,B)$.

\medskip
(2) In the second case, we consider the parameter regime where $r < 0$ but $q + r > 0$. To analyze this, we set $\mu := s(p + q + r) \in [ q + r, 1]$, and apply Lemma~\ref{variationalZhang} with 
\begin{equation}
(r_0, r_1, r_2) = \left(2s, \frac{2\mu}{p}, \frac{2\mu}{q + r}\right)
\quad \text{and} \quad 
(X, Y) = \left(A^{\frac{p}{2}} K, I \otimes B^{\frac{q}{2}} C^{\frac{r}{2}}\right),
\end{equation}
where we write $X = Z Z^\dagger$ to convert the optimization over invertible linear operators into an optimization over positive operators. This yields
\begin{equation}
Q(A,B) = \inf_C \min_{X > 0} \left\{
\frac{p}{p + q + r} \Trm\left|A^{\frac{p}{2}} K X^{\frac{1}{2}} \right|^{\frac{2\mu}{p}} 
+ \frac{q + r}{p + q + r} \Trm\left|X^{-\frac{1}{2}} (I \otimes B^{\frac{q}{2}} C^{\frac{r}{2}}) \right|^{\frac{2\mu}{q + r}} 
\right\}.
\end{equation}

To establish joint log-concavity of $Q(A,B)$ in $(A, B)$, we invoke Lemma~\ref{additive/multiplicative} with the same parameters $(r_0, r_1, r_2)$ as above, to obtain
\begin{align}
Q(A,B) &= \min_{X > 0} \bigg\{
\left(\Trm\left(X^{\frac{1}{2}} K^\dagger A^p K X^{\frac{1}{2}}\right)^{\mu/p}\right)^{\frac{p}{p + q + r}} \\
&\quad \times \inf_C \left(\Trm\left(X^{-\frac{1}{2}} (I \otimes B^{\frac{q}{2}} C^r B^{\frac{q}{2}}) X^{-\frac{1}{2}} \right)^{\mu / (q + r)}\right)^{\frac{q + r}{p + q + r}} \bigg\}.
\end{align}

We now apply the variational characterization of Schatten norms for positive operators. For $s \geq 1$, we have
\begin{equation}
\|A\|_s = \sup_{C} \Trm\left(A C^{1 - \frac{1}{s}}\right),
\end{equation}
which we use for the second term above. Specifically,
\begin{equation}
\Trm\left(X^{-\frac{1}{2}} (I \otimes B^{\frac{q}{2}} C^r B^{\frac{q}{2}}) X^{-\frac{1}{2}} \right)^{\frac{\mu}{q+r}}= 
\left( \sup_T \Trm\left( X^{-\frac{1}{2}} (I \otimes B^{\frac{q}{2}} C^r B^{\frac{q}{2}}) X^{-\frac{1}{2}} T^{1 - \frac{q + r}{\mu}} \right) \right)^{\frac{\mu}{q+r}},
\end{equation}
since $\mu / (q + r) \geq 1$ by assumption.

To proceed, we minimize over $C$ using Lemma~\ref{optimization norm}. By Sion's minimax theorem, the infimum and supremum can be exchanged; the function is convex in $C$ for $r \geq -1$ and concave in $T$ for $0 \leq (q + r)/\mu \leq 1$, and the set of quantum states is compact. This yields
\begin{equation}
\inf_C \Trm\left(X^{-\frac{1}{2}} (I \otimes B^{\frac{q}{2}} C^r B^{\frac{q}{2}}) X^{-\frac{1}{2}} T^{1 - \frac{q + r}{\mu}} \right)
= \left( \Trm\left( B^{\frac{q}{2}} \textup{Tr}_1\left(X^{-\frac{1}{2}} T^{1 - \frac{q + r}{\mu}} X^{-\frac{1}{2}} \right) B^{\frac{q}{2}} \right)^{\frac{1}{1 - r}} \right)^{1 - r}.
\end{equation}

Taking logarithms, we find
\begin{align}
\log Q(A,B) = &\inf_{X > 0} \bigg\{
\frac{p}{p + q + r} \log \Trm\left|A^{\frac{p}{2}} K X^{\frac{1}{2}} \right|^{\frac{2\mu}{p}} \\
&+ \frac{\mu(1 - r)}{p + q + r} \log \sup_T \Trm\left\{ \left( B^{\frac{q}{2}} \textup{Tr}_1\left(X^{-\frac{1}{2}} T^{1 - \frac{q + r}{\mu}} X^{-\frac{1}{2}} \right) B^{\frac{q}{2}} \right)^{\frac{1}{1 - r}} \right\} \bigg\}.
\end{align}
The concavity of $\log Q(A,B)$ now follows similarly to the previous case; the supremum over $T$ of concave functions is concave, the logarithm is concave, and the infimum over $X$ preserves concavity.

\medskip
(3) In the third case, we assume $q, r < 0$. To handle this case, we set $t^{-1} := s^{-1} - (q + r)$ and apply the second variational form of Lemma~\ref{variationalZhang} with
\begin{equation}
    (r_0, r_1, r_2) = \left(2t, 2s, \frac{-2}{q+r}\right) \quad \text{and} \quad (X, Y) = (A^{\frac{p}{2}} K, I \otimes B^{\frac{q}{2}} C^{\frac{r}{2}}),
\end{equation}
which yields
\begin{equation}
\label{eq:third_case_variational}
Q(A, B) = \inf_C \max_{X > 0} \left\{ \frac{s}{t} \Trm\left| A^{\frac{p}{2}} K X^{\frac{1}{2}} \right|^{2t} + s(q + r) \Trm\left| (I \otimes C^{-\frac{r}{2}} B^{-\frac{q}{2}}) X^{\frac{1}{2}} \right|^{\frac{-2}{q + r}} \right\}.
\end{equation}
We emphasize that the coefficient $s(q + r)$ is negative. Since $r < 0$, we may invoke Sion's minimax theorem~\cite{sion1958general} to exchange the infimum with the supremum. This is justified because, according to Lemma \ref{primitive concavity/convexity}, the functions
\begin{equation}
X \mapsto \Trm(K^\dagger X K)^t, \quad X \mapsto \Trm(K^\dagger X K)^{\frac{-1}{q+r}}
\end{equation}
are concave and convex, respectively, since $0 \leq t \leq 1$ (as $s \leq \frac{1}{1 + q + r}$) and $\frac{-1}{q + r} \geq 1$. The same lemma ensures that the function 
\begin{equation}
C \mapsto \Trm(K^\dagger C^{-r} K)^{\frac{-1}{q+r}}
\end{equation}
is concave as well, since $\frac{-1}{q + r} \leq \frac{-1}{r}$.

Applying \eqref{eq:variational_trace_power} to the second term in~\eqref{eq:third_case_variational}, we obtain
\begin{align}
\Trm\left|(I \otimes C^{-\frac{r}{2}} B^{-\frac{q}{2}}) X^{\frac{1}{2}} \right|^{\frac{-2}{q + r}}
&= \max_{Z > 0} \left\{ \frac{-1}{q + r} \Trm\left( X^{\frac{1}{2}} (I \otimes B^{-\frac{q}{2}} C^{-r} B^{-\frac{q}{2}}) X^{\frac{1}{2}} Z^{1 + q + r} \right) \right. \\
&\qquad \left. + \frac{1 + q + r}{q + r} \Trm(Z) \right\}.
\end{align}

To evaluate the supremum over $C$ in the first term of the above expression, we use Lemma~\ref{optimization norm}. Since $0 > r \geq -1$, we obtain
\begin{equation}
\sup_{C} \Trm\left( X^{\frac{1}{2}} (I \otimes B^{-\frac{q}{2}} C^{-r} B^{-\frac{q}{2}}) X^{\frac{1}{2}} Z^{1 + q + r} \right)
= \left( \Trm \left( B^{-\frac{q}{2}} \textup{Tr}_1\left( X^{\frac{1}{2}} Z^{1 + q + r} X^{\frac{1}{2}} \right) B^{-\frac{q}{2}} \right)^{\frac{1}{1 + r}} \right)^{1 + r}.
\end{equation}

Combining everything, we obtain the final variational expression
\begin{align}
Q(A, B) = &\max_{X > 0} \Bigg\{\frac{s}{t} \Trm\left| A^{\frac{p}{2}} K X^{\frac{1}{2}} \right|^{2t} \\
& - s \max_{Z > 0} \left\{ \left( \Trm \left( B^{-\frac{q}{2}} \textup{Tr}_1\left( X^{\frac{1}{2}} Z^{1 + q + r} X^{\frac{1}{2}} \right) B^{-\frac{q}{2}} \right)^{\frac{1}{1 + r}} \right)^{1 + r} - (1 + q + r)\Trm(Z) \right\} \Bigg\}.
\end{align}

To conclude convexity, note that by Lemma~\ref{primitive concavity/convexity}, the function
\begin{equation}
A \mapsto \Trm\left| A^{\frac{p}{2}} K X^\frac{1}{2}\right|^{2t}
\end{equation}
is convex, as $t \geq \frac{1}{p}$. The second term is concave in $B$, since it is the supremum over $Z$ of jointly concave functions (by Lemma \ref{primitive concavity2}), and the outer power $1 + r \in [0,1]$ preserves concavity. Finally, taking the supremum over $X$ of concave functions preserves convexity. Therefore, $Q(A, B)$ is jointly convex in $(A, B)$.
\end{proof}

\subsection{A characterization of optimizers}
\label{sec: characterization of optimizers}
In this section, we present the necessary and sufficient conditions that characterize the optimizer and show how these conditions naturally lead to a fixed-point equation.

Our first result is a set of necessary and sufficient conditions that the optimizer must satisfy, originally derived in~\cite{rubboli2025thesis}. These conditions generalize the approach adopted in~\cite[Theorem 4]{rubboli2022new}.
Before stating the following auxiliary lemma we need to introduce a piece of notation. We define the operator $\Xi_{s,r}(X,T)$ as
\begin{equation}
\label{equation problem}
\Xi_{s,r}(X,T) :=  
\begin{dcases} 
\chi_{s,r}(X,T) & \text{if } r = 1 \\
T^{-1} \chi_{s,r}(X,T) T^{-1} & \text{if } r = -1 \\
K_r \int_0^\infty (T + tI)^{-1} \chi_{s,r}(X,T) (T + tI)^{-1} t^r\, \mathrm{d}t & \text{if } r \neq \pm1
\end{dcases}
\end{equation}
with $K_r := \mathrm{sinc}(\pi r)$ and $\chi_{s,r}(X,T) := X^\dagger (X T^r X^\dagger)^{s-1} X$.
We have~\cite[Theorem 3.6]{rubboli2025thesis}
\begin{lemma}
\label{main Theorem}
Let $\mathcal{F}$ be a closed and convex subset of quantum states that contains a state with full support.
Let $X$ be an invertible matrix, and let $(s, r)$ satisfy one of the following conditions:
\begin{enumerate}
    \item $0 < r \leq 1$, $0 < s \leq r^{-1}$, $(r,s) \neq (1,1)$;
    \item $-1 \leq r \leq 0$, $s > 0$.
\end{enumerate}
Then, $S_* \in \mathcal{F}$ is an optimizer of the function $S \mapsto \Trm\big(X S^r X^\dagger\big)^s$ over $\mathcal{F}$ if and only if $S_* > 0$ and
\begin{equation}\label{optimality condition}
    \Tr\big(S\, \Xi_{s,r}(X,S_*)\big) \leq \Tr\big(X S_*^r X^\dagger\big)^s
\end{equation}
for all $S \in \mathcal{F}$.
\end{lemma}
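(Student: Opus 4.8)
The plan is to treat the optimization of $f(S) := \Tr(X S^r X^\dagger)^s$ over the convex, compact set of density matrices as a smooth constrained optimization problem and to read off the optimizers from a first-order stationarity condition that, thanks to concavity/convexity, is also \emph{sufficient} for global optimality. First I would fix the sense of the optimization: under the hypotheses of case~(1) the map $S\mapsto f(S)$ is concave and we seek a supremum, while under case~(2) it is convex and we seek an infimum. Both are single-variable instances of the concavity/convexity recorded in Lemma~\ref{primitive concavity/convexity} (applied with $K=X^\dagger$). The operator $\Xi_{s,r}(X,T)$ will turn out to be exactly the suitably normalized gradient of $f$ at $T$, so the whole statement is the matrix analogue of the elementary fact that $S_*$ maximizes a differentiable concave function on a convex set iff $\langle\nabla f(S_*),S-S_*\rangle\le 0$ for every feasible $S$.

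The first substantive step is to compute the Fréchet derivative of $f$ at an interior point $S>0$. By the chain rule, $Df[H]=s\,\Tr\big((XS^rX^\dagger)^{s-1}X\,(DS^r[H])\,X^\dagger\big)=s\,\Tr\big(\chi_{s,r}(X,S)\,DS^r[H]\big)$, where the derivative of the matrix power is treated in the three regimes of the definition: $DS^{\pm1}[H]=H$ or $-S^{-1}HS^{-1}$, and for $r\neq\pm1$ one differentiates the integral representation $S^r=\tfrac{\sin(\pi r)}{\pi}\int_0^\infty t^{r-1}S(S+t)^{-1}\,\d t$ to obtain the symmetric kernel $DS^r[H]=\tfrac{\sin(\pi r)}{\pi}\int_0^\infty t^{r}(S+t)^{-1}H(S+t)^{-1}\,\d t$. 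Because this superoperator is self-adjoint in the trace inner product, transferring it onto $\chi_{s,r}$ identifies $\nabla f(S)$ with a positive multiple of $\Xi_{s,r}(X,S)$; the constant $K_r$ in the definition is precisely the one for which $\nabla f(S)=rs\,\Xi_{s,r}(X,S)$. Since $f$ is homogeneous of degree $rs$ in $S$, Euler's relation $\langle\nabla f(S),S\rangle=rs\,f(S)$ then yields the normalization identity $\Tr\big(\Xi_{s,r}(X,S)\,S\big)=f(S)$, which is exactly the equality case of \eqref{optimality condition}.

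With the gradient in hand the characterization is standard. For necessity, if $S_*$ is an optimizer then (see the last paragraph) $S_*>0$, so $f$ is differentiable there; testing along the segment $S_*+t(S-S_*)$, $t\in[0,1]$, stationarity gives $\langle\nabla f(S_*),S-S_*\rangle\le 0$ in case~(1) and $\ge 0$ in case~(2) for every density matrix $S$. Substituting $\nabla f(S_*)=rs\,\Xi_{s,r}(X,S_*)$ together with the Euler identity $\langle\nabla f(S_*),S_*\rangle=rs\,f(S_*)$, and dividing by $rs$---which is positive in case~(1) and negative in case~(2), so the inequality direction is preserved in \emph{both} cases---yields precisely $\Tr\big(S\,\Xi_{s,r}(X,S_*)\big)\le f(S_*)$. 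For sufficiency, concavity (resp.\ convexity) supplies the gradient inequality $f(S)\le f(S_*)+\langle\nabla f(S_*),S-S_*\rangle$ (resp.\ with $\ge$), and the displayed condition forces the second term to have the right sign, so $S_*$ is a global maximizer (resp.\ minimizer).

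The step I expect to be the main obstacle is showing that every optimizer is positive definite, i.e.\ that the optimum is never attained on the boundary of the simplex; this is what makes the first-order calculus applicable and is the reason $(r,s)=(1,1)$ must be excluded. In case~(2) this is easy: since $-1\le r<0$, we have $f(S)\to+\infty$ as $S$ approaches the boundary, so the infimum is coercive and attained in the interior. Case~(1) is more delicate: one takes a putative boundary maximizer $S_*$ with nontrivial kernel together with a strictly positive direction such as $\rho=I/d$, and shows that the one-sided derivative $\tfrac{d}{dt}f\big((1-t)S_*+t\rho\big)\big|_{t=0^+}=+\infty$. The divergence comes from the non-integrable singularity at $t=0$ of the kernel $\int_0^\infty t^{r}(S_*+t)^{-1}(\cdot)(S_*+t)^{-1}\,\d t$ when $r<1$ (through the $S_*^{\,r-1}$ behaviour), or from the $M^{s-1}$ factor in $\chi_{s,r}$ when $r=1$ but $s<1$; the excluded case $(r,s)=(1,1)$ makes $f$ linear, with maxima at rank-one extreme points, which is exactly why it is ruled out. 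Verifying that this one-sided derivative is genuinely $+\infty$ and of the correct sign---carefully matching the supports of $\chi_{s,r}(X,S_*)$ and of the singular part of the kernel---is the technical heart of the argument.
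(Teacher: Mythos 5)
You should first be aware that the paper does not actually prove Lemma~\ref{main Theorem}: it imports it verbatim from \cite[Theorem 3.6]{rubboli2025thesis}, and Remark~\ref{remark on optimality} merely records that \eqref{optimality condition} is the first-order directional-derivative condition at an optimizer. Your proposal therefore reconstructs the intended argument rather than competing with a written one, and most of it is sound: the identification $\nabla f(S)=rs\,\Xi_{s,r}(X,S)$ for $f(S)=\Trm(XS^rX^\dagger)^s$ is correct provided one reads $K_r=\mathrm{sinc}(\pi r)$, which is what the identity \eqref{inversesquaredint} forces (the ``$K_r:=\sin(\pi r)$'' in \eqref{equation problem} is inconsistent with the paper's own later usage); Euler's identity $\Trm\big(S\,\nabla f(S)\big)=rs\,f(S)$ gives the equality case; and the conversion of stationarity plus the concavity/convexity of Lemma~\ref{primitive concavity/convexity} into the ``if and only if'' --- including the sign bookkeeping when dividing by $rs<0$ in case (2) --- is exactly right. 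A small slip: the representation $S^r=\frac{\sin(\pi r)}{\pi}\int_0^\infty t^{r-1}S(S+t)^{-1}\,\d t$ you quote diverges for $-1<r<0$; there one should start instead from $S^r=-\frac{\sin(\pi r)}{\pi}\int_0^\infty t^{r}(S+t)^{-1}\,\d t$, although the resulting kernel for $DS^r[H]$ comes out the same.

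The genuine gap is the step you yourself flag as the technical heart, and it is not merely delicate --- as stated it is unprovable, because the interiority claim fails at the endpoint $s=r^{-1}$, which case (1) permits. When $rs=1$ the map $f$ is positively homogeneous of degree one, so the gain along a kernel direction is $O(t^{rs})=O(t)$ rather than $t^{rs}\gg t$, and the one-sided derivative you want to blow up stays finite. Concretely, take $r=\tfrac12$, $s=2$, $X=\mathrm{diag}(1,\epsilon)$ on $\mathbb{C}^2$, so $M:=X^\dagger X=\mathrm{diag}(1,\epsilon^2)$ and $f(S)=\Trm\big[S^{1/2}MS^{1/2}M\big]$. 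Writing $T=S^{1/2}$, one has $f=T_{11}^2+2\epsilon^2|T_{12}|^2+\epsilon^4T_{22}^2$ under the constraint $T_{11}^2+2|T_{12}|^2+T_{22}^2=1$, whose unique maximizer over density matrices is the singular state $S_*=\mathrm{diag}(1,0)$. Thus for $rs=1$ there exist optimizers that are not positive definite, and no positive definite point can satisfy \eqref{optimality condition} either (by your own sufficiency argument such a point would be a global maximizer, contradicting uniqueness). Your divergence argument is fine for $rs<1$, where $t^{rs-1}\to\infty$, and case (2) is fine by coercivity; but the case $rs=1$ must either be excluded, or the lemma must be read under the standing convention of Section~2 that all density matrices are invertible, i.e.\ as a characterization of full-rank optimizers only. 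Under that reading the interiority step you struggle with is not needed at all --- a stationary interior point of a concave (resp.\ convex) function on a convex set is automatically a global optimizer --- and your first two paragraphs already constitute the entire proof.
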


\begin{remark}\label{remark on optimality}
    Note that \eqref{optimality condition} is indeed equivalent to the condition that the directional derivative of the function $S \mapsto \Trm\big(X S^r X^\dagger\big)^s$ at an optimizer $S_*$ is non-negative in every allowed direction. For further details, see \cite[Chapter 3]{rubboli2025thesis}.
\end{remark}

Throughout the paper, for a quantum state $\tau$ and a positive semidefinite matrix $A$, we write $\tau \propto A$ to denote $\tau = A / \Trm(A)$.

\begin{theorem}\label{fixed-point}
Let $A,B > 0$, and let $(s, r)$ satisfy the conditions of Lemma~\ref{main Theorem}. Then, $C_*$ is an optimizer of the function $C \mapsto \Phi(A,I \otimes B,I \otimes C)$ introduced in \eqref{eq:trivariate} if and only if $C_* > 0$ and
\begin{equation}\label{proportionality2}
	C_* \propto \Tr_1\left| A^{\frac{p}{2}} K \big(I \otimes B^{\frac{q}{2}} C_*^{\frac{r}{2}}\big) \right|^{2s}.
\end{equation}
\end{theorem}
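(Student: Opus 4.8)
The plan is to apply the variational characterization of Lemma~\ref{main Theorem} (together with Remark~\ref{remark on optimality}) to the one-parameter family $C \mapsto \Phi(A, I\otimes B, I\otimes C)$ and then massage the resulting stationarity condition into the form \eqref{proportionality2}. First I would absorb the fixed factors by setting $X := A^{\frac p2}K(I\otimes B^{\frac q2})$, so that $\Phi(A, I\otimes B, I\otimes C) = \Tr(X(I\otimes C)^r X^\dagger)^s = \Tr(X S^r X^\dagger)^s$ with $S = I\otimes C$. The optimization over density matrices $C$ on $\mathcal H_2$ is precisely the restriction of the problem in Lemma~\ref{main Theorem} to the affine family $\{I\otimes C\}$; since $(s,r)$ lies in the lemma's regime, $S\mapsto\Tr(XS^rX^\dagger)^s$ is concave (resp.\ convex), and this is inherited by $C\mapsto\Phi$. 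Hence the same directional-derivative characterization as in Remark~\ref{remark on optimality}, now applied to the restricted family, shows that an interior point $C_*>0$ is an optimizer if and only if the directional derivative of $C\mapsto\Phi$ vanishes in every trace-preserving direction $\Delta=\Delta^\dagger$ with $\Tr\Delta=0$.

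Next I would compute this derivative. Writing $\chi := \chi_{s,r}(X, I\otimes C_*) = X^\dagger\big(X(I\otimes C_*^r)X^\dagger\big)^{s-1}X$ and using the chain rule together with the Fréchet derivative of $f_r(t):=t^r$, one gets $D_C\Phi[\Delta] = s\,\Tr\big(\chi\,(I\otimes Df_r(C_*)[\Delta])\big)$. Tracing out the first system via $\Tr\big((I\otimes N)Z\big) = \Tr\big(N\,\Tr_1 Z\big)$ and using the self-adjointness of the superoperator $Df_r(C_*)$, this becomes $s\,\Tr\big(\Delta\, Df_r(C_*)[\Tr_1\chi]\big) = s\,\Tr\big(\Delta\,\Tr_1\Xi_{s,r}(X, I\otimes C_*)\big)$, where the last identity holds because $(I\otimes C_*+t)^{-1}=I\otimes(C_*+t)^{-1}$ commutes with the partial trace. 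Vanishing in all trace-preserving directions at $C_*>0$ thus says exactly that $\Tr_1\Xi_{s,r}(X, I\otimes C_*)\propto I_{\mathcal H_2}$.

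Writing $\tilde\chi := \Tr_1\chi$, the stationarity condition reads $Df_r(C_*)[\tilde\chi]\propto I$. Here I would invoke the key observation that, because $C_*^{1-r}$ commutes with $C_*$, the Fréchet derivative acts on it by the scalar rule, $Df_r(C_*)[C_*^{1-r}] = r\,C_*^{r-1}C_*^{1-r} = rI$ (a one-line eigenvalue computation valid uniformly for $r=\pm1$ and $r\neq\pm1$). Since $Df_r(C_*)$ is an invertible superoperator, $Df_r(C_*)[\tilde\chi]\propto I$ holds if and only if $\tilde\chi\propto C_*^{1-r}$, i.e.\ $\Tr_1\chi_{s,r}(X, I\otimes C_*)\propto C_*^{1-r}$. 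Finally I would conjugate by $C_*^{\frac r2}$: using $\Tr_1\big[(I\otimes M)Z(I\otimes N)\big] = M\,(\Tr_1 Z)\,N$ and the operator identity $(Y^\dagger Y)^s = Y^\dagger(YY^\dagger)^{s-1}Y$ with $Y := X(I\otimes C_*^{\frac r2}) = A^{\frac p2}K(I\otimes B^{\frac q2} C_*^{\frac r2})$, one checks that $C_*^{\frac r2}(\Tr_1\chi)C_*^{\frac r2} = \Tr_1\big[(I\otimes C_*^{\frac r2})\chi(I\otimes C_*^{\frac r2})\big] = \Tr_1(Y^\dagger Y)^s = \Tr_1|Y|^{2s}$. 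Consequently $\tilde\chi\propto C_*^{1-r}$ is equivalent to $C_*\propto C_*^{\frac r2}(\Tr_1\chi)C_*^{\frac r2} = \Tr_1|Y|^{2s}$, which is exactly \eqref{proportionality2}.

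The main obstacle I anticipate is twofold, and both points are bookkeeping rather than conceptual: correctly adapting Lemma~\ref{main Theorem} from the unconstrained optimization to the tensor-restricted family $S=I\otimes C$ (so that the relevant gradient is the partial trace $\Tr_1\Xi$ rather than $\Xi$ itself), and verifying the superoperator identity $Df_r(C_*)[C_*^{1-r}] = rI$ together with the invertibility of $Df_r(C_*)$ needed to pass from $Df_r(C_*)[\tilde\chi]\propto I$ to $\tilde\chi\propto C_*^{1-r}$. Concretely, the latter rests on the Beta-integral evaluation $\int_0^\infty t^r(\mu+t)^{-2}\,\mathrm d t = \tfrac{\pi r}{\sin\pi r}\,\mu^{r-1}$ for $-1<r<1$; since only proportionality is required, the precise normalization constant $K_r$ is immaterial, so the argument reduces to tracking which operator the integral collapses onto.
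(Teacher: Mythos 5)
Your proposal is correct, but it executes the argument by a genuinely different route than the paper, so a comparison is worthwhile. The paper treats sufficiency and necessity separately, both times manipulating the integral representation $\Xi_{s,r}$ directly: for sufficiency it substitutes the fixed-point equation so that the integrand becomes simultaneously diagonalizable and evaluates $\int_0^\infty (C_*+tI)^{-2} t^r\,\mathrm{d}t = C_*^{r-1}/K_r$ to verify the inequality of Lemma~\ref{main Theorem} (with equality); for necessity it argues that the vanishing directional derivative forces the integral to be proportional to $I$, extracts from an eigenbasis expansion that $C_*$ commutes with $C_*^{-\frac r2}\Tr_1\left| A(I\otimes BC_*^{\frac r2})\right|^{2s}C_*^{-\frac r2}$, and only then evaluates the same integral. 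You instead recognize $\Tr_1\Xi_{s,r}(X,I\otimes C_*)$ as (a nonzero constant times) the Daleckii--Krein/Fr\'echet-derivative superoperator of $t\mapsto t^r$ at $C_*$ applied to $\Tr_1\chi$, and then exploit three algebraic facts about that superoperator---self-adjointness with respect to the Hilbert--Schmidt inner product, invertibility (nonvanishing divided differences of $t^r$ for $r\neq 0$), and the scalar-rule identity sending $C_*^{1-r}$ to $rI$---to convert ``gradient $\propto I$'' into ``$\Tr_1\chi\propto C_*^{1-r}$'' as a two-way equivalence in a single pass; the conjugation by $C_*^{\frac r2}$ then yields \eqref{proportionality2} exactly as in the paper. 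What each approach buys: the paper's computations stay entirely within the toolkit of Lemma~\ref{main Theorem} plus one Beta integral, at the cost of running the two directions separately and the somewhat delicate commutation argument; yours unifies both directions and makes the structural reason transparent (the preimage of $\mathbb{R}I$ under the invertible gradient map is $\mathbb{R}C_*^{1-r}$), at the cost of invoking standard but nontrivial facts about operator Fr\'echet derivatives, all of which you justify correctly. Both proofs share the same skeleton: first-order optimality for the restricted concave/convex problem, and the identity $g(Y^\dagger Y)Y^\dagger = Y^\dagger g(YY^\dagger)$. One minor caveat, shared equally by the paper: at $r=0$ (which regime~2 of Lemma~\ref{main Theorem} formally allows) both arguments degenerate, since $K_r=0$ and your derivative superoperator is the zero map, so it is not invertible; but the theorem itself fails there, so this is a defect of the stated parameter range rather than of your proof.
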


\begin{proof}
Without loss of generality, we may assume that $p=q=2$. Moreover, we provide the proof for $K=I$ as the general case is similar.

\medskip
\textit{Sufficiency.} We use the optimizer characterization in the preceding lemma, noting that the optimization can be recast as one over the convex set of states of the form $\frac{I}{d} \otimes C$. According to Lemma~\ref{main Theorem}, it suffices to show that any $C_*$ satisfying~\eqref{proportionality2} also fulfills
\begin{equation}
\label{need to prove}
    \Tr\Big(\big(I \otimes S\big)\Xi_{s, r}\big(A(I \otimes B), I \otimes C_*\big)\Big) 
    \leq \Tr\left( A \big(I \otimes B C_*^r B \big) A \right)^s,
\end{equation}
for all density matrices $S$. We prove this for the case $r \neq \pm 1$, as the remaining cases are analogous.

To begin, let $X$, $Y$ and $S$ be arbitrary matrices with $S>0$. Then, for a function $f$ defined on the positive real numbers, we have $f(Y^\dagger Y)Y^\dagger = Y^\dagger f(Y Y^\dagger)$. Applying this identity with $Y:=XS^{\frac{r}{2}}$, we obtain 
\begin{equation}
\chi_{s, r}(X, S) = X^\dagger(X S^r X^\dagger)^{s-1} X 
= S^{-\frac{r}{2}} \left( S^{\frac{r}{2}} X^\dagger X S^{\frac{r}{2}} \right)^s S^{-\frac{r}{2}}.
\end{equation}
Invoking the preceding identity with $X:=A(I \otimes B)$ and $S:=I \otimes C_*$ in turn yields
\begin{equation}\label{rewrite Hermitian}
    \chi_{s, r}\big(A(I \otimes B), I \otimes C_*\big)=\left(I \otimes C_*^{-\frac{r}{2}}\right)\left| A(I \otimes B C_*^{\frac{r}{2}}) \right|^{2s} \left(I \otimes C_*^{-\frac{r}{2}}\right).
\end{equation}
Now we observe that
\begin{align}
   & \Tr\left((I \otimes S) \int_0^\infty (I \otimes C_* + t I)^{-1} 
   \chi_{s, r}\big(A(I \otimes B), I \otimes C_*\big) 
   (I \otimes C_* + t I)^{-1} t^r \, \mathrm{d}t \right) \\
   &\hspace{-.5cm}= \Tr\left( S \int_0^\infty (C_* + t I)^{-1} 
   \Tr_1\left( \chi_{s, r}(A(I \otimes B), I \otimes C_*) \right) 
   (C_* + t I)^{-1} t^r \, \mathrm{d}t \right) \\
   &\hspace{-.5cm}= \Tr\left( S \int_0^\infty (C_* + t I)^{-1} C_*^{-\frac{r}{2}} 
   \Tr_1\left| A(I \otimes B C_*^{\frac{r}{2}}) \right|^{2s} C_*^{-\frac{r}{2}} 
   (C_* + t I)^{-1} t^r \, \mathrm{d}t \right)
\end{align}
where in the last equality we used \eqref{rewrite Hermitian}.

By using~\eqref{proportionality2}, the integrand becomes diagonal in the same basis, allowing us to compute the integral explicitly by applying the identity
\begin{equation}\label{inversesquaredint}
 \int_0^\infty \frac{1}{(C_* + t)^2} t^r \, \mathrm{d}t 
 = \frac{C_*^{r - 1}}{K_r},
\end{equation}
where $K_r:=\mathrm{sinc}(\pi r)$. Putting everything together, we get
\begin{align}
    &\Tr\Big((I \otimes S)\, \Xi_{s, r}(A(I \otimes B), I \otimes C_*)\Big) \nonumber\\
    &\hspace{-.5cm}=\Tr\left(K_r S \int_0^\infty (C_* + t I)^{-1} C_*^{-\frac{r}{2}} 
    \Tr_1\left| A(I \otimes B C_*^{\frac{r}{2}}) \right|^{2s} C_*^{-\frac{r}{2}} 
    (C_* + t I)^{-1} t^r \, \mathrm{d}t \right) \nonumber\\
    &\hspace{-.5cm}=\Tr\left(K_r S C_*^{-\frac{r}{2}} 
    \Tr_1\left| A(I \otimes B C_*^{\frac{r}{2}}) \right|^{2s} C_*^{-\frac{r}{2}} \int_0^\infty (C_* + t I)^{-2} t^r \, \mathrm{d}t \right) \nonumber\\
    &\hspace{-.5cm}= \Tr(S) \cdot \Tr\left| A(I \otimes B C_*^{\frac{r}{2}}) \right|^{2s} \nonumber\\
    &\hspace{-.5cm}= \Tr\left( A (I \otimes B C_*^r B) A \right)^s,
\end{align}
where we used the fact that $S$ has unit trace. This shows~\eqref{need to prove} and the proof is complete.

\medskip
\textit{Necessity.} Suppose that $C_*$ is an optimizer. Invoking Lemma \ref{main Theorem}, we obtain
\begin{align}\label{vanishing derivative}
    \Tr &\left(K_r S \int_0^\infty (C_* + t I)^{-1} C_*^{-\frac{r}{2}} 
    \Tr_1\left| A(I \otimes B C_*^{\frac{r}{2}}) \right|^{2s} C_*^{-\frac{r}{2}} 
    (C_* + t I)^{-1} t^r \, \mathrm{d}t \right) \nonumber\\
    &=\Tr\Big((I \otimes S)\, \Xi_{s, r}(A(I \otimes B), I \otimes C_*)\Big) \nonumber\\
    &\leq\Tr\left( A (I \otimes B C_*^r B) A \right)^s.
\end{align}
Since $A$ is invertible, $C_*$ is also invertible by Lemma \ref{main Theorem}, and hence it lies in the interior of the set of density matrices. Hence, the derivative is well-defined in all directions and must vanish identically. This forces the preceding inequality to be an equality (see Remark \ref{remark on optimality}). Since $S$ is arbitrary, this in turn implies that the integral in \eqref{vanishing derivative} is proportional to the identity matrix. Since the identity does not have non-zero off-diagonal entries in any basis, by expanding the integral in terms of the eigenbasis of $C_*$, one can observe that $C_*$ and $C_*^{-\frac{r}{2}} \Tr_1\left| A(I \otimes B C_*^{\frac{r}{2}}) \right|^{2s} C_*^{-\frac{r}{2}}$ commute. This enables us to solve the integral to obtain
\begin{align}
    &K_r \int_0^\infty (C_* + t I)^{-1} C_*^{-\frac{r}{2}} 
    \Tr_1\left| A(I \otimes B C_*^{\frac{r}{2}}) \right|^{2s} C_*^{-\frac{r}{2}} 
    (C_* + t I)^{-1} t^r \, \mathrm{d}t \nonumber\\
    &=K_r C_*^{-\frac{r}{2}} 
    \Tr_1\left| A(I \otimes B C_*^{\frac{r}{2}}) \right|^{2s} C_*^{-\frac{r}{2}} \int_0^\infty (C_* + t I)^{-2} t^r \, \mathrm{d}t \nonumber\\
    &=\Tr_1\left| A(I \otimes B C_*^{\frac{r}{2}}) \right|^{2s} C_*^{-1},
\end{align}
where we used \eqref{inversesquaredint}. Consequently, we obtain
\begin{equation}
    \Tr\left( A (I \otimes B C_*^r B) A \right)^s C_* = \Tr_1\left| A(I \otimes B C_*^{\frac{r}{2}}) \right|^{2s},
\end{equation}
as desired.
\end{proof}

\begin{corollary}\label{Psi-Phi}
    Let $(r,s)$ be as above, and let $\mathcal{H}_1$ be trivial. Then,
    \begin{equation}
        \Tr \left(A^\frac{p}{2} K B^q K^\dagger A^\frac{p}{2}\right)^\frac{s}{1-rs} = \left(\opt_C \Tr \left(A^\frac{p}{2} K B^\frac{q}{2} C^r B^\frac{q}{2} K^\dagger A^\frac{p}{2}\right)^s\right)^\frac{1}{1-rs}.
    \end{equation}
\end{corollary}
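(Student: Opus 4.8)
The plan is to read off the identity directly from the fixed-point characterization of the optimizer in Theorem~\ref{fixed-point}. Since $\mathcal{H}_1$ is trivial, the partial trace $\Tr_1$ acts as the identity and the tensor factors $I\otimes$ disappear, so the functional reduces to $Q_{p,q,r,s}(A,B)=\opt_C \Tr\bigl(A^{p/2}KB^{q/2}C^rB^{q/2}K^\dagger A^{p/2}\bigr)^s$. Writing $L:=A^{p/2}KB^{q/2}$ and $G:=L^\dagger L=B^{q/2}K^\dagger A^pKB^{q/2}$, the spectral identity $\Tr\bigl((UU^\dagger)^s\bigr)=\Tr\bigl((U^\dagger U)^s\bigr)$ applied with $U=LC^{r/2}$ rewrites the objective as $\Tr\bigl((C^{r/2}GC^{r/2})^s\bigr)$, while the same identity applied to $L$ itself turns the left-hand side of the claim into $\Tr(G^t)$ with $t:=\tfrac{s}{1-rs}$ (using $G=L^\dagger L$, $H:=A^{p/2}KB^qK^\dagger A^{p/2}=LL^\dagger$, which share their nonzero spectrum). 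Thus the whole corollary reduces to showing $\Tr(G^t)=\Lambda^{t/s}$, where $\Lambda:=Q_{p,q,r,s}(A,B)$ denotes the optimal value.

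First I would specialize the fixed-point equation~\eqref{proportionality2}. With $\mathcal{H}_1$ trivial it reads $C_*\propto \bigl|LC_*^{r/2}\bigr|^{2s}=(C_*^{r/2}GC_*^{r/2})^s$, and since $C_*$ is a density matrix the proportionality constant is exactly the normalization $\Tr\bigl((C_*^{r/2}GC_*^{r/2})^s\bigr)=\Phi_{p,q,r,s}(A,B,C_*)=\Lambda$. Hence $(C_*^{r/2}GC_*^{r/2})^s=\Lambda\,C_*$. Both sides being positive definite, I take the unique positive $s$-th root to obtain $C_*^{r/2}GC_*^{r/2}=\Lambda^{1/s}C_*^{1/s}$, and conjugating by $C_*^{-r/2}$ yields the clean relation $G=\Lambda^{1/s}\,C_*^{\,1/s-r}$.

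The exponent bookkeeping is the heart of the argument: the defining relation $t=\tfrac{s}{1-rs}$ is equivalent to $\tfrac1t=\tfrac1s-r$ and to $\tfrac{t}{s}=\tfrac{1}{1-rs}$. Using the first, $G=\Lambda^{1/s}C_*^{1/t}$, so $C_*=\Lambda^{-t/s}G^t$. Taking the trace and using $\Tr(C_*)=1$ gives $\Tr(G^t)=\Lambda^{t/s}=\Lambda^{1/(1-rs)}$, which is precisely the asserted equality once one recalls $\Tr(G^t)=\Tr(H^t)$. Note that the derivation is uniform in whether $\opt$ is a supremum or an infimum, since it uses only that $C_*$ is an optimizer of $C\mapsto\Phi_{p,q,r,s}(A,B,C)$.

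The obstacle I anticipate is technical rather than conceptual: I must check that every step respects positive-definiteness, so that the $s$-th root and the fractional/negative powers of $C_*$ are unambiguous, and confirm that the parameter regimes of Lemma~\ref{main Theorem} guarantee $t>0$ together with $1-rs\neq0$ (the latter fails only at the excluded boundary $s=1/r$ of case~1, where the exponents in the statement are themselves undefined, so the corollary is implicitly restricted to $rs\neq1$). I would also take care to justify the identification of the proportionality constant with $\Lambda$, which relies on $C_*$ genuinely attaining the optimum — precisely the content supplied by Lemma~\ref{main Theorem}.
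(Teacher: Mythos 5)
Your proof is correct and follows essentially the same route as the paper: both invoke the fixed-point equation of Theorem~\ref{fixed-point} to deduce $\Phi(A,B,C_*)\,C_* = \bigl(C_*^{r/2} G C_*^{r/2}\bigr)^s$ with $G = B^{q/2}K^\dagger A^p K B^{q/2}$, and then solve explicitly for $C_* \propto G^{s/(1-rs)}$. The only (cosmetic) difference is the final bookkeeping: you take the trace of the solved expression for $C_*$ and use $\Tr(C_*)=1$, whereas the paper substitutes $C_*$ back into $\Phi$ and simplifies by cyclicity; your explicit remark that $\Tr\bigl((L^\dagger L)^t\bigr) = \Tr\bigl((LL^\dagger)^t\bigr)$ makes a step the paper leaves implicit.
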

\begin{proof}
Recall first that
\begin{equation}\label{eq:defQ}
    \Phi(A,B,C_*)=\Tr \left(A^\frac{p}{2} K B^\frac{q}{2} C_*^r B^\frac{q}{2} K^\dagger A^\frac{p}{2}\right)^s,
\end{equation}
and by Theorem \ref{fixed-point}
\begin{equation}
    \Phi(A,B,C_*) \cdot C_* = \left(C_*^\frac{r}{2} B^\frac{q}{2} K^\dagger A^p K B^\frac{q}{2} C_*^\frac{r}{2}\right)^s.
\end{equation}
By solving the preceding identity for $C_*$, we obtain
\begin{equation}
    C_* = \Phi(A,B,C_*)^{\frac{1}{rs-1}} \cdot \left(B^\frac{q}{2} K^\dagger A^p K B^\frac{q}{2}\right)^{\frac{s}{1-rs}}.
\end{equation}
Finally, substituting this last identity into the right-hand side of \eqref{eq:defQ}, we obtain
\begin{align}
    \Phi(A,B,C_*)^{\frac{1}{1-rs}} &= \Tr \left(A^\frac{p}{2} K B^\frac{q}{2} \left(B^\frac{q}{2} K^\dagger A^p K B^\frac{q}{2}\right)^{\frac{rs}{1-rs}} B^\frac{q}{2} K^\dagger A^\frac{p}{2}\right)^s\\
    &= \Tr \left(B^\frac{q}{2} K^\dagger A^p K B^\frac{q}{2} \left(B^\frac{q}{2} K^\dagger A^p K B^\frac{q}{2}\right)^{\frac{rs}{1-rs}}\right)^s\\
    &= \Tr \left(B^\frac{q}{2} K^\dagger A^p K B^\frac{q}{2}\right)^{\frac{s}{1-rs}},
\end{align}
as desired.
\end{proof}

\begin{remark}\label{Remark:PsiPhi}
    Using the reparameterization $t := \frac{s}{1-rs}$, we obtain the following variational expression for $\Psi$:
    \begin{equation}
        \Psi_{p,q,t}(A,B) = \opt_C \Phi_{p,q,\frac{1}{s}-\frac{1}{t},s}^{\frac{t}{s}}(A,B,C).
    \end{equation}
\end{remark}

\section{Additivity via tensorization of optimizers}
\label{sec:additivity}

In this section, we establish the additivity of $H_{\alpha,z}^\lambda$ with respect to tensor products, which will be utilized in the proof of the DPI in the next section.

To prove additivity, we leverage the characterization of the optimizers via the fixed-point equation derived in Section~\ref{sec: characterization of optimizers}. The parameter region specified in Lemma~\ref{main Theorem} contains the DPI region of the new conditional entropies in Definition~\ref{def:DPIregion} when setting \(s=z\) and \(r=\lambda(1-\alpha)/z\). Hence, the results derived in Section~\ref{sec: characterization of optimizers} for trace functionals, together with those obtained in this section, apply to $H_{\alpha,z}^\lambda$ in the DPI region.

\begin{theorem}[Additivity]\label{Additivity}
	Let $\rho_{AB}$ and $\sigma_{A'B'}$ be two quantum states, and let $s=z,r=\lambda(1-\alpha)/z$ satisfy the conditions of Lemma~\ref{main Theorem}. Then,
	\begin{equation}
		H_{\alpha,z}^\lambda(AA'|BB')_{\rho \otimes \sigma} = H^\lambda_{\alpha,z}(A|B)_{\rho} + H^\lambda_{\alpha,z}(A'|B')_{\sigma}.
	\end{equation}
\end{theorem}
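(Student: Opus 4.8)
The plan is to reduce the additivity of $H^\lambda_{\alpha,z}$ to a multiplicativity statement for the underlying optimized trace functional, and then to prove the latter by showing that the optimizer tensorizes. Recall that $H^\lambda_{\alpha,z}(A|B)_\rho = \frac{1}{1-\alpha}\log Q^\lambda_{\alpha,z}(A|B)_\rho$, where $Q^\lambda_{\alpha,z}(A|B)_\rho = \opt_{\sigma_B}\Phi_{p,q,r,s}(\rho_{AB}, I_A\otimes\rho_B, I_A\otimes\sigma_B)$ with $K=I$ and $(p,q,r,s) = \big(\tfrac{\alpha}{z}, \tfrac{(1-\lambda)(1-\alpha)}{z}, \tfrac{\lambda(1-\alpha)}{z}, z\big)$. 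Since all three entropies share the same prefactor $\frac{1}{1-\alpha}$, additivity is equivalent to the multiplicativity $Q^\lambda_{\alpha,z}(AA'|BB')_{\rho\otimes\sigma} = Q^\lambda_{\alpha,z}(A|B)_\rho \cdot Q^\lambda_{\alpha,z}(A'|B')_\sigma$. For the joint problem the composite system splits as $\mathcal{H}_1=\mathcal{H}_{AA'}$ and $\mathcal{H}_2=\mathcal{H}_{BB'}$, the optimization runs over states $\tau_{BB'}$, and the relevant marginal is $\rho_B\otimes\sigma_{B'}$.

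The heart of the argument is a tensorization of optimizers via Theorem~\ref{fixed-point}. First I would check that for every $(\alpha,z,\lambda)\in\mathcal{D}$ the pair $(s,r)=\big(z,\tfrac{\lambda(1-\alpha)}{z}\big)$ lies in one of the two admissible regimes of Lemma~\ref{main Theorem}; this is a direct computation from the defining inequalities of $\mathcal{D}_1,\mathcal{D}_2$ (e.g.\ $\alpha\geq 1-z$ forces $r\leq 1$ and $\alpha-1\leq z$ forces $r\geq -1$), so the fixed-point characterization applies to all three problems. Let $C_*,D_*$ be optimizers for the $\rho$- and $\sigma$-problems, which exist in $\mathcal{D}$, and set $M_\rho:=\rho_{AB}^{p/2}(I_A\otimes\rho_B^{q/2}C_*^{r/2})$ and $M_\sigma:=\sigma_{A'B'}^{p/2}(I_{A'}\otimes\sigma_{B'}^{q/2}D_*^{r/2})$. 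Reorganizing $\mathcal{H}_{AA'BB'}\cong\mathcal{H}_{AB}\otimes\mathcal{H}_{A'B'}$ by a fixed permutation unitary, the product ansatz $\tau_*:=C_*\otimes D_*$ makes every operator power and tensor factor separate, giving
\begin{equation}
(\rho_{AB}\otimes\sigma_{A'B'})^{p/2}\big(I_{AA'}\otimes(\rho_B\otimes\sigma_{B'})^{q/2}\tau_*^{r/2}\big) = M_\rho\otimes M_\sigma.
\end{equation}
Hence $|\,\cdot\,|^{2s}=|M_\rho|^{2s}\otimes|M_\sigma|^{2s}$ and the partial trace factors as $\Tr_{AA'}=\Tr_A\otimes\Tr_{A'}$, so
\begin{equation}
\Tr_{AA'}\big|(\rho_{AB}\otimes\sigma_{A'B'})^{p/2}(I_{AA'}\otimes(\rho_B\otimes\sigma_{B'})^{q/2}\tau_*^{r/2})\big|^{2s} = \big(\Tr_A|M_\rho|^{2s}\big)\otimes\big(\Tr_{A'}|M_\sigma|^{2s}\big).
\end{equation}
By Theorem~\ref{fixed-point} applied to each sub-problem, $\Tr_A|M_\rho|^{2s}\propto C_*$ and $\Tr_{A'}|M_\sigma|^{2s}\propto D_*$, so the right-hand side is proportional to $C_*\otimes D_*=\tau_*$. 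Thus $\tau_*$ solves the joint fixed-point equation and, again by Theorem~\ref{fixed-point}, is an optimizer of the joint problem. Note that the direction of $\opt$ (supremum versus infimum) plays no role here, since the fixed-point characterization is agnostic to it.

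It then remains to evaluate the functional at $\tau_*$. Using the same factorization together with $\Tr(\,\cdot\otimes\cdot\,)=\Tr(\cdot)\Tr(\cdot)$, one obtains
\begin{equation}
Q^\lambda_{\alpha,z}(AA'|BB')_{\rho\otimes\sigma} = \Tr\big(|M_\rho|^{2s}\otimes|M_\sigma|^{2s}\big) = \Tr|M_\rho|^{2s}\cdot\Tr|M_\sigma|^{2s} = Q^\lambda_{\alpha,z}(A|B)_\rho\cdot Q^\lambda_{\alpha,z}(A'|B')_\sigma,
\end{equation}
and applying $\frac{1}{1-\alpha}\log(\cdot)$ to both sides yields the claimed additivity. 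I expect the tensorization step to be the main obstacle: one must ensure that the product of the two individual optimizers genuinely satisfies the \emph{joint} stationarity condition, the only nontrivial input being that the proportionality constants produced by the separate fixed-point equations multiply and cancel upon normalization. A secondary point requiring care is verifying that $(s,r)$ stays in the admissible range of Theorem~\ref{fixed-point} throughout $\mathcal{D}$—including the sign change of $r$ within $\mathcal{D}_2$—and that the $\opt$ is attained, so that optimizers to tensorize actually exist.
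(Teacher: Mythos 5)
Your proposal is correct and follows essentially the same route as the paper: both reduce additivity to multiplicativity of the optimized trace functional and then use the fixed-point characterization of Theorem~\ref{fixed-point} to show that the tensor product of the individual optimizers satisfies the joint fixed-point equation, hence is an optimizer of the joint problem, after which the value factorizes. Your added checks (that $(s,r)=\bigl(z,\tfrac{\lambda(1-\alpha)}{z}\bigr)$ stays in the admissible regime throughout $\D$, and that optimizers are attained) are details the paper leaves implicit, but they do not change the argument.
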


To demonstrate this, we first derive a fixed-point equation that the optimizer must satisfy. From this condition, as shown below, we can directly conclude that the conditional entropy is additive.

\begin{proof}
It suffices to prove the result for full-rank states, since the general case follows from Proposition \ref{Continuity}. By an application of Theorem \ref{fixed-point}, $\tau_B$ is an optimizer of the function $\sigma_B\mapsto Q_{\alpha,z}^\lambda(\rho_{AB}|\sigma_B)$ if and only if
\begin{equation}\label{proportionality}
	\tau_B \propto \textup{Tr}_A\left|\rho_{AB}^{\frac{\alpha}{2z}} \rho_B^{\frac{(1-\lambda)(1-\alpha)}{2z}} \tau_B^{\frac{\lambda(1-\alpha)}{2z}}\right|^{2z}.
\end{equation}
From the definition of the conditional entropy, it is clear that it is enough to show that $\tau_{B} \otimes \gamma_{B'}$ is an optimizer of $\rho_{AB} \otimes \sigma_{A'B'}$, where $\tau_B$ and $\gamma_{B'}$ are the optimizers of $\rho_{AB}$ and $\sigma_{A'B'}$, respectively. Invoking \eqref{proportionality} for $\rho_{AB} \otimes \sigma_{A'B'}$ yields
\begin{equation}
	\eta_{BB'} \propto \textup{Tr}_{AA'}\left|(\rho_{AB} \otimes \sigma_{A'B'})^{\frac{\alpha}{2z}} (\rho_B \otimes \sigma_{B'})^{\frac{(1-\lambda)(1-\alpha)}{2z}} \eta_{BB'}^{\frac{\lambda(1-\alpha)}{2z}}\right|^{2z}.
\end{equation}
The latter condition is satisfied for $\eta_{BB'}=\tau_{B} \otimes \gamma_{B'}$ since, by substitution
\begin{equation}
	\tau_{B} \otimes \gamma_{B'} \propto \textup{Tr}_A \left|\rho_{AB}^{\frac{\alpha}{2z}} \rho_B^{\frac{(1-\lambda)(1-\alpha)}{2z}} \tau_B^{\frac{\lambda(1-\alpha)}{2z}} \right|^{2z} \otimes \textup{Tr}_{A'}\left|\sigma_{A'B'}^{\frac{\alpha}{2z}} \sigma_{B'}^{\frac{(1-\lambda)(1-\alpha)}{2z}} \gamma_{B'}^{\frac{\lambda(1-\alpha)}{2z}}\right|^{2z}.
\end{equation}
The latter proportionality relation holds since, by assumption, $\tau_B$ and $\gamma_{B'}$ are the optimizers of $\rho_{AB}$ and $\sigma_{A'B'}$ and satisfy the desired fixed-point equation.
\end{proof}

From the fixed-point condition~\eqref{proportionality}, we derive an explicit closed-form expression for $H^\lambda_{\alpha,1}$. Since all quantum divergences coincide in the classical case, this also yields a closed-form expression in that setting. The proof follows by observing that, in the classical case, the fixed-point equation~\eqref{proportionality} can be solved explicitly for $\tau_B$.

\begin{corollary}\label{closed-form Petz}
Let $\rho_{AB}$ be a quantum state, and let $s=1,r=\lambda(1-\alpha)$ satisfy the conditions of Lemma~\ref{main Theorem}. Then, the optimum of the optimization in the definition of $H^\lambda_{\alpha,1}(A|B)_{\rho}$ is attained at
\begin{equation}
	\tau_B \propto \left|\big(\Tr_A(\rho_{AB}^{\alpha})\big)^{\frac{1}{2}} \rho_B^{\frac{(1-\lambda)(1-\alpha)}{2}}\right|^{\frac{2}{1 - \lambda(1 - \alpha)}},
\end{equation}
and
\begin{equation}
	H^\lambda_{\alpha,1}(A|B)_{\rho} = \frac{1 - \lambda(1 - \alpha)}{1 - \alpha} \log{\Tr\left|\big(\Tr_A(\rho_{AB}^{\alpha})\big)^{\frac{1}{2}} \rho_B^{\frac{(1-\lambda)(1-\alpha)}{2}}\right|^{\frac{2}{1 - \lambda(1 - \alpha)}}}.
\end{equation}

\end{corollary}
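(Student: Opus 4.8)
The plan is to specialize the optimizer characterization of Theorem~\ref{fixed-point} to the case $z=1$, solve the resulting fixed-point equation in closed form, and substitute the solution back into the definition of $\Upsilon^\lambda_{\alpha,1}$. First I would check that the parameters governing $H^\lambda_{\alpha,1}$, namely $(s,r)=\big(z,\tfrac{\lambda(1-\alpha)}{z}\big)=\big(1,\lambda(1-\alpha)\big)$, lie in the admissible range of Lemma~\ref{main Theorem} whenever $(\alpha,1,\lambda)\in\D$; this is a short verification using the defining inequalities of $\mathcal{D}_1$ and $\mathcal{D}_2$, which force $r\in(-1,1)$. Then the fixed-point equation~\eqref{proportionality} applies at $z=1$ and reads
\begin{equation*}
\tau_B \propto \Tr_A\left|\rho_{AB}^{\frac{\alpha}{2}}\,\rho_B^{\frac{(1-\lambda)(1-\alpha)}{2}}\,\tau_B^{\frac{\lambda(1-\alpha)}{2}}\right|^{2}.
\end{equation*}
Introducing the shorthand $a:=\tfrac{\lambda(1-\alpha)}{2}$, $b:=\tfrac{(1-\lambda)(1-\alpha)}{2}$ and $R:=\Tr_A(\rho_{AB}^{\alpha})$, I would expand the square and pull the $B$-operators $\rho_B^{b},\tau_B^{a}$ out of the partial trace over $A$, reducing the equation to $\tau_B\propto \tau_B^{a}\,S\,\tau_B^{a}$ with $S:=\rho_B^{b}R\,\rho_B^{b}$.

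The key step is solving this operator fixed-point equation, which I would do with the ansatz $\tau_B\propto S^{c}$. Substituting gives $S^{c}\propto S^{ac}\,S\,S^{ac}=S^{2ac+1}$, and matching exponents forces $c=\tfrac{1}{1-2a}=\tfrac{1}{1-\lambda(1-\alpha)}$. The point is that under this ansatz $\tau_B$ and $S$ commute, so the operator equation collapses to a single scalar consistency condition; this commutation is precisely what makes $z=1$ tractable, since for general $z$ the outer power $(\,\cdot\,)^{z}$ together with the non-commutativity of the three factors obstructs any closed form. Writing $S=\big|R^{1/2}\rho_B^{b}\big|^{2}$ then yields $S^{c}=\big|\Tr_A(\rho_{AB}^{\alpha})^{1/2}\rho_B^{(1-\lambda)(1-\alpha)/2}\big|^{2/(1-\lambda(1-\alpha))}$, which is exactly the claimed optimizer. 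Since Theorem~\ref{fixed-point} is an if-and-only-if characterization of optimizers, exhibiting one solution certifies that the optimum is attained there.

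Finally I would evaluate the entropy at this optimizer. Substituting $\tau_B$ into $\Upsilon^\lambda_{\alpha,1}$ and using cyclicity together with $\Tr\big[\rho_{AB}^{\alpha}\,(I_A\otimes m)\big]=\Tr\big[R\,m\big]$ for any $B$-operator $m$, the argument of the logarithm collapses to $\Tr\big(S\,\tau_B^{2a}\big)$. With $\tau_B=S^{c}/\Tr(S^{c})$ and the arithmetic identity $1+2ac=c$, this equals $\Tr(S^{c})^{1-2a}$, so that
\begin{equation*}
H^\lambda_{\alpha,1}(A|B)_\rho=\frac{1-2a}{1-\alpha}\log\Tr(S^{c})=\frac{1-\lambda(1-\alpha)}{1-\alpha}\log\Tr\left|\Tr_A(\rho_{AB}^{\alpha})^{\frac12}\rho_B^{\frac{(1-\lambda)(1-\alpha)}{2}}\right|^{\frac{2}{1-\lambda(1-\alpha)}}.
\end{equation*}
The only genuinely delicate points are the exponent bookkeeping $1+2ac=c$ and the commutation that lets the scalar ansatz close the fixed-point equation; the remaining manipulations are routine trace identities.
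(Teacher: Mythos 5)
Your proposal is correct and takes essentially the same route as the paper, whose own (sketched) proof is precisely to specialize the fixed-point condition~\eqref{proportionality} of Theorem~\ref{fixed-point} to $z=1$ and solve it explicitly for $\tau_B$ — exactly your commuting ansatz $\tau_B \propto S^{c}$ with $c = 1/(1-\lambda(1-\alpha))$, followed by substitution back into the trace functional. One tiny imprecision: the admissible range gives $r=\lambda(1-\alpha)\in[-1,1)$ rather than $(-1,1)$ (e.g.\ $\alpha=2$, $z=1$, $\lambda=1$ yields $r=-1$), but $r=-1$ is still covered by case 2 of Lemma~\ref{main Theorem}, so nothing in your argument breaks.
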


\section{Data-processing inequality}
\label{sec:DPI}

\begin{definition} \label{def:DPIregion}
We define the region $\mathcal{D}:=\mathcal{D}_1\cup\mathcal{D}_2$, called the data-processing or DPI region, where $\mathcal{D}_1,\mathcal{D}_2$ are defined by
 \begin{align}
     \mathcal{D}_1 &:=\left\{(\alpha,z,\lambda) \in \mathbb{R}^3 : 0 < \alpha < 1, \ 1-z \leq \alpha \leq  z \ \text{and} \ 1-\frac{z}{1-\alpha} \leq \lambda \leq 1 \right\},\\
     \mathcal{D}_2 &:=\left\{(\alpha,z,\lambda) \in \mathbb{R}^3 : 1<\alpha<\infty, \ \alpha-1 \leq  z \leq \alpha \leq 2z \ \text{and} \ 1+\frac{z}{1-\alpha} \leq \lambda \leq 1 \right\}.
\end{align}
\end{definition}
This is the parameter region for which the data-processing inequality and several other properties of the conditional entropy can be established. The main goal of this section is to prove the DPI for $H^{\lambda}_{\alpha,z}$ whenever $(\alpha,z,\lambda)$ lies in $\D$, thereby justifying this terminology.

When studying conditional entropies, and hence measures of conditional uncertainty, it is natural to impose axioms motivated by operational principles. A central requirement is an appropriate data-processing inequality for quantum conditional entropies.
In the classical setting, i.e., for probability distributions, unconditional entropies are expected to be monotone under doubly stochastic maps~\cite[Section~2.B]{gour21_axiomatic}. By Birkhoff's theorem, every doubly stochastic map can be expressed as a convex combination of permutations. In the quantum setting, this notion is naturally extended by requiring monotonicity under convex combinations of unitary channels, namely maps of the form
\begin{equation}
\mathcal{M}(\cdot) = \sum_k p_k\, U_k (\cdot)\, U_k^\dagger,
\end{equation}
where $U_k$ are unitary operators and $\{p_k\}$ is a probability distribution. We refer to such maps as mixed unitary channels.

In the conditional scenario, however, additional requirements arise. Processing the side information system~$B$ alone should not reduce the uncertainty about~$A$. Indeed, by the non-signalling principle, local operations on~$B$ cannot increase correlations with~$A$, and a meaningful conditional entropy should reflect this property. Moreover, one must consider a larger class of operations. Suppose an observer performs a measurement on the side information system~$B$ and, conditioned on the measurement outcome, applies to system~$A$ a channel given by a convex combination of unitaries. Such conditioned operations allow information obtained from~$B$ to influence the processing of~$A$, while still prohibiting signalling from~$A$ to~$B$. Since no additional information about~$A$ is revealed in this procedure, one expects that the conditional entropy should not decrease under these channels. These operations correspond to bipartite channels of the form
\begin{align}
    \mathcal{E}_{AB \to AB'}(\cdot) 
    = \sum_i 
    \mathcal{M}_{A \to A}^{i} 
    \otimes 
    \mathcal{N}_{B \to B'}^{i}(\cdot),
\end{align}
where each $\mathcal{M}_{A \to A}^{i}$ is a convex combination of unitaries acting on $A$, and $\{\mathcal{N}_{B \to B'}^{i}\}_i$ is an instrument acting on~$B$.

These conditions may be viewed as minimal data-processing requirements. A natural question, however, is whether monotonicity continues to hold for a broader class of channels. In particular, the operations on~$A$ may be taken to be unital or sub-unital, as considered in~\cite{vempati2022conditional,gour2024inevitability}. This class is strictly larger, since the set of unital channels strictly contains the set of mixed unitary channels (see e.g.,~\cite[Problem 8.3]{nielsen2001quantum}).

Below, we establish this stronger form of the DPI. Finally, we note that monotonicity can also be studied under the even larger class of conditionally mixing channels introduced in~\cite{gour2024inevitability} (see Section~\ref{alternativeproof} for more details).

We first recall that a quantum instrument is a collection of completely positive trace-nonincreasing operations $\{\N^i\}_i$ such that the overall transformation $\sum_i \N^i$ is trace-preserving. We aim to prove

\begin{theorem}[Data-processing inequality] \label{DPItheorem}
Let $(\alpha, z, \lambda) \in \mathcal{D}$, and let $\rho_{AB}$ be a quantum state. Furthermore, let $\sigma_{A'B'}$ be defined as
\begin{equation}
    \sigma_{A'B'} = \sum_i \mathcal{M}^i_{A\rightarrow A'} \otimes \mathcal{N}^i_{B\rightarrow B'}(\rho_{AB})\,,
\end{equation}
where each $\mathcal{M}^i_{A\rightarrow A'}$ is a sub-unital channel and $\{\mathcal{N}^i_{B\rightarrow B'}\}_i$ is a quantum instrument. Then, the following inequality holds
\begin{equation}
    H^{\lambda}_{\alpha, z}(A|B)_{\rho} \leq H^{\lambda}_{\alpha, z}(A'|B')_{\sigma}\,.
\end{equation}
\end{theorem}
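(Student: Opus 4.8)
The plan is to reduce the statement to a monotonicity property of the exponentiated quantity $Q^\lambda_{\alpha,z}(A|B)_\rho$ defined in \eqref{Q2}, and to derive that monotonicity from the joint convexity/concavity results of Theorem~\ref{log-concavity2}. Recall from \eqref{eq:intro-cond} that $\exp\big((1-\alpha)H^\lambda_{\alpha,z}(A|B)_\rho\big)=Q^\lambda_{\alpha,z}(A|B)_\rho=Q_{p,q,r,s}(\rho_{AB},\rho_B)$ with $p=\frac{\alpha}{z}$, $q=\frac{(1-\lambda)(1-\alpha)}{z}$, $r=\frac{\lambda(1-\alpha)}{z}$, $s=z$. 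Since $\log$ is increasing and $\frac{1}{1-\alpha}$ carries the sign of $1-\alpha$, the desired inequality $H^\lambda_{\alpha,z}(A|B)_\rho\le H^\lambda_{\alpha,z}(A'|B')_\sigma$ is equivalent to $Q^\lambda_{\alpha,z}(A'|B')_\sigma\ge Q^\lambda_{\alpha,z}(A|B)_\rho$ when $\alpha<1$ and to the reverse inequality when $\alpha>1$. First I would verify that the parameter dictionary maps the three pieces of $\mathcal{D}$ onto the three regimes of Theorem~\ref{log-concavity2} so that the convexity type of $Q_{p,q,r,s}$ always matches the direction of monotonicity dictated by the sign of $1-\alpha$ (the log-concave regime transferring directly to concavity or convexity of $H$ itself).

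The second ingredient is to upgrade the joint convexity of the bivariate functional to convexity in the single variable $\rho_{AB}$. Because the marginal map $\rho_{AB}\mapsto\rho_B=\Trm_A(\rho_{AB})$ is linear, the function $\rho_{AB}\mapsto Q_{p,q,r,s}(\rho_{AB},\Trm_A\rho_{AB})$ inherits the joint convexity, concavity, or log-concavity of Theorem~\ref{log-concavity2} by composition with a linear map; note that the optimization over $\sigma_B$ is already absorbed into $Q_{p,q,r,s}$, so no separate ``infimum-of-convex'' issue arises.

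With these in hand I would run the standard twirling argument to promote convexity to data processing. The key reduction is that discarding a subsystem $E$ from the conditioning register is monotone: writing the partial trace as a uniform average over a Heisenberg--Weyl (unitary one-design) twirl, $\Trm_E(\omega)\otimes\frac{I_E}{d_E}=\frac{1}{|G|}\sum_g (I\otimes V_g)\,\omega\,(I\otimes V_g)^\dagger$, joint convexity together with invariance of $Q^\lambda_{\alpha,z}$ under local unitaries gives $Q^\lambda_{\alpha,z}(A|BE')_{\Trm_E\omega\otimes I/d}\le Q^\lambda_{\alpha,z}(A|BE)_\omega$ in the convex regime (and the reverse in the concave/log-concave regime), while additivity (Theorem~\ref{Additivity}) collapses the uncorrelated maximally mixed $E'$ to give $Q^\lambda_{\alpha,z}(A|BE')_{\Trm_E\omega\otimes I/d}=Q^\lambda_{\alpha,z}(A|B)_{\Trm_E\omega}$; in every regime this yields the correct direction, namely that discarding side information increases $H$. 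To handle the full map I would dilate the instrument $\{\mathcal{N}^i_B\}$ by an isometry that writes its outcome into a classical flag $X$ (an invariance), realize the correlated family $\{\mathcal{M}^i_A\}$ as a sub-unital map on $A$ controlled by $X$, and then trace out the dilating registers together with $X$, each partial trace being handled by the twirl above.

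The main obstacle, I expect, is the treatment of the sub-unital maps on $A$. Since the $A$-system enters the functional through the factor $I_A$ in the conditioning operator $I_A\otimes\sigma_B$ rather than through an optimization, a general channel on $A$ would violate the inequality (for instance, resetting $A$ to a pure state decreases the conditional uncertainty), and only the sub-unital hypothesis $\mathcal{M}^i_A(I)\le I$ provides the correct one-sided bound. Making the dilation-plus-twirl argument respect this asymmetric role of $A$---ensuring that the image of $I_A$ under the $A$-side processing is dominated by $I_{A'}$ so that it can be absorbed into the optimization over $\sigma_{B'}$---is the delicate point, together with the bookkeeping verifying that the convexity type matches the required monotonicity direction throughout both $\mathcal{D}_1$ and $\mathcal{D}_2$.
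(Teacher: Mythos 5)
Your reduction to monotonicity of $Q^\lambda_{\alpha,z}$ and your twirling argument for channels on Bob's side are essentially the paper's Proposition~\ref{log-concavity} and Proposition~\ref{DPItheoremBob}, and that part of your plan is sound. However, there are three genuine gaps. First, your opening verification step cannot succeed: the parameter dictionary does \emph{not} map all of $\mathcal{D}$ into the regimes of Theorem~\ref{log-concavity2}. For $(\alpha,z,\lambda)\in\mathcal{D}_2$ with $\lambda<0$ one gets $r=\lambda(1-\alpha)/z>0$ while $q=(1-\lambda)(1-\alpha)/z<0$, which falls outside all three cases of the theorem, so no convexity/concavity statement is available there and the twirl argument has nothing to run on. The paper closes exactly this sub-region by a different tool, namely the entropic duality of Theorem~\ref{Duality} combined with the DPI of the dual quantity; without some such substitute your argument does not cover all of $\mathcal{D}$.

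Second, you correctly identify that sub-unital maps on $A$ are not reachable by twirling (twirling $A$ destroys the state), but you leave this as an unresolved ``delicate point'' rather than proving it. The paper resolves it by rewriting $H^\lambda_{\alpha,z}(A|B)_\rho = \opt_{\sigma_B} -D_{\alpha,z}\big(\rho_{AB}\,\big\|\, I_A\otimes \Xi_{\lambda,\frac{1-\alpha}{z}}(\rho_B,\sigma_B)^{\frac{z}{1-\alpha}}\big)$ and invoking the known DPI of the $\alpha$-$z$ R\'enyi divergence under $\mathcal{M}\otimes\mathrm{id}$ together with its antimonotonicity in the second argument and $\mathcal{M}(I_A)\le I_{A'}$; your remark about ``absorbing'' $\mathcal{M}(I_A)\le I_{A'}$ into the optimization points in this direction but is not an argument. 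Third, your treatment of the correlated sum $\sum_i \mathcal{M}^i_A\otimes\mathcal{N}^i_B$ via a flag register and a ``controlled sub-unital map on $A$'' is circular: that controlled map acts jointly on $A$ and a register inside the conditioning system, so it is precisely a conditioned operation, not a tensor-product channel, and the DPIs you have established do not apply to it. The missing ingredient is the structural property that $(Q^\lambda_{\alpha,z})^{\frac{1}{1-\lambda(1-\alpha)}}$ is positively homogeneous and additive under direct sums $\sum_i X_i\otimes\ketbra{i}{i}$ (Proposition~\ref{classical conditioning}), which lets one decompose the conditioned map block-by-block, apply the product-channel DPI in each block, and recombine; note the nontrivial exponent $\frac{1}{1-\lambda(1-\alpha)}$, without which the direct-sum additivity simply fails.
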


To prove this result, we first establish the data processing inequality separately for two types of operations: a sub-unital channel acting on Alice's subsystem and an arbitrary quantum channel acting on Bob's subsystem. The monotonicity under sub-unital channels on Alice's side is presented in Proposition~\ref{DPItheoremAlice}, while the corresponding result for general maps on Bob's side is stated in Proposition~\ref{DPItheoremBob}. Finally, in Section~\ref{conditionally subunital}, we combine these two ingredients to prove the full monotonicity under the most general class of maps involving conditioned operations.

We begin by proving monotonicity under channels acting on Bob's side. The key idea is to apply Theorem~\ref{log-concavity2} to observe that the quantity $Q_{\alpha,z}^\lambda(A|B)_{\rho} = \exp((1-\alpha) H^\lambda_{\alpha,z}(A|B)_{\rho})$ is convex, concave, or log-concave, depending on the parameter regime (see Proposition \ref{log-concavity}). We then show that these concavity/convexity properties imply the data processing inequality under the action of a channel on the side information (Subsection~\ref{Convexity implies DPI}). The monotonicity under the action of a sub-unital channel on Alice's system is easier to prove and follows directly from an expression of the conditional entropy in terms of the $\alpha$-$z$ R\'enyi relative entropy (Subsection \ref{SecDPItheoremAlice}).

\begin{remark}\label{counterexample}
It is natural to ask whether the DPI for $H^{\lambda}_{\alpha,z}$ can be derived directly from the DPI for an underlying ``trivariate divergence." However, as we demonstrate below, this is not the case.

In~\cite[Corollary 3.3]{carlen2016some}, the authors show that, for nonzero $r$, $s$ and $t$, the function
\begin{equation}\label{TraceFunctional}
(A,B,C) \mapsto \Tr\left(A^{\frac{r}{2}} B^{\frac{s}{2}} C^t B^{\frac{s}{2}} A^{\frac{r}{2}}\right)
\end{equation}
is never concave, and is convex only under the conditions $s = 2$, $t, r < 0$, and $-1 \leq t + r < 0$. Consequently, for example, the trace functional in \eqref{TraceFunctional} is not convex for $r = 2$, $s = t = -\frac{1}{2}$. It is well-known that the failure of convexity in such functionals implies a violation of the DPI. Yet, the conditional entropy $H^{1/2}_{2,1}$, which is associated with this functional, is the dual of $H^\downarrow_{1/3,2/3}$ (see Theorem \ref{Duality}). Since $H^\downarrow_{1/3,2/3}$ satisfies the DPI---by virtue of the DPI for the underlying $\alpha$-$z$ R\'enyi divergence---it follows from standard duality arguments (see, e.g.,~\cite[Section III]{tomamichel13_duality}) that $H^{1/2}_{2,1}$ must also satisfy the DPI.
\end{remark}

\begin{proposition}\label{log-concavity}
    Let $(\alpha,z,\lambda) \in \D$. Then,

    \medskip
    (i) $\rho \rightarrow Q_{\alpha,z}^\lambda(A|B)_{\rho}$ is concave for $0 < \alpha < 1, 0 < \lambda < 1$;
    
    \medskip
    (ii) $\rho \rightarrow Q_{\alpha,z}^\lambda(A|B)_{\rho}$ is convex for $\alpha > 1, 0 < \lambda < 1$;
    
    \medskip
    (iii) $\rho \rightarrow Q_{\alpha,z}^\lambda(A|B)_{\rho}$ is log-concave for $0 < \alpha < 1, 1-\frac{z}{1-\alpha}\leq \lambda < 0$.
\end{proposition}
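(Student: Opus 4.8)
The plan is to recognize $Q^\lambda_{\alpha,z}(A|B)_\rho$ as a single instance of the bivariate functional $Q_{p,q,r,s}$ of Theorem~\ref{log-concavity2}, evaluated along a linear embedding, and then to transfer the convexity properties established there. Unfolding Definition~\ref{trivariate} with $\tau=I_A\otimes\rho_B$, $\sigma=I_A\otimes\sigma_B$ and $K=I$, and identifying $\sigma_B$ with the variable $C$, one has
\begin{equation}
Q^\lambda_{\alpha,z}(A|B)_\rho=\opt_{\sigma_B}\Trm\!\Big(\rho_{AB}^{\frac p2}(I_A\otimes\rho_B)^{\frac q2}(I_A\otimes\sigma_B)^{r}(I_A\otimes\rho_B)^{\frac q2}\rho_{AB}^{\frac p2}\Big)^{s}=Q_{p,q,r,s}(\rho_{AB},\rho_B),
\end{equation}
with the dictionary $p=\alpha/z$, $q=(1-\lambda)(1-\alpha)/z$, $r=\lambda(1-\alpha)/z$, $s=z$ (cf.\ \eqref{eq:intro-cond}). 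The map $\rho_{AB}\mapsto(\rho_{AB},\Tr_A\rho_{AB})=(\rho_{AB},\rho_B)$ is linear, so the composition $\rho_{AB}\mapsto Q_{p,q,r,s}(\rho_{AB},\rho_B)$ inherits joint concavity, convexity, or log-concavity directly from the corresponding property of $(A,B)\mapsto Q_{p,q,r,s}(A,B)$. Thus the proposition reduces entirely to checking that each of the three parameter regimes lands the dictionary in the matching case of Theorem~\ref{log-concavity2}, with the prescribed choice of $\opt$.

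The bookkeeping is driven by the identity
\begin{equation}
p+q+r=\frac{\alpha+(1-\lambda)(1-\alpha)+\lambda(1-\alpha)}{z}=\frac1z=\frac1s,
\end{equation}
together with $\frac{p}{p+q+r}=\alpha$, $\frac{q+r}{p+q+r}=1-\alpha$, and $\frac1{1+q+r}=\frac{z}{z+1-\alpha}$. In case (i), with $0<\alpha<1$ and $0\le\lambda\le1$, the $\mathcal D_1$ bounds $1-z\le\alpha\le z$ give $0<p\le1$ and (for $0<\lambda<1$) $0<q,r\le1$, while $\alpha\le z$ and $1-\alpha\le z$ are exactly $\frac{p}{p+q+r}\le s$ and $\frac{q+r}{p+q+r}\le s$; since moreover $s=\frac1{p+q+r}$, this is case~1 (concave), and $\lambda(1-\alpha)>0$ makes $\opt$ a supremum, as required. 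In case (ii), with $\alpha>1$ and $0\le\lambda\le1$, the $\mathcal D_2$ bounds $z\le\alpha\le2z$ give $1\le p\le2$, while $\alpha-1\le z$ together with $0\le\lambda\le1$ force $-1\le q,r\le0$, and $z\le\alpha$ is precisely $s\le\frac1{1+q+r}$; this is case~3 (convex), with $\lambda(1-\alpha)<0$ selecting the infimum. In case (iii), with $0<\alpha<1$ and $\lambda<0$, one has $r<0$ but $q+r=\frac{1-\alpha}{z}>0$, and the $\mathcal D_1$ bound $\lambda\ge1-\frac{z}{1-\alpha}$ yields both $q\le1$ and $r\ge-1$; this is case~2 (log-concave), again with the infimum.

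The one subtlety, and the only real obstacle, is that $p+q+r=1/s$ holds identically, so in every regime the outer exponent sits exactly at the endpoint $s=\frac1{p+q+r}$, equivalently $\mu:=s(p+q+r)=1$. For (i) and (ii) this endpoint lies in the closed $s$-ranges of cases~1 and~3, so Theorem~\ref{log-concavity2} applies verbatim. For (iii), however, the log-concave regime is stated with the strict bound $s<\frac1{p+q+r}$, so the endpoint must be recovered separately. I expect to handle this by a limiting argument: in case (iii) one necessarily has $\alpha>1-z$ strictly, since $\lambda\ge1-\frac{z}{1-\alpha}$ and $\lambda<0$ force $\frac{z}{1-\alpha}>1$; hence $\frac{q+r}{p+q+r}=1-\alpha<z=\frac1{p+q+r}$ and the admissible interval for the outer exponent is nondegenerate. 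Applying case~2 to $Q_{p,q,r,s'}(\rho_{AB},\rho_B)$ for $s'\nearrow s$ gives log-concavity for each $s'<s$, and log-concavity is preserved under the pointwise limit $s'\to s$, the optimization being over a fixed compact set with continuous dependence on $s'$. Alternatively, one checks that the proof of Theorem~\ref{log-concavity2}(2) already goes through at $\mu=1$, since there $\mu/(q+r)=z/(1-\alpha)\ge1$ and $(q+r)/\mu\le1$ both hold. Finally, the boundary values $\lambda\in\{0,1\}$ in (i) (where $r=0$ or $q=0$, so the optimization degenerates), the boundary $\alpha-1=z$ in (ii) (where $1+q+r=0$), the endpoints $\alpha\in\{0,1\}$, and non-full-rank states are all dealt with by the same continuity considerations, consistent with the support conventions of Definition~\ref{trivariate}.
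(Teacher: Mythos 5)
Your proposal is correct and follows exactly the paper's route: the paper's entire proof of this proposition is the one-line invocation of Theorem~\ref{log-concavity2} with $p=\alpha/z$, $q=(1-\lambda)(1-\alpha)/z$, $r=\lambda(1-\alpha)/z$, $s=z$, which is precisely your dictionary composed with the linear map $\rho_{AB}\mapsto(\rho_{AB},\rho_B)$. Your endpoint observation is a genuine refinement rather than a deviation: since $p+q+r=1/s$ identically, case~(iii) sits exactly on the excluded boundary $s<\frac{1}{p+q+r}$ of Theorem~\ref{log-concavity2}(2), a point the paper's proof silently passes over, and either of your fixes (the limit $s'\nearrow s$ with log-concavity preserved under pointwise limits, or checking that the proof of case~(2) still goes through at $\mu=1$ because $\mu/(q+r)=z/(1-\alpha)\geq 1$ and $(q+r)/\mu\leq 1$ hold throughout $\mathcal{D}_1$) legitimately closes that gap.
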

\begin{proof}
    Invoking Theorem \ref{log-concavity2} with appropriate choices of the parameters, namely, $p=\frac{\alpha}{z}$, $q=\frac{(1-\lambda)(1-\alpha)}{z}$, $r=\frac{\lambda (1-\alpha)}{z}$ and $s=z$, yields the desired result.
\end{proof}

As shown below, the concavity and convexity results stated in Proposition~\ref{log-concavity}, together with the previously established additivity, are sufficient to prove the DPI.

\subsection{DPI under channels on Bob's system}\label{Convexity implies DPI}

Here, we demonstrate that the conditional entropy satisfies the data processing inequality under the action of a quantum channel on Bob's side, utilizing the convexity and concavity properties derived previously. Indeed, the argument follows a standard approach, showing that a concave function $f$, invariant under unitary operations and addition of a system, is monotone increasing under the action of a quantum channel (see e.g.~\cite[Proposition 4.5]{tomamichel16_book}).

\begin{proposition}\label{DPItheoremBob}
Let $(\alpha,z,\lambda) \in \D$ and $\rho_{AB}$ be a quantum state. Then, for any quantum channel $\N_{B\rightarrow B'}$ with $\sigma_{AB'}= \N (\rho_{AB})$,  we have
	\begin{equation}
		H^{\lambda}_{\alpha,z}(A|B)_{\rho}\leq H^{\lambda}_{\alpha,z}(A|B')_{\sigma},
	\end{equation}
\end{proposition}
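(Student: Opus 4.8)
The plan is to reduce the statement, via a Stinespring dilation, to monotonicity under the partial trace on the conditioning system, and then to extract that monotonicity from the geometric properties of $Q^\lambda_{\alpha,z}(A|B)_\rho = \exp\big((1-\alpha)\,H^\lambda_{\alpha,z}(A|B)_\rho\big)$ recorded in Proposition~\ref{log-concavity}. Write $\N_{B\to B'}(\cdot) = \mathrm{Tr}_E\big(V(\cdot)V^\dagger\big)$ for an isometry $V\colon\mathcal H_B\to\mathcal H_{B'}\otimes\mathcal H_E$, and set $\tilde\rho_{AB'E} := (I_A\otimes V)\rho_{AB}(I_A\otimes V)^\dagger$, so that $\sigma_{AB'}=\mathrm{Tr}_E\,\tilde\rho_{AB'E}$. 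The proof then rests on two invariance facts together with one averaging step.

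First, $H^\lambda_{\alpha,z}$ is invariant under the isometry on the conditioning side, that is $H^\lambda_{\alpha,z}(A|B)_\rho = H^\lambda_{\alpha,z}(A|B'E)_{\tilde\rho}$; this is immediate from the definition of $\Upsilon^\lambda_{\alpha,z}$ once one notes that conjugating $\rho_{AB}$, its $B$-marginal, and the optimization variable $\sigma_B$ by $I_A\otimes V$ preserves the trace functional, and that the optimization may be restricted to states supported on the range of $V$. Second, appending an uncorrelated state to the conditioning system does not change the quantity: by the additivity theorem (Theorem~\ref{Additivity}) one has $Q^\lambda_{\alpha,z}(A|B'E)_{\sigma_{AB'}\otimes\omega_E} = Q^\lambda_{\alpha,z}(A|B')_\sigma\cdot Q^\lambda_{\alpha,z}(\varnothing|E)_\omega$, and a direct evaluation for a trivial $A$-system (the feasible choice $\sigma=\omega$ already yields trace one) gives $H^\lambda_{\alpha,z}(\varnothing|E)_\omega=0$, hence $Q^\lambda_{\alpha,z}(\varnothing|E)_\omega=1$.

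For the averaging step I would realize the partial trace over $E$ as a unitary twirl: if $\{W_j\}_{j=1}^{d_E^2}$ are the Heisenberg--Weyl operators on $\mathcal H_E$, then
\begin{equation}
\frac{1}{d_E^2}\sum_{j}(I_{AB'}\otimes W_j)\,\tilde\rho_{AB'E}\,(I_{AB'}\otimes W_j)^\dagger = \sigma_{AB'}\otimes\frac{I_E}{d_E}.
\end{equation}
Consider first the concave regime, namely case (i) of Proposition~\ref{log-concavity} with $0<\alpha<1$. Concavity of $\rho\mapsto Q^\lambda_{\alpha,z}(A|B'E)_\rho$ and its invariance under the local unitaries $I_{AB'}\otimes W_j$ give
\begin{equation}
Q^\lambda_{\alpha,z}\big(A\big|B'E\big)_{\sigma\otimes I_E/d_E}\ \ge\ \frac{1}{d_E^2}\sum_j Q^\lambda_{\alpha,z}\big(A\big|B'E\big)_{(I\otimes W_j)\tilde\rho(I\otimes W_j)^\dagger}\ =\ Q^\lambda_{\alpha,z}(A|B'E)_{\tilde\rho}.
\end{equation}
By the appending invariance the left-hand side equals $Q^\lambda_{\alpha,z}(A|B')_\sigma$, and by the isometric invariance the right-hand side equals $Q^\lambda_{\alpha,z}(A|B)_\rho$; hence $Q^\lambda_{\alpha,z}(A|B')_\sigma\ge Q^\lambda_{\alpha,z}(A|B)_\rho$. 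Since $1-\alpha>0$, the exponential $H\mapsto e^{(1-\alpha)H}$ is increasing and the desired inequality $H^\lambda_{\alpha,z}(A|B)_\rho\le H^\lambda_{\alpha,z}(A|B')_\sigma$ follows. The convex regime (case (ii), $\alpha>1$) runs identically, with convexity reversing the averaging inequality to $Q^\lambda_{\alpha,z}(A|B')_\sigma\le Q^\lambda_{\alpha,z}(A|B)_\rho$; because now $1-\alpha<0$ the exponential is decreasing and the same conclusion on $H^\lambda_{\alpha,z}$ results. In the log-concave regime (case (iii)) I would apply the same twirling argument to $\log Q^\lambda_{\alpha,z}$, using its concavity together with $\log Q^\lambda_{\alpha,z}(\varnothing|E)_\omega=0$.

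The averaging step is standard once the convexity input is available, so the real work is in the two invariance statements, and especially in the fact that appending an uncorrelated state leaves $Q^\lambda_{\alpha,z}$ unchanged. This is what allows the depolarizing twirl to substitute for the partial trace without generating a spurious dimension-dependent factor, and it is exactly here that the additivity theorem and the normalization $H^\lambda_{\alpha,z}(\varnothing|E)_\omega=0$ are indispensable. Beyond that, the main care required is bookkeeping: matching each subregion of $\mathcal D$ to the correct geometric property of $Q^\lambda_{\alpha,z}$ and keeping track of the sign of $1-\alpha$ when passing between $Q^\lambda_{\alpha,z}$ and $H^\lambda_{\alpha,z}$.
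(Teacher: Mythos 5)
Your argument follows essentially the same route as the paper's own proof: Stinespring dilation, invariance under appending an uncorrelated state (via Theorem~\ref{Additivity}), unitary/isometric invariance on the conditioning side, and a Heisenberg--Weyl twirl that converts the convexity properties of Proposition~\ref{log-concavity} into monotonicity under partial trace, with the sign of $1-\alpha$ handled correctly when passing between $Q^\lambda_{\alpha,z}$ and $H^\lambda_{\alpha,z}$. (The paper packages the three regimes into a single concave function $f \in \{Q, \log Q, -Q\}$ and uses a unitary dilation $U_{BB'}$ acting on $\rho_{AB}\otimes\gamma_{B'}$ rather than an isometry, but these are presentational differences, not substantive ones.)

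However, there is a genuine gap: your case analysis covers exactly the three regimes of Proposition~\ref{log-concavity}, namely $\{\alpha<1,\ 0\le\lambda\le1\}$, $\{\alpha>1,\ 0\le\lambda\le1\}$, and $\{\alpha<1,\ \lambda<0\}$, but the DPI region $\mathcal{D}_2$ also contains points with $\alpha>1$ and $\lambda<0$ (for instance $(\alpha,z,\lambda)=(2,2,-\tfrac12)$ satisfies $1+\tfrac{z}{1-\alpha}=-1\le\lambda\le1$ and all the constraints of $\mathcal{D}_2$). For such points none of the concavity, convexity, or log-concavity statements of Proposition~\ref{log-concavity} is available --- indeed, the underlying Theorem~\ref{log-concavity2} requires $0<p,q\le1$ in the log-concave case, which fails when $\alpha>1$ since $q=(1-\lambda)(1-\alpha)/z<0$ --- so the twirling argument has no convexity input to run on and the proof simply does not apply there. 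The paper closes this subregion differently: it invokes the entropic duality of Theorem~\ref{Duality} together with the fact that the dual parameters land in a region where DPI has already been established, and then transfers monotonicity back through the duality relation (the standard argument of~\cite[Section III]{tomamichel13_duality}). To complete your proof you would need to add this duality step (or some other argument) for $\alpha>1$, $1-\tfrac{z}{\alpha-1}\le\lambda<0$.
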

To treat all parameter regimes with a unified argument, we introduce the following notation:
\begin{align}
    f(\rho_{AB}) :=
\begin{cases}
	Q_{\alpha,z}^\lambda(A|B)_{\rho}, & \text{if } \alpha < 1 \text{ and } 0 < \lambda< 1, \vspace{5pt} \\
	\log{Q_{\alpha,z}^\lambda(A|B)_{\rho}}, & \text{if } \alpha < 1 \text{ and } \lambda < 0, \vspace{5pt} \\
	-Q_{\alpha,z}^\lambda(A|B)_{\rho}, & \text{if } \alpha > 1 \text{ and } 0 < \lambda < 1.
\end{cases}
\end{align}
According to Proposition~\ref{log-concavity}, the function $\rho_{AB} \mapsto f(\rho_{AB})$ is concave in the specified range of parameters.
We begin by demonstrating that $f$ is invariant under the addition of an auxiliary system on Bob's side. Note that when we write $f(\rho_{AB} \otimes \gamma_{B'})$ below, it is implicitly understood that the conditioning systems are $B$ and $B'$.
\begin{lemma}
\label{lem: invariance additional conditioning}
    For any bipartite state \(\rho_{AB}\) and an additional conditioning system \(\gamma_{B'}\), we have
$
f(\rho_{AB} \otimes \gamma_{B'}) = f(\rho_{AB}) \,.
$
\end{lemma}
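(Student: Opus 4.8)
The plan is to reduce the claim to the additivity of $H^\lambda_{\alpha,z}$ already established in Theorem~\ref{Additivity}. In each of the three regimes defining $f$, the value $f(\rho_{AB})$ depends on $\rho_{AB}$ only through $Q^\lambda_{\alpha,z}(A|B)_\rho$ (it equals $Q$, $\log Q$, or $-Q$), so it suffices to prove the single identity
\[
Q^\lambda_{\alpha,z}(A|BB')_{\rho\otimes\gamma} = Q^\lambda_{\alpha,z}(A|B)_\rho,
\]
or equivalently, recalling $Q^\lambda_{\alpha,z}(A|B)_\rho=\exp\big((1-\alpha)H^\lambda_{\alpha,z}(A|B)_\rho\big)$, the additive statement $H^\lambda_{\alpha,z}(A|BB')_{\rho\otimes\gamma} = H^\lambda_{\alpha,z}(A|B)_\rho$.

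First I would regard $\gamma_{B'}$ as a bipartite state on $A'B'$ whose $A'$ factor is one-dimensional (trivial), so that $\rho_{AB}\otimes\gamma_{B'}$ is genuinely a tensor product of two bipartite states to which Theorem~\ref{Additivity} applies. Since $(\alpha,z,\lambda)\in\D$ by hypothesis, this yields
\[
H^\lambda_{\alpha,z}(A|BB')_{\rho\otimes\gamma} = H^\lambda_{\alpha,z}(A|B)_\rho + H^\lambda_{\alpha,z}(A'|B')_\gamma,
\]
and it remains only to show that the last term vanishes when $A'$ is trivial.

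Next I would compute $H^\lambda_{\alpha,z}(A'|B')_\gamma = 0$ directly. With $A'$ trivial the state coincides with its marginal $\gamma_{B'}$, and every operator appearing in the fixed-point equation~\eqref{proportionality} is a function of $\gamma_{B'}$, hence the operators commute. Using the identity $\alpha+(1-\lambda)(1-\alpha)=1-\lambda(1-\alpha)$, equation~\eqref{proportionality} collapses to $\tau^{\,1-\lambda(1-\alpha)}\propto\gamma_{B'}^{\,1-\lambda(1-\alpha)}$, which is solved by the optimizer $\sigma_{B'}=\gamma_{B'}$ (the optimizer exists throughout $\D$). Substituting $\sigma_{B'}=\gamma_{B'}$, the exponents of $\gamma_{B'}$ sum to $\tfrac{\alpha}{z}+\tfrac{(1-\lambda)(1-\alpha)}{z}+\tfrac{\lambda(1-\alpha)}{z}=\tfrac1z$, so the operator inside the trace is $\gamma_{B'}^{1/z}$ and
\[
Q^\lambda_{\alpha,z}(A'|B')_\gamma = \Trm\big[(\gamma_{B'}^{1/z})^{z}\big]=\Trm[\gamma_{B'}]=1,
\]
whence $H^\lambda_{\alpha,z}(A'|B')_\gamma=\tfrac{1}{1-\alpha}\log 1=0$. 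Combining with the displayed additivity relation gives the lemma.

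The only genuinely delicate point — and the step I would treat most carefully — is the \emph{optimality} of the product choice $\sigma_{BB'}=\sigma_B\otimes\gamma_{B'}$, i.e.\ that the optimization over all of $\mathcal{S}(BB')$ may be restricted to product states without changing the optimum. This is exactly what the tensorization of the fixed-point optimizer in the proof of Theorem~\ref{Additivity} supplies, so routing the argument through additivity sidesteps the issue entirely. For a self-contained alternative one would instead check by direct factorization that, for every $\sigma_B$, each factor splits as an operator on $AB$ tensored with a power of $\gamma_{B'}$; these recombine to an operator of the form $W_{AB}\otimes\gamma_{B'}^{1/z}$, and the trace of its $z$-th power factors as $\Trm[W_{AB}^{z}]\cdot\Trm[\gamma_{B'}]$, giving $Q^\lambda_{\alpha,z}(\rho_{AB}\otimes\gamma_{B'}\,|\,\sigma_B\otimes\gamma_{B'})=Q^\lambda_{\alpha,z}(\rho_{AB}|\sigma_B)$; one would then invoke Theorem~\ref{fixed-point} to confirm that the jointly optimal $\sigma_{BB'}$ is indeed of this product form.
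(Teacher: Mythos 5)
Your proof takes essentially the same route as the paper's: both reduce the claim to the additivity of Theorem~\ref{Additivity} applied with a trivial system $A'$, together with the normalization $Q^\lambda_{\alpha,z}(A'|B')_{1_{A'}\otimes\gamma_{B'}}=1$. The only difference is that you verify this normalization explicitly by solving the fixed-point equation~\eqref{proportionality} (finding $\sigma_{B'}=\gamma_{B'}$ as optimizer), whereas the paper asserts it without computation; your added detail is correct.
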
 
\begin{proof}
    This property directly follows from the additivity of \(H_{\alpha,z}^\lambda\) (as stated in Theorem \ref{Additivity}) and leads to the multiplicativity of \(Q_{\alpha,z}^\lambda\) for states of the form \(\rho_{AB} \otimes 1_{A'} \otimes \gamma_{B'}\), where \(A'\) represents a trivial system. Additionally, we have
$
Q_{\alpha,z}^\lambda(A'|B')_{1_{A'} \otimes \gamma_{B'}} = 1.
$
\end{proof}
We now address the invariance under unitary transformations on Bob's system.
\begin{lemma}
    For \(\rho_{AB}\) and a unitary operator \(U_B\) acting on Bob's system, we have
$
f(\rho_{AB}) = f(\rho'_{AB})$,
where \(\rho'_{AB} = U_B \rho_{AB} U_B^\dagger\).
\end{lemma}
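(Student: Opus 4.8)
The plan is to reduce the claim to the unitary invariance of $Q_{\alpha,z}^\lambda(A|B)_\rho$. In each of the three cases defining $f$, the value $f(\rho_{AB})$ is a fixed function of $Q_{\alpha,z}^\lambda(A|B)_\rho$ (namely the identity, the logarithm, or negation), so it suffices to establish
\begin{equation}
    Q_{\alpha,z}^\lambda(A|B)_{\rho'} = Q_{\alpha,z}^\lambda(A|B)_{\rho}, \qquad \rho'_{AB} = V \rho_{AB} V^\dagger, \quad V := I_A \otimes U_B.
\end{equation}
Recalling~\eqref{Q} and~\eqref{Q2}, I would write $Q_{\alpha,z}^\lambda(A|B)_{\rho'} = \opt_{\sigma_B} Q_{\alpha,z}^\lambda(\rho'_{AB}\,|\,\sigma_B)$, where the unoptimized quantity is the explicit trace-power expression read off from Definition~\ref{trivariate}.

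First I would record how each factor transforms under conjugation by $V$. Taking the partial trace gives $\rho'_B = U_B \rho_B U_B^\dagger$, and since conjugation by a unitary commutes with the functional calculus, $(V M V^\dagger)^x = V M^x V^\dagger$ for every real power $x$. Consequently $(\rho'_{AB})^{\frac{\alpha}{2z}} = V \rho_{AB}^{\frac{\alpha}{2z}} V^\dagger$ and $(I_A \otimes \rho'_B)^{x} = V (I_A \otimes \rho_B)^{x} V^\dagger$ for the relevant exponents.

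The key step is a change of variables in the optimization: in $Q_{\alpha,z}^\lambda(A|B)_{\rho'}$ I would substitute $\sigma_B = U_B \tau_B U_B^\dagger$. Since $U_B$ is unitary, $\tau_B \mapsto U_B \tau_B U_B^\dagger$ is a bijection of the set of density matrices, so the optimization domain is unchanged (and this holds regardless of whether $\opt$ denotes an infimum or a supremum). Then $(I_A \otimes \sigma_B)^{\frac{\lambda(1-\alpha)}{z}} = V (I_A \otimes \tau_B)^{\frac{\lambda(1-\alpha)}{z}} V^\dagger$, so the entire argument of the outer trace-power conjugates as $V[\,\cdots\,]V^\dagger$. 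Using unitary invariance of the trace, $\Tr\big((V M V^\dagger)^z\big) = \Tr(M^z)$, gives
\begin{equation}
    Q_{\alpha,z}^\lambda\big(\rho'_{AB}\,\big|\,U_B \tau_B U_B^\dagger\big) = Q_{\alpha,z}^\lambda\big(\rho_{AB}\,\big|\,\tau_B\big),
\end{equation}
and taking $\opt$ over $\tau_B$ on both sides yields the desired equality; applying the relevant monotone map recovers $f(\rho'_{AB}) = f(\rho_{AB})$.

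I do not expect a genuine obstacle here, as every manipulation is an exact identity; this is the mildest of the invariance properties needed for the data-processing argument. The only points meriting care are confirming that the marginal transforms as $\rho'_B = U_B \rho_B U_B^\dagger$ and that $\sigma_B \mapsto U_B \sigma_B U_B^\dagger$ is a bijection of the state space, both of which are immediate from the unitarity of $U_B$.
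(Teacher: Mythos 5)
Your proposal is correct and follows essentially the same route as the paper: reduce the claim to invariance of $Q_{\alpha,z}^\lambda(A|B)_\rho$, perform the change of variables $\sigma_B \mapsto U_B \sigma_B U_B^\dagger$ in the optimization (valid since conjugation by a unitary is a bijection of the state space), and conclude via the compatibility of powers with unitary conjugation together with unitary invariance of the trace/norm. The only difference is cosmetic: you spell out the marginal transformation and the functional-calculus identity, which the paper leaves implicit.
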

\begin{proof}
We only need to prove the invariance of $Q_{\alpha,z}^\lambda$.
 To see this, note that
\begin{align}
Q_{\alpha,z}^\lambda(A|B)_{\rho'} &= \opt_{\sigma_B} Q_{\alpha,z}^\lambda(\rho'_{AB} | \sigma_B ) \\
&= \opt_{\sigma_B} Q_{\alpha,z}^\lambda(U_B \rho_{AB} U_B^\dagger| \sigma_B ) \\
&= \opt_{\sigma_B} Q_{\alpha,z}^\lambda(U_B \rho_{AB} U_B^\dagger| U_B \sigma_B U_B^\dagger)\\
&= \opt_{\sigma_B} 
\Trm\Bigg|U_B\rho_{AB}^{\frac{\alpha}{2z}} U_B^\dagger U_B \rho_B^{\frac{(1-\lambda)(1-\alpha)}{2z}} U_B^\dagger U_B \sigma_B^{\frac{\lambda(1-\alpha)}{z}} U_B^\dagger \Bigg|^{2z} \\
&= \opt_{\sigma_B} Q_{\alpha,z}^\lambda(\rho_{AB} | \sigma_B )\\
&= Q_{\alpha,z}^\lambda(A|B)_{\rho},
\end{align}
where in the second line we used that since the optimization is performed over all states, the unitary acting on $\sigma_B$ does not change the value.
In the last line, we used that $U_B U_B^\dagger = I$ and that the norm is unitary invariant.    
\end{proof}

Now that we have all the necessary ingredients, we proceed to prove Proposition \ref{DPItheoremBob}.

\begin{proof}[Proof of Proposition \ref{DPItheoremBob}]
Since by Lemma~\ref{lem: invariance additional conditioning} the conditional entropy remains invariant under the addition of a pure state on Bob's side, we can assume that the dimensions of the input and output states coincide.
Let \(\N_{B\rightarrow B}\) be a quantum channel acting on Bob's system. We can express \(\N_{B\rightarrow B}\) as
\begin{align}
    \N_{B\rightarrow B}(\rho_{AB}) = \Tr_{B'}\big(U_{BB'}(\rho_{AB} \otimes \gamma_{B'}) U^\dagger_{BB'}\big) \,.
\end{align}
Here, \(U_{BB'}\) is unitary. Furthermore, we can employ a unitary design (e.g., the Heisenberg-Weyl operators) such that
\begin{equation}
\frac{1}{d_B^2}\sum_{i=1}^{d_B^2} \big(I_A \otimes V^i_B\big) \tau_{AB} \big(I_A \otimes V_B^{i,\dagger}\big) = \tau_A \otimes \pi_B,
\end{equation}
where $\pi_B=I_B/d_B$ is the maximally mixed state. Thus, we obtain
\begin{align}
f\big(\N_{B\rightarrow B}(\rho_{AB})\big) &= f\big(\N_{B\rightarrow B}\big(\rho_{AB}\big) \otimes \pi_{B'}\big) \\
&= f \left(\frac{1}{d_{B'}^2}\sum_{i=1}^{d_{B' }^2} \big(I_A \otimes V^i_{B'}\big) U_{BB'}\big(\rho_{AB}\otimes \gamma_{B'}\big) U^\dagger_{BB'} (I_A \otimes V_{B'}^{i,\dagger}\big)\right)\\
\label{concavity inequality DPI}
&\geq \sum_{i=1}^{d_{B' }^2} \frac{1}{d_{B'}^2} f \left( \big(I_A \otimes V^i_{B'}\big) U_{BB'}\big(\rho_{AB}\otimes \gamma_{B'}\big) U^\dagger_{BB'} \big(I_A \otimes V_{B'}^{i,\dagger}\big)\right) \\
&=  f \left( \rho_{AB}\otimes \gamma_{B'}\right) \\
&= f \left(\rho_{AB}\right),
\end{align}
where we have repeatedly utilized the additivity and invariance properties with respect to unitaries acting on Bob's side and~\eqref{concavity inequality DPI} follows from Proposition~\ref{log-concavity}. This completes the proof for the regions specified above for $f$. For the remaining region where \(\alpha > 1\) and \(1 - \frac{z}{\alpha - 1} \leq \lambda < 0\), the DPI can be proven by leveraging the duality relations in Theorem \ref{Duality}, and the fact that if a conditional entropy satisfies the DPI under channels acting only on Bob's system, then its dual also satisfies the DPI under those channels.  (For a proof of this latter fact, refer to~\cite[Section III]{tomamichel13_duality}, and see Tables~\ref{table I} and~\ref{table II} for a detailed discussion of the duality transformations concerning the DPI regions of the parameters). The endpoint cases for the parameters follow by a limiting argument; see Section~\ref{section limits}.
\end{proof}

\subsection{DPI under sub-unital channels on Alice's system}\label{SecDPItheoremAlice}

\noindent\textbf{Notation.} For $\rho,\sigma \in \mathcal{P}(A)$ and parameters $\lambda,q$, we define
\begin{equation}
\mathcal{G}_{\lambda,q}(\rho,\sigma)
:= \rho^{\frac{(1-\lambda)q}{2}}\sigma^{\lambda q}\rho^{\frac{(1-\lambda)q}{2}}.
\end{equation}

We aim to prove the following
\begin{proposition}\label{DPItheoremAlice}
Let $(\alpha,z,\lambda) \in \D$ and $\rho_{AB}$ be a quantum state. Then, for any sub-unital channel $\M_{A\rightarrow A'}$ with $\sigma_{A'B}= \M (\rho_{AB})$,  we have
	\begin{equation}
		H^{\lambda}_{\alpha,z}(A|B)_{\rho}\leq H^{\lambda}_{\alpha,z}(A'|B)_{\sigma},
	\end{equation}
\end{proposition}

In this case, the monotonicity proof is straightforward and directly follows from the fact that we can rewrite the conditional entropy in terms of the underlying $\alpha$-$z$ R\'enyi relative entropy.

\begin{proof}[Proof of Proposition \ref{DPItheoremAlice}]
Observe that the conditional entropy can be written as 
\begin{align}
    H^\lambda_{\alpha,z}(A|B)_{\rho} &= \opt_{\sigma_B} -D_{\alpha,z}\left(\rho_{AB} \middle\| I_A \otimes \mathcal{G}_{\lambda,\frac{1-\alpha}{z}}(\rho_B, \sigma_B)^{\frac{z}{1-\alpha}}\right) \,,
\end{align}
where the $\alpha$-$z$ R\'enyi relative entropy is defined in Definition~\ref{alpha-z relative entropy}.
Therefore, the DPI of the conditional entropy under sub-unital channels on Alice's side follows from the DPI of the $\alpha$-$z$ R\'enyi relative entropies established in~\cite{zhang20_alphaz}. Namely, we obtain
	\begin{align}
		-H^\lambda_{\alpha,z}(A|B)_{\rho} &= \textup{opt}_{\sigma_B} D_{\alpha,z}\left(\rho_{AB} \middle\| I_A \otimes \mathcal{G}_{\lambda,\frac{1-\alpha}{z}}(\rho_B, \sigma_B)^{\frac{z}{1-\alpha}}\right) \\
		&\geq \textup{opt}_{\sigma_B} D_{\alpha,z}\left(\M(\rho_{AB}) \middle\|\M(I_A) \otimes \mathcal{G}_{\lambda,\frac{1-\alpha}{z}}(\rho_B, \sigma_B)^{\frac{z}{1-\alpha}}\right) \\
		&\geq \textup{opt}_{\sigma_B} D_{\alpha,z}\left(\M(\rho_{AB}) \middle\| I_{A'} \otimes \mathcal{G}_{\lambda,\frac{1-\alpha}{z}}(\rho_B, \sigma_B)^{\frac{z}{1-\alpha}}\right) \\
		&= -H^\lambda_{\alpha,z}(A'|B)_{\sigma},
	\end{align}
Here, the second inequality follows from antimonotonicity with respect to the second argument (see e.g.~\cite[Lemma 17]{rubboli2022new}), together with \(\M(I_A) \leq I_{A'}\). Moreover, since the inequalities hold for any \(\sigma_B\), it is clear that taking inf/sup on both sides of each inequality does not change their direction.
\end{proof}

\subsection{Proof of Theorem~\ref{DPItheorem}}
\label{conditionally subunital}
We show that, due to additional structural properties of the function $(Q_{\alpha,z}^\lambda)^{\frac{1}{1 - \lambda(1 - \alpha)}}$, the monotonicity established earlier under tensor product channels extends to the more general setting of conditioned operations.

In particular, consider a function $f$ that maps positive semidefinite matrices to non-negative real numbers, which we will later specify as the optimized $(Q_{\alpha,z}^\lambda)^{\frac{1}{1 - \lambda(1 - \alpha)}}$. If $f$ is monotone under tensor product channels and satisfies the following two properties
\begin{enumerate}
    \item Positive homogeneity: $f(a X) = a f(X)$ for all $X \geq 0$ and all scalars $a \geq 0$;
    \item Additivity under direct sums: $f\big( \sum_i X_i \otimes \ketbra{i}{i}\big)=\sum_i f(X_i)$ for positive semidefinite $X_i$;
\end{enumerate}
then
$f$ is also monotone under conditioned operations.
\begin{lemma}
Let $f$ be a function from the set of positive semidefinite operators on a bipartite Hilbert space to the positive real numbers. We assume that $f$ is positively homogeneous and additive under direct sums.
If $f$ is monotone under tensor product channels, that is,
\begin{equation}
f\left( \mathcal{M}_{A\rightarrow A'} \otimes \mathcal{N}_{B\rightarrow B'}(\rho_{AB}) \right) \leq f(\rho_{AB}),
\end{equation}
for any sub-unital channel $\mathcal{M}_{A\rightarrow A'}$ and arbitrary quantum channel $\mathcal{N}_{B\rightarrow B'}$, then $f$ is also monotone under conditioned operations
\begin{equation}
f\left( \sum_i \mathcal{M}_{A \rightarrow A'}^i \otimes \mathcal{N}_{B\rightarrow B'}^i(\rho_{AB}) \right) \leq f(\rho_{AB}),
\end{equation}
where each $\mathcal{M}_{A\rightarrow A'}^i$ is sub-unital, and $\{\mathcal{N}_{B\rightarrow B'}^i\}_i$ is a quantum instrument.
\end{lemma}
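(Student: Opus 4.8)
The plan is to lift the hypothesised monotonicity under tensor-product channels to conditioned operations by introducing a classical flag register $B'$ on Bob's side that stores the outcome of the instrument $\{\mathcal{N}_B^i\}_i$. The point is that the conditioned operation applies to Alice an outcome-dependent map $\mathcal{M}_A^i$, which is \emph{not} of tensor-product form across the $A\,|\,B$ cut; once the outcome is recorded classically in $B'$, however, the operation decomposes into a direct sum of genuine tensor-product operations, each of which the hypothesis controls. Throughout, I would write $\sigma_{AB}^i := (\mathrm{id}_A \otimes \mathcal{N}_B^i)(\rho_{AB})$ for the (sub-normalised) branch states.

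First I would record the outcome without yet touching Alice. Since $\{\mathcal{N}_B^i\}_i$ is an instrument, the map $\mathcal{N}_{B\to BB'}(\cdot) := \sum_i \mathcal{N}_B^i(\cdot)\otimes\ketbra{i}{i}_{B'}$ is a genuine channel on Bob's system, so by tensor-product monotonicity (with $\mathrm{id}_A$ on Alice) the flagged state $\omega_{ABB'} := (\mathrm{id}_A\otimes\mathcal{N}_{B\to BB'})(\rho_{AB}) = \sum_i \sigma_{AB}^i\otimes\ketbra{i}{i}_{B'}$ satisfies $f(\omega_{ABB'}) \le f(\rho_{AB})$. Because $B'$ sits on the conditioning side, additivity under direct sums gives $f(\omega_{ABB'}) = \sum_i f(\sigma_{AB}^i)$.

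Next I would apply Alice's outcome-dependent maps while keeping the flag intact, producing $\tilde{\rho}_{ABB'} := \sum_i (\mathcal{M}_A^i\otimes\mathrm{id}_B)(\sigma_{AB}^i)\otimes\ketbra{i}{i}_{B'}$. Additivity under direct sums again yields $f(\tilde{\rho}_{ABB'}) = \sum_i f\big((\mathcal{M}_A^i\otimes\mathrm{id}_B)(\sigma_{AB}^i)\big)$, and now each summand is a genuine tensor-product operation: $\mathcal{M}_A^i$ is sub-unital and $\mathrm{id}_B$ is a channel. Invoking tensor-product monotonicity term by term—after using positive homogeneity to rescale each sub-normalised $\sigma_{AB}^i$ to a state—gives $f\big((\mathcal{M}_A^i\otimes\mathrm{id}_B)(\sigma_{AB}^i)\big) \le f(\sigma_{AB}^i)$. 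Summing over $i$ and combining with the previous paragraph, $f(\tilde{\rho}_{ABB'}) \le \sum_i f(\sigma_{AB}^i) = f(\omega_{ABB'}) \le f(\rho_{AB})$.

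Finally I would discard the flag. Since $\mathrm{Tr}_{B'}(\tilde{\rho}_{ABB'}) = \sum_i (\mathcal{M}_A^i\otimes\mathcal{N}_B^i)(\rho_{AB})$ is exactly the output of the conditioned operation and $\mathrm{Tr}_{B'}$ is a channel on Bob's side, one more application of tensor-product monotonicity gives $f\big(\sum_i (\mathcal{M}_A^i\otimes\mathcal{N}_B^i)(\rho_{AB})\big) \le f(\tilde{\rho}_{ABB'}) \le f(\rho_{AB})$, as desired. The conceptual crux is the flag construction that converts the outcome-dependence of $\mathcal{M}_A^i$ into a block structure, so that additivity under direct sums lets the single inequality $f(\omega_{ABB'})\le f(\rho_{AB})$ survive the block decomposition; the only technical point to watch is that the instrument branches are trace-non-increasing, so the branch states are sub-normalised and positive homogeneity is genuinely needed to apply the tensor-product monotonicity—stated for normalised states—within each block.
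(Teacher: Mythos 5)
Your proposal is correct and follows essentially the same route as the paper's proof: both flag the instrument outcome in a register $B'$ on Bob's side, use additivity under direct sums together with positive homogeneity to reduce to the normalized branch states, apply the tensor-product hypothesis blockwise with the sub-unital $\mathcal{M}_A^i$ and once more for the flagging channel, and finish by tracing out the flag. The only difference is presentational (you run the chain of inequalities forward through the circuit, while the paper runs it backward from the output), so the mathematical content coincides.
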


\begin{proof}
We begin by applying the data processing inequality under partial trace over an auxiliary register $B'$:
\begin{align}
    f\left( \sum_i \mathcal{M}^i_{A\rightarrow A'} \otimes \mathcal{N}^i_{B\rightarrow B'}(\rho_{AB}) \right) \leq f\left( \sum_i \mathcal{M}^i_{A\rightarrow A'} \otimes \mathcal{N}^i_{B\rightarrow B'}(\rho_{AB}) \otimes \ketbra{i}{i}_{B''} \right).
\end{align}
Next, we use the fact that $f$ is additive under direct sums and positively homogeneous, which gives
\begin{align}
    &f\left( \sum_i \mathcal{M}^i_{A\rightarrow A'} \otimes \mathcal{N}^i_{B\rightarrow B'}(\rho_{AB}) \otimes \ketbra{i}{i}_{B''} \right) \\
    &\qquad \qquad \qquad = \sum_i f\left( \mathcal{M}^i_{A\rightarrow A'} \otimes \mathcal{N}^i_{B\rightarrow B'}(\rho_{AB}) \right) \\
    &\qquad \qquad \qquad= \sum_i \Tr\left[ (I_{A} \otimes \mathcal{N}^i_{B\rightarrow B'})(\rho_{AB}) \right] \cdot f\left( \frac{ \mathcal{M}^i_{A\rightarrow A'} \otimes \mathcal{N}^i_{B\rightarrow B'}(\rho_{AB}) }{ \Tr\left[ (I_A \otimes \mathcal{N}^i_{B\rightarrow B'})(\rho_{AB}) \right] } \right).
\end{align}
Define the normalized state:
\begin{equation}
    \widetilde{\rho}_{AB'}^{\,i} = \frac{ (\mathbb{I}_A \otimes \mathcal{N}^i_{B\rightarrow B'})(\rho_{AB}) }{ \Tr\left[ (\mathbb{I}_A \otimes \mathcal{N}^i_{B\rightarrow B'})(\rho_{AB}) \right] }.
\end{equation}
Applying the DPI to $\widetilde{\rho}_{AB'}^{\,i}$ under the sub-unital channel $\mathcal{M}^i_{A\rightarrow A'} \otimes \mathbb{I}_{B'}$ yields:
\begin{equation}
    f\left( (\mathcal{M}^i_{A\rightarrow A'} \otimes \mathbb{I}_{B'}) \, ( \widetilde{\rho}_{AB'}^{\,i} ) \right) \leq f\left( \widetilde{\rho}_{AB'}^{\,i} \right).
\end{equation}
Combining these observations, we obtain:
\begin{align}
   f\left( \sum_i \mathcal{M}^i_{A\rightarrow A'} \otimes \mathcal{N}^i_{B\rightarrow B'}(\rho_{AB}) \right)
    &\leq \sum_i \Tr\left[ (\mathbb{I}_A \otimes \mathcal{N}^i_{B\rightarrow B'})(\rho_{AB}) \right] f\left( \widetilde{\rho}_{AB'}^{\,i} \right) \\
    &= \sum_i f\left( (\mathbb{I}_A \otimes \mathcal{N}^i_{B\rightarrow B'})(\rho_{AB}) \right) \\
    &= f\left(  \sum_i \mathcal{N}^i_{B\rightarrow B'}(\rho_{AB}) \otimes \ketbra{i}{i}_{B''} \right).
\end{align}

Finally, applying the DPI once more, noting that $\sum_i \mathcal{N}^i_{B\rightarrow B'}(\cdot) \otimes \ketbra{i}{i}_{B''}$ defines a quantum channel that records the measurement outcome in a register, we have:
\begin{equation}
    f\left( \sum_i \mathcal{N}^i_{B\rightarrow B'}(\rho_{AB}) \otimes \ketbra{i}{i}_{B''} \right) \leq f(\rho_{AB}).
\end{equation}
Thus, the desired monotonicity under conditioned operations follows.
\end{proof}
Next, let us show that the optimized  $(Q_{\alpha,z}^\lambda)^{\frac{1}{1 - \lambda(1 - \alpha)}}$ satisfies the aforementioned properties.
Let us set $f$ equal to the optimized $(Q_{\alpha,z}^\lambda)^{\frac{1}{1 - \lambda(1 - \alpha)}}$. The positive homogeneity of the function follows directly from its definition. It remains to show that the function is additive under direct sums. A straightforward computation shows that for positive semidefinite $X_i$ (see Proposition~\ref{classical conditioning} for details):
\begin{align}
\opt_{\sigma_{BB'}} Q_{\alpha,z}^\lambda \Big( \sum_{i} X^i_{AB} \otimes \ketbra{i}{i}_{B'} \Big| \sigma_{BB'} \Big) = \left( \sum_i \Big( \opt_{\sigma_B} Q_{\alpha,z}^\lambda \left( X^i_{AB} \big| \sigma_B \right) \Big)^{\frac{1}{1 - \lambda(1 - \alpha)}} \right)^{1 - \lambda(1 - \alpha)}.
\end{align}
Hence, the optimized $(Q_{\alpha,z}^\lambda)^{\frac{1}{1 - \lambda(1 - \alpha)}}$ satisfies the assumptions of the previous lemma. Since $H_{\alpha,z}^\lambda$ differs from this expression only by multiplicative constants and a logarithm, Theorem~\ref{DPItheorem} follows for $H_{\alpha,z}^\lambda$ as well.

\subsection{DPI under conditionally mixing channels}\label{alternativeproof}
In this section, we establish that our new conditional entropies satisfy the data-processing inequality for a closely related class of channels, known as conditionally mixing channels, for a restricted set of parameters. These channels were introduced in~\cite[Section 3]{gour2024inevitability}, where the authors also define conditional entropies in~\cite[Definition 4]{gour2024inevitability} as quantities that are monotone under these channels, in addition to satisfying other desirable properties such as additivity, invariance under isometries, and normalization. We will show that $H_{\alpha,z}^\lambda$ fulfills all of these axioms, at least when $0 \leq \lambda \leq 1$.  As we have already argued that our quantities cannot be derived in canonical forms from an underlying divergence, this shows that not all conditional entropies admit such a representation, thereby answering a question posed in~\cite{gour2024inevitability}.

The remainder of this section is structured as follows: we first introduce conditionally mixing channels; next, we establish monotonicity under these channels for $0 \leq \lambda \leq 1$; and finally, we show that the properties of additivity, invariance under isometries, and normalization are also satisfied.

We begin by recalling the definition of conditionally mixing channels~\cite[Section 3]{gour2024inevitability}. These channels are characterized by being both semicausal and conditionally unital. A channel is semicausal if it does not allow information to flow from the system of interest to the side information. If this condition is violated, monotonicity of the conditional entropy cannot be expected, as revealing information about the system would reduce uncertainty. A channel is conditionally unital if it does not reduce maximal randomness. Formally, this means that the identity operator (corresponding to the maximally mixed state) on the system of interest must be preserved under the channel, while arbitrary transformations may be applied to the side information.
More rigorously, conditionally mixing channels are defined as follows~\cite[Definition 3]{gour2024inevitability}.

\begin{definition}
    Let $\mathcal{N}_{AB \rightarrow AB'}$ be a quantum channel. Then,

\begin{enumerate}
    \item $\mathcal{N}$ is said to be semicausal from $A$ to $B$, if there exists a quantum channel $\mathcal{M}_{B \rightarrow B'}$ such that for every quantum state $\rho_{AB} \in \mathcal{S}(AB)$, it holds that
    \begin{equation}
        \Tr_A\big(\mathcal{N}(\rho_{AB})\big)=\mathcal{M}(\rho_B);
    \end{equation}

    \item $\mathcal{N}$ is said to be conditionally unital, if for every quantum state $\rho_B \in \mathcal{S}(B)$, there exists a quantum state $\sigma_{B'} \in \mathcal{S}(B')$ such that
    \begin{equation}
        \mathcal{N}(I_A \otimes \rho_B)=I_A \otimes \sigma_{B'}.
    \end{equation}
\end{enumerate} 
A quantum channel is said to be conditionally mixing if it is semicausal and conditionally unital.
\end{definition}
We now prove that the new conditional entropies satisfy monotonicity under conditionally mixing channels for $\lambda \geq 0$.
\begin{theorem}[DPI under conditionally mixing channels] \label{DPItheorem CM}
Let $(\alpha,z,\lambda) \in \D$ with $\lambda \geq 0$ and $\rho_{AB}$ be a quantum state. Then, for any conditionally mixing channel $\N_{AB \rightarrow AB'}$ and $\sigma_{AB'} = \N(\rho_{AB})$, we have
	\begin{equation}
		H^{\lambda}_{\alpha,z}(A|B)_{\rho}\leq H^{\lambda}_{\alpha,z}(A|B')_{\sigma}.
	\end{equation}
\end{theorem}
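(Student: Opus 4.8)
The plan is to bypass the convexity machinery entirely and instead leverage the reformulation of $H^\lambda_{\alpha,z}$ in terms of the $\alpha$-$z$ Rényi relative entropy established in the proof of Proposition~\ref{DPItheoremAlice}, together with the data-processing inequality for $D_{\alpha,z}$ from~\cite{zhang20_alphaz} and the two defining properties of a conditionally mixing channel. Writing $q := \frac{1-\alpha}{z}$ and recalling $\Xi_{\lambda,q}(\rho,\sigma) = \rho^{\frac{1-\lambda}{2}q}\sigma^{\lambda q}\rho^{\frac{1-\lambda}{2}q}$, that reformulation reads, for $\lambda \geq 0$,
\[
-H^\lambda_{\alpha,z}(A|B)_\rho = \inf_{\sigma_B} D_{\alpha,z}\!\left(\rho_{AB} \,\middle\|\, I_A \otimes \Xi_{\lambda,q}(\rho_B,\sigma_B)^{1/q}\right).
\]
Since $(\alpha,z,\lambda)\in\D$, the pair $(\alpha,z)$ satisfies $(\delta_1)$ or $(\delta_2)$, so $D_{\alpha,z}$ obeys the DPI. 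The crucial point is that the reference operators above are of product form $I_A\otimes\tau_B$, and \emph{conditional unitality} is exactly the property that preserves this product structure under $\mathcal{N}$, while \emph{semicausality} pins down the correct output marginal.

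First I would fix a candidate $\sigma_B$, set $\tau_B := \Xi_{\lambda,q}(\rho_B,\sigma_B)^{1/q}$, and apply the DPI for $D_{\alpha,z}$ to the channel $\mathcal{N}_{AB\to AB'}$:
\[
D_{\alpha,z}\!\left(\rho_{AB}\,\middle\|\,I_A\otimes\tau_B\right) \geq D_{\alpha,z}\!\left(\mathcal{N}(\rho_{AB})\,\middle\|\,\mathcal{N}(I_A\otimes\tau_B)\right) = D_{\alpha,z}\!\left(\sigma_{AB'}\,\middle\|\,\mathcal{N}(I_A\otimes\tau_B)\right).
\]
By conditional unitality, extended from states to positive operators by homogeneity, $\mathcal{N}(I_A\otimes\tau_B) = I_A\otimes\tau'_{B'}$ for some positive $\tau'_{B'}$, so the new reference is again of product form.

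Next I would reconstruct a feasible optimizer on $B'$. Let $\mu_{B'}:=\Tr_A\sigma_{AB'}$ be the output marginal; semicausality gives $\mu_{B'}=\mathcal{M}(\rho_B)$, and combined with conditional unitality this guarantees $\tau'_{B'}$ is compatible with $\mu_{B'}$ (for $\lambda=0$ one has $\tau_B=\rho_B$ and $\tau'_{B'}=\mu_{B'}$ directly). For $0<\lambda\le 1$ the equation $\Xi_{\lambda,q}(\mu_{B'},\omega_{B'})^{1/q}=\tau'_{B'}$ is solved explicitly by $\omega_{B'}\propto\big(\mu_{B'}^{-\frac{1-\lambda}{2}q}(\tau'_{B'})^{q}\mu_{B'}^{-\frac{1-\lambda}{2}q}\big)^{1/(\lambda q)}$, so $I_A\otimes\tau'_{B'}$ is (proportional to) an admissible reference in the variational formula for $H^\lambda_{\alpha,z}(A|B')_\sigma$. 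Hence
\[
D_{\alpha,z}\!\left(\sigma_{AB'}\,\middle\|\,I_A\otimes\tau'_{B'}\right) \geq \inf_{\omega_{B'}} D_{\alpha,z}\!\left(\sigma_{AB'}\,\middle\|\,I_A\otimes\Xi_{\lambda,q}(\mu_{B'},\omega_{B'})^{1/q}\right) = -H^\lambda_{\alpha,z}(A|B')_\sigma.
\]
Chaining the two displays yields $D_{\alpha,z}(\rho_{AB}\|I_A\otimes\tau_B)\ge -H^\lambda_{\alpha,z}(A|B')_\sigma$ for every $\sigma_B$; taking the infimum over $\sigma_B$ gives $-H^\lambda_{\alpha,z}(A|B)_\rho\ge -H^\lambda_{\alpha,z}(A|B')_\sigma$, which is the claim.

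The hard part will be the reconstruction step: one must verify that the $\omega_{B'}$ produced above is a \emph{genuine} feasible point of the optimization defining $H^\lambda_{\alpha,z}(A|B')_\sigma$. This demands care with normalization (rescaling $\omega_{B'}$ shifts $D_{\alpha,z}$ only by an additive constant in the logarithm, which must be tracked and absorbed) and with the support/kernel conditions in the domain of $\Upsilon^\lambda_{\alpha,z}$ (most cleanly handled in the generic invertible case and then extended by continuity, recalling that for $\alpha>1$ the exponent $q$ is negative). It is precisely here that both defining properties of a conditionally mixing channel are essential: conditional unitality to keep the reference a product $I_A\otimes\tau'_{B'}$, and semicausality to identify $\mu_{B'}=\mathcal{M}(\rho_B)$ as the correct marginal against which that reference must be expressed.
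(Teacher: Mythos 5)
Your skeleton (DPI of $D_{\alpha,z}$ applied to $\mathcal{N}$, conditional unitality to keep the reference operator in product form, semicausality to identify the output marginal) starts in a sensible direction, but the reconstruction step you flag as "the hard part" is not a matter of care with normalization: it is a genuine gap, and it is exactly where the theorem's difficulty lives. Solving $\Xi_{\lambda,q}(\mu_{B'},\omega_{B'})^{1/q}\propto\tau'_{B'}$ for a \emph{state} $\omega_{B'}$ is indeed always possible in the invertible case, but the proportionality constant decides everything. Writing $\tilde\omega_{B'}=\big(\mu_{B'}^{-\frac{1-\lambda}{2}q}(\tau'_{B'})^{q}\mu_{B'}^{-\frac{1-\lambda}{2}q}\big)^{1/(\lambda q)}$ and $\omega_{B'}^{*}=\tilde\omega_{B'}/\Trm\tilde\omega_{B'}$, one gets $\Xi_{\lambda,q}(\mu_{B'},\omega_{B'}^{*})^{1/q}=(\Trm\tilde\omega_{B'})^{-\lambda}\,\tau'_{B'}$, and since $D_{\alpha,z}(\rho\,\|\,c\sigma)=D_{\alpha,z}(\rho\,\|\,\sigma)-\log c$, your chain of inequalities closes only if $\Trm\tilde\omega_{B'}\le 1$; otherwise the feasible value exceeds $D_{\alpha,z}(\sigma_{AB'}\,\|\,I_A\otimes\tau'_{B'})$ and the final display fails. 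This trace bound is a substantive single-copy matrix inequality: it asserts that $\Lambda\big(\Xi_{\lambda,q}(\rho_B,\sigma_B)^{1/q}\big)$, where $\Lambda$ is the channel induced by conditional unitality (which, as you note, semicausality identifies with $\mathcal{M}$), is dominated by $\Xi_{\lambda,q}(\Lambda(\rho_B),\omega_{B'})^{1/q}$ for some state $\omega_{B'}$. When $\rho_B$ and $\sigma_B$ commute this is precisely the transformer inequality $\mathcal{E}(\rho\#_\lambda\sigma)\le\mathcal{E}(\rho)\#_\lambda\mathcal{E}(\sigma)$ for the Kubo--Ando mean — which is why your argument is fine at the endpoints $\lambda\in\{0,1\}$ and in the classical case — but for non-commuting $\rho_B,\sigma_B$ the sandwiched product $\Xi_{\lambda,q}(\rho,\sigma)^{1/q}$ is \emph{not} the Kubo--Ando mean, no such channel inequality is known, and nothing in your proposal supplies it (nor can operator inequalities be pushed through the map $X\mapsto X^{1/(\lambda q)}$, which is not operator monotone when $1/(\lambda q)>1$).

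The paper's proof exists precisely to manufacture the commutation your argument needs. It passes to $n$ copies and replaces $\sigma_B^{\otimes n}$ by the universal permutation-invariant state $\omega_{B^n}$ of Lemma~\ref{universal state}, which commutes with $\rho_B^{\otimes n}$; only then does the reference become a genuine geometric mean $\rho_{B^n}\#_\lambda\omega_{B^n}$, so that the transformer inequality, antimonotonicity of $D_{\alpha,z}$ in its second argument, and a second universal-state bound on $\Tr_{A^n}\big(\mathcal{N}^{\otimes n}(\pi_{A^n}\otimes\omega_{B^n})\big)$ can be combined, with the polynomial factors $g_{n,d}$ vanishing after dividing by $n$. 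A single-copy reduction along your lines would require proving the missing inequality for $\Xi_{\lambda,q}$, which is essentially as strong as the theorem itself; compare Remark~\ref{counterexample}, where the paper emphasizes that the underlying trivariate functional admits no DPI, so no pointwise reduction to an underlying monotone quantity can work.
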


Our proof leverages the asymptotic form for $H^{\lambda}_{\alpha,z}$ derived in Lemma \ref{H with universal any z}, which involves the weighted Kubo-Ando geometric mean and the universal state introduced in Section \ref{sec:characterization}. By combining the DPI for the $\alpha$-$z$ Rényi divergence with the operator monotonicity of the geometric mean under the action of a quantum channel for $0 \leq \lambda \leq 1$, we obtain the desired result. We remark that the argument breaks down for $\lambda < 0$, as the geometric mean is no longer monotone under channels in this range.

\begin{proof}
Let $\omega_{B^n}$ be the universal state in Section \ref{sec:characterization}. Since by Lemma \ref{universal state}, $\omega_{B^n}$ and $\rho_{B^n} :=\rho_B^{\otimes n}$ commute, Lemma~\ref{H with universal any z} implies that
\begin{equation}
\label{asymptotic alternative proof}
H^\lambda_{\alpha,z}(A|B)_\rho = \lim_{n \rightarrow \infty} \frac{1}{n} \frac{1}{1-\alpha} \log{ \textup{Tr}\left(\rho_{A^n B^n}^{\frac{\alpha}{2z}} \Big(\rho_{B^n}^{\frac{1-\lambda}{2}}\omega_{B^n}^{\lambda} \rho_{B^n}^{\frac{1-\lambda}{2}}\Big)^{\frac{1-\alpha}{z}}\rho_{A^n B^n}^{\frac{\alpha}{2z}}\right)^z}.
\end{equation}
Recall that the Kubo--Ando weighted geometric mean is defined as $\rho \#_\lambda \sigma := \rho^\frac{1}{2}(\rho^{-\frac{1}{2}}\sigma\rho^{-\frac{1}{2}})^\lambda \rho^\frac{1}{2}$. Since $\rho_{B^n}=\rho_B^{\otimes n}$ and $\omega_{B^n}$ commute, the sandwiched expression inside the logarithm in \eqref{asymptotic alternative proof} simplifies as
\begin{align}
\rho_{B^n}^{\frac{1-\lambda}{2}}\omega_{B^n}^{\lambda} \rho_{B^n}^{\frac{1-\lambda}{2}} = \rho_{B^n}\#_\lambda \omega_{B^n}.
\end{align}
With this substitution, the conditional entropy can be rewritten as 
\begin{align}
    H^\lambda_{\alpha,z}(A|B)_{\rho} &= \lim_{n \rightarrow \infty} -\frac{1}{n}D_{\alpha,z}\left(\rho_{A^n B^n} \middle\| I_{A^n} \otimes \rho_{B^n}\#_\lambda \omega_{B^n}\right),
\end{align}
where $D_{\alpha,z}$ denotes the $\alpha$-$z$ R\'enyi relative entropy as defined in Definition~\ref{alpha-z relative entropy}. 
Then, 
the DPI of the $\alpha$-$z$ R\'enyi relative entropy implies that
\begin{align}
D_{\alpha,z}&\left(\rho_{A^n B^n} \middle\| I_{A^n} \otimes \rho_{B^n}\#_\lambda \omega_{B^n}\right)\\
& \geq D_{\alpha,z}\left(  \N^{\otimes n}(\rho_{A^n B^n}) \middle\|      \N^{\otimes n}(I_{A^n} \otimes\rho_{B^n}\#_\lambda \omega_{B^n})\right)
 \\
\label{inequalitykubo}& \geq D_{\alpha,z}\left(    \N^{\otimes n}(\rho_{A^n B^n}) \middle\|     \N^{\otimes n}(I_{A^n} \otimes \rho_{B^n})\#_\lambda     \N^{\otimes n}(I_{A^n} \otimes \omega_{B^n})\right),
\end{align}
where in the last step we used that for a quantum channel $\mathcal{E}$, the geometric mean satisfies $\mathcal{E}(\rho \#_\lambda \sigma) \leq \mathcal{E}(\rho) \#_\lambda \mathcal{E}(\sigma)$ for $ \lambda \in [0,1]$ (see e.g.~\cite[Proposition 3.30]{hiai2017different} or~\cite[Lemma 6.3]{matsumoto2015new}). Moreover, we used that $\alpha$-$z$ R\'enyi relative entropy is antimonotone in the second argument (see, e.g.,~\cite[Lemma 17]{rubboli2022new}). Furthermore, we use the invariance under permutations of both $\mathcal{N}^{\otimes n}(\rho_{A^n B^n})$ and $\N^{\otimes n}(I_{A^n} \otimes \rho_{B^n})$ (as they are tensor products) to obtain that~\eqref{inequalitykubo} is equivalent to 
\begin{equation}
\label{No permutation}
    D_{\alpha,z}\left( \N^{\otimes n}(\rho_{A^n B^n}) \middle\| \N^{\otimes n}(I_{A^n} \otimes \rho_{B^n})\#_\lambda     \N^{\otimes n}(I_{A^n} \otimes \omega_{B^n})\right)
\end{equation}

Since by assumption $\N_{AB}$ is semicausal and conditionally unital, one can observe that (see~\cite[Appendix F]{gour2024inevitability})
\begin{equation}
    \N^{\otimes n}(I_{A^n} \otimes \rho_{B^n})=I_{A^n} \otimes \Tr_{A^n}\left(\N^{\otimes n}(\rho_{A^nB^n})\right) \,.
\end{equation}
Moreover, since the channel is conditionally unital, we have that
\begin{equation}
    \N^{\otimes n}(I_{A^n} \otimes \omega_{B^n})=I_{A^n} \otimes \Tr_{A^n}\left(\N^{\otimes n}(\pi_{A^n} \otimes \omega_{B^n})\right).
\end{equation}
In addition, since $\Tr_{A^n}\left(\N^{\otimes n}(\pi_{A^n} \otimes \omega_{B^n})\right)$ is permutation invariant, by Lemma~\ref{universal state} we can upper bound it with the universal state  $\omega_{B^{'n}}$ up to polynomial constant factors. Moreover, the weighted geometric mean satisfies $A \#_\lambda B \leq A \#_\lambda B'$ if $B \leq B'$ due to the operator monotonicity of the power in the interval $[0,1]$. Now, combining these facts, we once again invoke the antimonotonicity of the $\alpha$-$z$ R\'enyi relative entropy in its second argument.  With these ingredients, we obtain from~\eqref{No permutation} that
\begin{align}
D_{\alpha,z}&\left(\rho_{A^n B^n} \middle\| I_{A^n} \otimes \rho_{B^n}\#_\lambda \omega_{B^n}\right)\\
&\geq D_{\alpha,z}\left(\N^{\otimes n}(\rho_{A^n B^n}) \middle\| I_{A^n} \otimes \Tr_{A^n}\left(\N^{\otimes n}(\rho_{A^nB^n})\right)\#_\lambda \omega_{B^{'n}}\right)+ O(\log{n}).
\end{align}
Finally, we multiply by $-1/n$, and the limit $n \rightarrow \infty$ on both sides. Since the constant terms vanish as $O(\log{n}/n)$, both sides converge to the desired quantities.
\end{proof}
We have proved monotonicity under conditionally mixing operations. Next, according to~\cite[Definition 4]{gour2024inevitability}, to prove that our new quantity qualifies as a conditional entropy according to their definition, it is left to show invariance under isometries on $A$ and additivity.
Let us start with isometries. The proof is completely analogous to the one provided in~\cite[Appendix F]{gour2024inevitability}, Without loss of generality, we can write
\begin{equation}
    \mathcal{V}_{A \rightarrow A'}(\rho_{AB}) = \begin{pmatrix}
\rho_{AB} &  0 \\
 0 &  0_{CB}
\end{pmatrix} \,, \qquad 
I_{A'} =\begin{pmatrix}
I_A &  0 \\
 0 & I_{C}
\end{pmatrix} \,.
\end{equation}
Then, because the conditional entropy is invariant under local unitaries, we have that
\begin{align}
&D_{\alpha,z}\left(\mathcal{V}_{A \rightarrow A'}(\rho_{AB}) \middle\| I_{A'} \otimes \mathcal{G}_{\lambda,\frac{1-\alpha}{z}}(\rho_B, \sigma_B)^{\frac{z}{1-\alpha}}\right) \\
& \qquad \qquad =D_{\alpha,z}\left(\mathcal\rho_{AB} \oplus  0_{CB} \middle\| I_{A} \otimes \mathcal{G}_{\lambda,\frac{1-\alpha}{z}}(\rho_B, \sigma_B)^{\frac{z}{1-\alpha}} \oplus I_{C} \otimes \mathcal{G}_{\lambda,\frac{1-\alpha}{z}}(\rho_B, \sigma_B)^{\frac{z}{1-\alpha}}\right) \\
& \qquad \qquad = D_{\alpha,z}\left(\mathcal\rho_{AB} \middle\| I_{A} \otimes \mathcal{G}_{\lambda,\frac{1-\alpha}{z}}(\rho_B, \sigma_B)^{\frac{z}{1-\alpha}} \right) \,.
\end{align}
In the final step, the equality follows by explicitly evaluating the trace of the product of block matrices as defined in the $\alpha$-$z$ Rényi relative entropy.

Since we already proved additivity in Theorem~\ref{Additivity}, all the requirements of~\cite[Definition 4]{gour2024inevitability} are met. Finally, we note that the normalization~\cite[Section 3.2]{gour2024inevitability} is also consistent, as
\begin{equation}
H_{\alpha,z}^\lambda(A|B)_{\frac{I}{2} \otimes \rho} =1 \,.
\end{equation}
This condition is satisfied because by Lemma~\ref{maximal and minimal} the quantity lies between the minimal and maximal conditional entropies, both of which yield the same value for a state of the form $\frac{I}{2} \otimes \rho$ .

\section{Entropic duality relations}
\label{sec:duality}

In this section, we prove the duality relations for the family $H^\lambda_{\alpha,z}$. As an immediate consequence, one can observe that by making appropriate choices of the parameters $(\alpha,z,\lambda)$, relations \eqref{relation 1}, \eqref{relation 2} and \eqref{relation 3} are all special cases of this result.
\begin{theorem}[Entropic duality]\label{Duality}
	Let $(\alpha,z,\lambda)$ and $(\hat{\alpha},\hat{z},\hat{\lambda})$ be two points in $\D$. Then the duality relation
	\begin{equation}\label{duality1}
		H^{\lambda}_{\alpha,z}(A|B)_\rho+H^{\hat{\lambda}}_{\hat{\alpha},\hat{z}}(A|C)_\rho=0
	\end{equation}
	holds for every tripartite pure state $\rho_{ABC}$, whenever
	\begin{equation}\label{duality2}
		\frac{z}{1-\alpha}+\frac{\hat{z}}{1-\hat{\alpha}}=0, \qquad \frac{1-z}{1-\alpha}=\hat{\lambda}, \qquad \frac{1-\hat{z}}{1-\hat{\alpha}}=\lambda.
	\end{equation}
\end{theorem}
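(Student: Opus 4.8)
The plan is to first collapse the additive duality \eqref{duality1} into a single multiplicative identity, and then prove that identity by transporting the underlying trace functional from the bipartition $AB\,|\,C$ to the bipartition $B\,|\,AC$ of the common purification. Writing $H^{\lambda}_{\alpha,z}(A|B)_\rho=\frac{1}{1-\alpha}\log Q^{\lambda}_{\alpha,z}(A|B)_\rho$ and likewise on the $C$-side, and abbreviating $u:=\frac{z}{1-\alpha}$ and $\hat u:=\frac{\hat z}{1-\hat\alpha}$, the target \eqref{duality1} becomes $u\log Q^{\lambda}_{\alpha,z}(A|B)_\rho^{1/z}+\hat u\log Q^{\hat\lambda}_{\hat\alpha,\hat z}(A|C)_\rho^{1/\hat z}=0$. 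Since the third relation in \eqref{duality2} is precisely $u+\hat u=0$, it suffices to establish the master identity $Q^{\lambda}_{\alpha,z}(A|B)_\rho^{1/z}=Q^{\hat\lambda}_{\hat\alpha,\hat z}(A|C)_\rho^{1/\hat z}$; because $x\mapsto x^{1/z}$ is monotone for $z>0$, the exponent $1/z$ may be moved inside the $\opt$.

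The engine is a purification-transport identity for the functional $\Trm(\rho_{AB}^{\alpha/2z}(I_A\otimes\rho_B^{a}\sigma_B^{b}\rho_B^{a})\rho_{AB}^{\alpha/2z})^{z}$, with $a=\frac{(1-\lambda)(1-\alpha)}{2z}$ and $b=\frac{\lambda(1-\alpha)}{z}$. By isometric invariance of the reduced states it suffices to work with the canonical purification, for which $(X_{AB}\otimes I_C)|\psi\rangle=(I_{AB}\otimes X_{AB}^{T})|\psi\rangle$; this lets me slide powers of the joint state $\rho_{AB}$ and of the marginal $\rho_B$ across the cut, using the spectral identifications $\rho_{AB}\leftrightarrow\rho_C$ and $\rho_B\leftrightarrow\rho_{AC}$ induced by purity. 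The free variable is handled by the Schatten variational characterization $\|X\|_s=\sup_{\sigma}\Trm(X\sigma^{1-1/s})$, which linearizes $\sigma_B$ and is the source of its exponent. A direct bookkeeping then shows that the two remaining relations in \eqref{duality2} are exactly what make the exponents match: they force $b=1-1/\hat z$ (the Hölder weight produced by the variational step, which becomes $\hat b=1-1/z$ on the dual side), and they align the joint and marginal exponents up to the universal $-\tfrac12$ shift coming from the $\rho_{AB}^{1/2}$ in the purification; for instance the $B$-side marginal exponent satisfies $a=\frac{\hat\alpha}{2\hat z}-\frac12$, i.e. the $C$-side \emph{joint} exponent shifted by $-\tfrac12$, while the $B$-side joint exponent $\frac{\alpha}{2z}$ maps to the $C$-side marginal exponent $\frac{\alpha}{2z}-\frac12$. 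In this way the $B$-functional with outer power $z$ is reassembled as the $C$-functional with outer power $\hat z$, with the roles of joint state and marginal interchanged.

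It then remains to match the optimizations: I would show that under the transport the optimand for a given $\sigma_B$ coincides with the optimand for a corresponding $\sigma_C$ ranging over density operators, and that the $\opt$ directions agree. The latter is forced by the sign pattern built into $\D$, since the infimum/supremum rule in Definition~\ref{new entropy} depends only on the signs of $1-\alpha$ and $\lambda$, which are pinned once $(\alpha,z,\lambda),(\hat\alpha,\hat z,\hat\lambda)\in\D$ and \eqref{duality2} hold. If an explicit correspondence of optimizers is needed, it can be tracked through the fixed-point characterization of Theorem~\ref{fixed-point}, and a Sion minimax exchange may be used to interchange the optimization with the variational step.

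I expect the main obstacle to be establishing the transport identity rigorously over the whole region $\D$ rather than only formally: the fractional powers, the noncommutativity of $\rho_B$ and $\sigma_B$, and the support and absolute-continuity conditions in Definition~\ref{trivariate} must all be controlled. The cleanest route is to prove the identity first on a dense set where all operators are invertible and the exponents lie in a convenient range, and then extend it by complex interpolation (Hadamard three-line) together with continuity as the powers $a$, $b$ and the outer exponents vary. A secondary difficulty is verifying that the two $\opt$ operations are genuinely dual---supremum to supremum, or supremum to infimum---across the entire parameter range; here the boundary cases $\lambda\in\{0,1\}$, which must recover \eqref{relation 1}, \eqref{relation 2}, and \eqref{relation 3}, serve as useful consistency checks.
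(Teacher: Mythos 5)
Your proposal follows essentially the same route as the paper's proof: you reduce the additive duality to a multiplicative identity via $\frac{z}{1-\alpha}+\frac{\hat z}{1-\hat\alpha}=0$, introduce the second optimization variable through the Schatten-norm variational characterization, transport operators across the purification using the maximally entangled state identity $(X_{AB}\otimes I_C)\ket{\psi}=(I_{AB}\otimes X_{AB}^{\top})\ket{\psi}$, and exchange the two optimizations by Sion's minimax theorem, exactly as the paper does. The exponent bookkeeping you describe (the universal $-\tfrac12$ shift absorbed by $\rho_{AB}^{1/2}$ and the H\"older weight $1-1/\hat z$ produced by the variational step) is precisely the mechanism in the paper's computation, so the approach is correct and essentially identical.
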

\begin{proof}
In the following, all the optimizations are carried out on the set of quantum states. Let us first consider the region where $\alpha<1$, $z \leq 1$, and $0 \leq \lambda \leq 1$. The relations in~\eqref{duality2} imply that $\hat{\alpha} > 1$, $\hat{z} \geq 1$ and $0 \leq \hat{\lambda} \leq 1$. Since $\frac{z}{1-\alpha}=-\frac{\hat{z}}{1-\hat{\alpha}}$, we observe that \eqref{duality1} is equivalent to 
\begin{equation}
\sup_{\sigma_B}\left\|\rho_{AB}^\frac{\alpha}{2z} \mathcal{G}_{\lambda,\frac{1-\alpha}{z}}(\rho_B,\sigma_B) \rho_{AB}^\frac{\alpha}{2z}\right\|_z = \inf_{\sigma_C} \left\|\rho_{AC}^\frac{\hat{\alpha}}{2\hat{z}} \mathcal{G}_{\hat{\lambda},\frac{1-\hat{\alpha}}{\hat{z}}}(\rho_C,\sigma_C) \rho_{AC}^\frac{\hat{\alpha}}{2\hat{z}}\right\|_{\hat{z}},
\end{equation}
where $\mathcal{G}_{\lambda,q}$ is defined in Subsection \ref{SecDPItheoremAlice}.
Next, we use the variational form of the norm from Lemma~\ref{optimization norm}. Specifically, since $z\leq 1$, it holds that $\|A\|_z = \inf_{\tau}\Trm(A \tau^\frac{z-1}{z})$, and therefore,
\begin{align}
\left\|\rho_{AB}^\frac{\alpha}{2z} \mathcal{G}_{\lambda,\frac{1-\alpha}{z}}(\rho_B,\sigma_B)^\frac{1-\alpha}{z} \rho_{AB}^\frac{\alpha}{2z}\right\|_z &=\inf_{\tau_{AB}} \Trm\left(\rho_{AB}^\frac{\alpha}{2z} \mathcal{G}_{\lambda,\frac{1-\alpha}{z}}(\rho_B,\sigma_B) \rho_{AB}^\frac{\alpha}{2z} \tau_{AB}^\frac{z-1}{z}\right) \\
&=\inf_{\tau_{AB}} \Trm\left(\rho_{AB}^\frac{1}{2} \mathcal{G}_{\lambda,\frac{1-\alpha}{z}}(\rho_B,\sigma_B) \rho_{AB}^\frac{1}{2} \mathcal{G}_{\hat{\lambda},\frac{1-\hat{\alpha}}{\hat{z}}}(\rho_{AB},\tau_{AB})\right),
\end{align}
since by \eqref{duality2} we have
\begin{equation}
\mathcal{G}_{\hat{\lambda},\frac{1-\hat{\alpha}}{\hat{z}}}(\rho_{AB},\tau_{AB}) =  \rho_{AB}^{\frac{(1-\hat{\lambda})(1-\hat{\alpha})}{2\hat{z}}} \tau_{AB}^{\frac{\hat{\lambda}(1-\hat{\alpha})}{\hat{z}}}  \rho_{AB}^{\frac{(1-\hat{\lambda})(1-\hat{\alpha})}{2\hat{z}}} = \rho_{AB}^\frac{\alpha-z}{2z} \tau_{AB}^\frac{z-1}{z} \rho_{AB}^\frac{\alpha-z}{2z}.
\end{equation}
Let $\ket{\rho}_{ABC}=\sum\limits_i \sqrt{\lambda}_i \ket{i}_{AB}\ket{i}_C$ be the Schmidt decomposition of $\rho_{ABC}$ with respect to the partition $AB:C$ and $\ket{\psi}_{AB:C}=\sum\limits_i \ket{i}_{AB}\ket{i}_C$ be the unnormalized maximally entangled state. Then, we can write
\begin{align}
\mathcal{G}_{\hat{\lambda},\frac{1-\hat{\alpha}}{\hat{z}}}(\rho_{AB},\tau_{AB})  \bigket{\psi}_{AB:C} &= \rho_{AB}^{\frac{(1-\hat{\lambda})(1-\hat{\alpha})}{2\hat{z}}} \tau_{AB}^{\frac{\hat{\lambda}(1-\hat{\alpha})}{\hat{z}}}  \rho_{AB}^{\frac{(1-\hat{\lambda})(1-\hat{\alpha})}{2\hat{z}}} \bigket{\psi}_{AB:C} \\
&=  \rho_{C}^{\frac{(1-\hat{\lambda})(1-\hat{\alpha})}{2\hat{z}}} \tau_{C}^{\top \frac{\hat{\lambda}(1-\hat{\alpha})}{\hat{z}}}  \rho_{C}^{\frac{(1-\hat{\lambda})(1-\hat{\alpha})}{2\hat{z}}} \bigket{\psi}_{AB:C}\\
& = \mathcal{G}_{\hat{\lambda},\frac{1-\hat{\alpha}}{\hat{z}}}\big(\rho_{C},\tau^{\top}_{C}\big)^\frac{1-\hat{\alpha}}{\hat{z}} \bigket{\psi}_{AB:C},
\end{align}
where $\tau_{C}$ is $\tau_{AB}$ embedded in the space $C$ through the isomorphism $AB \rightarrow C$ induced by the Schmidt basis of $\ket{\rho}_{ABC}$. The transpose is computed with respect to the basis $\ket{i}_C$. Therefore, we have
\begin{align}
&\Trm\Big(\rho_{AB}^\frac{1}{2} \mathcal{G}_{\lambda,\frac{1-\alpha}{z}}(\rho_B,\sigma_B) \rho_{AB}^\frac{1}{2} \mathcal{G}_{\hat{\lambda},\frac{1-\hat{\alpha}}{\hat{z}}}(\rho_{AB},\tau_{AB})\Big)\\
&\qquad \qquad \qquad = \bigbra{\psi} \rho_{AB}^\frac{1}{2} \mathcal{G}_{\lambda,\frac{1-\alpha}{z}}(\rho_B,\sigma_B) \rho_{AB}^\frac{1}{2} \mathcal{G}_{\hat{\lambda},\frac{1-\hat{\alpha}}{\hat{z}}}(\rho_{AB},\tau_{AB}) \bigket{\psi}_{AB:C} \\
&\qquad \qquad \qquad = \bigbra{\psi}\rho_{AB}^\frac{1}{2} \mathcal{G}_{\lambda,\frac{1-\alpha}{z}}(\rho_B,\sigma_B) \rho_{AB}^\frac{1}{2}  \otimes \mathcal{G}_{\hat{\lambda},\frac{1-\hat{\alpha}}{\hat{z}}}\big(\rho_{C},\tau^{\top}_{C}\big) \bigket{\psi}_{AB:C} \\
&\qquad \qquad \qquad = \bigbra{\rho} \mathcal{G}_{\lambda,\frac{1-\alpha}{z}}(\rho_B,\sigma_B)  \otimes \mathcal{G}_{\hat{\lambda},\frac{1-\hat{\alpha}}{\hat{z}}}\big(\rho_{C},\tau^{\top}_{C}\big) \bigket{\rho}_{ABC}.
\end{align}
Since we take the infimum of all states $\tau_C$, we can drop the transpose. This gives
\begin{align}
&\sup_{\sigma_B}\left\|\rho_{AB}^\frac{\alpha}{2z} \mathcal{G}_{\lambda,\frac{1-\alpha}{z}}(\rho_B,\sigma_B) \rho_{AB}^\frac{\alpha}{2z}\right\|_z \\
&\qquad \qquad = \sup_{\sigma_B}\inf_{\tau_C} \bigbra{\rho} \mathcal{G}_{\lambda,\frac{1-\alpha}{z}}(\rho_B,\sigma_B) \otimes \mathcal{G}_{\hat{\lambda},\frac{1-\hat{\alpha}}{\hat{z}}}(\rho_{C},\tau_{C}) \bigket{\rho}_{ABC} \\
\label{purification 1}
&\qquad \qquad= \sup_{\sigma_B}\inf_{\tau_C} \bigbra{\rho} \rho_B^{\frac{(1-\lambda)(1-\alpha)}{2z}} \sigma_B^{\frac{\lambda(1-\alpha)}{z}} \rho_B^{\frac{(1-\lambda)(1-\alpha)}{2z}} \otimes \rho_{C}^{\frac{(1-\hat{\lambda})(1-\hat{\alpha})}{2\hat{z}}} \tau_{C}^{\frac{\hat{\lambda}(1-\hat{\alpha})}{\hat{z}}}  \rho_{C}^{\frac{(1-\hat{\lambda})(1-\hat{\alpha})}{2\hat{z}}}\bigket{\rho}_{ABC}.
\end{align}
We rewrote the last line since it is clear that the latter function is concave in $\sigma_B$ as $\lambda(1-\alpha)/z \in (0,1)$ and convex in $\tau_C$ as $\hat{\lambda}(1-\hat{\alpha})/\hat{z} \in (-1,0)$ (this follows from operator convexity/concavity in this range). Hence, we can use Sion's minimax theorem to exchange the $\inf$ with the $\sup$, and we obtain what we would get if we repeat the argument for
\begin{equation}
    \inf_{\tau_C} \Big\|\rho_{AC}^\frac{\hat{\alpha}}{2\hat{z}} \mathcal{G}_{\hat{\lambda},\frac{1-\hat{\alpha}}{\hat{z}}}(\rho_C,\tau_C)\rho_{AC}^\frac{\hat{\alpha}}{2\hat{z}}\Big\|_{\hat{z}}.
\end{equation}
This shows the duality relation for the first region.

The proof of the duality relation between the other regions is completely analogous. In some cases, the minimax theorem is not even needed, as we either have a double supremum or a double infimum. For example, let us consider $0 \leq \alpha \leq 1$, $z \geq 1$  and $0 \leq \lambda \leq 1$. Using the above relations, this implies that $\hat{\alpha} \geq 1$, $\hat{z} \geq 1$ and $\hat{\lambda} \leq 0$. We now repeat the same steps. The only difference is that now the variational formulation of the norm contains a supremum for $z\geq 1$ and the dual conditional entropy is now defined with a supremum since $\hat{\lambda} \leq 0$. We obtain
\begin{align}
&\sup_{\sigma_B}\left\|\rho_{AB}^\frac{\alpha}{2z} \mathcal{G}_{\lambda,\frac{1-\alpha}{z}}(\rho_B,\sigma_B) \rho_{AB}^\frac{\alpha}{2z}\right\|_z \\
&\qquad \qquad = \sup_{\sigma_B}\sup_{\tau_C} \bigbra{\rho} \mathcal{G}_{\lambda,\frac{1-\alpha}{z}}(\rho_B,\sigma_B)  \otimes \mathcal{G}_{\hat{\lambda},\frac{1-\hat{\alpha}}{\hat{z}}}(\rho_{C},\tau_{C}) \bigket{\rho}_{ABC} \\
\label{purification 2}
& \qquad \qquad = \sup_{\sigma_B}\sup_{\tau_C} \bigbra{\rho} \rho_B^{\frac{(1-\lambda)(1-\alpha)}{2z}} \sigma_B^{\frac{\lambda(1-\alpha)}{z}} \rho_B^{\frac{(1-\lambda)(1-\alpha)}{2z}} \otimes \rho_{C}^{\frac{(1-\hat{\lambda})(1-\hat{\alpha})}{2\hat{z}}} \tau_{C}^{\frac{\hat{\lambda}(1-\hat{\alpha})}{\hat{z}}}  \rho_{C}^{\frac{(1-\hat{\lambda})(1-\hat{\alpha})}{2\hat{z}}}\bigket{\rho}_{ABC}.
\end{align}
This is equivalent to the result we would obtain by repeating the argument for 
\begin{equation}
    \sup_{\sigma_C} \left\|\rho_{AC}^\frac{\hat{\alpha}}{2\hat{z}} \mathcal{G}_{\hat{\lambda},\frac{1-\hat{\alpha}}{\hat{z}}}(\rho_C,\sigma_C) \rho_{AC}^\frac{\hat{\alpha}}{2\hat{z}}\right\|_{\hat{z}} \,.
\end{equation}
The proof is complete.
\end{proof}

\begin{remark}
\label{variational purification}
   From equations~\eqref{purification 1} and~\eqref{purification 2}, as well as similar equations for the other ranges, we obtain the following variational form involving the purification of $\rho_{AB}$. As shown in Section~\ref{sec:chainrules}, this form is particularly useful for deriving chain rules for the new conditional entropy. Specifically, for any state $\rho_{AB}$, we have
	\begin{equation}
		H^\lambda_{\alpha,z}(A|B)= \frac{z}{1-\alpha}\log{\opt_{\sigma_B}\opt_{\tau_C} \bigbra{\rho} \mathcal{G}_{\lambda,\frac{1-\alpha}{z}}(\rho_B,\sigma_B)  \otimes \mathcal{G}_{\hat{\lambda},\frac{1-\hat{\alpha}}{\hat{z}}}(\rho_{C},\tau_{C})\bigket{\rho}_{ABC}},
	\end{equation}
 where $\ket{\rho}_{ABC}$ is a purification of $\rho_{AB}$, and $\mathcal{G}_{\lambda,q}$ is as in Subsection \ref{SecDPItheoremAlice}.
\end{remark}

\begin{remark}
\label{explicit computation}
	The dual parameters $(\hat{\alpha},\hat{z},\hat{\lambda})$ can be computed explicitly in terms of the primal parameters $(\alpha,z,\lambda)$ via~\eqref{duality2}, as follows:
	\begin{equation}
    \label{explcit computation duality}
		\hat{\alpha}=\frac{z+(\lambda-1)(\alpha-1)}{z+\lambda(\alpha-1)}, \qquad  \hat{z}=\frac{z}{z+\lambda(\alpha-1)},\qquad \hat{\lambda}=\frac{z-1}{\alpha-1}.
	\end{equation}
    Note that the regions $\mathcal{D}_1$ and $\mathcal{D}_2$ are mapped to each other bijectively under this transformation.
\end{remark}

\begin{remark}
The duality relations provide a closed-form solution for pure states. Specifically, for a pure state $\ket{\rho}_{AB}$, we have
\begin{equation}
H^\lambda_{\alpha,z}(A|B)_\rho = - H_{\hat{\alpha}}(\rho_B), \qquad \hat{\alpha}=\frac{z+(\lambda-1)(\alpha-1)}{z+\lambda(\alpha-1)},
\end{equation}
where $H_{\hat{\alpha}}(\rho_B) = \frac{1}{1-\hat{\alpha}}\log \Trm\left(\rho_B^{\hat{\alpha}}\right)$ is the R\'enyi entropy of the marginal state.
\end{remark}

\section{A reparameterization of the DPI region}
\label{sec:reparametrization}

In this section, we derive a new coordinate system that allows us to express the constraints on the DPI region 
$\D$ in a linear form. This transformation not only simplifies the representation but also allows us to visualize the duality relations as reflections across the third axis, thereby providing deeper insights into their geometric properties.
Furthermore, the dual constraint of each DPI condition is readily observable (see Tables~\ref{table I} and~\ref{table II}).

\renewcommand{\arraystretch}{1.3}
\begin{table}[h]
  \centering
   \begin{minipage}[t]{0.38\textwidth}
  \begin{tabular}{c|c}
    \hline
    $(\alpha,z,\lambda)$ & $(x_1,x_2,x_3)$  \\
    \hline
    $\alpha \geq 0$ & $x_1+x_2-x_3-2 \geq 0$  \\
    $z \geq 1-\alpha$ & $x_1 \geq 1$  \\
    $z \geq \alpha$ & $x_1-x_2+x_3+2 \geq 0$  \\
    $\lambda \leq 1$ & $x_1-x_2-x_3-2 \leq 0$\\
    $\lambda \geq 1-\frac{z}{1-\alpha}$ & $3x_1-x_2-x_3-2 \geq 0$\\
    \hline
  \end{tabular}
  \caption{Constraints for $\alpha \leq 1$}
  \label{table I}
  \end{minipage}
  \hspace{-5pt}
\begin{tikzpicture}[thick, <->]
  \draw (-3.5,-0.7) -- (0,-0.7);
  \node[above] at (-1.7,0) {$(x_1,x_2,x_3)$};
  \node[above] at (-1.7,-0.5) {$\leftrightarrow (-\hat{x}_1,-\hat{x}_2,\hat{x}_3)$};
\end{tikzpicture}
  \begin{minipage}[t]{0.38\textwidth}
  \begin{tabular}{c|c}
    \hline
    $(\hat{\alpha},\hat{z},\hat{\lambda})$ & $(\hat{x}_1,\hat{x}_2,\hat{x}_3)$  \\
    \hline
    $\hat{\lambda} \geq 1+\frac{\hat{z}}{1-\hat{\alpha}}$ & $\hat{x}_1+\hat{x}_2+\hat{x}_3+2 \leq 0$  \\
    $\hat{z} \geq \hat{\alpha}-1$ & $\hat{x}_1 \leq 1$  \\
    $\hat{\lambda} \leq 1$ & $\hat{x}_1-\hat{x}_2-\hat{x}_3-2 \leq 0$  \\
    $\hat{z} \leq \hat{\alpha}$ & $\hat{x}_1-\hat{x}_2+\hat{x}_3+2 \geq 0$\\
    $2\hat{z} \geq \hat{\alpha}$ & $3\hat{x}_1-\hat{x}_2+\hat{x}_3+2 \leq 0$\\
    \hline
  \end{tabular}
  \caption{Constraints for $\hat{\alpha} \geq 1$}
  \label{table II}
  \end{minipage}
  \caption*{The two tables list the DPI constraints for $\alpha \leq 1$ and $\alpha \geq 1$. In each row, we write the same constraint in both the coordinates $(\alpha,z,\lambda)$ and $(x_1,x_2,x_3)$. We also show the action of the duality transformations on the DPI constraints. Under duality, each row of one table is mapped into the same row of the other table. In the new coordinates, this mapping corresponds to the reflection across the third axis $(x_1,x_2,x_3) \leftrightarrow (-\hat{x}_1,-\hat{x}_2,\hat{x}_3)$.}
\end{table}

\begin{figure}[h]
	\centering
	\includegraphics[width=0.75\textwidth]{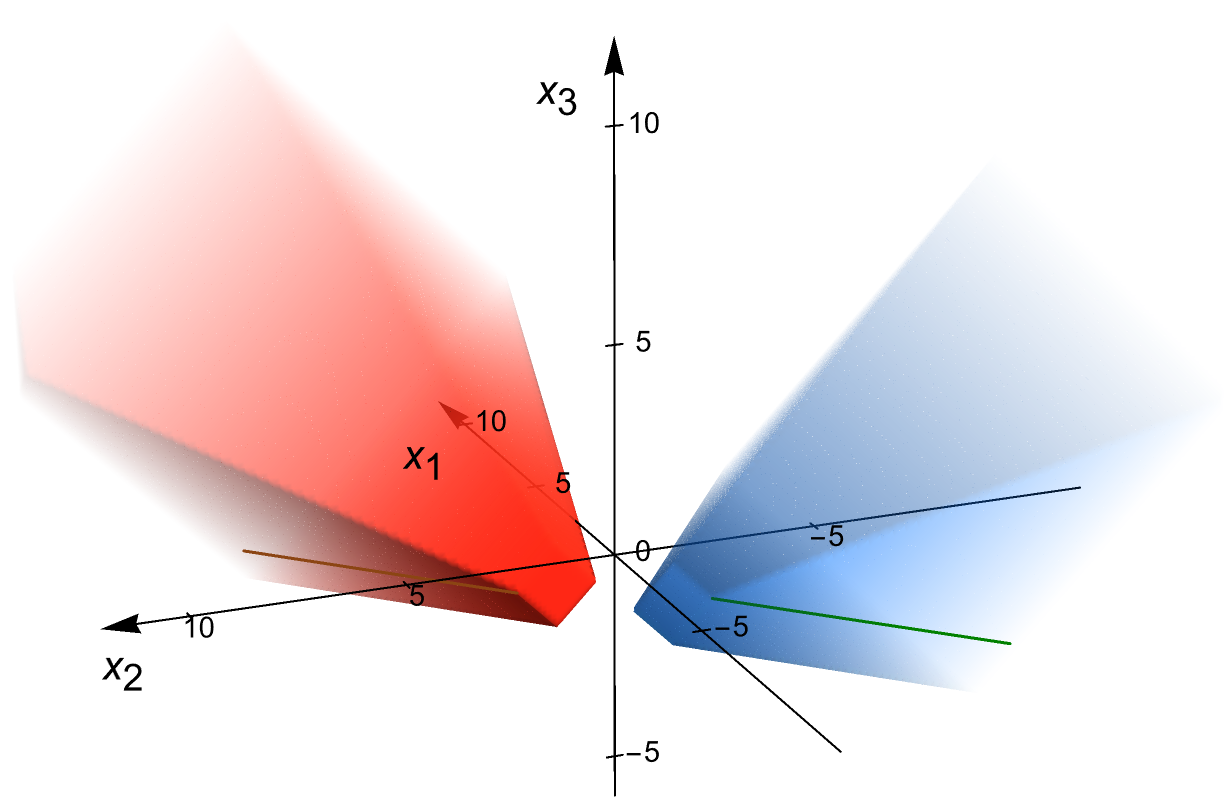}
	\caption{The figure shows the DPI region $\D$ in the coordinates $x_1 = \frac{z}{1-\alpha}$,  $x_2 = \frac{1}{1-\alpha} - \lambda$ and $x_3=\frac{z-1}{1-\alpha}-\lambda$. In these coordinates, the DPI region, depicted in yellow, consists of a truncated polyhedral cone and its mirror image. Moreover, the duality relations in Theorem~\ref{Duality} become $
		x_1=-\hat{x}_1$, $x_2=-\hat{x}_2$ and $x_3=\hat{x}_3$. Hence, under duality, the points get reflected across the $x_3$ axis. We show the duality relation~\eqref{relation 3} that connects the Petz arrow up with the sandwiched arrow down. The green line corresponds to the Petz arrow up for $\alpha >1$ while the brown line to the sandwiched arrow down for $\alpha<1$.}
	\label{DPI figure}
\end{figure}

Note first that the duality relations in equation~\eqref{duality2} can be rewritten in the equivalent form
\begin{equation}
	\frac{z}{1-\alpha}+\frac{\hat{z}}{1-\hat{\alpha}}=0,\qquad \frac{1}{1-\alpha}+\frac{1}{1-\hat{\alpha}} = \lambda+\hat{\lambda},\qquad \frac{z-1}{1-\alpha}-\frac{\hat{z}-1}{1-\hat{\alpha}} = \lambda-\hat{\lambda}.
\end{equation}
These relations suggest the following change of coordinates
\begin{equation}
x_1 = \frac{z}{1-\alpha}, \qquad  x_2 = \frac{1}{1-\alpha} - \lambda, \qquad x_3=\frac{z-1}{1-\alpha}-\lambda.
\end{equation}
In these coordinates, the duality relations turn into
\begin{equation}
x_1=-\hat{x}_1,\quad x_2=-\hat{x}_2,\quad x_3=\hat{x}_3,
\end{equation}
where the $\hat{x}_1,\hat{x}_2,\hat{x}_3$ are defined similarly using $\hat{\alpha},\hat{z},\hat{\lambda}$. Therefore, under duality, the parameters get reflected with respect to the $x_3$ axis.
For completeness, we also write the inverse relations. We have
\begin{equation}
\alpha = \frac{x_1+x_2-x_3-2}{x_1+x_2-x_3}, \qquad z=\frac{2x_1}{x_1+x_2-x_3}, \qquad  \lambda = \frac{x_1-x_3-x_2}{2}.
\end{equation}
Using these, we rewrite the DPI constraints in terms of the new coordinates in Table~\ref{table I} and~\ref{table II}. Each row of Table~\ref{table I} corresponds to a specific DPI constraint. We write the same constraint in both coordinates $(\alpha,z,\lambda)$ and $(x_1,x_2,x_3)$. Under duality, each constraint is mapped to its dual in the corresponding row of Table~\ref{table II}, and vice versa.

In Fig.~\ref{DPI figure}, the DPI region $\D$ is illustrated in the coordinates $(x_1,x_2,x_3)$. The entropy $\widebar{H}^\uparrow_{\alpha}$ for $\alpha>1$ corresponds to the green line and the entropy $\widetilde{H}^\downarrow_{\alpha}$ for $\alpha<1$ corresponds to the brown line. These two quantities are dual to each other according to the relation~\eqref{relation 3}. Thus, these two lines are symmetric with respect to the third axis  $x_3$.

\section{Asymptotic and variational characterizations}
\label{sec:characterization}

In this section, we derive alternative formulations for $H^\lambda_{\alpha,z}$, which play a key role in establishing the chain rule in the subsequent section. We begin by obtaining expressions that incorporate the universal state in Proposition \ref{H with universal any z}, followed by the derivation of several useful forms involving the Schatten norm in Lemma~\ref{Dupuis variational}, inspired by the approach in~\cite{dupuis2015chain} for sandwiched conditional entropies. In Lemmas~\ref{regularized z geq 1} and~\ref{regularized z leq 1}, these elements are combined to obtain an additional form. This formulation will be applied in the following section, using complex interpolation theory as developed in~\cite{dupuis2015chain} to establish the chain rule.

It is noteworthy that the results contained in this section are of independent interest. Notably, we utilize the asymptotic expressions to establish the monotonicity properties with respect to the parameters $\alpha$ and $z$ in Section~\ref{sec:mono}. Moreover, the same form provides the foundation for an alternative proof of the DPI for the range $0 \leq \lambda \leq 1$ in Subsection~\ref{alternativeproof}.

We frequently make use of the following lemma from the representation theory of the symmetric group $S_n$, consisting of permutations of $n$ elements. Let $U_{A^n}$ denote the natural unitary representation of $S_n$ on $A^n$ that permutes the subsystems $A_1, A_2, \dots, A_n$. An operator $L_{A^n}$ is said to be permutation-invariant if it satisfies $U_{A^n}(\pi) L_{A^n} U_{A^n}(\pi)^\dagger = L_{A^n}$ for all $\pi \in S_n$. Similarly, $L_{A^n}$ is said to be invariant under $n$-fold product unitaries if it satisfies $V_A^{\otimes n} L_{A^n} (V_A^{\dagger})^{\otimes n} = L_{A^n}$ for all unitaries $V_A$ on $A$. This lemma is the key to canceling out unitaries in certain optimizations. (See~\cite[Lemma 1]{hayashitomamichel15c} for a proof.)
\begin{lemma}\label{universal state}
Let $A$ be a quantum system with $|A|=d$. For any $n \in \mathbb{N}$, there exists a state $\omega_{A^n} \in \mathcal{S}(A^n)$, referred to as the universal state, such that for any permutation-invariant state $\tau_{A^n} \in \mathcal{S}(A^n)$,
\begin{equation}
    \tau_{A^n} \leq g_{n,d}\, \omega_{A^n} \qquad \text{where} \qquad g_{n,d} \leq (n+1)^{d^2-1}.
\end{equation}
Moreover, the state $\omega_{A^n}$ is permutation-invariant, commutes with all permutation-invariant states, and is invariant under $n$-fold product unitaries.
\end{lemma}

\begin{proposition}[Asymptotic expression]
	\label{H with universal any z}
	Let $\rho_{AB}$ be a quantum state and $(\alpha,z,\lambda) \in \D$. Then, we have
	\begin{align}
		(i) \qquad H^\lambda_{\alpha,z}(A|B)_\rho &= \lim_{n \rightarrow \infty} \frac{1}{n} \frac{1}{1-\alpha} 
		\log{ \textup{Tr}\left(\rho_{A^nB^n}^{\frac{\alpha}{2z}} \rho_{B^n}^{\frac{(1-\lambda)(1-\alpha)}{2z}}\omega_{B^n}^{\frac{\lambda(1-\alpha)}{z}} \rho_{B^n}^{\frac{(1-\lambda)(1-\alpha)}{2z}}\rho_{A^n B^n}^{\frac{\alpha}{2z}}\right)^z}. \\
		(ii) \qquad H^\lambda_{\alpha,z}(A|B)_\rho &= \lim_{n \rightarrow \infty} \frac{1}{n} \frac{z}{1-\alpha} 
		\log{\opt_{\sigma_B} \textup{Tr}\left(\rho_{A^n B^n}^{\frac{\alpha}{2z}} \rho_{B^n}^{\frac{(1-\lambda)(1-\alpha)}{2z}} (\sigma^{\otimes n}_B)^\frac{\lambda(1-\alpha)}{z}\rho_{B^n}^{\frac{(1-\lambda)(1-\alpha)}{2z}}\rho_{A^n B^n}^{\frac{\alpha}{2z}}\omega_{A^nB^n}^{1-\frac{1}{z}}\right)}.
	\end{align}
	 $(iii) \; \;  $ For $z \leq 1$ and $\lambda(1-\alpha) \leq 0$, we have
\begin{equation}
	H^\lambda_{\alpha,z}(A|B)_\rho = \lim_{n \rightarrow \infty} \frac{1}{n} \frac{z}{1-\alpha} \log{ \textup{Tr}\left(\rho_{A^n B^n}^{\frac{\alpha}{2z}} \rho_{B^n}^{\frac{(1-\lambda)(1-\alpha)}{2z}}\omega_{B^n}^{\frac{\lambda(1-\alpha)}{z}} \rho_{B^n}^{\frac{(1-\lambda)(1-\alpha)}{2z}}\rho_{A^n B^n}^{\frac{\alpha}{2z}}\omega_{A^nB^n}^{1-\frac{1}{z}}\right)}.
\end{equation}
Here, we denoted $\rho_{A^n B^n} = \rho_{AB}^{\otimes n}$ and $\rho_{B^n}=\rho_B^{\otimes n}$.
\end{proposition}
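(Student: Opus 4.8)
The plan is to pass to the regularized picture using the additivity from Theorem~\ref{Additivity}, $H^\lambda_{\alpha,z}(A|B)_\rho = \tfrac1n H^\lambda_{\alpha,z}(A^n|B^n)_{\rho^{\otimes n}}$, and then to trade each optimization for a single permutation-invariant \emph{universal} state $\omega$ that dominates every i.i.d.\ state up to a polynomial factor, $\sigma^{\otimes n}\leq \mathrm{poly}(n)\,\omega$ (Lemma~\ref{universal state}). Since this factor contributes only $O(\log n/n)$ after applying $\tfrac1n\log(\cdot)$, it disappears in the limit, so each replacement is exact at the level of regularized rates. Crucially, for i.i.d.\ inputs every operator that appears—both the sandwiched operator inside the trace and, later, the optimal Schatten dual—is itself i.i.d.\ or permutation invariant, which is what makes the uniform universal state an admissible replacement.

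For (i), we would combine additivity with the tensorization of optimizers from the proof of Theorem~\ref{Additivity}, so that the optimal $\sigma_{B^n}$ may be taken i.i.d.; writing the $n$-letter entropy as $\tfrac{1}{1-\alpha}\log\Trm(\cdots(\sigma_B^{\otimes n})^{\lambda(1-\alpha)/z}\cdots)^z$, we substitute $\omega_{B^n}$ for $\sigma_B^{\otimes n}$. One inequality is free because $\omega_{B^n}$ is feasible for the optimization; the reverse follows from the domination together with operator monotonicity (or antimonotonicity) of $x\mapsto x^{\lambda(1-\alpha)/z}$, whose exponent lies in $[-1,1]$ on all of $\D$, and monotonicity of $X\mapsto\Trm(X)^z$. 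The sup/inf and the sign of $\tfrac{1}{1-\alpha}$ flip together between $\alpha<1$ and $\alpha>1$, so the two-sided estimate holds uniformly.

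For (ii), we would instead keep the (i.i.d., by the same tensorization argument) optimization over $\sigma_B$ and linearize the outer $z$-power through the variational form of the Schatten norm (Lemma~\ref{optimization norm}): with $P_n$ the positive operator inside the trace, $\Trm(P_n^z)=\big(\opt_\tau \Trm(P_n\,\tau^{1-1/z})\big)^z$, whence $\tfrac{1}{1-\alpha}\log\Trm(P_n^z)=\tfrac{z}{1-\alpha}\log\opt_\tau\Trm(P_n\,\tau^{1-1/z})$. Because $P_n=P_1^{\otimes n}$ is i.i.d., its optimal dual $\tau^\ast$ is i.i.d.\ as well and is dominated by $\omega_{A^nB^n}$, so we may replace $\tau$ by $\omega_{A^nB^n}$ uniformly in $\sigma_B$ (using $x\mapsto x^{1-1/z}$, with $1-1/z\in[-1,1)$ on $\D$), producing the stated form while leaving $\opt_{\sigma_B}$ intact.

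Finally, (iii) would result from performing both substitutions at once, $\sigma_B^{\otimes n}\to\omega_{B^n}$ and $\tau\to\omega_{A^nB^n}$, and this is exactly where the extra hypotheses $z\leq1$ and $\lambda(1-\alpha)\leq0$ enter: they make the Schatten variational an infimum and the $\sigma$-exponent $\lambda(1-\alpha)/z$ nonpositive, so that both universal-state replacements act through antimonotone powers and therefore push the trace functional in the \emph{same} direction. This common orientation is what upgrades the two individual one-sided bounds into a single combined squeeze. We expect the main obstacle throughout to be precisely this sign bookkeeping—checking in each subregion of $\D$ that the relevant exponent lies in $[-1,1]$ so that operator (anti)monotonicity applies, and that the signs of $\tfrac{1}{1-\alpha}$, of the $\sigma$-exponent, and of the sup/inf in the norm variational align to produce genuine two-sided estimates rather than one-sided bounds; in (iii) the stated restrictions are the precise conditions under which this alignment holds simultaneously.
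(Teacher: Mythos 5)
Your proposal follows essentially the same route as the paper's proof: additivity/tensorization of optimizers to identify the full $n$-letter optimization with the i.i.d.\ one, the universal-state domination $\sigma^{\otimes n}\leq \mathrm{poly}(n)\,\omega$ combined with operator (anti)monotonicity of the relevant powers (each in $[-1,1]$ on $\D$) to swap i.i.d.\ optimizers for $\omega$ at $O(\log n/n)$ cost, the Schatten-norm variational form of Lemma~\ref{optimization norm} for parts (ii)--(iii), and a two-sided squeeze; your explanation of why (iii) requires $z\le 1$ and $\lambda(1-\alpha)\le 0$ (both replacements must push the functional in the same direction) is exactly the paper's. One small caution: the sup/inf direction is governed by the sign of $\lambda(1-\alpha)$ rather than by $\alpha\lessgtr 1$ alone (these differ when $\lambda<0$), so the case split in (i) should be organized by the sign of the $\sigma$-exponent, as the paper does.
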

\begin{proof}
	(i) Let us start with the case  $\lambda(1-\alpha)/z \in [-1,0)$. In this range, the power is operator antimonotone. We apply the properties of the universal state in Lemma~\ref{universal state}. The following chain of inequalities holds:
\begin{align}
&\log{\inf_{\sigma_B}\Trm\left(\rho_{AB}^\frac{\alpha}{2z} \rho_B^{\frac{(1-\lambda)(1-\alpha)}{2z}}\sigma_B^{\frac{\lambda(1-\alpha)}{z}} \rho_B^{\frac{(1-\lambda)(1-\alpha)}{2z}}\rho_{AB}^\frac{\alpha}{2z}\right)^z} \\
&\quad \qquad= \frac{1}{n} \log{\inf_{\sigma_B} \textup{Tr}\left(\rho_{A^n B^n}^{\frac{\alpha}{2z}} \rho_{B^n}^{\frac{(1-\lambda)(1-\alpha)}{2z}}\big(\sigma_{B}^{\otimes n}\big)^{\frac{\lambda(1-\alpha)}{z}} \rho_{B^n}^{\frac{(1-\lambda)(1-\alpha)}{2z}}\rho_{A^n B^n}^{\frac{\alpha}{2z}}\right)^z} \\
\label{symmetric inequality}
&\quad \qquad \geq \frac{1}{n}  \log{ \textup{Tr}\left(\rho_{A^n B^n}^{\frac{\alpha}{2z}} \rho_{B^n}^{\frac{(1-\lambda)(1-\alpha)}{2z}}\omega_{B^n}^{\frac{\lambda(1-\alpha)}{z}} \rho_{B^n}^{\frac{(1-\lambda)(1-\alpha)}{2z}}\rho_{A^n B^n}^{\frac{\alpha}{2z}}\right)^z} + 
\lambda(1-\alpha)\frac{1}{n}\log{g_{n,d}} \\
&\quad \qquad \geq \frac{1}{n}  \log{\inf_{\sigma_{B^n}} \textup{Tr}\left(\rho_{A^n B^n}^{\frac{\alpha}{2z}} \rho_{B^n}^{\frac{(1-\lambda)(1-\alpha)}{2z}}\sigma_{B^n}^{\frac{\lambda(1-\alpha)}{z}} \rho_{B^n}^{\frac{(1-\lambda)(1-\alpha)}{2z}}\rho_{A^n B^n}^{\frac{\alpha}{2z}}\right)^z} + \lambda(1-\alpha) \frac{1}{n}\log{g_{n,d}} \\
\label{additivity conditional}
&\quad \qquad = \log{\inf_{\sigma_B}\Trm\left(\rho_{AB}^\frac{\alpha}{2z}  \rho_B^{\frac{(1-\lambda)(1-\alpha)}{2z}}\sigma_B^{\frac{\lambda(1-\alpha)}{z}} \rho_B^{\frac{(1-\lambda)(1-\alpha)}{2z}}\rho_{AB}^\frac{\alpha}{2z}\right)^z} + \lambda(1-\alpha)\frac{1}{n}\log{g_{n,d}}.
\end{align}
	In~\eqref{symmetric inequality}, we used that the exponent of $\sigma_B$ is operator antimonotone, the inequality $\sigma_B^{\otimes n} \leq g_{n,d}\, \omega_{B^n}$ holds, and the trace functional $M \mapsto \Tr(f(M))$ inherits monotonicity from the function $f$ (see, e.g.,~\cite{carlen2010trace}). Here, $d=d_B$. Furthermore, in~\eqref{additivity conditional}, we applied the additivity of the conditional entropy. We then multiply by $1/(1-\alpha)$ and take the limit $n \rightarrow \infty$ on both sides. Note that they both converge to the conditional entropy. Indeed, the term $\frac{1}{n} \log{g_{n,d}}$ vanishes. Therefore, the quantity involving the universal state in~\eqref{symmetric inequality} must equal the conditional entropy. The case  $\lambda(1-\alpha)/z \in (0,1]$  is similar, with all the inequalities above reversed.
 
	(ii) The proof follows a similar argument to (i), but instead of employing $\sigma_B^{\otimes n} \leq g_{n,d} \omega_{B^n}$, it relies on the inequality
	\(\tau_{AB}^{\otimes n} \leq g_{n,d^2}\omega_{A^n B^n}\) and the variational form of the norm. Here, $d=\max\{d_A,d_B\}$. Let us first consider the case $z \leq 1$.  The following chain of inequalities holds:
\begin{align}
&\log{\opt_{\sigma_B}\Trm\left(\rho_{AB}^\frac{\alpha}{2z} \rho_B^{\frac{(1-\lambda)(1-\alpha)}{2z}}\sigma_B^{\frac{\lambda(1-\alpha)}{z}} \rho_B^{\frac{(1-\lambda)(1-\alpha)}{2z}}\rho_{AB}^\frac{\alpha}{2z}\right)^z} \\
	& \qquad= \frac{1}{n} \log{\opt_{\sigma_B} \textup{Tr}\left(\rho_{A^n B^n}^{\frac{\alpha}{2z}} \rho_{B^n}^{\frac{(1-\lambda)(1-\alpha)}{2z}}\big(\sigma_{B}^{\otimes n}\big)^{\frac{\lambda(1-\alpha)}{z}} \rho_{B^n}^{\frac{(1-\lambda)(1-\alpha)}{2z}}\rho_{A^n B^n}^{\frac{\alpha}{2z}}\right)^z} \\
  & \qquad= \frac{z}{n} \log{\opt_{\sigma_B} \inf_{\tau_{AB}} \textup{Tr}\left(\rho_{A^n B^n}^{\frac{\alpha}{2z}} \rho_{B^n}^{\frac{(1-\lambda)(1-\alpha)}{2z}}\big(\sigma_{B}^{\otimes n}\big)^{\frac{\lambda(1-\alpha)}{z}} \rho_{B^n}^{\frac{(1-\lambda)(1-\alpha)}{2z}}\rho_{A^n B^n}^{\frac{\alpha}{2z}}\big(\tau_{AB}^{\otimes n}\big)^{1-\frac{1}{z}}\right)} \\
  \label{symmetric inequality (ii)}
  & \qquad \geq \frac{z}{n}  \log{\opt_{\sigma_B} \textup{Tr}\left(\rho_{A^n B^n}^{\frac{\alpha}{2z}} \rho_{B^n}^{\frac{(1-\lambda)(1-\alpha)}{2z}}\big(\sigma_{B}^{\otimes n}\big)^{\frac{\lambda(1-\alpha)}{z}} \rho_{B^n}^{\frac{(1-\lambda)(1-\alpha)}{2z}}\rho_{A^n B^n}^{\frac{\alpha}{2z}}\omega_{A^n B^n}^{1-\frac{1}{z}}\right)} + \frac{z-1}{n}  \log{g_{n,d^2}}\\
  & \qquad \geq \frac{z}{n}  \log{\opt_{\sigma_B} \inf_{\tau_{A^n B^n}}\textup{Tr}\left(\rho_{A^n B^n}^{\frac{\alpha}{2z}} \rho_{B^n}^{\frac{(1-\lambda)(1-\alpha)}{2z}}\big(\sigma_{B}^{\otimes n}\big)^{\frac{\lambda(1-\alpha)}{z}} \rho_{B^n}^{\frac{(1-\lambda)(1-\alpha)}{2z}}\rho_{A^n B^n}^{\frac{\alpha}{2z}}\tau_{A^n B^n}^{1-\frac{1}{z}}\right)} + \frac{z-1}{n}  \log{g_{n,d^2}}\\ 
  \label{norm lower bound}
  & \qquad = \frac{1}{n}  \log{\opt_{\sigma_B} \textup{Tr}\left(\rho_{A^n B^n}^{\frac{\alpha}{2z}} \rho_{B^n}^{\frac{(1-\lambda)(1-\alpha)}{2z}}\big(\sigma_{B}^{\otimes n}\big)^{\frac{\lambda(1-\alpha)}{z}} \rho_{B^n}^{\frac{(1-\lambda)(1-\alpha)}{2z}}\rho_{A^n B^n}^{\frac{\alpha}{2z}}\right)^z} + \frac{z-1}{n}  \log{g_{n,d^2}}\\
  \label{additivity conditional (ii)}
  & \qquad = \log{\opt_{\sigma_B}\Trm\left(\rho_{AB}^\frac{\alpha}{2z}  \rho_B^{\frac{(1-\lambda)(1-\alpha)}{2z}}\sigma_B^{\frac{\lambda(1-\alpha)}{z}} \rho_B^{\frac{(1-\lambda)(1-\alpha)}{2z}}\rho_{AB}^\frac{\alpha}{2z}\right)^z} + \frac{z-1}{n}  \log{g_{n,d^2}}.
\end{align}
	In~\eqref{symmetric inequality (ii)}, we used that the exponent of $\tau_{AB}^{\otimes n}$ is operator antimonotone and that the inequality $\tau_{AB}^{\otimes n} \leq g_{n,d^2} \omega_{A^n B^n}$ holds. In~\eqref{norm lower bound} we lower bounded the expression with the infimum over all states $\tau_{A^n B^n}$. Furthermore, in~\eqref{additivity conditional (ii)}, we applied the variational form of the norm and its additivity. We then multiply by $1/(1-\alpha)$ and take the limit $n \rightarrow \infty$ on both sides. Both expressions converge to the conditional entropy. Indeed, the term $\frac{1}{n} \log{g_{n,d^2}}$ vanishes. Therefore, the quantity involving the universal state in~\eqref{symmetric inequality (ii)} must equal the conditional entropy. The case $z \geq 1$ follows similarly, but with the inequalities reversed.
 
	(iii) In this case, we utilize both the inequality \(\sigma_B^{\otimes n} \leq g_{n,d}\omega_{B^n}\) and \(\tau_{AB}^{\otimes n} \leq g_{n,d^2}\omega_{A^n B^n}\). Note that this form does not apply when both a supremum and an infimum are involved in the optimizations, as the latter inequalities go in the same direction. Therefore, we have limited our statement to a restricted range of parameters. The following chain of inequalities holds:
	\begin{align}	&\log{\inf_{\sigma_B}\Trm\left(\rho_{AB}^\frac{\alpha}{2z} \rho_B^{\frac{(1-\lambda)(1-\alpha)}{2z}}\sigma_B^{\frac{\lambda(1-\alpha)}{z}} \rho_B^{\frac{(1-\lambda)(1-\alpha)}{2z}}\rho_{AB}^\frac{\alpha}{2z}\right)^z} \\
		& \; = \frac{1}{n} \log{\inf_{\sigma_B} \textup{Tr}\left(\rho_{A^n B^n}^{\frac{\alpha}{2z}} \rho_{B^n}^{\frac{(1-\lambda)(1-\alpha)}{2z}}\big(\sigma_{B}^{\otimes n}\big)^{\frac{\lambda(1-\alpha)}{z}} \rho_{B^n}^{\frac{(1-\lambda)(1-\alpha)}{2z}}\rho_{A^n B^n}^{\frac{\alpha}{2z}}\right)^z} \\
& \; = \frac{z}{n} \log{\inf_{\sigma_B} \inf_{\tau_{AB}}\textup{Tr}\left(\rho_{A^n B^n}^{\frac{\alpha}{2z}} \rho_{B^n}^{\frac{(1-\lambda)(1-\alpha)}{2z}}\big(\sigma_{B}^{\otimes n}\big)^{\frac{\lambda(1-\alpha)}{z}} \rho_{B^n}^{\frac{(1-\lambda)(1-\alpha)}{2z}}\rho_{A^n B^n}^{\frac{\alpha}{2z}}\big(\tau_{AB}^{\otimes n}\big)^{1-\frac{1}{z}}\right)} \\
\label{antimonotone 1 (iii)}
  & \; \geq  \frac{z}{n} \log{ \inf_{\tau_{AB}}\textup{Tr}\left(\rho_{A^n B^n}^{\frac{\alpha}{2z}} \rho_{B^n}^{\frac{(1-\lambda)(1-\alpha)}{2z}}\omega_{B^n}^{\frac{\lambda(1-\alpha)}{z}} \rho_{B^n}^{\frac{(1-\lambda)(1-\alpha)}{2z}}\rho_{A^n B^n}^{\frac{\alpha}{2z}}\big(\tau_{AB}^{\otimes n}\big)^{1-\frac{1}{z}}\right)} + \lambda(1-\alpha) \frac{1}{n} \log{g_{n,d}} \\
  \label{antimonotone  2 (iii)}
  & \; \geq \frac{z}{n} \log{\textup{Tr}\left(\rho_{A^n B^n}^{\frac{\alpha}{2z}} \rho_{B^n}^{\frac{(1-\lambda)(1-\alpha)}{2z}}\omega_{B^n}^{\frac{\lambda(1-\alpha)}{z}} \rho_{B^n}^{\frac{(1-\lambda)(1-\alpha)}{2z}}\rho_{A^n B^n}^{\frac{\alpha}{2z}}\omega_{A^n B^n}^{1-\frac{1}{z}}\right)} \\
  &\qquad \qquad + \lambda(1-\alpha) \frac{1}{n} \log{g_{n,d}} +  \frac{z-1}{n} \log{g_{n,d^2}} \\
		\label{norm lower bound 1 (iii)}
		& \; \geq \frac{z}{n} \log{\inf_{\tau_{A^n B^n}} \textup{Tr}\left(\rho_{A^n B^n}^{\frac{\alpha}{2z}} \rho_{B^n}^{\frac{(1-\lambda)(1-\alpha)}{2z}}\omega_{B^n}^{\frac{\lambda(1-\alpha)}{z}} \rho_{B^n}^{\frac{(1-\lambda)(1-\alpha)}{2z}}\rho_{A^n B^n}^{\frac{\alpha}{2z}}\tau_{A^n B^n}^{1-\frac{1}{z}}\right)} + O\left(\frac{\log{n}}{n}\right)\\
  \label{variational norm (iii)}
  & \; = \frac{1}{n} \log{\textup{Tr}\left(\rho_{A^n B^n}^{\frac{\alpha}{2z}} \rho_{B^n}^{\frac{(1-\lambda)(1-\alpha)}{2z}}\omega_{B^n}^{\frac{\lambda(1-\alpha)}{z}} \rho_{B^n}^{\frac{(1-\lambda)(1-\alpha)}{2z}}\rho_{A^n B^n}^{\frac{\alpha}{2z}}\right)^z} + O\left(\frac{\log{n}}{n}\right)\\
  \label{norm lower bound 2 (iii)}
  &\; \geq \frac{1}{n} \log{\inf_{\sigma_{B^n}}\textup{Tr}\left(\rho_{A^n B^n}^{\frac{\alpha}{2z}} \rho_{B^n}^{\frac{(1-\lambda)(1-\alpha)}{2z}}\sigma_{B^n}^{\frac{\lambda(1-\alpha)}{z}} \rho_{B^n}^{\frac{(1-\lambda)(1-\alpha)}{2z}}\rho_{A^n B^n}^{\frac{\alpha}{2z}}\right)^z} + O\left(\frac{\log{n}}{n}\right)\\
		\label{additivity conditional (iii)}
  & \; = \log{\opt_{\sigma_B}\Trm\left(\rho_{AB}^\frac{\alpha}{2z}  \rho_B^{\frac{(1-\lambda)(1-\alpha)}{2z}}\sigma_B^{\frac{\lambda(1-\alpha)}{z}} \rho_B^{\frac{(1-\lambda)(1-\alpha)}{2z}}\rho_{AB}^\frac{\alpha}{2z}\right)^z} + O\left(\frac{\log{n}}{n}\right).
	\end{align}
	In~\eqref{antimonotone  1 (iii)} and~\eqref{antimonotone  2 (iii)}, we used that the exponents of $\sigma_B^{\otimes n}$ and $\tau_{AB}^{\otimes n}$ are operator antimonotone and that the inequalities $\sigma_{B}^{\otimes n} \leq g_{n,d} \omega_{B^n}$ and $\tau_{AB}^{\otimes n} \leq g_{n,d^2} \omega_{A^n B^n}$ hold. In~\eqref{norm lower bound 1 (iii)} and~\eqref{norm lower bound 2 (iii)} we simply lower bounded the expressions with the infimum over all states. Furthermore, in~\eqref{variational norm (iii)}, we applied the variational form of the norm, and in the last step~\eqref{additivity conditional (iii)} we used the additivity of the conditional entropy. We then multiply by $1/(1-\alpha)$ and take the limit $n \rightarrow \infty$ on both sides. Both expressions converge to the conditional entropy as the term $O(\log{n}/n)$ vanishes. Therefore, the quantity involving the universal states in~\eqref{norm lower bound 1 (iii)} must equal the conditional entropy. 
\end{proof}

\noindent{\bf Notation.} We employ the operator-vector identification, with each Hilbert space assigned its computational basis, denoted by $|i\rangle_A$ for space $A$, and similarly for others. The linear map $\Op_{A \rightarrow B}$ is defined through its action on the computational basis as
\begin{align}\label{operator-vector}
    \Op \!\!\,_{A \rightarrow B} (|i\rangle_A \otimes |j\rangle_B) = |j\rangle_B \langle i|_A.
\end{align}

Next, we derive the alternative variational forms that involve Schatten norms.
\begin{lemma}
	\label{Dupuis variational}
	Let $\ket{\rho}_{ABCD}$ be a normalized pure state, $X_{AD \rightarrow BC}=\Op_{AD \rightarrow BC} (\ket{\rho}_{ABCD})$, and $(\alpha,z,\lambda)\in \D$. Then, 
	\begin{align}
		 (i) \qquad &H^\lambda_{\alpha,z}(AB|C)_\rho = -\log{\opt_{\sigma_C}\opt_{\tau_D} \left\|  \sigma_C^{\lambda\frac{1-\alpha}{2z}}  \rho_C^{\frac{1-\lambda}{2}\frac{1-\alpha}{z}} X_{AD \rightarrow BC} \rho_D^{\top \frac{1-\hat{\lambda}}{2}\frac{1-\hat{\alpha}}{\hat{z}}} \tau_D^{\top \frac{\hat{\lambda}}{2}\frac{1-\hat{\alpha}}{\hat{z}}} \right\|_2^{\frac{2z}{\alpha-1}}}; \\
		(ii) \qquad &H^\lambda_{\alpha,z}(B|C)_\rho = -\log{\opt_{\sigma_C} \left\| \sigma_C^{\lambda \frac{1-\alpha}{2z}} \rho_C^{\frac{1-\lambda}{2}\frac{1-\alpha}{z}}  X_{AD \rightarrow BC} \rho_{AD}^{\frac{1-\hat{\lambda}}{2}\frac{1-\hat{\alpha}}{\hat{z}}} \right\|_{2z}^{\frac{2z}{\alpha-1}}}; \\
		(iii) \qquad &H^\lambda_{\alpha,z}(A|BC)_\rho 
		=- \log{\opt_{\tau_D} \left\| \rho_{BC}^{\frac{1-\lambda}{2}\frac{1-\alpha}{z}} X_{AD \rightarrow BC} \rho_D^{\top\frac{1-\hat{\lambda}}{2}\frac{1-\hat{\alpha}}{\hat{z}}} \tau_D^{\top\frac{\hat{\lambda}}{2}\frac{1-\hat{\alpha}}{\hat{z}}}  \right\|_{2\hat{z}}^{{\frac{2z}{\alpha-1}}}}.
	\end{align}
 Here, we use the hats to denote the dual parameters defined in Theorem~\ref{Duality} (see also Remark~\ref{explicit computation} for an explicit computation).
\end{lemma}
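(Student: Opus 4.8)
The plan is to derive all three identities from the purification-based variational formula of Remark~\ref{variational purification}, translating the quadratic forms in $\ket{\rho}$ into Schatten norms of $X=X_{AD\to BC}$ via the operator–vector dictionary \eqref{operator-vector}. Two elementary consequences of that dictionary will be used repeatedly: for a normalized $\ket{\rho}$ one has $XX^\dagger=\rho_{BC}$ and $X^\dagger X=\rho_{AD}^\top$, and for any operators $P$ on $BC$ and $Q$ on $AD$,
\[
\bra{\rho}\, P_{BC}\otimes Q_{AD}\,\ket{\rho} = \Tr\!\big(X^\dagger P\,X\,Q^\top\big).
\]
Applying Remark~\ref{variational purification} to the partition $(AB\,|\,C)$ with purifying system $D$, the conditioning factor $\Xi_{\lambda,\frac{1-\alpha}{z}}(\rho_C,\sigma_C)$ acts on the output system $C$ and the dual factor $\Xi_{\hat\lambda,\frac{1-\hat\alpha}{\hat z}}(\rho_D,\tau_D)$ on the input system $D$. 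Writing each $\Xi=LL^\dagger$, so that $L_C^\dagger=\sigma_C^{\lambda\frac{1-\alpha}{2z}}\rho_C^{\frac{1-\lambda}{2}\frac{1-\alpha}{z}}$ and the transposed $D$-factor becomes $\rho_D^{\top\frac{1-\hat\lambda}{2}\frac{1-\hat\alpha}{\hat z}}\tau_D^{\top\frac{\hat\lambda}{2}\frac{1-\hat\alpha}{\hat z}}$, and using $\Tr(X^\dagger P X Q^\top)=\|P^{1/2}XQ^{\top 1/2}\|_2^2$, the quadratic form collapses to the square of a Hilbert–Schmidt norm. This yields (i) directly once the exponent $\tfrac{2z}{\alpha-1}$ is moved through the logarithm and the optimizations.

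For (ii) and (iii) the idea is the same, except that one of the two optimizations is now carried out in closed form. Consider (ii), i.e.\ the partition $(B\,|\,C)$ with purifying system $AD$. The optimization over $\tau_{AD}$ enters the quadratic form only through the single factor $\tau_{AD}^{\,\hat\lambda\frac{1-\hat\alpha}{\hat z}}$, so after the operator–vector reduction the form is a \emph{linear} functional $\tau_{AD}\mapsto \Tr\!\big(N\,\tau_{AD}^{\top v}\big)$ with $N\ge 0$ and $v=\frac{\hat\lambda(1-\hat\alpha)}{\hat z}$. The crucial computation, using the duality relations \eqref{duality2} (equivalently the explicit formulas in Remark~\ref{explicit computation}), is that $v=1-\tfrac1z$; hence Lemma~\ref{optimization norm} evaluates $\opt_{\tau_{AD}}\Tr(N\,\tau_{AD}^{\top v})=\|N\|_{z}$. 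Writing $N=W^\dagger W$ with $W=L_C^\dagger X\,\rho_{AD}^{\top u}$ and $u=\frac{1-\hat\lambda}{2}\frac{1-\hat\alpha}{\hat z}$, and using $\|W^\dagger W\|_z=\|W\|_{2z}^2$, produces the Schatten-$2z$ norm appearing in (ii), leaving only the $\sigma_C$-optimization. Statement (iii), for the partition $(A\,|\,BC)$ with purifying system $D$, is handled symmetrically by instead solving the $\sigma_{BC}$-optimization: there the relevant exponent is $\frac{\lambda(1-\alpha)}{z}=1-\tfrac1{\hat z}$, so Lemma~\ref{optimization norm} returns a Schatten-$\hat z$ norm, i.e.\ a $\|\cdot\|_{2\hat z}$ after taking the square root, while the $\tau_D$-optimization and the transposed $D$-factors survive.

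The main technical obstacle is the careful bookkeeping of transposes and of the optimization directions. The operator–vector identity naturally produces the \emph{transposed} input-side factors ($\rho_D^\top,\tau_D^\top$ in (i) and (iii), and $\rho_{AD}^\top$ in (ii)), and one must verify, using $X^\dagger X=\rho_{AD}^\top$ and $XX^\dagger=\rho_{BC}$ together with the freedom in the choice of computational basis on the purifying system, that these agree with the factors written in the statement. One must also track whether each $\opt$ is a supremum or an infimum: pulling the exponent $\tfrac{2z}{\alpha-1}$ out of the logarithm reverses the direction when $\alpha<1$, and the application of Lemma~\ref{optimization norm} requires $v$ (respectively the exponent for (iii)) to lie in the range where the stated variational formula for the Schatten norm is valid. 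Checking that these sign conditions are compatible across the two sub-cases $\alpha<1$ and $\alpha>1$ of $\D$, so that the single symbol $\opt$ is self-consistent in every line, is where most of the care is needed; the algebraic identities $v=1-\tfrac1z$ and its analogue $1-\tfrac1{\hat z}$ themselves follow routinely from \eqref{duality2}.
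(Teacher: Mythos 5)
Your proposal is correct, and for parts (i) and (ii) it is essentially the proof in the paper: the paper likewise starts from the purification formula of Remark~\ref{variational purification}, converts the quadratic form into a Hilbert--Schmidt norm via Lemmas~\ref{Dupuis Lemma 12} and~\ref{Dupuis Lemma 11} (your operator--vector dictionary, including the identities $XX^\dagger=\rho_{BC}$, $X^\dagger X=\rho_{AD}^\top$), and then, for (ii), solves the optimization over $\tau_{AD}$ in closed form using exactly the identity $\hat{\lambda}\tfrac{1-\hat{\alpha}}{\hat{z}}=1-\tfrac{1}{z}$ together with Lemma~\ref{optimization norm}, followed by $\|W^\dagger W\|_z=\|W\|_{2z}^2$. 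The transpose bookkeeping you flag is also handled in the paper in the way you suggest: transposes on the optimized states are absorbed because the optimization runs over all states, while the $\rho$-factors are dealt with via the Schmidt-basis diagonality underlying Remark~\ref{variational purification}.

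The only genuine divergence is part (iii). The paper does not redo the computation; it invokes the duality theorem once more, writing $H^\lambda_{\alpha,z}(A|BC)=-H^{\hat{\lambda}}_{\hat{\alpha},\hat{z}}(A|D)$ and then applying the already-established formula (ii) to the dual entropy (with norm invariance under transpose and the sign flip $\tfrac{\hat{z}}{\hat{\alpha}-1}=-\tfrac{z}{\alpha-1}$ accounting for the inversion of the optimization direction). You instead solve the $\sigma_{BC}$-optimization directly, using the symmetric identity $\tfrac{\lambda(1-\alpha)}{z}=1-\tfrac{1}{\hat{z}}$ to invoke Lemma~\ref{optimization norm} and land on the Schatten-$2\hat{z}$ norm, leaving the $\tau_D$-optimization intact. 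Both routes are valid and of comparable length; the paper's is shorter given that Theorem~\ref{Duality} is already available and it avoids a second application of Lemma~\ref{optimization norm} (hence a second range check on the exponent), while yours is more self-contained at the level of this lemma, makes the symmetric roles of $z$ and $\hat{z}$ explicit, and requires verifying $\tfrac{\lambda(1-\alpha)}{z}\in[-1,1]$ on $\D$ plus the compatibility of the inf/sup directions when the power $\tfrac{z}{\alpha-1}$ is pulled out — precisely the care points you correctly identify.
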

\begin{proof}
	Let us consider the variational form involving the purification $\ket{\rho}_{ABC}$ of $\rho_{AB}$ stated in Remark~\ref{variational purification}
 \begin{equation}
		H^\lambda_{\alpha,z}(A|B)_{\rho}= \frac{z}{1-\alpha}\log{\opt_{\sigma_B}\opt_{\tau_C} \bigbra{\rho} \mathcal{G}_{\lambda,\frac{1-\alpha}{z}}(\rho_B,\sigma_B)  \otimes \mathcal{G}_{\hat{\lambda},\frac{1-\hat{\alpha}}{\hat{z}}}(\rho_{C},\tau_{C})\bigket{\rho}_{ABC}}.
	\end{equation}
Using a similar approach to that in~\cite[Lemma 6]{dupuis2015chain}, we can express the formula in terms of a Schatten norm. Explicitly, 
	\begin{align}
		&2^{-H^\lambda_{\alpha,z}(AB|C)_{\rho}} \\ 
		&\quad =\opt_{\sigma_C}\opt_{\tau_D}\bigbra{\rho}  \mathcal{G}_{\lambda,\frac{1-\alpha}{z}}(\rho_C,\sigma_C)  \otimes \mathcal{G}_{\hat{\lambda},\frac{1-\hat{\alpha}}{\hat{z}}}(\rho_{D},\tau_{D}) \bigket{\rho}_{ABCD}^\frac{z}{\alpha-1}\\
		&\quad =\opt_{\sigma_C} \opt_{\tau_D}\left\| \mathcal{G}_{\lambda,\frac{1-\alpha}{z}}(\rho_C,\sigma_C)^\frac{1}{2}  \otimes \mathcal{G}_{\hat{\lambda},\frac{1-\hat{\alpha}}{\hat{z}}}(\rho_{D},\tau_{D})^\frac{1}{2} \bigket{\rho}_{ABCD} \right\|^\frac{2z}{\alpha-1} \\
  \label{First Lemma 11}
		&\quad =\opt_{\sigma_C} \opt_{\tau_D} \left\|\Op \!\,_{AD \rightarrow BC}\Big(\mathcal{G}_{\lambda,\frac{1-\alpha}{z}}(\rho_C,\sigma_C)^\frac{1}{2}  \otimes \mathcal{G}_{\hat{\lambda},\frac{1-\hat{\alpha}}{\hat{z}}}(\rho_{D},\tau_{D})^\frac{1}{2}\bigket{\rho}_{ABCD}\Big)\right\|_2^{\frac{2z}{\alpha-1}} \\
    \label{Then Lemma 12}
		&\quad = \opt_{\sigma_C} \opt_{\tau_D} \left\| \mathcal{G}_{\lambda,\frac{1-\alpha}{z}}(\rho_C,\sigma_C)^\frac{1}{2} \Op \!\,_{AD \rightarrow BC} \big(\ket{\rho}_{ABCD}\big)  \mathcal{G}_{\hat{\lambda},\frac{1-\hat{\alpha}}{\hat{z}}}(\rho_D,\tau_D)^{\top\frac{1}{2}} \right\|_2^{\frac{2z}{\alpha-1}}\\
		&\quad = \opt_{\sigma_C}\opt_{\tau_D} \Big(\Trm\Big[\rho_C^{\frac{1-\lambda}{2}\frac{1-\alpha}{z}} \sigma_C^{\lambda\frac{1-\alpha}{z}}  \rho_C^{\frac{1-\lambda}{2}\frac{1-\alpha}{z}} X_{AD \rightarrow BC} \rho_D^{\top\frac{1-\hat{\lambda}}{2}\frac{1-\hat{\alpha}}{\hat{z}}} \tau_D^{\top\hat{\lambda}\frac{1-\hat{\alpha}}{\hat{z}}}  \rho_D^{\top\frac{1-\hat{\lambda}}{2}\frac{1-\hat{\alpha}}{\hat{z}}} X^\dagger_{AD \rightarrow BC} \Big]\Big)^\frac{z}{\alpha-1}
	\end{align}
	where $\Op_{AD \rightarrow BC} (\ket{\rho}_{ABCD}) = X_{AD \rightarrow BC}$. In \eqref{First Lemma 11}, we applied Lemma~\ref{Dupuis Lemma 12}, and in \eqref{Then Lemma 12}, we used Lemma~\ref{Dupuis Lemma 11}. Hence, we have
	\begin{equation}
		H^\lambda_{\alpha,z}(AB|C)_{\rho} = -\log{\opt_{\sigma_C}\opt_{\tau_D} \left\|  \sigma_C^{\lambda\frac{1-\alpha}{2z}}  \rho_C^{\frac{1-\lambda}{2}\frac{1-\alpha}{z}} X_{AD \rightarrow BC} \rho_D^{\top \frac{1-\hat{\lambda}}{2}\frac{1-\hat{\alpha}}{\hat{z}}} \tau_D^{\top \frac{\hat{\lambda}}{2}\frac{1-\hat{\alpha}}{\hat{z}}} \right\|_2^{2\frac{z}{\alpha-1}}}.
	\end{equation}
	Next, we derive a variational expression for $H^\lambda_{\alpha,z}(B|C)$. Similarly, we obtain 
	\begin{align}
		2^{-H^\lambda_{\alpha,z}(B|C)_{\rho}} &= \opt_{\sigma_C}\opt_{\tau_{AD}} \left\| \sigma_C^{\lambda\frac{1-\alpha}{2z}}  \rho_C^{\frac{1-\lambda}{2}\frac{1-\alpha}{z}} X_{AD \rightarrow BC} \rho_{AD}^{\frac{1-\hat{\lambda}}{2}\frac{1-\hat{\alpha}}{\hat{z}}} \tau_{AD}^{\frac{\hat{\lambda}}{2}\frac{1-\hat{\alpha}}{\hat{z}}} \right\|_2^{2\frac{z}{\alpha-1}} \\
		&= \opt_{\sigma_C}\opt_{\tau_{AD}} \Tr(X^\dagger_{AD \rightarrow BC} \rho_C^{\frac{1-\lambda}{2}\frac{1-\alpha}{z}}  \sigma_C^{\lambda\frac{1-\alpha}{z}}  \rho_C^{\frac{1-\lambda}{2}\frac{1-\alpha}{z}}  X_{AD \rightarrow BC} \rho_{AD}^{\frac{1-\hat{\lambda}}{2}\frac{1-\hat{\alpha}}{\hat{z}}} \tau_{AD}^{\hat{\lambda}\frac{1-\hat{\alpha}}{\hat{z}}}  \rho_{AD}^{\frac{1-\hat{\lambda}}{2}\frac{1-\hat{\alpha}}{\hat{z}}})^{\frac{z}{\alpha-1}} \\
		&= \opt_{\sigma_C} \left\| \rho_{AD}^{\frac{1-\hat{\lambda}}{2}\frac{1-\hat{\alpha}}{\hat{z}}} X^\dagger_{AD \rightarrow BC} \rho_C^{\frac{1-\lambda}{2}\frac{1-\alpha}{z}}  \sigma_C^{\lambda\frac{1-\alpha}{z}}  \rho_C^{\frac{1-\lambda}{2}\frac{1-\alpha}{z}}  X_{AD \rightarrow BC} \rho_{AD}^{\frac{1-\hat{\lambda}}{2}\frac{1-\hat{\alpha}}{\hat{z}}}\right\|_z^{\frac{z}{\alpha-1}}\\
		&= \opt_{\sigma_C} \left\| \sigma_C^{\lambda \frac{1-\alpha}{2z}} \rho_C^{\frac{1-\lambda}{2}\frac{1-\alpha}{z}}  X_{AD \rightarrow BC} \rho_{AD}^{\frac{1-\hat{\lambda}}{2}\frac{1-\hat{\alpha}}{\hat{z}}} \right\|_{2z}^{2\frac{z}{\alpha-1}},
	\end{align}
	where we used that $\hat{\lambda}\frac{1-\hat{\alpha}}{\hat{z}} = 1-\frac{1}{z}$. Finally, using the latter result, and using the norm invariance under transpose, we have 
	\begin{align}
		H^\lambda_{\alpha,z}(A|BC)_{\rho} 
		&=-H^{\hat{\lambda}}_{\hat{\alpha},\hat{z}}(A|D)_{\rho} \\
		&=- \log{\opt_{\tau_D} \left\| \rho_{BC}^{\frac{1-\lambda}{2}\frac{1-\alpha}{z}} X_{AD \rightarrow BC} \rho_D^{\top\frac{1-\hat{\lambda}}{2}\frac{1-\hat{\alpha}}{\hat{z}}} \tau_D^{\top\frac{\hat{\lambda}}{2}\frac{1-\hat{\alpha}}{\hat{z}}}  \right\|_{2\hat{z}}^{{2\frac{z}{\alpha-1}}}}.
	\end{align}
The proof is complete.
\end{proof}

The combination of Proposition \ref{H with universal any z} and Lemma~\ref{Dupuis variational} leads to the following two lemmas. 
\begin{lemma}
	\label{regularized z geq 1}
	Let $(\alpha,z,\lambda)\in \D$, $\ket{\rho}_{ABCD}$ be a pure quantum state and $\Op_{AD \rightarrow BC} (\ket{\rho}^{\otimes n}_{ABCD}) = X_{A^n D^n \rightarrow B^n C^n}$. Then, 
	\begin{equation}
 \label{H(B|C) (i)}
		H^\lambda_{\alpha,z}(B|C)_{\rho} = \lim_{n \rightarrow \infty}- \frac{1}{n}\log{ \left\| \omega_{C^n}^{\frac{\lambda}{2} \frac{1-\alpha}{z}} \rho_{C^n}^{ \; \frac{1-\lambda}{2}\frac{1-\alpha}{z}}  X_{A^n D^n \rightarrow B^n C^n} \rho_{A^n D^n}^{ \;\frac{1-\hat{\lambda}}{2}\frac{1-\hat{\alpha}}{\hat{z}}} \right\|_{2z}^{2\frac{z}{\alpha-1}}}.
	\end{equation}
	Moreover, for $z \geq 1$,
	\begin{align}
 \label{H(AB|C) (i)}
		&H^\lambda_{\alpha,z}(AB|C)_{\rho} = \lim_{n \rightarrow \infty} - \frac{1}{n} {\frac{2z}{\alpha-1}} \log{\sup_{\tau_{D^n}} \left\|  \omega_{C^n}^{\frac{\lambda}{2}\frac{1-\alpha}{z}}  \rho_{C^n}^{ \;\frac{1-\lambda}{2}\frac{1-\alpha}{z}} X_{A^n D^n \rightarrow B^n C^n} \rho_{D^n}^{ \;\top \frac{1-\hat{\lambda}}{2}\frac{1-\hat{\alpha}}{\hat{z}}} \tau_{D^n}^{\top \frac{\hat{\lambda}}{2}\frac{1-\hat{\alpha}}{\hat{z}}} \right\|_2},\\
  \label{H(A|BC) (i)}
		&H^\lambda_{\alpha,z}(A|BC)_{\rho} 
		=\lim_{n \rightarrow \infty} - \frac{1}{n}\frac{2z}{\alpha-1}\log{\sup_{\tau_{D^n}} \left\| \rho_{B^n C^n}^{ \;\frac{1-\lambda}{2}\frac{1-\alpha}{z}} X_{A^n D^n \rightarrow B^n C^n} \rho_{D^n}^{ \;\top\frac{1-\hat{\lambda}}{2}\frac{1-\hat{\alpha}}{\hat{z}}} \tau_{D^n }^{\top\frac{\hat{\lambda}}{2}\frac{1-\hat{\alpha}}{\hat{z}}}  \right\|_{2\hat{z}}}.
	\end{align}
 Here, we denoted $\rho_{C^n} = \rho_C^{\otimes n}$, $\rho_{D^n} = \rho_D^{\otimes n}$, $\rho_{B^n C^n } = \rho_{BC}^{\otimes n}$ and $\rho_{A^n D^n } = \rho_{AD}^{\otimes n}$.
  Moreover, we use the hats to denote the dual parameters defined in Theorem~\ref{Duality} (see also Remark~\ref{explicit computation} for an explicit computation).
\end{lemma}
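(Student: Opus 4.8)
The plan is to feed the $n$-fold tensor power $\ket{\rho}^{\otimes n}_{ABCD}$ into the single-shot Schatten-norm identities of Lemma~\ref{Dupuis variational}, promote these to regularized expressions by additivity, and finally replace the optimization over the reference state on the conditioning system by the fixed universal state, exactly the substitution justified in Proposition~\ref{H with universal any z}. First I would apply Lemma~\ref{Dupuis variational} to $\ket{\rho}^{\otimes n}_{ABCD}$, whose marginals are $\rho_{BC}^{\otimes n}$, $\rho_{AD}^{\otimes n}$, $\rho_C^{\otimes n}$ and $\rho_D^{\otimes n}$ and whose associated operator is $X_{A^nD^n\to B^nC^n}=\Op_{A^nD^n\to B^nC^n}(\ket{\rho}^{\otimes n}_{ABCD})$. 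Since $(\alpha,z,\lambda)\in\D$, Theorem~\ref{Additivity} gives $H^\lambda_{\alpha,z}(\,\cdot\,)_{\rho^{\otimes n}}=n\,H^\lambda_{\alpha,z}(\,\cdot\,)_\rho$, so that part~(ii) of Lemma~\ref{Dupuis variational} becomes
\begin{equation}
H^\lambda_{\alpha,z}(B|C)_\rho=-\frac1n\log\opt_{\sigma_{C^n}}\left\|\sigma_{C^n}^{\lambda\frac{1-\alpha}{2z}}\rho_{C^n}^{\frac{1-\lambda}{2}\frac{1-\alpha}{z}}X_{A^nD^n\to B^nC^n}\rho_{A^nD^n}^{\frac{1-\hat\lambda}{2}\frac{1-\hat\alpha}{\hat z}}\right\|_{2z}^{\frac{2z}{\alpha-1}},
\end{equation}
with the analogous expressions coming from parts~(i) and~(iii) for $(AB|C)$ and $(A|BC)$, in which the dual optimization over $\tau_{D^n}$ additionally survives.

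Next I would replace $\opt_{\sigma_{C^n}}$ by $\omega_{C^n}$ through a squeeze identical in spirit to the proof of Proposition~\ref{H with universal any z}. Evaluating the optimized norm at a product state $\sigma_C^{\otimes n}$ and invoking $\sigma_C^{\otimes n}\le g_{n,d}\,\omega_{C^n}$ from Lemma~\ref{universal state}, together with the operator monotonicity or antimonotonicity of $t\mapsto t^{\lambda(1-\alpha)/2z}$ as dictated by the sign of the exponent, bounds the $\omega_{C^n}$-expression by the product-state value up to a factor $g_{n,d}^{\pm c}$; conversely, since $\omega_{C^n}$ is itself a state, the $\omega_{C^n}$-expression is controlled by the full optimization over $\sigma_{C^n}$. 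By Theorem~\ref{Additivity} and the tensorization of optimizers from Theorem~\ref{fixed-point}, both the product-state value and the fully optimized value equal $n\,H^\lambda_{\alpha,z}(B|C)_\rho$, so the $\omega_{C^n}$-expression is squeezed on both sides up to $\log g_{n,d}$. Since $\frac1n\log g_{n,d}\to0$, dividing by $n$ and letting $n\to\infty$ yields~\eqref{H(B|C) (i)}.

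The cases $(AB|C)$ and $(A|BC)$ follow the same template, with two provisos. For $(AB|C)$ the substitution is performed on the conditioning system $C$ only, uniformly in $\tau_{D^n}$, so that $\sigma_C$ is replaced while $\tau_{D^n}$ stays free; for $(A|BC)$ the single-shot form of Lemma~\ref{Dupuis variational}(iii) carries no reference-state optimization on $BC$ at all (it arises via the duality $H^\lambda_{\alpha,z}(A|BC)=-H^{\hat\lambda}_{\hat\alpha,\hat z}(A|D)$), so no universal-state replacement is needed and only $\tau_{D^n}$ is kept. In both cases the hypothesis $z\ge1$ is exactly what makes the surviving $\tau_{D^n}$-optimization a supremum: by Remark~\ref{explicit computation} one has $\hat\lambda=\frac{z-1}{\alpha-1}$, and for $z\ge1$ the sign of $\hat\lambda$ combines with that of the outer exponent $\frac{2z}{\alpha-1}$ so that $\opt_{\tau_{D^n}}\|\cdot\|^{2z/(\alpha-1)}=\big(\sup_{\tau_{D^n}}\|\cdot\|\big)^{2z/(\alpha-1)}$, reproducing~\eqref{H(AB|C) (i)} and~\eqref{H(A|BC) (i)}. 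I expect the replacement step, not the algebra, to be the main obstacle: one must match the sign of $\lambda(1-\alpha)/2z$ to the supremum/infimum nature of $\opt$ so that $g_{n,d}$ appears on the correct side of a genuinely two-sided estimate — a one-sided bound would deliver only an inequality — and one must confirm that fixing $\sigma_{C^n}=\omega_{C^n}$ is consistent with leaving $\tau_{D^n}$ unconstrained, both of which are governed by the same regime analysis inside $\D$ used for Proposition~\ref{H with universal any z}.
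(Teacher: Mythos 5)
Your proposal is correct and follows essentially the same route as the paper: apply Lemma~\ref{Dupuis variational} to $\ket{\rho}_{ABCD}^{\otimes n}$, use additivity (Theorem~\ref{Additivity}, resting on the tensorization of optimizers from Theorem~\ref{fixed-point}) to regularize, replace the optimization over the conditioning system by the universal state $\omega_{C^n}$ exactly as in Proposition~\ref{H with universal any z}(i), and obtain \eqref{H(A|BC) (i)} purely from additivity and the duality-based form Lemma~\ref{Dupuis variational}(iii) with no universal state needed. The only cosmetic difference is that you re-derive the two-sided $g_{n,d}$ squeeze inline rather than citing Proposition~\ref{H with universal any z} as a black box, but it is the same argument.
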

\begin{proof}
The result for the expressions~\eqref{H(B|C) (i)} and ~\eqref{H(AB|C) (i)} follows from a combination of Lemma~\ref{Dupuis variational} and Proposition~\ref{H with universal any z}. 
 Specifically, we repeat the same steps as in the proof of Lemma~\ref{Dupuis variational} for $\ket{\rho}_{ABCD}^{\otimes n}$, but instead of optimizing over the marginal $\sigma_C^{\otimes n}$, we replace it with the permutation-invariant state $\omega_{C^n}$ by virtue of the form (i) in Proposition~\ref{H with universal any z}.

For the last expression~\eqref{H(A|BC) (i)}, we apply the additivity of the conditional entropy with its form as stated in~\ref{Dupuis variational} (iii).
\end{proof}

\begin{lemma}
	\label{regularized z leq 1}
	Let $(\alpha,z,\lambda)\in \D$, $\ket{\rho}_{ABCD}$ be a normalized pure state and $\Op_{AD \rightarrow BC} (\ket{\rho}^{\otimes n}_{ABCD}) = X_{A^n D^n \rightarrow B^n C^n}$. Then, we have
	\begin{equation}
 \label{H(A|BC) (ii)}
		H^\lambda_{\alpha,z}(A|BC)_{\rho} 
		=\lim_{n \rightarrow \infty} - \frac{1}{n}\log{ \left\| \rho_{B^n C^n}^{ \;\frac{1-\lambda}{2}\frac{1-\alpha}{z}} X_{A^n D^n \rightarrow B^n C^n} \rho_{D^n}^{ \;\top\frac{1-\hat{\lambda}}{2}\frac{1-\hat{\alpha}}{\hat{z}}} \omega_{D^n }^{\top\frac{\hat{\lambda}}{2}\frac{1-\hat{\alpha}}{\hat{z}}}  \right\|_{2\hat{z}}^{{2\frac{z}{\alpha-1}}}} \,.
	\end{equation}
	For $z \leq 1$, $\lambda (1-\alpha) \geq 0$, we have
	\begin{align}
 \label{H(AB|C) (ii)}
		&H^\lambda_{\alpha,z}(AB|C)_{\rho} = \lim_{n \rightarrow \infty} - \frac{1}{n} \frac{2z}{\alpha-1} \log{\sup_{\sigma_{C}} \left\|  (\sigma_C^{\otimes n})^{\;\lambda\frac{1-\alpha}{2z}}  \rho_{C^n}^{ \;\frac{1-\lambda}{2}\frac{1-\alpha}{z}} X_{A^n D^n \rightarrow B^n C^n} \rho_{D^n}^{ \;\top \frac{1-\hat{\lambda}}{2}\frac{1-\hat{\alpha}}{\hat{z}}} \omega_{D^n}^{\top \frac{\hat{\lambda}}{2}\frac{1-\hat{\alpha}}{\hat{z}}} \right\|_2}, \\
  \label{H(B|C) (ii)}
		&H^\lambda_{\alpha,z}(B|C)_{\rho} = \lim_{n \rightarrow \infty}- \frac{1}{n}\frac{2z}{\alpha-1}\log{ \sup_{\sigma_C}\left\| (\sigma_C^{\otimes n})^{ \;\lambda \frac{1-\alpha}{2z}} \rho_{C^n}^{ \; \frac{1-\lambda}{2}\frac{1-\alpha}{z}}  X_{A^n D^n \rightarrow B^n C^n} \rho_{A^n D^n}^{ \;\frac{1-\hat{\lambda}}{2}\frac{1-\hat{\alpha}}{\hat{z}}} \right\|_{2z}}.
	\end{align}
	For $z \leq 1$, $\lambda (1-\alpha) \leq 0$, we have
	\begin{align}
 \label{H(AB|C) (iii)}
		&H^\lambda_{\alpha,z}(AB|C)_{\rho} = \lim_{n \rightarrow \infty} - \frac{1}{n} \log{ \left\|  \omega_{C^n}^{\lambda\frac{1-\alpha}{2z}}  \rho_{C^n}^{ \;\frac{1-\lambda}{2}\frac{1-\alpha}{z}} X_{A^n D^n \rightarrow B^n C^n} \rho_{D^n}^{ \;\top \frac{1-\hat{\lambda}}{2}\frac{1-\hat{\alpha}}{\hat{z}}} \omega_{D^n}^{\top \frac{\hat{\lambda}}{2}\frac{1-\hat{\alpha}}{\hat{z}}} \right\|_2^{2\frac{z}{\alpha-1}}}.
	\end{align}
 Here, we denoted $\rho_{C^n} = \rho_C^{\otimes n}$, $\rho_{D^n} = \rho_D^{\otimes n}$, $\rho_{B^n C^n } = \rho_{BC}^{\otimes n}$ and $\rho_{A^n D^n } = \rho_{AD}^{\otimes n}$.
  Moreover, we use the hats to denote the dual parameters defined in Theorem~\ref{Duality} (see also Remark~\ref{explicit computation} for an explicit computation).
\end{lemma}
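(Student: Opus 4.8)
The plan is to mirror the argument already used for Lemma~\ref{regularized z geq 1}, combining the single-copy Schatten-norm variational forms of Lemma~\ref{Dupuis variational} with the universal-state replacement supplied by Proposition~\ref{H with universal any z}, all applied to the $n$-fold tensor power $\ket{\rho}_{ABCD}^{\otimes n}$. Throughout, the additivity of the conditional entropy (Theorem~\ref{Additivity}) is the glue that lets us pass between the single-copy quantity and its regularization: it guarantees both that a single-copy optimizer tensorizes to an $n$-copy one and that the full $n$-copy optimization equals $n$ times the single-copy value. The recurring mechanism is the domination $\tau^{\otimes n}\leq g_{n,d}\,\omega_n$ by the permutation-invariant universal state, together with the fact that $\omega_n$ is itself (proportional to) a feasible state; since every exponent acting on an optimized state is a power in $[-1,0]$ or $[0,1]$, this sandwiches the universal-state expression between two quantities that both converge to the conditional entropy, the polynomial losses $\tfrac1n\log g_{n,d}$ and $\tfrac1n\log g_{n,d^2}$ vanishing as $n\to\infty$.

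I would dispatch \eqref{H(A|BC) (ii)} first, as it requires only a single replacement. By the duality of Theorem~\ref{Duality} one has $H^\lambda_{\alpha,z}(A|BC)=-H^{\hat\lambda}_{\hat\alpha,\hat z}(A|D)$, and Lemma~\ref{Dupuis variational}(iii) writes this through a lone optimization over $\tau_D$. Tensorizing and invoking form (i) of Proposition~\ref{H with universal any z} for the dual entropy replaces $\tau_D^{\otimes n}$ by $\omega_{D^n}$; because form (i) covers both the operator-monotone and the antimonotone regimes of the exponent $\hat\lambda\tfrac{1-\hat\alpha}{\hat z}=1-\tfrac1z$, this step is valid for all $(\alpha,z,\lambda)\in\D$. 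Expression \eqref{H(B|C) (ii)} is the easiest: Lemma~\ref{Dupuis variational}(ii) already carries only a $\sigma_C$-optimization, the Schatten norm is multiplicative under tensor products, and additivity shows that restricting the supremum to tensor-power states $\sigma_C^{\otimes n}$ (a supremum since $\lambda(1-\alpha)\geq 0$) recovers the single-copy value in the limit, so no universal state is needed there.

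For the two forms of $H^\lambda_{\alpha,z}(AB|C)$ I would begin from Lemma~\ref{Dupuis variational}(i), which carries the double optimization $\opt_{\sigma_C}\opt_{\tau_D}$. The inner $\tau_D$-optimization is always an infimum with antimonotone exponent $1-\tfrac1z\leq 0$ (as $z\leq 1$), so it may be replaced by $\omega_{D^n}$. When $\lambda(1-\alpha)\leq 0$ the $\sigma_C$-optimization is \emph{also} an infimum (its exponent $\lambda\tfrac{1-\alpha}{z}$ being antimonotone), so both domination inequalities point the same way and both optimized states can be replaced simultaneously by $\omega_{C^n}$ and $\omega_{D^n}$, exactly as in the proof of Proposition~\ref{H with universal any z}(iii); this yields \eqref{H(AB|C) (iii)}. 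When instead $\lambda(1-\alpha)\geq 0$, the $\sigma_C$-optimization becomes a supremum that cannot be merged with the $\tau_D$-infimum, so we replace only the inner infimum by $\omega_{D^n}$ and keep the outer supremum over the tensor-power states $\sigma_C^{\otimes n}$, giving \eqref{H(AB|C) (ii)}.

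The main obstacle is precisely this mixed supremum–infimum case \eqref{H(AB|C) (ii)}. The universal-state domination yields only one-sided bounds, and here the two optimizations have opposite sense, so the clean simultaneous squeeze of Proposition~\ref{H with universal any z}(iii) is unavailable. I would instead prove matching inequalities separately: for one direction, bound the inner infimum by its value at $\tau_D=\omega_{D^n}$ for each fixed $\sigma_C$ and then take the supremum; for the reverse, use that $\omega_{D^n}$ is feasible, together with additivity, to identify the full $n$-copy optimization with $n$ times the single-copy quantity, and tensorize a single-copy $\sigma_C$-optimizer to show the tensor-power supremum loses nothing asymptotically. Controlling the order of the two optimizations—appealing, if necessary, to Sion's minimax as in the proof of Theorem~\ref{Duality}—and checking that each factor $g_{n,d},g_{n,d^2}$ contributes only $O\!\big(\tfrac{\log n}{n}\big)$ are the delicate points; once these are secured, multiplying by $1/(1-\alpha)$ and letting $n\to\infty$ completes the proof.
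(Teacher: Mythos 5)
Your proposal is correct and follows essentially the same route as the paper's proof: rerun Lemma~\ref{Dupuis variational} on $\ket{\rho}_{ABCD}^{\otimes n}$ and substitute permutation-invariant universal states via the appropriate form of Proposition~\ref{H with universal any z} --- form (i) (through duality) for \eqref{H(A|BC) (ii)}, form (ii) for the mixed sup--inf case \eqref{H(AB|C) (ii)}, form (iii) with two universal states for \eqref{H(AB|C) (iii)} --- with \eqref{H(B|C) (ii)} handled by additivity alone. The only slip is cosmetic: in your sandwich for \eqref{H(AB|C) (ii)}, the ``reverse'' bound on the universal-state expression must come from the domination $\tau^{\otimes n}\leq g_{n,d^2}\,\omega_{A^nB^n}$ combined with the antimonotone exponent $1-\tfrac{1}{z}$ (exactly as in the paper's proof of form (ii)), not from feasibility of $\omega_{D^n}$, which only yields the first direction --- though you do state this two-sided mechanism correctly in your opening paragraph.
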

\begin{proof}
    The proof follows a similar structure to that of Lemma~\ref{regularized z geq 1}. However, when repeating the same steps as in the proof of Lemma~\ref{Dupuis variational} for $\ket{\rho}_{ABCD}^{\otimes n}$, a different selection of the permutation-invariant states must be due to the different parameter ranges. 

    In particular, for the expressions~\eqref{H(A|BC) (ii)} we use the form (i) in Proposition~\ref{H with universal any z}. For~\eqref{H(AB|C) (ii)} we use the form (ii) in Proposition~\ref{H with universal any z}. For~\eqref{H(AB|C) (iii)} we use the form (iii) in Proposition~\ref{H with universal any z} with two universal states. Finally,~\eqref{H(B|C) (ii)} simply follows by additivity of the conditional entropy.
\end{proof}

\section{Chain rules}
\label{sec:chainrules}

In this section, we derive chain rules for the conditional entropy $H^\lambda_{\alpha,z}(A|B)$. We integrate the formulations derived in the previous section with a result from complex interpolation theory for Schatten norms, with the latter technique being similar to that in \cite{dupuis2015chain}. However, unlike the simpler case addressed in \cite{dupuis2015chain}, this context presents further challenges. Notably, results from complex interpolation theory for multivariate trace inequalities often require optimizations over unitaries, particularly in the non-commuting case involving more than two matrices~\cite{sutter16}. However, the expressions derived earlier utilize universal permutation-invariant states, effectively eliminating the usual requirement for optimization over unitaries. Indeed, the universal state commutes with any permutation-invariant state, enabling us to asymptotically reduce the problem to one involving only two non-commuting matrices.

\begin{theorem}[Chain rule]
	\label{Final chain rule}
	Let $(\alpha,z,\lambda), (\beta,w,\mu), (\gamma,v,\lambda) \in \D$ and $\rho_{ABC}$ be a tripartite quantum state. Suppose the parameters satisfy the relations
	\begin{equation}\label{conditions}
		\frac{z}{1-\alpha}=\frac{w}{1-\beta}+\frac{v}{1-\gamma}, \qquad \frac{1-z}{1-\alpha} = \frac{1-w}{1-\beta},\qquad \mu=\frac{1-v}{1-\gamma}.
	\end{equation}
	If $(\alpha-1)(\beta-1)(\gamma-1) > 0$, the following chain rule holds:
	\begin{equation}
		H^\lambda_{\alpha,z}(AB|C)_\rho\geq	H_{\beta,w}^\mu(A|BC)_\rho+	H^\lambda_{\gamma,v}(B|C)_\rho.
	\end{equation}
	The inequality is reversed when $(\alpha-1)(\beta-1)(\gamma-1)<0$.
\end{theorem}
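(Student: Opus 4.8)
The plan is to reduce all three conditional entropies to their regularized Schatten-norm representations from Lemmas~\ref{regularized z geq 1} and~\ref{regularized z leq 1}, and then to link the three norms by a single application of the Stein--Hirschman complex interpolation theorem for Schatten norms, in the spirit of~\cite{dupuis2015chain}. Concretely, writing $X = \Op_{AD\to BC}(\ket{\rho}^{\otimes n}_{ABCD})$ and selecting, according to the parameter regime (the sign of $1-z$ and of $\lambda(1-\alpha)$), the appropriate form for each term, one expresses $H^\lambda_{\gamma,v}(B|C)$ as an $\|\cdot\|_{2v}$-norm with left factors built from $\omega_{C^n},\rho_{C^n}$ and a right factor given by a fixed power of $\rho_{A^nD^n}$; $H^\mu_{\beta,w}(A|BC)$ as an $\|\cdot\|_{2\hat w}$-norm with left factor a power of $\rho_{B^nC^n}$ and right factors powers of $\rho_{D^n}$ and $\tau_{D^n}$; and the target $H^\lambda_{\alpha,z}(AB|C)$ as an $\|\cdot\|_2$-norm with left factors $\omega_{C^n},\rho_{C^n}$ and right factors powers of $\rho_{D^n}$ and $\tau_{D^n}$. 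The parameter constraints~\eqref{conditions} are precisely what make these three operators the two boundary values and the interior value of one common analytic family.

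Next I would construct an operator-valued function $G(\theta)$, bounded and analytic on the strip $0\le \Re\theta\le 1$, whose exponents depend affinely on $\theta$ so that $G(0)$ and $G(1)$ reproduce the integrands defining $H^\lambda_{\gamma,v}(B|C)$ and $H^\mu_{\beta,w}(A|BC)$ respectively, while $G(\theta^\ast)$ at the interior point $\theta^\ast$ fixed by the norm-interpolation identity $\tfrac{1}{2}=\tfrac{1-\theta^\ast}{2v}+\tfrac{\theta^\ast}{2\hat w}$ reproduces $H^\lambda_{\alpha,z}(AB|C)$. Here the first relation in~\eqref{conditions}, $\tfrac{z}{1-\alpha}=\tfrac{w}{1-\beta}+\tfrac{v}{1-\gamma}$, is exactly the identity that aligns the reciprocal norm exponents (weighted by the prefactors $\tfrac{2z}{\alpha-1}$, $\tfrac{2w}{\beta-1}$, $\tfrac{2v}{\gamma-1}$) and thereby fixes $\theta^\ast$, while the remaining relations $\tfrac{1-z}{1-\alpha}=\tfrac{1-w}{1-\beta}$ and $\mu=\tfrac{1-v}{1-\gamma}$ force the exponents of $\rho_{C^n}$, $\omega_{C^n}$, $\rho_{B^nC^n}$, $\rho_{D^n}$ and $\tau_{D^n}$ to interpolate affinely and consistently across the strip; one verifies this directly using the dual-parameter formulas of Remark~\ref{explicit computation}. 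Taking the auxiliary state $\tau_{D^n}$ inside $G$ to be a near-optimizer of the optimization defining $H^\mu_{\beta,w}(A|BC)$, the Stein--Hirschman bound $\|G(\theta^\ast)\|_{2}\le \|G(0)\|_{2v}^{\,1-\theta^\ast}\,\|G(1)\|_{2\hat w}^{\,\theta^\ast}$ then translates, after applying $-\tfrac1n\log$ and restoring the prefactors, into the claimed chain rule, with the sign of $(\alpha-1)(\beta-1)(\gamma-1)$ controlling whether these prefactors preserve or reverse the inequality.

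The main obstacle is exactly the feature that separates this argument from the two-matrix case of~\cite{dupuis2015chain}. On the boundary lines $\Re\theta\in\{0,1\}$ the analytic family carries imaginary powers $\rho^{\,it}$ and $\omega^{\,it}$, which are unitaries; with three or more genuinely non-commuting operators these do not cancel, and a naive estimate degrades into the multivariate interpolation bounds of~\cite{sutter2017multivariate}, which carry an unwanted supremum over unitary rotations. The resolution, and the crux of the proof, is that the regularized forms of Proposition~\ref{H with universal any z} replace the optimizing marginals by the \emph{universal} permutation-invariant states $\omega_{C^n},\omega_{D^n}$, at the cost only of subexponential factors $g_{n,d}$ that vanish under $\tfrac1n\log$. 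Since the universal state commutes with every permutation-invariant operator---in particular with $\rho^{\otimes n}$, with $\tau_{D^n}=\tau_D^{\otimes n}$, and with their imaginary powers on the symmetric subspace---the boundary norms become independent of $t$ and the offending third operator effectively commutes through. This collapses the multivariate interpolation to an honest two-operator Stein--Hirschman inequality, where the residual boundary unitaries are absorbed by unitary invariance of the Schatten norms and no supremum over unitaries survives.

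Finally I would dispatch the bookkeeping: verifying that the optimization over the auxiliary states (a supremum over $\tau_{D^n}$ or $\sigma_{C^n}$, depending on the regime) sits on the correct side and with the correct direction, so that committing to a single $\tau_{D^n}$ yields a valid one-sided bound; confirming via Remark~\ref{explicit computation} that $(\beta,w,\mu)$ and $(\gamma,v,\lambda)$ indeed lie in $\D$ whenever $(\alpha,z,\lambda)$ does and~\eqref{conditions} holds, and that $2$ lies between $2v$ and $2\hat w$ so that $\theta^\ast\in(0,1)$; and treating the reversed case $(\alpha-1)(\beta-1)(\gamma-1)<0$ by the identical construction, where the negative prefactors flip the three-lines inequality. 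The remaining regimes ($z\le 1$ versus $z\ge 1$, and the sign of $\lambda(1-\alpha)$) are handled by the same template, substituting the corresponding forms from Lemma~\ref{regularized z leq 1} in place of Lemma~\ref{regularized z geq 1}.
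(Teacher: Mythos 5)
Your overall strategy coincides with the paper's: represent all three entropies via the regularized Schatten-norm forms of Lemmas~\ref{regularized z geq 1} and~\ref{regularized z leq 1}, place the target $H^\lambda_{\alpha,z}(AB|C)$ at the interior point of the strip and the two right-hand entropies at the boundaries, use the relations~\eqref{conditions} to fix $\theta$ and align the exponents, invoke the Hadamard three-line theorem for Schatten norms (Lemma~\ref{Hadamard}), rely on the universal permutation-invariant states to collapse what would otherwise be a multivariate interpolation problem, and dispatch $\alpha<1$ by duality (Theorem~\ref{Duality}). That is exactly the paper's architecture.

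There is, however, a genuine gap in your handling of the auxiliary optimization over $\tau_{D^n}$. In the regularized representation for $\alpha,\beta,\gamma>1$, the target is $H^\lambda_{\alpha,z}(AB|C)=\lim_n -\tfrac{1}{n}\tfrac{2z}{\alpha-1}\log\sup_{\tau_{D^n}}\|F_{\tau_{D^n}}(\theta)\|_2$ with a \emph{positive} prefactor $\tfrac{2z}{\alpha-1}$, so proving the chain rule requires an upper bound on $\sup_{\tau_{D^n}}\|F_{\tau_{D^n}}(\theta)\|_2$. You instead fix $\tau_{D^n}$ to be a near-optimizer of the boundary quantity defining $H^\mu_{\beta,w}(A|BC)$ and apply Stein--Hirschman to that single $\tau^*$. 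This only bounds $\|F_{\tau^*}(\theta)\|_2$, which may be far below the supremum; since $-\log\sup_{\tau}\|F_{\tau}(\theta)\|_2 \le -\log\|F_{\tau^*}(\theta)\|_2$, what you obtain is an \emph{upper} bound on the target entropy in a place where a \emph{lower} bound is needed, and the two inequalities point in opposite directions, so the argument is inconclusive. The correct move (the paper's) is to apply the interpolation inequality for \emph{every} fixed $\tau_{D^n}$ (equivalently, at a near-optimizer of the left-hand side), note that the boundary value $F_\tau(1+it)$ is $\tau$-independent, bound the $\tau$-dependent boundary factor by its supremum over all states, and only then take the supremum over $\tau_{D^n}$ of the interior norm. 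A secondary inaccuracy: the supremum in the regularized forms runs over \emph{arbitrary} states $\tau_{D^n}$, not only i.i.d.\ ones, so your claim that $\tau_{D^n}=\tau_D^{\otimes n}$ commutes with the universal state is not available; the imaginary-power phases $\rho_{D^n}^{it}$ adjacent to $\tau_{D^n}$ are removed precisely because the supremum ranges over all states (combined with unitary invariance of the norm), while commutation with $\omega_{C^n}$ only disposes of the phases on the $C^n$ side.
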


We divide the full proof into several cases. The first case considers $\alpha>1, (\beta-1)(\gamma-1)>0$, while the second case addresses scenarios where \(\alpha > 1, (\beta-1)(\gamma-1)<0\). For the first case, we further divide the proof into subcases: \(z \geq 1\) and \(z \leq 1\). The arguments for the second case are analogous, so we outline only the key steps. Finally, the case \(\alpha < 1\) can be handled by duality.

\subsection{Proof of the case $\alpha > 1$ with $(\beta-1)(\gamma-1) > 0$}
Note that from the first relation among the parameters for the chain rule in equation~\eqref{conditions}, we need to consider only the case \(\alpha, \beta, \gamma > 1\) as the case \(\alpha>1 \) and \(\beta, \gamma < 1\) cannot happen.

Let us first consider the case $z\geq 1$. Note that since $\alpha,\beta > 1$, the second condition in~\eqref{conditions} implies that $w \geq 1$. We aim to prove that
\begin{equation}
	H^\lambda_{\alpha,z}(AB|C)_{\rho}\geq	H_{\beta,w}^\mu(A|BC)_{\rho}+	H^\lambda_{\gamma,v}(B|C)_{\rho}.
\end{equation}
Using the asymptotic forms in Lemma~\ref{regularized z geq 1}, it is enough to prove that 
\begin{align}\label{to prove geq 1}
& \sup_{\tau_{D^n}} \left\|  \omega_{C^n}^{\frac{\lambda}{2}\frac{1-\alpha}{z}}  \rho_{C^n}^{\;\frac{1-\lambda}{2}\frac{1-\alpha}{z}} X_{A^nD^n \rightarrow B^nC^n} \rho_{D^n}^{ \;\top \frac{1-\hat{\lambda}}{2}\frac{1-\hat{\alpha}}{\hat{z}}} \tau_{D^n}^{\top \frac{\hat{\lambda}}{2}\frac{1-\hat{\alpha}}{\hat{z}}} \right\|_2^{2\frac{z}{\alpha-1}} \\
&\qquad\qquad  \leq \sup_{\tau_{D^n}} \left\|  \rho_{B^n C^n}^{ \;\frac{1-\mu}{2}\frac{1-\beta}{w}} X_{A^nD^n \rightarrow B^nC^n} \rho_{D^n}^{ \;\top \frac{1-\hat{\mu}}{2}\frac{1-\hat{\beta}}{\hat{w}}} \tau_{D^n}^{\top\frac{\hat{\mu}}{2}\frac{1-\hat{\beta}}{\hat{w}}}  \right\|_{2\hat{w}}^{{2\frac{w}{\beta-1}}} \\
&\qquad\qquad\qquad \cross\left\|    \omega_{C^n}^{ \frac{\lambda}{2}\frac{1-\gamma}{v}} \rho_{C^n}^{\; \frac{1-\lambda}{2}\frac{1-\gamma}{v}}  X_{A^n D^n \rightarrow B^nC^n} \rho_{A^n D^n}^{\;\frac{1-\kappa}{2}\frac{1-\hat{\gamma}}{\hat{v}}} \right\|_{2v}^{2\frac{v}{\gamma-1}},
\end{align}
where $\kappa = (1-v)/(1-\gamma)$.
The result then follows by taking the logarithm on both sides, dividing by $1/n$, and taking the limit $n \rightarrow \infty$.

We define the ratios $r_1=(\alpha-1)/z$, $r_2=(\beta-1)/w$, $r_3=(\gamma-1)/v$.
We then use Lemma~\ref{Hadamard} with $\theta:=\frac{r_1}{r_3}$, and $p_{\theta} = 2$, $p_0 = 2 \hat{w}$ and $p_1=2v$. Note that the condition $p_0,p_1 \geq 1$ is always satisfied when the parameters lie within the DPI region $\D$. To apply the theorem, we need to verify the relation
\begin{equation}
\frac{1-\theta}{p_0}+\frac{\theta}{p_1} = \frac{1}{p_{\theta}}.
\end{equation}
We use the conditions in~\eqref{conditions} and the duality relation $1/\hat{w} = 1+ \mu r_2$ to establish this. The first conditions of~\eqref{conditions} implies that $1-r_1/r_3 = r_1/r_2 $. Moreover $\mu = (1-v)/(1-\gamma)$ implies that $1/v = 1-\mu r_3$. 
Using these relations, we obtain 
\begin{align}
\frac{1-\theta}{p_0}+\frac{\theta}{p_1} &= \frac{r_1}{2} \left(\frac{1}{r_2 \hat{w}}+\frac{1}{r_3 v}\right) \\
&= \frac{r_1}{2} \left(\frac{1}{r_2}+ \mu + \frac{1}{r_3} - \mu \right) \\
&=  \frac{1}{2}.
\end{align}
Furthermore, the third conditions with the relation $\mu = (1-v)/(1-\gamma)$ yields $(z-1)/(1-\alpha) = (w-1)/(\beta-1)$ which is equivalent to $\hat{\lambda}=\hat{\mu}$.
Let us define
\begin{equation}
F(s):= \omega_{C^n}^{ -\lambda  \frac{r_3}{2}s} \rho_{C^n}^{ \; -\frac{1-\lambda}{2}r_3 s} \rho_{B^n C^n}^{ \;-\frac{1-\mu}{2}r_2(1-s)+\frac{1-\mu}{2}r_3 s} X_{A^n D^n \rightarrow B^nC^n} \rho_{D^n}^{ \top \frac{1-\hat{\lambda}}{2} r_2 (1-s)} \tau_{D^n}^{ \top\frac{\hat{\lambda}}{2}r_2 (1-s)},
\end{equation}
where $\hat{\lambda} = (1-z)/(1-\alpha)$ and $\hat{\lambda} = (1-v)/(1-\gamma)$. Recall also that
$(1-\hat{\alpha})/\hat{z} = -(1-\alpha)/z$.
We have 
\begin{equation}
F(\theta) =  \omega_{C^n}^{-\lambda \frac{r_1}{2}} \rho_{C^n}^{\; -\frac{1-\lambda}{2}\frac{r_1}{2}} X_{A^n D^n  \rightarrow B^n C^n} \rho_{D^n}^{\top\frac{1-\hat{\lambda}}{2} r_1} \tau_{D^n}^{\top\frac{\hat{\lambda}}{2}r_1},
\end{equation}
since $-\frac{1-\mu}{2}r_2(1-\frac{r_1}{r_3})+\frac{r_1}{2} (1-\mu) = 0$. Moreover, we have, up to external unitaries that do not change the norm (and $\hat{\lambda}=\hat{\mu}$),
\begin{equation}
F(it)=  \rho_{B^n C^n}^{ -\frac{1-\mu}{2}r_2} X_{A^n D^n \rightarrow B^n C^n} \rho_{D^n}^{\top\frac{1-\hat{\mu}}{2} r_2} \rho_{D^n}^{\top\frac{1-\hat{\mu}}{2} r_2 it} \tau_{D^n}^{\top\frac{\hat{\mu}}{2}r_2}.
\end{equation}
For non-full-rank states, we can consider a slightly depolarized version, take the limit in the chain rule inequality, and then apply the continuity result from Proposition~\ref{Continuity}.
Additionally, up to external unitaries
\begin{align}
F(1+it)&= \omega_{C^n}^{ -\lambda \frac{r_3}{2}}  \rho_{C^n}^{\; -\frac{1-\lambda}{2}\frac{r_3}{2} it} \rho_{C^n}^{ \;-\frac{1-\lambda}{2}\frac{r_3}{2}} \rho_{B^n C^n}^{ \;\frac{1-\hat{\theta}}{2}r_3} X_{A^n D^n \rightarrow B^n C^n}\\ 
&= \omega_{C^n}^{-\lambda \frac{r_3}{2}} \rho_{C^n}^{\; -\frac{1-\lambda}{2}\frac{r_3}{2} it} \rho_{C^n}^{ \;-\frac{1-\lambda}{2}\frac{r_3}{2}} X_{A^n D^n \rightarrow B^n C^n} \rho_{A^n D^n}^{\;\frac{1-\kappa}{2}r_3},
\end{align}
where we used that $\kappa=\mu$ and $\rho_{B^n C^n}^q X_{A^n D^n  \rightarrow B^n C^n} = X_{A^n D^n  \rightarrow B^n C^n} \rho_{A^n D^n }^q$ for  some power $q$. We note that the terms $F(it)$ and $F(1+it)$ contain undesired phases.
However, we now show that the optimization over all states allows us to remove them.  We then apply Hadamard's three-line theorem for Schatten norms, as stated in Lemma~\ref{Hadamard}, which asserts that
\begin{equation}
\|F(\theta)\|_{p_{\theta}} \leq \sup_{t_1 \in \mathbb{R}}\|F(it_1)\|_{p_0}^{1-\theta} \sup_{t_2 \in \mathbb{R}}\|F(1+it_2)\|_{p_1}^{\theta}.
\end{equation}
In our case, this means that
\begin{align}
& \left\|  \omega_{C^n}^{\frac{\lambda}{2}\frac{1-\alpha}{z}}  \rho_{C^n}^{\;\frac{1-\lambda}{2}\frac{1-\alpha}{z}} X_{A^nD^n \rightarrow B^nC^n} \rho_{D^n}^{ \;\top \frac{1-\hat{\lambda}}{2}\frac{1-\hat{\alpha}}{\hat{z}}} \tau_{D^n}^{\top \frac{\hat{\lambda}}{2}\frac{1-\hat{\alpha}}{\hat{z}}} \right\|_2^{2\frac{z}{\alpha-1}} \\
&\quad \qquad  \leq  \sup_{t_1 \in \mathbb{R}} \left\|  \rho_{B^n C^n}^{ \;\frac{1-\mu}{2}\frac{1-\beta}{w}} X_{A^nD^n \rightarrow B^nC^n} \rho_{D^n}^{ \;\top \frac{1-\hat{\mu}}{2}\frac{1-\hat{\beta}}{\hat{w}}} \rho_{D^n}^{ \top\frac{1-\hat{\mu}}{2} r_2 it_1} \tau_{D^n}^{\top\frac{\hat{\mu}}{2}\frac{1-\hat{\beta}}{\hat{w}}}  \right\|_{2\hat{w}}^{{2\frac{w}{\beta-1}}} \\
& \quad \qquad \qquad \cross \sup_{t_2 \in \mathbb{R}}  \left\|    \omega_{C^n}^{ \frac{\lambda}{2}\frac{1-\gamma}{v }} \rho_{C^n}^{ -\frac{1-\lambda}{2}\frac{r_3}{2} it_2} \rho_{C^n}^{\; \frac{1-\lambda}{2}\frac{1-\gamma}{2v}}  X_{A^n D^n \rightarrow B^nC^n} \rho_{A^n D^n}^{\;\frac{1-\kappa}{2}\frac{1-\hat{\gamma}}{\hat{v}}} \right\|_{2v}^{2\frac{v}{\gamma-1}}.
\end{align}
We then take the supremum over $\tau_{D^n}$  on both sides. Note that we can exchange the latter supremums with the supremum over $t_1$. In addition,
since the optimizations are performed over all the states, we can drop the phases. For example, 
\begin{align}
&\sup_{\tau_{D^n}} \sup_{t_1 \in \mathbb{R}} \left\|  \rho_{B^n C^n}^{ \;\frac{1-\mu}{2}\frac{1-\beta}{w}} X_{A^nD^n \rightarrow B^nC^n} \rho_{D^n}^{ \;\top \frac{1-\hat{\mu}}{2}\frac{1-\hat{\beta}}{\hat{w}}} \rho_{D^n}^{ \top\frac{1-\hat{\mu}}{2} r_2 it_1} \tau_{D^n}^{\top\frac{\hat{\mu}}{2}\frac{1-\hat{\beta}}{\hat{w}}}  \right\|_{2\hat{w}}^{{2\frac{w}{\beta-1}}} \\
&\qquad \qquad \qquad = \sup_{t_1 \in \mathbb{R}} \sup_{\tau_{D^n}}  \left\|  \tau_{D^n}^{\frac{\hat{\mu}}{2}\frac{1-\hat{\beta}}{\hat{w}}} \rho_{D^n}^{ \frac{1-\hat{\mu}}{2} r_2 it_1} \rho_{D^n}^{ \; \frac{1-\hat{\mu}}{2}\frac{1-\hat{\beta}}{\hat{w}}}  X_{B^nC^n \rightarrow A^nD^n} \rho_{B^n C^n}^{ \;\frac{1-\mu}{2}\frac{1-\beta}{w}} \right\|_{2\hat{w}}^{{2\frac{w}{\beta-1}}} \\
&\qquad \qquad \qquad = \sup_{t_1 \in \mathbb{R}} \sup_{\tau_{D^n}}  \left\|  \tau_{D^n}^{\frac{\hat{\mu}}{2}\frac{1-\hat{\beta}}{\hat{w}}} \rho_{D^n}^{ \; \frac{1-\hat{\mu}}{2}\frac{1-\hat{\beta}}{\hat{w}}}  X_{B^nC^n \rightarrow A^nD^n} \rho_{B^n C^n}^{ \;\frac{1-\mu}{2}\frac{1-\beta}{w}} \right\|_{2\hat{w}}^{{2\frac{w}{\beta-1}}} \\
&\qquad \qquad \qquad =  \sup_{\tau_{D^n}}  \left\|  \tau_{D^n}^{\frac{\hat{\mu}}{2}\frac{1-\hat{\beta}}{\hat{w}}} \rho_{D^n}^{ \; \frac{1-\hat{\mu}}{2}\frac{1-\hat{\beta}}{\hat{w}}}  X_{B^nC^n \rightarrow A^nD^n} \rho_{B^n C^n}^{ \;\frac{1-\mu}{2}\frac{1-\beta}{w}} \right\|_{2\hat{w}}^{{2\frac{w}{\beta-1}}} \,,
\end{align}
where we noted that the transpose does not change the norm. Finally, the phase in $F(1+it)$ can be simplified since the universal state $\omega_{C^n}$ commutes with $\rho_{C^n} = \rho_C^{\otimes n}$.  Therefore, we obtain the inequality~\eqref{to prove geq 1} which is what we wanted to prove.

In the case where \(z \leq 1\), the proof follows a similar approach to the \(z \geq 1\) case discussed earlier. 
However, since the variational form of the norm for \(z \leq 1\) includes an additional infimum, 
we rely on the asymptotic formulations from Lemma \ref{regularized z leq 1}. 
The proof proceeds analogously, with the replacements \(\tau_{D^n} \rightarrow \omega_{D^n}\) 
and \(\omega_{C^n} \rightarrow \sigma_C^{\otimes n}\) for \(\lambda(1-\alpha) \geq 0\). 
For \(\lambda(1-\alpha) \leq 0\), we replace \(\tau_{D^n} \rightarrow \omega_{D^n}\).

\subsection{Proof of the case $\alpha > 1$ with $(\beta-1)(\gamma-1) < 0$}
In this case, we need to prove
\begin{equation}
	H^\lambda_{\alpha,z}(AB|C)_{\rho}\leq	H_{\beta,w}^\mu(A|BC)_{\rho}+	H^\lambda_{\gamma,v}(B|C)_{\rho}.
\end{equation}
The proof follows a similar structure to the case $\alpha > 1 , (\beta-1)(\gamma-1) > 0$, but the holomorphic function used in the interpolation inequality requires modification. Let us first consider the case $z\geq 1$. If $\beta > 1$, $\gamma < 1$, then the preceding inequality is tantamount to
\begin{align}
\label{to prove leq 1}
& \sup_{\tau_{D^n}} \left\|  \rho_{B^n C^n}^{ \;\frac{1-\mu}{2}\frac{1-\beta}{w}} X_{A^nD^n \rightarrow B^nC^n} \rho_{D^n}^{ \;\top \frac{1-\hat{\mu}}{2}\frac{1-\hat{\beta}}{\hat{w}}} \tau_{D^n}^{\top\frac{\hat{\mu}}{2}\frac{1-\hat{\beta}}{\hat{w}}}  \right\|_{2\hat{w}}^{{2\frac{w}{\beta-1}}}\\
&\qquad \qquad \leq \sup_{\tau_{D^n}} \left\|  \omega_{C^n}^{\frac{\lambda}{2}\frac{1-\alpha}{z}}  \rho_{C^n}^{\;\frac{1-\lambda}{2}\frac{1-\alpha}{z}} X_{A^nD^n \rightarrow B^nC^n} \rho_{D^n}^{ \;\top \frac{1-\hat{\lambda}}{2}\frac{1-\hat{\alpha}}{\hat{z}}} \tau_{D^n}^{\top \frac{\hat{\lambda}}{2}\frac{1-\hat{\alpha}}{\hat{z}}} \right\|_2^{2\frac{z}{\alpha-1}} \\
& \quad \qquad \qquad \qquad \cross  \left\|    \omega_{C^n}^{ \frac{\lambda}{2}\frac{1-\gamma}{v}} \rho_{C^n}^{\; \frac{1-\lambda}{2}\frac{1-\gamma}{v}}  X_{A^n D^n \rightarrow B^nC^n} \rho_{A^n D^n}^{\;\frac{1-\kappa}{2}\frac{1-\hat{\gamma}}{\hat{v}}} \right\|_{2v}^{-2\frac{v}{\gamma-1}},
\end{align}
where $\kappa = (1-v)/(1-\gamma)$. Then, we use 
\begin{equation}
F(s):= \omega_{C^n}^{ -\frac{\lambda}{2}  r_1\left(1+\frac{r_3}{r_2}s\right)} \rho_{C^n}^{ \; -\frac{1-\lambda}{2}  r_1\left(1+\frac{r_3}{r_2}s\right)} \rho_{B^n C^n}^{ \;\frac{1-\mu}{2}r_3 s} X_{A^n D^n \rightarrow B^nC^n} \rho_{D^n}^{ \top \frac{1-\hat{\lambda}}{2} r_1 (1-s)} \tau_{D^n}^{ \top\frac{\hat{\lambda}}{2}r_1 (1-s)} 
\end{equation}
with $p_0=2$, $p_1=2v$ and $p_\theta=2\hat{w}$. Moreover, $\theta=-r_2/r_3$.

If $\beta < 1$, $\gamma > 1$, then the preceding inequality is tantamount to
\begin{align}
\label{to prove leq 1 other} 
& \left\|    \omega_{C^n}^{ \frac{\lambda}{2}\frac{1-\gamma}{v}} \rho_{C^n}^{\; \frac{1-\lambda}{2}\frac{1-\gamma}{v}}  X_{A^n D^n \rightarrow B^nC^n} \rho_{A^n D^n}^{\;\frac{1-\kappa}{2}\frac{1-\hat{\gamma}}{\hat{v}}} \right\|_{2v}^{2\frac{v}{\gamma-1}}\\
&\qquad \qquad \leq \sup_{\tau_{D^n}} \left\|  \omega_{C^n}^{\frac{\lambda}{2}\frac{1-\alpha}{z}}  \rho_{C^n}^{\;\frac{1-\lambda}{2}\frac{1-\alpha}{z}} X_{A^nD^n \rightarrow B^nC^n} \rho_{D^n}^{ \;\top \frac{1-\hat{\lambda}}{2}\frac{1-\hat{\alpha}}{\hat{z}}} \tau_{D^n}^{\top \frac{\hat{\lambda}}{2}\frac{1-\hat{\alpha}}{\hat{z}}} \right\|_2^{2\frac{z}{\alpha-1}} \\
& \quad \qquad \qquad \qquad \cross \sup_{\tau_{D^n}} \left\|  \rho_{B^n C^n}^{ \;\frac{1-\mu}{2}\frac{1-\beta}{w}} X_{A^nD^n \rightarrow B^nC^n} \rho_{D^n}^{ \;\top \frac{1-\hat{\mu}}{2}\frac{1-\hat{\beta}}{\hat{w}}} \tau_{D^n}^{\top\frac{\hat{\mu}}{2}\frac{1-\hat{\beta}}{\hat{w}}}  \right\|_{2\hat{w}}^{{-2\frac{w}{\beta-1}}},
\end{align}
where $\kappa = (1-v)/(1-\gamma)$. Then, we use 
\begin{equation}
F(s):= \omega_{C^n}^{ -\frac{\lambda}{2}  r_1(1-s)} \rho_{C^n}^{ \; -\frac{1-\lambda}{2}  r_1(1-s)} \rho_{B^n C^n}^{ \;-\frac{1-\mu}{2}r_2 s} X_{A^n D^n \rightarrow B^nC^n} \rho_{D^n}^{ \top \frac{1-\hat{\lambda}}{2} r_1 \left(1+s\frac{r_2}{r_3}\right)} \tau_{D^n}^{ \top\frac{\hat{\lambda}}{2}r_1 \left(1+s\frac{r_2}{r_3}\right)} 
\end{equation}
with $p_0=2$, $p_1=2\hat{w}$ and $p_\theta=2v$. Moreover, $\theta=-r_3/r_2$.

The case $z \leq 1$ is similar with $\omega_{C^n} \leftrightarrow \sigma_C^{\otimes n}$ and $\tau_{D^n} \leftrightarrow \omega_{D^n}$ interchanged when needed.

\subsection{Proof of the case $\alpha < 1$}

The case for $\alpha < 1$ can be entirely derived by applying duality to the relations obtained previously. The parameter relations remain unchanged even when $\alpha < 1$. Now, let us apply the duality relation to the chain rule derived for the case where $\alpha > 1$. Consider a purification $\ket{\rho}_{ABCD}$. We then have the inequality
\begin{align}
    -H^{\hat{\lambda}}_{\hat{\alpha},\hat{z}}(AB|D)_{\rho} \geq -H_{\hat{\beta},\hat{w}}^{\hat{\mu}}(A|D)_{\rho} - H^{\hat{\theta}}_{\hat{\gamma},\hat{v}}(B|AD)_{\rho}.
\end{align}
By relabeling $A \leftrightarrow B$ and $D \leftrightarrow C$, this becomes
\begin{align}
    H^{\hat{\lambda}}_{\hat{\alpha},\hat{z}}(AB|C)_{\rho} \leq H_{\hat{\beta},\hat{w}}^{\hat{\mu}}(B|C)_{\rho} + H^{\hat{\theta}}_{\hat{\gamma},\hat{v}}(A|BC)_{\rho}.
\end{align}
The second relation in \eqref{conditions} shows that $\hat{\lambda} = \hat{\mu}$ for the original parameters. Therefore, as expected, $H^{\hat{\lambda}}_{\hat{\alpha},\hat{z}}(AB|C)$ and $H_{\hat{\beta},\hat{w}}^{\hat{\mu}}(B|C)$ must share the same third parameter.

Let us now verify that the relations among the parameters in \eqref{conditions} hold for the hatted parameters. The first relation in \eqref{conditions} leads to an analogous relation for the hatted parameters. The second relation follows from the fact that $\lambda = \frac{1-\hat{z}}{1-\hat{\alpha}} = \frac{1-\hat{v}}{1-\hat{\gamma}}$. To prove the third relation, we need to demonstrate that $\hat{\theta} = \frac{\hat{w}-1}{\hat{\beta}-1}$. This holds due to the third relation for the original parameters in \eqref{conditions}, as $\hat{\theta} = \frac{v-1}{\gamma-1} = \mu = \frac{\hat{w}-1}{\hat{\beta}-1}$.

\subsection{Implications for Petz and sandwiched R\'enyi conditional entropy}

Here we discuss special cases of the above chain rule that are in terms of the Petz and sandwiched conditional R\'enyi  entropy.

\begin{corollary} \label{col:chain1}
    Let $\alpha,\beta,\gamma$ lie in their respective DPI regions. If $(\alpha-1)(\beta-1)(\gamma-1) > 0$, then the following chain rules hold:
	\begin{align}
		&\widebar{H}^\downarrow_{\alpha}(AB|C)_{\rho}\geq	\widebar{H}^\downarrow_{\beta}(A|BC)_{\rho}+	\widebar{H}^\downarrow_{\gamma}(B|C)_{\rho}, \ \ \ \text{when} \ \ \ \frac{1}{1-\alpha}=\frac{1}{1-\beta}+\frac{1}{1-\gamma}; \\
        &\widetilde{H}^\uparrow_{\alpha}(AB|C)_{\rho}\geq	\widetilde{H}^\uparrow_{\beta}(A|BC)_{\rho}+	\widetilde{H}^\uparrow_{\gamma}(B|C)_{\rho}, \ \ \ \text{when} \ \ \ \frac{\alpha}{1-\alpha}=\frac{\beta}{1-\beta}+\frac{\gamma}{1-\gamma}.
	\end{align}
    In each inequality, the arrows on the entropies indexed by $\alpha$ and $\gamma$ may simultaneously be reversed under the same parameter relation. The inequalities themselves are reversed when $(\alpha - 1)(\beta - 1)(\gamma - 1) < 0$.
\end{corollary}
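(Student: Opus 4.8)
The plan is to obtain both chain rules as direct specializations of the general chain rule in Theorem~\ref{Final chain rule}, using the identifications of the Petz and sandwiched conditional entropies as the boundary cases $\lambda\in\{0,1\}$ of the family $H^\lambda_{\alpha,z}$ recorded in Figure~\ref{tree of conditional entropies}. Concretely, I would use $\bar{H}^\downarrow_\alpha = H^0_{\alpha,1}$, $\bar{H}^\uparrow_\alpha = H^1_{\alpha,1}$, $\widetilde{H}^\downarrow_\alpha = H^0_{\alpha,\alpha}$, and $\widetilde{H}^\uparrow_\alpha = H^1_{\alpha,\alpha}$. The whole task then reduces to choosing the three parameter triples $(\alpha,z,\lambda)$, $(\beta,w,\mu)$, $(\gamma,v,\lambda)$ so that each term of Theorem~\ref{Final chain rule} lands on the desired boundary value, and checking that the three coupling relations in~\eqref{conditions} collapse to the single stated relation among $\alpha,\beta,\gamma$.

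For the Petz chain rule I would set $z=w=v=1$ and $\lambda=\mu=0$, so that the three terms become $H^0_{\alpha,1}=\bar{H}^\downarrow_\alpha$, $H^0_{\beta,1}=\bar{H}^\downarrow_\beta$, and $H^0_{\gamma,1}=\bar{H}^\downarrow_\gamma$. With $z=w=v=1$ the second relation $\frac{1-z}{1-\alpha}=\frac{1-w}{1-\beta}$ becomes $0=0$, while the third forces $\mu=\frac{1-v}{1-\gamma}=0$; both hold automatically. The first relation $\frac{z}{1-\alpha}=\frac{w}{1-\beta}+\frac{v}{1-\gamma}$ reduces precisely to $\frac{1}{1-\alpha}=\frac{1}{1-\beta}+\frac{1}{1-\gamma}$. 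Since $\alpha,\beta,\gamma$ lie in their DPI regions, these boundary triples belong to $\D$, so Theorem~\ref{Final chain rule} applies verbatim. For the sandwiched chain rule I would instead set $z=\alpha$, $v=\gamma$ and $\lambda=1$; the coupling relations then \emph{force} $w=\beta$ and $\mu=1$, giving $H^1_{\alpha,\alpha}=\widetilde{H}^\uparrow_\alpha$, $H^1_{\beta,\beta}=\widetilde{H}^\uparrow_\beta$, $H^1_{\gamma,\gamma}=\widetilde{H}^\uparrow_\gamma$, and the first relation collapses to $\frac{\alpha}{1-\alpha}=\frac{\beta}{1-\beta}+\frac{\gamma}{1-\gamma}$, as required.

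The simultaneous reversal of the arrows on $\alpha$ and $\gamma$ is explained by the structure of Theorem~\ref{Final chain rule}: the first and third triples \emph{share} the parameter $\lambda$, whereas the middle arrow is pinned by $\mu=\frac{1-v}{1-\gamma}$ together with the relation tying $w$ to $z$. Keeping $z,w,v$ at their Petz (resp.\ sandwiched) values and flipping $\lambda$ from $0$ to $1$ (resp.\ from $1$ to $0$) therefore toggles the $\alpha$- and $\gamma$-arrows together, leaves the $\beta$-arrow unchanged, and does not alter the single parameter relation, which depends only on $z,w,v$. I would then only confirm that the flipped boundary triples still lie in $\D$, which is immediate from evaluating the constraints $1-\frac{z}{1-\alpha}\le\lambda\le1$ (for $\alpha<1$) and $1+\frac{z}{1-\alpha}\le\lambda\le1$ (for $\alpha>1$) at $\lambda\in\{0,1\}$, the resulting inequalities being exactly the $\delta_1$/$\delta_2$ conditions that define the DPI region for each divergence. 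The reversal of the inequalities under $(\alpha-1)(\beta-1)(\gamma-1)<0$ is inherited directly from the corresponding clause of Theorem~\ref{Final chain rule}.

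I do not expect any genuine analytic obstacle here, since all the work is carried by the general chain rule. The only point requiring care is the bookkeeping: verifying which boundary value of $\lambda$ (and of the pinned $\mu$, $w$, $v$) produces which arrow on each of the three terms, and confirming that the prescribed triples simultaneously lie in $\D$ and satisfy all three coupling relations of~\eqref{conditions}. In particular one must keep track of the fact that the $\beta$-term's arrow cannot be flipped independently, which is precisely why the corollary phrases the reversal as acting on the $\alpha$- and $\gamma$-indexed entropies only.
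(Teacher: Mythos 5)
Your proposal is correct and is exactly the argument the paper intends: the corollary is stated as an immediate specialization of Theorem~\ref{Final chain rule}, obtained by choosing $z=w=v=1$, $\lambda=\mu=0$ (Petz, arrows down) and $z=\alpha$, $v=\gamma$, $\lambda=1$ with $w=\beta$, $\mu=1$ forced (sandwiched, arrows up), with the coupling relations in~\eqref{conditions} collapsing to the single stated constraint in each case. Your bookkeeping for the simultaneous arrow reversal (flipping $\lambda$ while $\mu$ stays pinned by $\mu=\frac{1-v}{1-\gamma}$) and the membership checks in $\D$ are also exactly right.
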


In the next corollary, we obtain hybrid chain rules involving both $\widebar{H}_\alpha$ and $\widetilde{H}_\alpha$.
\begin{corollary} \label{col:chain2}
    Let $\alpha,\beta,\gamma$ lie in their respective DPI regions. If $(\alpha-1)(\beta-1)(\gamma-1) > 0$, then the following chain rules hold:
	\begin{align}\label{petzchainrule}
		&\widebar{H}^\uparrow_{\alpha}(AB|C)_{\rho}\geq	\widebar{H}^\uparrow_{\beta}(A|BC)_{\rho}+	\widetilde{H}^\uparrow_{\gamma}(B|C)_{\rho}, \ \ \ \text{when} \ \ \ \frac{1}{1-\alpha}=\frac{1}{1-\beta}+\frac{\gamma}{1-\gamma}\\
        &\widetilde{H}^\downarrow_{\alpha}(AB|C)_{\rho}\geq	\widetilde{H}^\downarrow_{\beta}(A|BC)_{\rho}+	\widebar{H}^\downarrow_{\gamma}(B|C)_{\rho}, \ \ \ \text{when}  \ \ \ \frac{\alpha}{1-\alpha}=\frac{\beta}{1-\beta}+\frac{1}{1-\gamma}.
	\end{align}
	In each inequality, the arrows on the entropies indexed by $\alpha$ and $\gamma$ may simultaneously be reversed under the same parameter relation. The inequalities themselves are reversed when $(\alpha - 1)(\beta - 1)(\gamma - 1) < 0$.
\end{corollary}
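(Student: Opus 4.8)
The plan is to obtain both hybrid chain rules as direct specializations of the general chain rule in Theorem~\ref{Final chain rule}, using the identifications $\bar{H}_\alpha = H_{\alpha,1}$ and $\widetilde{H}_\alpha = H_{\alpha,\alpha}$ together with the fact that $\lambda = 1$ selects the upward arrow and $\lambda = 0$ the downward arrow. The only content is to choose the triples $(\alpha,z,\lambda),(\beta,w,\mu),(\gamma,v,\lambda)$ so that each of the three entropies in Theorem~\ref{Final chain rule} collapses to the desired Petz or sandwiched quantity, and then to verify that the three defining relations in~\eqref{conditions} reduce exactly to the single scalar relation stated in the corollary.

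For the first inequality in~\eqref{petzchainrule} I would take $(\alpha,z,\lambda)=(\alpha,1,1)$, $(\beta,w,\mu)=(\beta,1,1)$, and $(\gamma,v,\lambda)=(\gamma,\gamma,1)$, so that $H^\lambda_{\alpha,z}(AB|C)=\bar{H}^\uparrow_\alpha(AB|C)$, $H^\mu_{\beta,w}(A|BC)=\bar{H}^\uparrow_\beta(A|BC)$, and $H^\lambda_{\gamma,v}(B|C)=\widetilde{H}^\uparrow_\gamma(B|C)$. With $z=w=1$ and $v=\gamma$, the second relation $\tfrac{1-z}{1-\alpha}=\tfrac{1-w}{1-\beta}$ becomes $0=0$, the third relation $\mu=\tfrac{1-v}{1-\gamma}$ becomes $1=1$, and the first relation $\tfrac{z}{1-\alpha}=\tfrac{w}{1-\beta}+\tfrac{v}{1-\gamma}$ becomes exactly $\tfrac{1}{1-\alpha}=\tfrac{1}{1-\beta}+\tfrac{\gamma}{1-\gamma}$. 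For the second inequality I would take $(\alpha,z,\lambda)=(\alpha,\alpha,0)$, $(\beta,w,\mu)=(\beta,\beta,0)$, and $(\gamma,v,\lambda)=(\gamma,1,0)$; then $z=\alpha$, $w=\beta$, $v=1$ turn the second and third relations into the identities $1=1$ and $0=0$, while the first becomes $\tfrac{\alpha}{1-\alpha}=\tfrac{\beta}{1-\beta}+\tfrac{1}{1-\gamma}$, as claimed. In both cases the hypothesis that $\alpha,\beta,\gamma$ lie in their respective DPI regions is precisely what guarantees that the three triples belong to $\mathcal{D}$ (for example $(\alpha,1,1)\in\mathcal{D}$ iff $0<\alpha\le 2$ with $\alpha\neq 1$, and $(\gamma,\gamma,1)\in\mathcal{D}$ iff $\gamma\ge\tfrac12$ with $\gamma\neq 1$), so Theorem~\ref{Final chain rule} applies, and its sign dichotomy---holding as stated when $(\alpha-1)(\beta-1)(\gamma-1)>0$ and reversing when the product is negative---yields the stated direction of each inequality.

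The arrow-reversal claim I would handle by replacing the common third parameter $\lambda=1$ by $\lambda=0$ (or vice versa) while keeping $z$ and $v$ fixed, since $\lambda$ governs only the up/down arrow and not the Petz/sandwiched character. The point to verify is that the middle term remains consistent: once $\lambda$ is flipped, the values of $w$ and $\mu$ are forced by the second and third relations in~\eqref{conditions}, and a short computation shows they land on a valid triple with the same $(\beta,w)$ as before, hence the same middle entropy, so that only the $\alpha$- and $\gamma$-indexed arrows change. For instance, flipping $\lambda=1\to 0$ in the first inequality keeps $w=1$ and forces $\mu=\tfrac{1-\gamma}{1-\gamma}=1$, giving $\bar{H}^\downarrow_\alpha(AB|C)\ge \bar{H}^\uparrow_\beta(A|BC)+\widetilde{H}^\downarrow_\gamma(B|C)$.

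The entire argument is essentially bookkeeping, so there is no serious analytic obstacle; the one place demanding care is confirming that the flipped configurations still satisfy all three relations in~\eqref{conditions} and remain inside $\mathcal{D}$, i.e.\ that reversing the arrows on the two outer entropies does not silently force the middle entropy to change type or leave its DPI region. Checking this reduces to re-deriving $w$ and $\mu$ from the (unchanged) relations after the flip and verifying membership in $\mathcal{D}_1$ or $\mathcal{D}_2$, which I expect to go through verbatim for both corollaries.
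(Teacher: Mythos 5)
Your proposal is correct and is exactly the argument the paper intends: Corollary~\ref{col:chain2} is obtained by specializing Theorem~\ref{Final chain rule} with $(\alpha,z,\lambda)=(\alpha,1,1)$, $(\beta,w,\mu)=(\beta,1,1)$, $(\gamma,v,\lambda)=(\gamma,\gamma,1)$ for the first inequality and $(\alpha,\alpha,0)$, $(\beta,\beta,0)$, $(\gamma,1,0)$ for the second, after which the relations in~\eqref{conditions} collapse to the stated scalar conditions. Your verification that the shared third parameter can be flipped between $0$ and $1$ while the relations force $w$ and $\mu$ (hence the middle entropy) to stay fixed, and that all triples remain in $\mathcal{D}$ precisely when $\alpha,\beta,\gamma$ lie in their respective DPI regions, covers the arrow-reversal and sign claims as well.
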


\section{Monotonicity in the parameters}
\label{sec:mono}

In this section, we establish the monotonicity of the conditional entropy in the parameters $\alpha$, $z$, and $\lambda$. Additionally, we address further monotonicities at the end of the section. The key component of the proof is the asymptotic form derived in Lemma~\ref{H with universal any z}(i). Specifically, since $\omega_{B^n}$ and $\rho_{B^n} =\rho_B^{\otimes n}$ commute, we obtain the following expression:
\begin{equation}
	\label{H universal commuting}
	H^\lambda_{\alpha,z}(A|B)_\rho = \lim_{n \rightarrow \infty} \frac{1}{n} \frac{1}{1-\alpha} \log \textup{Tr}\left(\rho_{A^n B^n}^{\frac{\alpha}{2z}} \left(\rho_{B^n}^{\frac{1-\lambda}{2}} \omega_{B^n}^{\lambda} \rho_{B^n}^{\frac{1-\lambda}{2}}\right)^{\frac{1-\alpha}{z}} \rho_{A^n B^n}^{\frac{\alpha}{2z}}\right)^z.
\end{equation}
This form clearly shows that the monotonicity of the conditional entropy in the parameters $\alpha$ and $z$ is closely related to the monotonicity in the same parameters of the $\alpha$-$z$ Rényi divergence (see Definition~\ref{alpha-z relative entropy}). We prove the monotonicity with respect to the parameters $\alpha$ and $z$ by employing complex interpolation techniques (see Lemma~\ref{Hadamard}). We note that the monotonicity of the $\alpha$-$z$ Rényi divergence with respect to $\alpha$ was first established in~\cite{hiai2024log}. Our approach, based on complex interpolation theory, also offers a straightforward alternative proof of this result in the DPI region. Finally, to establish monotonicity in $\lambda$, we follow a completely different approach by explicitly computing the derivative and proving its positivity.

\subsection{Monotonicity in $\alpha$}

The proof of the following result primarily relies on the asymptotic form in \eqref{H universal commuting} and Lemma~\ref{Hadamard}.
\begin{proposition}
Let $(\alpha,z,\lambda) \in \D$ and $\rho_{AB}$ be a quantum state. The function $\alpha \rightarrow H_{\alpha,z}^\lambda(A|B)_{\rho}$ is monotonically decreasing.
\end{proposition}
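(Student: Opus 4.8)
The plan is to reduce the claim to the ordinary monotonicity in $\alpha$ of the $\alpha$-$z$ R\'enyi divergence against a \emph{fixed} reference operator, and then to obtain that monotonicity from the Hadamard three-line lemma. Starting from the commuting asymptotic form \eqref{H universal commuting}, set $\Sigma_{B^n} := \rho_{B^n}^{1-\lambda}\omega_{B^n}^{\lambda}$, which is well defined since $\rho_{B^n}=\rho_B^{\otimes n}$ and the universal state $\omega_{B^n}$ commute, and which is \emph{independent of} $\alpha$. With this, \eqref{H universal commuting} reads
\begin{equation}
H^\lambda_{\alpha,z}(A|B)_\rho = \lim_{n\to\infty} -\tfrac1n\, D_{\alpha,z}\big(\rho_{A^nB^n}\,\big\|\,I_{A^n}\otimes\Sigma_{B^n}\big),
\end{equation}
with $D_{\alpha,z}$ as in Definition~\ref{alpha-z relative entropy}. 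Since the limit preserves inequalities, it suffices to prove that for fixed positive $\varrho,\varsigma$ with $\varrho\ll\varsigma$ and fixed $z$, the map $\alpha\mapsto D_{\alpha,z}(\varrho\|\varsigma)$ is nondecreasing; applying this with $\varrho=\rho_{A^nB^n}$ and $\varsigma=I_{A^n}\otimes\Sigma_{B^n}$ for each $n$ and passing to the limit then shows $\alpha\mapsto H^\lambda_{\alpha,z}(A|B)_\rho$ is nonincreasing. It is essential that $\Sigma_{B^n}$ carries no $\alpha$-dependence; this is precisely the advantage of the universal-state form over the non-asymptotic definition, whose reference $\Xi_{\lambda,\frac{1-\alpha}{z}}(\rho_B,\sigma_B)^{z/(1-\alpha)}$ does depend on $\alpha$.

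For the divergence, write $Q_{\alpha,z}(\varrho\|\varsigma)=\Trm\big(\varrho^{\frac{\alpha}{2z}}\varsigma^{\frac{1-\alpha}{z}}\varrho^{\frac{\alpha}{2z}}\big)^z=\big\|\varsigma^{\frac{1-\alpha}{2z}}\varrho^{\frac{\alpha}{2z}}\big\|_{2z}^{2z}$ and put $\psi(\alpha):=\log Q_{\alpha,z}(\varrho\|\varsigma)=(\alpha-1)D_{\alpha,z}(\varrho\|\varsigma)$. I would show $\psi$ is convex by applying Lemma~\ref{Hadamard} to the entire-function family $F(s):=\varsigma^{\frac{1-\alpha(s)}{2z}}\varrho^{\frac{\alpha(s)}{2z}}$, where $\alpha(s):=(1-s)\alpha_0+s\alpha_1$ interpolates two admissible endpoints $\alpha_0<\alpha_1$, using the single Schatten index $p_0=p_1=p_\theta=2z$ (so the balance condition $\frac{1-\theta}{p_0}+\frac{\theta}{p_1}=\frac{1}{p_\theta}$ holds trivially, and $2z\ge 1$ throughout $\D$). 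The decisive simplification is that on the two boundary lines the imaginary parts of the exponents factor out: since $\varsigma^{\frac{1-\alpha_0}{2z}+iy\Delta b}=\varsigma^{iy\Delta b}\varsigma^{\frac{1-\alpha_0}{2z}}$ and $\varrho^{\frac{\alpha_0}{2z}+iy\Delta a}=\varrho^{\frac{\alpha_0}{2z}}\varrho^{iy\Delta a}$, the unitaries $\varsigma^{iy\Delta b}$ and $\varrho^{iy\Delta a}$ sit at the \emph{outer} ends of the two-factor product and drop by unitary invariance of the Schatten norm, giving $\|F(iy)\|_{2z}=\big\|\varsigma^{\frac{1-\alpha_0}{2z}}\varrho^{\frac{\alpha_0}{2z}}\big\|_{2z}=Q_{\alpha_0,z}^{1/2z}$ and likewise $\|F(1+iy)\|_{2z}=Q_{\alpha_1,z}^{1/2z}$. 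Hadamard's bound $\|F(\theta)\|_{2z}\le\sup_y\|F(iy)\|_{2z}^{1-\theta}\sup_y\|F(1+iy)\|_{2z}^{\theta}$ then reads $Q_{\alpha_\theta,z}\le Q_{\alpha_0,z}^{1-\theta}Q_{\alpha_1,z}^{\theta}$, i.e.\ $\psi$ is convex.

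Finally I would supply the normalization $\psi(1)=0$: at $\alpha=1$ the exponent on $\varsigma$ vanishes, $\varsigma^0$ is the support projector, and $\varrho\ll\varsigma$ gives $\varrho^{\frac{1}{2z}}\varsigma^{0}\varrho^{\frac{1}{2z}}=\varrho^{1/z}$, hence $Q_{1,z}=\Trm\big[(\varrho^{1/z})^z\big]=\Trm\varrho=1$. A convex $\psi$ with $\psi(1)=0$ has nondecreasing chord slope $\alpha\mapsto\frac{\psi(\alpha)-\psi(1)}{\alpha-1}=D_{\alpha,z}(\varrho\|\varsigma)$, which is exactly the desired monotonicity; together with the reduction this proves the proposition. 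The point requiring care is the phase bookkeeping on the boundary lines, but it is clean here only because $F$ is a product of \emph{exactly two} operators, each commuting with its own purely imaginary exponent; more than two non-commuting factors would reintroduce an optimization over unitaries, as in the multivariate interpolation of Section~\ref{sec:chainrules}. Secondary technicalities are the restriction to $\D$ (guaranteeing $2z\ge1$ and that the segment $[\alpha_0,\alpha_1]$ together with the normalization point $\alpha=1$ lies where $\psi$ is defined and convex), the usual reduction to full-rank $\varrho,\varsigma$ by continuity, and the uniformity needed to pass convexity and the normalization through the limit $n\to\infty$.
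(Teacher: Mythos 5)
Your proof is correct, and its engine is the same as the paper's: both start from the universal-state asymptotic form of Proposition~\ref{H with universal any z}(i), both apply the three-line theorem (Lemma~\ref{Hadamard}) with the single Schatten index $p_0=p_1=p_\theta=2z$ to the same two-factor family with linearly interpolated exponents, both remove the boundary phases by the outer-unitary trick that is available precisely because there are only two factors, and both use the normalization of the trace functional at $\alpha=1$. Where you genuinely diverge is the endgame. The paper converts log-convexity into monotonicity by hand, choosing $\theta=\frac{\beta-1}{\alpha-1}$ (resp.\ $\theta=\frac{1-\alpha}{1-\beta}$) and treating the orderings $1<\beta<\alpha$ and $\beta<\alpha<1$ as separate cases; the remaining case $\alpha<1<\beta$ then needs an entirely different argument invoking Proposition~\ref{limit alpha to 1}, the monotonicity in $\lambda$ (Proposition~\ref{mono.lambda}), the $\alpha$-monotonicity of the sandwiched conditional entropy, and duality. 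You instead recast the asymptotic form as $-\frac{1}{n}D_{\alpha,z}\big(\rho_{A^nB^n}\,\big\|\,I_{A^n}\otimes\Sigma_{B^n}\big)$ with an $\alpha$-independent reference $\Sigma_{B^n}=\rho_{B^n}^{1-\lambda}\omega_{B^n}^{\lambda}$, and observe that convexity of $\psi(\alpha)=(\alpha-1)D_{\alpha,z}$ together with $\psi(1)=0$ makes the chord slope $D_{\alpha,z}=\psi(\alpha)/(\alpha-1)$ nondecreasing on both sides of $\alpha=1$ and, crucially, across it. That single observation covers all three orderings at once, so your proof is more self-contained: it needs neither the von Neumann limit nor $\lambda$-monotonicity nor duality, which also removes any concern about the logical ordering of those results relative to this proposition. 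Two small remarks: your closing worry about "uniformity needed to pass convexity through the limit" is unnecessary, since you only use monotonicity of each $D_{\alpha,z}^{(n)}$ in $\alpha$ and pointwise limits of monotone functions are monotone; and the full-rank reduction you flag is indeed the standard device that makes the boundary phases genuine unitaries, exactly as in the paper.
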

\begin{proof}
We can write $Q_{\alpha,z}^\lambda(\rho_{A^n B^n}|\omega_{B^n})$ as
\begin{align}
Q_{\alpha,z}^\lambda(\rho_{A^n B^n}|\omega_{B^n}) =  \left\|\rho_{A^n B^n}^{\frac{\alpha}{2z}} \big(\rho_{B^n}^{\frac{1-\lambda}{2}}\omega_{B^n}^{\frac{\lambda}{2}} \big)^{\frac{1-\alpha}{z}}\right\|_{2z}^{2z}, 
\end{align}
We apply Lemma~\ref{Hadamard} with 
\begin{equation}
	F(s):=\rho_{A^n B^n}^{ \; \frac{1}{2z}\left(s\alpha+(1-s) \beta\right)}\big(\rho_{B^n}^{\frac{1-\lambda}{2}} \omega_{B^n}^{\lambda}\big)^{ \frac{1}{z}\left(1-(s\alpha+(1-s)\beta \right)}
\end{equation}
and $p_0=p_1=p_{\theta} = 2z$. Note that we need to verify the condition $p_0,p_1 \geq 1/2$, which is satisfied because $z \geq 1/2$ holds in the DPI region $\D$.  By leveraging the unitary invariance of the norm, which allows us to simplify the overall phases, we conclude that $Q_{\alpha,z}^\lambda(\rho_{A^n B^n}|\omega_{B^n})$ is log-convex, i.e., for any $0 \leq \theta \leq 1$, it holds
\begin{equation}
Q_{\theta \alpha +(1-\theta)\beta,z}^\lambda(\rho_{A^n B^n}|\omega_{B^n}) \leq Q_{\alpha,z}^\lambda(\rho_{A^n B^n}|\omega_{B^n})^\theta  Q_{\beta,z}^\lambda(\rho_{A^n B^n}|\omega_{B^n})^{1-\theta}.
\end{equation}
Let us assume that $\alpha > \beta > 1$ and set $\theta = \frac{\beta-1}{\alpha-1} \in (0,1)$. Then, log-convexity gives
\begin{equation}
\log{Q_{\beta,z}^\lambda(\rho_{A^n B^n}|\omega_{B^n})} = \log{Q_{\theta \alpha + 1-\theta,z}^\lambda(\rho_{A^n B^n}|\omega_{B^n})} \leq \theta \log{Q_{\alpha,z}^\lambda(\rho_{A^n B^n}|\omega_{B^n})},
\end{equation}
since $ \log{Q_{1,z}^\lambda(\rho_{A^n B^n}|\omega_{B^n})} = 0$. We obtain the desired result by multiplying both sides by $-1/n$ and letting $n \to \infty$.

For the case $ 1 > \alpha > \beta$, an analogous argument holds with $\theta = \frac{1-\alpha}{1-\beta}$.

Finally, the case \( \alpha < 1 < \beta \) follows from the previously derived results, in conjunction with Proposition~\ref{limit alpha to 1}, as \( H^\lambda_{\alpha,z} < H < H^\lambda_{\beta,z} \). 
The latter statement is true because it is always possible to find a monotone path in \( \mathcal{D} \) that leads to the conditional entropy. To show that \( H^\lambda_{\alpha, z} < H \) for \( \alpha < 1 \) and \( z \geq 1 \), we can use the monotonicity in \( \lambda \) and the monotonicity in \( \alpha \) shown earlier. For \( \alpha < 1 \) and \( z \leq 1 \) instead we can use the monotonicity in $\lambda$ and the monotonicity in $\alpha$ of the sandwiched conditional entropy that follows from the monotonicity of the underlying divergence (see e.g.~\cite{lennert13_renyi}). Similarly, to prove \( H < H^\lambda_{\beta, z} \) for \( \beta > 1 \), we can apply duality and follow the same steps just described for $\alpha<1$.
\end{proof}

\subsection{Monotonicity in $z$}

The monotonicity with respect to $z$ follows from the Araki-Lieb-Thirring inequality, similar to the method used to establish the same monotonicity for the $\alpha$-$z$ R\'enyi relative entropies in~\cite[Proposition 6]{lin2015investigating}. It is worth noting that this inequality can also be derived using complex interpolation theory (see, e.g., \cite{sutter16}).

\begin{proposition}
	Let $(\alpha,z,\lambda) \in \D$ and $\rho_{AB}$ be a quantum state. The function $z \mapsto Q_{\alpha,z}^\lambda(A|B)_{\rho}$ is monotonically decreasing. Hence, the function $z \mapsto H_{\alpha,z}^\lambda(A|B)_{\rho}$ is monotonically decreasing for $\alpha<1$ and monotonically increasing for $\alpha>1$.
\end{proposition}
\begin{proof}
	We first demonstrate that for $z' \leq z$, the following inequality holds:
	\begin{equation}
		\textup{Tr}\left(\rho_{A^n B^n}^{\frac{\alpha}{2z}} \Big(\rho_{B^n}^{\frac{1-\lambda}{2}}\omega_{B^n}^{\lambda} \rho_{B^n}^{\frac{1-\lambda}{2}}\Big)^{\frac{1-\alpha}{z}}\rho_{A^n B^n}^{\frac{\alpha}{2z}}\right)^z \leq  \textup{Tr}\left(\rho_{A^n B^n}^{\frac{\alpha}{2z'}} \Big(\rho_{B^n}^{\frac{1-\lambda}{2}}\omega_{B^n}^{\lambda} \rho_{B^n}^{\frac{1-\lambda}{2}}\Big)^{\frac{1-\alpha}{z'}}\rho_{A^n B^n}^{\frac{\alpha}{2z'}}\right)^{z'}.
	\end{equation}
	We apply the Araki-Lieb-Thirring (ALT) inequality, which states that for $0 \leq r \leq 1$, $q \geq 0$, and positive semidefinite matrices $A$ and $B$, the following holds:
	\begin{equation}
		\textup{Tr}\left(A^{\frac{r}{2}}B^{r}A^{\frac{r}{2}}\right)^{\frac{q}{r}} \leq \textup{Tr}\left(A^{\frac{1}{2}}B A^{\frac{1}{2}}\right)^{q}.
	\end{equation}
	Setting $q = z'$, $r = \frac{z'}{z}$, $A = \rho_{A^n B^n}^{\frac{\alpha}{z'}}$, and $B = \Big(\rho_{B^n}^{\frac{1-\lambda}{2}}\omega_{B^n}^{\lambda} \rho_{B^n}^{\frac{1-\lambda}{2}}\Big)^{\frac{1-\alpha}{z'}}$, we multiply both sides by $\frac{1}{n(1-\alpha)}$. Finally, taking the limit as $n \to \infty$ on both sides, together with \eqref{H universal commuting}, yields the desired result.
\end{proof}

\subsection{Monotonicity in $\lambda$}

In this subsection, we aim to establish the monotonicity in $\lambda$.
\begin{proposition}\label{mono.lambda}
	Let $(\alpha,z,\lambda) \in \D$ and $\rho_{AB}$ be a quantum state. The function $\lambda \rightarrow H_{\alpha,z}^\lambda(A|B)_{\rho}$ is monotonically increasing.
\end{proposition}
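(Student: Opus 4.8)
The plan is to differentiate $H^\lambda_{\alpha,z}(A|B)_\rho$ in $\lambda$ using the commuting asymptotic form of Lemma~\ref{H with universal any z}(i) and to show that, up to a vanishing correction, the derivative is a relative-entropy-type quantity and hence nonnegative. Fix $(\alpha,z)$ and let $\lambda$ range over the interval $\{\lambda:(\alpha,z,\lambda)\in\D\}$, which ends at $1$, so the whole segment between two admissible values is admissible. By \eqref{H universal commuting} we have $H^\lambda_{\alpha,z}(A|B)_\rho=\lim_{n\to\infty}f_n(\lambda)$ with
\begin{equation}
f_n(\lambda):=\frac{1}{n(1-\alpha)}\log\Trm\Big(\rho_{A^nB^n}^{\frac{\alpha}{2z}}\big(\rho_{B^n}^{1-\lambda}\omega_{B^n}^{\lambda}\big)^{\frac{1-\alpha}{z}}\rho_{A^nB^n}^{\frac{\alpha}{2z}}\Big)^{z},
\end{equation}
where we used that $\rho_{B^n}=\rho_B^{\otimes n}$ and $\omega_{B^n}$ commute to collapse $\rho_{B^n}^{\frac{1-\lambda}{2}}\omega_{B^n}^{\lambda}\rho_{B^n}^{\frac{1-\lambda}{2}}=\rho_{B^n}^{1-\lambda}\omega_{B^n}^{\lambda}$. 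Since monotonicity of the limit follows once each $f_n'$ is controlled, it suffices to estimate $f_n'(\lambda)$ and let $n\to\infty$.

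Write $R=\rho_{A^nB^n}$, $W(\lambda)=\rho_{B^n}^{1-\lambda}\omega_{B^n}^{\lambda}$, $p=\tfrac{\alpha}{z}$, $q=\tfrac{1-\alpha}{z}$, and $L:=\log\omega_{B^n}-\log\rho_{B^n}$, an operator on $B^n$ commuting with $W$, so that $\tfrac{d}{d\lambda}W^q=q\,W^qL$. Setting $Z:=R^{p/2}(I_{A^n}\otimes W^{q/2})$ and $Y:=ZZ^\dagger$, the integrand is $\Trm\,Y^z$, and
\begin{equation}
\frac{d}{d\lambda}\Trm\,Y^{z}=zq\,\Trm\big[(ZZ^\dagger)^{z-1}Z(I_{A^n}\otimes L)Z^\dagger\big].
\end{equation}
Applying $h(ZZ^\dagger)Z=Z\,h(Z^\dagger Z)$ with $h(x)=x^{z-1}$, together with cyclicity, this collapses to $zq\,\Trm\big[K^z(I_{A^n}\otimes L)\big]$, where $K:=Z^\dagger Z=(I_{A^n}\otimes W^{q/2})R^{p}(I_{A^n}\otimes W^{q/2})\ge0$. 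Putting $\mu_B:=\Tr_{A^n}(K^z)\ge0$ and noting $\Trm\,\mu_B=\Trm\,K^z=\Trm\,Y^z$, we obtain the compact formula
\begin{equation}
f_n'(\lambda)=\frac{1}{n}\,\frac{zq}{1-\alpha}\,\frac{\Trm[\mu_B L]}{\Trm\,\mu_B}=\frac{1}{n}\,\Trm\big[\nu_B\,(\log\omega_{B^n}-\log\rho_{B^n})\big],
\end{equation}
where $\nu_B:=\mu_B/\Trm\,\mu_B$ is a density operator and we used the crucial cancellation $zq=z\cdot\tfrac{1-\alpha}{z}=1-\alpha$, which makes the sign of $f_n'$ independent of whether $\alpha\lessgtr1$.

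It remains to sign $\Trm[\nu_B(\log\omega_{B^n}-\log\rho_{B^n})]$, and here the defining property of the universal state enters. By Lemma~\ref{universal state} one has $\rho_B^{\otimes n}\le g_{n,d}\,\omega_{B^n}$ with $d=d_B$ and $g_{n,d}$ polynomial in $n$, so operator monotonicity of the logarithm gives $\log\rho_{B^n}\le(\log g_{n,d})I+\log\omega_{B^n}$, i.e. $\log\omega_{B^n}-\log\rho_{B^n}\ge-(\log g_{n,d})I$. Since $\nu_B$ is a state, $f_n'(\lambda)\ge-\tfrac{1}{n}\log g_{n,d}$, and integrating over $[\lambda_1,\lambda_2]$ yields $f_n(\lambda_2)-f_n(\lambda_1)\ge-(\lambda_2-\lambda_1)\tfrac{1}{n}\log g_{n,d}$. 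Letting $n\to\infty$ and using $\tfrac{1}{n}\log g_{n,d}\to0$ establishes $H^{\lambda_2}_{\alpha,z}(A|B)_\rho\ge H^{\lambda_1}_{\alpha,z}(A|B)_\rho$ for $\lambda_2\ge\lambda_1$; for non-invertible $\rho_{AB}$ one argues first for a depolarized state and passes to the limit via Proposition~\ref{Continuity}. The conceptual obstacle is precisely the sign of the derivative: the naive hope that $\lambda\mapsto\rho_{B^n}^{1-\lambda}\omega_{B^n}^{\lambda}$ is operator monotone (which, via antimonotonicity of $D_{\alpha,z}$ in its second argument, would give the claim immediately) is false, since its derivative $WL$ is indefinite. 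The resolution is the algebraic collapse to $zq\,\Trm[K^z(I\otimes L)]$, which isolates the single operator $L$ against a positive operator and reduces everything to the one-sided domination of $\rho_{B^n}$ by the universal state, the differentiation-under-the-trace and rank issues being routine by comparison.
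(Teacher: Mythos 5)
Your proof is correct, and it takes a genuinely different route from the paper's. The paper works at fixed finite dimension: it first derives a variational formula for $H^\lambda_{\alpha,z}$ as an optimization over states $\tau_{AB}$ (Lemma~\ref{variational form with norms}, which itself rests on Sion's minimax theorem and Lemma~\ref{optimization norm}), then proves a standalone calculus lemma (Lemma~\ref{monotonicity}) showing that $x\mapsto\frac1x\log\Trm\big(X\rho^{f(x)}X\big)^{g(x)}$ is monotone whenever $xg'(x)=g(x)$ and $xf'(x)g(x)=-1$, with positivity of the derivative coming from the positivity of relative entropy (Klein's inequality); reparametrizing $x=\hat z$ and observing that optimization over $\tau_{AB}$ preserves monotonicity completes the argument. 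You instead differentiate the $n$-th approximant in the universal-state representation of Proposition~\ref{H with universal any z}(i), exploit commutativity on $B^n$ to collapse the derivative to $\frac1n\Trm\big[\nu_B(\log\omega_{B^n}-\log\rho_{B^n})\big]$ --- your cancellation $zq=1-\alpha$ plays the same sign-normalizing role as the positive factor $\hat z^2/z$ in the paper's chain-rule step, making both proofs uniform in $\alpha\lessgtr1$ --- and then obtain the lower bound $-\frac1n\log g_{n,d}$ not from relative-entropy positivity but from the one-sided domination $\rho_B^{\otimes n}\le g_{n,d}\,\omega_{B^n}$ of Lemma~\ref{universal state} combined with operator monotonicity of the logarithm; integrating in $\lambda$ and letting $n\to\infty$ kills the error. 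What each approach buys: the paper's proof needs no regularization and yields the stronger, independently useful fact that the integrand of the variational formula is already monotone in $\lambda$ for every fixed $\tau_{AB}$; yours bypasses the minimax-based variational form entirely, at the cost of invoking the asymptotic representation (which is itself built on additivity) and the depolarization-plus-continuity step via Proposition~\ref{Continuity} for rank-deficient states. Your technical steps --- differentiation under the trace for non-integer powers of a smooth positive family, the identity $h(ZZ^\dagger)Z=Zh(Z^\dagger Z)$, the cyclicity collapse to $\Trm[K^z(I\otimes L)]$, and the fact that the derivative bound is uniform in $\lambda$ so that pointwise convergence of $f_n$ at the two endpoints suffices --- all check out for full-rank states, so the argument is sound.
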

To prove this, we avoid using complex interpolation theory. Instead, we directly prove the positivity of the derivative. We first derive an alternative variational form for the conditional entropy and show that the term within the optimization is monotone in $\lambda$. Since the optimization on both sides of the inequality preserves the order, this leads directly to the desired result.
\begin{lemma}
	\label{variational form with norms}
	Let $(\alpha, z, \lambda) \in \D$ and $\rho_{AB}$ be a quantum state. Then, we have
	\begin{equation}
		H_{\alpha,z}^\lambda(A|B)_{\rho} = \opt_{\tau_{AB}} \frac{1}{\hat{z}(1-\alpha)} \log{\textup{Tr}\left( \rho_B^{\frac{(1-\lambda)(1-\alpha)}{2z}} \Tr_A\Big(\rho_{AB}^{\frac{\alpha}{2z}} \tau_{AB}^{1-\frac{1}{z}} \rho_{AB}^{\frac{\alpha}{2z}}\Big) \rho_B^{\frac{(1-\lambda)(1-\alpha)}{2z}}\right)^{\hat{z}}}.
	\end{equation}
 Here, $\hat{z} = z/(z+\lambda(\alpha-1))$ is the dual parameter defined in Remark~\ref{explicit computation} and the optimization is performed over all bipartite quantum states $\tau_{AB}$.
\end{lemma}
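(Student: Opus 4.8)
The plan is to resolve both optimizations at once—the one over $\sigma_B$ in the definition of $H^\lambda_{\alpha,z}$ and the one hidden inside the Schatten $z$-norm—by invoking the variational characterization of the Schatten norm (Lemma~\ref{optimization norm}) twice, first with index $z$ and then with the dual index $\hat z$, exchanging their order with Sion's minimax theorem in between. Writing $\beta:=\frac{\lambda(1-\alpha)}{z}$ and $c:=\frac{(1-\lambda)(1-\alpha)}{2z}$, I first note that unfolding Definition~\ref{new entropy} gives
\[
H^\lambda_{\alpha,z}(A|B)_\rho=\opt_{\sigma_B}\frac{1}{1-\alpha}\log\Tr\big(\rho_{AB}^{\frac{\alpha}{2z}}(I_A\otimes\rho_B^{c}\sigma_B^{\beta}\rho_B^{c})\rho_{AB}^{\frac{\alpha}{2z}}\big)^{z},
\]
and that the definition of $\hat z$ in Remark~\ref{explicit computation} is precisely the identity $\tfrac1{\hat z}=1-\beta$.

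First I would linearize the outer $z$-th power. Applying the variational form of the Schatten norm with index $z$ (a supremum over states $\tau_{AB}$ for $z\ge1$ and an infimum for $z\le1$; recall $z\ge\tfrac12$ on $\D$), one obtains $\Tr(A^z)=\big(\opt_{\tau_{AB}}\Tr(A\,\tau_{AB}^{1-1/z})\big)^{z}$ for $A=\rho_{AB}^{\frac{\alpha}{2z}}(I_A\otimes\rho_B^c\sigma_B^\beta\rho_B^c)\rho_{AB}^{\frac{\alpha}{2z}}$. By cyclicity of the trace and partial trace over $A$, the inner functional rewrites as $\Tr(A\,\tau_{AB}^{1-1/z})=\Tr\big(\sigma_B^{\beta}\,P_B(\tau_{AB})\big)$, where $P_B(\tau):=\rho_B^{c}\,\Tr_A\big(\rho_{AB}^{\frac{\alpha}{2z}}\tau^{1-1/z}\rho_{AB}^{\frac{\alpha}{2z}}\big)\rho_B^{c}$ is exactly the operator inside the asserted trace. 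The entropy now reads $\opt_{\sigma_B}\tfrac{z}{1-\alpha}\log\,\opt_{\tau_{AB}}\Tr\big(\sigma_B^{\beta}P_B(\tau_{AB})\big)$, matching Proposition~\ref{H with universal any z}(ii) with $\tau_{AB}$ standing in for the universal state.

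Next I would interchange the $\sigma_B$- and $\tau_{AB}$-optimizations and then resolve the one over $\sigma_B$. The map $\sigma_B\mapsto\Tr(\sigma_B^{\beta}P_B)$ is concave for $\beta\in[0,1]$ and convex for $\beta\in[-1,0]$, while $\tau_{AB}\mapsto\Tr(A\,\tau_{AB}^{1-1/z})$ is concave for $z\ge1$ and convex for $\tfrac12\le z\le1$, by operator concavity/convexity of the power functions (Lemma~\ref{primitive concavity/convexity}); the constraints defining $\D$ keep all these exponents in $[-1,1]$. Since the state sets are convex and compact, two optimizations running in opposite directions may be swapped by Sion's theorem~\cite{sion1958general}, whereas equal directions commute trivially. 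The $\sigma_B$-optimization of the linear functional $\Tr(\sigma_B^{\beta}P_B)$ is once more the Schatten-norm variational form, now with index $\hat z=(1-\beta)^{-1}$, giving $\opt_{\sigma_B}\Tr(\sigma_B^{\beta}P_B)=\big(\Tr(P_B^{\hat z})\big)^{1/\hat z}$. Substituting and collecting the factor $z$ from the first norm with the factor $\hat z^{-1}$ from the second yields
\[
H^\lambda_{\alpha,z}(A|B)_\rho=\opt_{\tau_{AB}}\frac{z}{\hat z(1-\alpha)}\log\Tr\big(P_B(\tau_{AB})\big)^{\hat z},
\]
which is the asserted variational identity.

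The main obstacle is the bookkeeping of optimization directions. The nominal $\opt_{\sigma_B}$ in the definition (supremum for $\lambda\ge0$, infimum for $\lambda<0$) is reversed by the sign of the prefactor $\tfrac{1}{1-\alpha}$ when $\alpha>1$, so one must check that the \emph{effective} direction of the $\sigma_B$-optimization always coincides with the direction mandated by the sign of $\beta$ in the (reverse) Hölder formula—supremum when $0\le\beta\le1$ (i.e.\ $\hat z\ge1$) and infimum when $-1\le\beta\le0$ (i.e.\ $\hat z\le1$)—and, simultaneously, that this is compatible with the $z$-norm direction required for the minimax step. This forces a case split of $\D$ according to the signs of $1-\alpha$, $z-1$, and $\lambda$; in each cell the constraints of $\D$ guarantee that $\beta$ lands in the correct subinterval of $[-1,1]$, so the two norm variational formulas and Sion's theorem all apply with matching signs and the argument closes uniformly.
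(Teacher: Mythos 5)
Your proposal is correct and follows essentially the same route as the paper's own (very terse) proof: one application of Lemma~\ref{optimization norm} to linearize the outer $z$-power and introduce the optimization over $\tau_{AB}$, an exchange of the two optimizations via Sion's minimax theorem, and a second application of Lemma~\ref{optimization norm} to resolve the $\sigma_B$-optimization at the dual index $\hat z$; your convexity checks and direction bookkeeping simply make explicit what the paper leaves implicit. One point deserves attention: the prefactor you derive, $\tfrac{z}{\hat z(1-\alpha)}$, does \emph{not} literally match the prefactor $\tfrac{1}{\hat z(1-\alpha)}$ in the lemma as stated, so you should not assert that your final display ``is the asserted variational identity'' without comment. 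In fact your constant is the correct one and the stated lemma carries a typographical factor of $z$: at $\lambda=0$ one has $\hat z=1$, and the stated right-hand side evaluates to
\begin{equation*}
\opt_{\tau_{AB}}\frac{1}{1-\alpha}\log\Tr\left(\rho_{AB}^{\frac{\alpha}{2z}}\Big(I_A\otimes\rho_B^{\frac{1-\alpha}{z}}\Big)\rho_{AB}^{\frac{\alpha}{2z}}\,\tau_{AB}^{1-\frac{1}{z}}\right)
=\frac{1}{z(1-\alpha)}\log\Tr\left(\rho_{AB}^{\frac{\alpha}{2z}}\Big(I_A\otimes\rho_B^{\frac{1-\alpha}{z}}\Big)\rho_{AB}^{\frac{\alpha}{2z}}\right)^{z}
=\tfrac{1}{z}H^{\downarrow}_{\alpha,z}(A|B)_\rho ,
\end{equation*}
which fails to equal $H^{0}_{\alpha,z}=H^{\downarrow}_{\alpha,z}$ whenever $z\neq1$, whereas your factor restores equality (the two agree in the $z=1$ Petz case, which is why the paper's Corollary~\ref{closed-form Petz} and its later uses of this lemma for monotonicity in $\lambda$ are unaffected, a positive constant being harmless there).
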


\begin{proof}
	Using the variational form of the norm in Lemma~\ref{optimization norm}, we obtain
	\begin{equation}
		H_{\alpha,z}^\lambda(A|B)_{\rho} = - \opt_{\sigma_B} \opt_{\tau_{AB}} \frac{z}{\alpha-1} \log{\Trm\left(\rho_{AB}^{\frac{\alpha}{2z}} I_A \otimes \Big(\rho_B^{\frac{(1-\lambda)(1-\alpha)}{2z}} \sigma_B^{\frac{\lambda(1-\alpha)}{z}} \rho_B^{\frac{(1-\lambda)(1-\alpha)}{2z}}\Big)\rho_{AB}^{\frac{\alpha}{2z}} \tau_{AB}^{1-\frac{1}{z}}\right)}.
	\end{equation}
	Within the region $\D$, we can exchange the supremum and infimum whenever necessary by applying Sion's minimax theorem. The optimization problem over $\sigma_B$ can then be solved using Lemma~\ref{optimization norm}, which leads to the desired result.
\end{proof}

The following lemma demonstrates that the term within the optimization in Lemma~\ref{variational form with norms} is monotone in $\lambda$. However, our result is more general and may be of independent interest.
\begin{lemma}
	\label{monotonicity}
	Let $\rho$ be a quantum state, $X \geq 0$, $x \neq 0$, and let $f(x), g(x)$ be functions such that $x g'(x) = g(x)$ and $x f'(x) g(x) = -1$. Then, the function
	\begin{equation}
		x \mapsto \frac{1}{x} \log{\textup{Tr}\left(X \rho^{f(x)} X\right)^{g(x)}}
	\end{equation}
	is monotone.
\end{lemma}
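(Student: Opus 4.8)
The plan is to reduce the claimed monotonicity to the monotonicity of a single Schatten-norm functional, which then follows from one application of Hölder's inequality. First I would extract the only consequence of the two differential constraints that is actually needed. Since $xg'=g$, differentiating gives $xg''=0$, so $g$ is linear and $g'$ is a nonzero constant with $g(x)/x=g'$. Using $g'=g/x$ together with $f'=-1/(xg)$ (the second constraint), I compute
\[
\frac{d}{dx}\Big(\frac{f(x)}{2}-\frac{1}{2g(x)}\Big)=\frac{f'}{2}+\frac{g'}{2g^2}=-\frac{1}{2xg}+\frac{1}{2xg}=0,
\]
so $\tfrac12 f(x)=\tfrac{1}{2g(x)}+\tfrac{B}{2}$ for a constant $B$. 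This is the single structural fact the ODEs encode, and it is what collapses the two parameters onto one curve.

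Next I would rewrite the functional as a Schatten norm. Setting $p:=2g(x)$ and $G:=X\rho^{B/2}$ (a fixed matrix), and using $X\ge0$ together with $\|A\|_p^p=\Trm(AA^\dagger)^{p/2}$ applied to $A=X\rho^{f/2}$ (so that $AA^\dagger=X\rho^{f/2}\rho^{f/2}X=X\rho^fX\ge0$), I obtain $\Trm(X\rho^fX)^g=\|X\rho^{f/2}\|_{2g}^{2g}$, and hence
\[
\frac1x\log\Trm\big(X\rho^{f(x)}X\big)^{g(x)}=\frac{2g(x)}{x}\log\big\|X\rho^{f(x)/2}\big\|_{2g(x)}=2g'\,\log\big\|G\,\rho^{1/p}\big\|_{p},
\]
where I used $g/x=g'$ and $f/2=1/p+B/2$. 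Thus, up to the nonzero constant factor $2g'$, the statement is equivalent to monotonicity of the map $p\mapsto\|G\rho^{1/p}\|_p$.

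The heart of the argument is this Schatten-norm monotonicity, and I would prove it directly by Hölder. For $p\le q$ write $\rho^{1/p}=\rho^{1/q}\rho^{\delta}$ with $\delta=1/p-1/q\ge0$, so $1/p=1/q+\delta$, and apply Hölder's inequality for Schatten norms,
\[
\big\|G\rho^{1/p}\big\|_p=\big\|(G\rho^{1/q})\,\rho^{\delta}\big\|_p\le \big\|G\rho^{1/q}\big\|_q\,\big\|\rho^{\delta}\big\|_{1/\delta}.
\]
Since $\rho$ is a state, $\|\rho^{\delta}\|_{1/\delta}=(\Trm\rho)^{\delta}=1$, which gives $\|G\rho^{1/p}\|_p\le\|G\rho^{1/q}\|_q$, i.e. $p\mapsto\|G\rho^{1/p}\|_p$ is nondecreasing. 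Chaining everything, with $\Phi(x):=\tfrac1x\log\Trm(X\rho^fX)^g=2g'\log\|G\rho^{1/p(x)}\|_{p(x)}$ and $p(x)=2g'x$ linear, I get $\Phi'(x)=2g'\big(\log\|G\rho^{1/p}\|_p\big)'p'(x)=4(g')^2\big(\log\|G\rho^{1/p}\|_p\big)'\ge0$, whose sign is independent of that of $g'$; hence $\Phi$ is monotone on the set where $p>0$.

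I do not expect a genuine obstacle here: the one conceptual step is recognizing that the two ODE constraints exist precisely to reduce the two-parameter object $\Trm(X\rho^fX)^g$ to the one-parameter curve $\|G\rho^{1/p}\|_p$, along which Hölder yields monotonicity ``for free.'' The only points needing care are bookkeeping: the well-definedness of $(X\rho^fX)^g$ (guaranteed by $X\ge0$), and the validity range of the Schatten--Hölder inequality, which requires the exponents $2g=2\hat z$ and $1/\delta$ to be at least $1$. This is exactly where I would invoke $z\ge\tfrac12$ on $\D$ (as used already for the monotonicity in $\alpha$), so that all exponents stay in the admissible range throughout the region of interest.
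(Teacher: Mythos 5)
Your proof is correct (on the domain where $g>0$) and takes a genuinely different route from the paper's. The paper proves the lemma by brute-force differentiation: it computes the derivative of $\tfrac1x\log\Trm\left(X\rho^{f(x)}X\right)^{g(x)}$, substitutes the two ODE constraints, uses the identity $h(Y^\dagger Y)Y^\dagger=Y^\dagger h(YY^\dagger)$ to move the $\log\rho$ term onto a positive operator $\tilde A^{g}$, and concludes that the derivative is nonnegative from the positivity of relative entropy (Klein's inequality) applied to $\tilde A^{g}/\Trm(\tilde A^{g})$ versus $\rho$. You instead integrate the ODEs first — observing that they force $g(x)=g'x$ with $g'$ a nonzero constant and $f=1/g+B$ — which collapses the two-parameter functional onto the one-parameter Schatten curve $2g'\log\big\|G\rho^{1/p}\big\|_p$ with $p=2g(x)$ and $G=X\rho^{B/2}$, and then you get monotonicity of $p\mapsto\|G\rho^{1/p}\|_p$ from a single application of H\"older with the normalization $\Trm\rho=1$; this is the same mechanism underlying the known monotonicity in $\alpha$ of the sandwiched R\'enyi divergence. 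Your route is more elementary and global (no trace-derivative calculus, no Klein inequality) and makes transparent what the two differential hypotheses actually encode, which the paper's computation leaves implicit. Two remarks. First, your caution about H\"older requiring exponents at least $1$ is unnecessary: the tracial H\"older inequality $\|AB\|_r\le\|A\|_s\|B\|_t$, $1/r=1/s+1/t$, holds for all positive (quasi-norm) exponents via Horn's log-majorization of singular values plus scalar H\"older, so no appeal to $z\ge\tfrac12$ or to the region $\D$ — which is not a hypothesis of this lemma — is needed. Second, your argument genuinely requires $p=2g>0$, whereas the lemma as stated carries no sign restriction on $g$ and the paper's relative-entropy proof works for $g<0$ as well; this loss of generality is immaterial in context, since every invocation of the lemma in the paper has $g(x)=x=\hat z>0$, but it is worth flagging if the lemma is to be quoted verbatim.
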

\begin{proof}
	We follow a similar approach to the one used to prove~\cite[Lemma 25]{rubboli2023mixed}. We denote $A := X \rho^{f(x)} X$.  The derivative with respect to $x$ is 
	\begin{align}
		& -\frac{1}{x^2} \log{\Trm\left(A^{g(x)}\right)} + \frac{1}{x \Trm\left(A^{g(x)}\right)} \Trm\left(A^{g(x)} (\log{A}) g'(x) + \big(g(x) f'(x)\big) X \rho^{f(x)} (\log{\rho}) X A^{g(x)-1} \right) \\
		& \quad = \frac{1}{x^2 \Trm\left(A^{g(x)}\right)} \Big(\Trm\left(A^{g(x)} \log{A^{x g'(x)}}\right) - \Trm\left(A^{g(x)}\right) \log{\Trm\left(A^{g(x)}\right)} \\
		& \qquad\qquad \qquad\qquad\qquad + \big(g(x) f'(x) x\big) \Trm\left(X \rho^{f(x)} (\log{\rho}) X A^{g(x)-1} \right) \Big).
	\end{align}
	
	We use the property $f(Y^\dagger Y)Y^\dagger = Y^\dagger f(YY^\dagger)$, and set $Y = X \rho^\frac{f(x)}{2}$. We can then rewrite
	\begin{equation}
		\Trm\left(X \rho^{f(x)} (\log{\rho}) X A^{g(x)-1}\right) = \Trm\left(\log{\rho}) \tilde{A}^{g(x)}\right),
	\end{equation}
	where we define $\tilde{A} := \rho^{\frac{f(x)}{2}} X \rho^{\frac{f(x)}{2}}$. Note that $A$ and $\tilde{A}$ share the same eigenvalues. Therefore, to prove that the function is monotonically increasing, using $g(x) f'(x) x = -1$, we need to show that
	\begin{equation}
		\Trm\left(\tilde{A}^{g(x)} \log{\tilde{A}^{g(x)}}\right) - \Trm\left(\tilde{A}^{g(x)}\right) \log{\Trm\left(\tilde{A}^{g(x)}\right)} - \Trm\left(\log{\rho}) \tilde{A}^{g(x)}\right) \geq 0 \,.
	\end{equation}
	This condition is a consequence of the positivity of the relative entropy. Indeed,
	\begin{align}
		\Trm\left(\tilde{A}^{g(x)} \log{\rho}\right) &= \Trm\left(\tilde{A}^{g(x)}\right) \Trm\left(\frac{\tilde{A}^{g(x)}}{\Trm\left(\tilde{A}^{g(x)}\right)} \log{\rho}\right) \\
		&\leq \Trm\left(\tilde{A}^{g(x)}\right) \Trm\left(\frac{\tilde{A}^{g(x)}}{\Trm\left(\tilde{A}^{g(x)}\right)} \log{\frac{\tilde{A}^{g(x)}}{\Trm\left(\tilde{A}^{g(x)}\right)}}\right) \\
		&= \Trm\left(\tilde{A}^{g(x)} \log{\tilde{A}^{g(x)}}\right) - \Trm\left(\tilde{A}^{g(x)}\right) \log{\Trm\left(\tilde{A}^{g(x)}\right)}.
	\end{align}
The proof is complete.
\end{proof}

Finally, we combine the above result with the variational form in Lemma~\ref{variational form with norms} to establish the desired monotonicity.
\begin{proof}[Proof of Proposition \ref{mono.lambda}]
	Note that since $\Trm(AA^\dagger)^s = \Trm(A^\dagger A)^s$, it is sufficient to derive the monotonicity for any $\tau_{AB}$ of the function 
	\begin{equation}
		\lambda \rightarrow F(\lambda) := \frac{1}{\hat{z}(1-\alpha)} \log{\textup{Tr}\left( X_B \rho_B^{ \frac{(1-\lambda)(1-\alpha)}{z}} X_B \right)^{\hat{z}}},
	\end{equation}
	where \( X_B := \left(\Tr_A\left(\rho_{AB}^{\frac{\alpha}{2z}} \tau_{AB}^{1-\frac{1}{z}} \rho_{AB}^{\frac{\alpha}{2z}}\right)\right)^{\frac{1}{2}} \). Indeed, since the optimization over \( \tau_{AB} \) preserves monotonicity, the variational form in Lemma~\ref{variational form with norms} implies the monotonicity of \( H_{\alpha,z}^\lambda(A|B) \).
	
	Let us set \( x = \hat{z} \). We can rewrite the function in the new coordinates as 
	\begin{equation}
		F(x) = \frac{1}{1-\alpha} \frac{1}{x} \log{\textup{Tr}\left(\left( X_B \rho_B^{f(x)} X_B \right)^{g(x)}\right)},
	\end{equation}
	where \( f(x) = \frac{1-\alpha}{z} - 1 + x^{-1} \) and \( g(x) = x \).
	
	To prove monotonicity, we show that the derivative with respect to \( \lambda \) is positive. By the chain rule for derivatives, we have
	\begin{align}
		\frac{\d}{\d \lambda} F(\lambda) &= \frac{\d}{\d x} F(x) \frac{\d x}{\d \lambda} \\
		&= \frac{\hat{z}^2(1-\alpha)}{z} \frac{\d}{\d x} F(x) \\
		&= \frac{\hat{z}^2}{z} \frac{\d}{\d x} \left(\frac{1}{x} \log{\textup{Tr}\left( X_B \rho_B^{f(x)} X_B \right)^{g(x)}}\right).
	\end{align}
	Since \( z,\hat{z}^2 \geq 0 \), it suffices to show that
	\begin{equation}
		\frac{\d}{\d x} \left(\frac{1}{x} \log{\textup{Tr}\left( X_B \rho_B^{f(x)} X_B \right)^{g(x)}}\right) \geq 0.
	\end{equation}
	This is a consequence of Lemma~\ref{monotonicity}.
\end{proof}

We conclude this section by highlighting additional monotonicity results. These monotonicities also hold along the lines \( z = 1 - \alpha \), \( z = \alpha - 1 \), and \( z = \alpha \). For the case \( z = 1 - \alpha \), the proof follows a similar approach to that of \cite[Lemma 25]{rubboli2023mixed} and is derived from Lemma~\ref{monotonicity} with \( x = 1 - \alpha \), \( f(x) = -1 + \frac{1}{x} \), and \( g(x) = x \), in conjunction with the asymptotic form given in \eqref{H universal commuting}. The case \( z = \alpha - 1 \) follows an analogous reasoning. Finally, the proof for \( z = \alpha \) employs the methods outlined in \cite{beigi13_sandwiched}.

\section{Classical information}
\label{sec:classical}

In this section, we derive an expression for partly classical registers and a closed-form expression for fully classical states. 

\begin{proposition}[Classical registers]
\label{classical conditioning}
Let $(\alpha,z,\lambda) \in \D$ and $\rho_{ABY}=\sum_y \rho(y)
\hat{\rho}_{AB}(y)\otimes \ketbra{y}{y}$. Then, the conditional entropy satisfies
\begin{equation}
H^\lambda_{\alpha,z}(A|BY)_{\rho}=\left(\frac{1}{1-\alpha}-\lambda\right)\log{\bigg(\sum \nolimits_{y}\rho(y)\exp \bigg(\frac{1-\alpha}{1-\lambda(1-\alpha)}H^\lambda_{\alpha,z}(A|B, Y=y)_{\rho}\bigg)\bigg)}.
\end{equation}
Here, we denoted $H^\lambda_{\alpha,z}(A|B, Y=y)_{\rho} = H^\lambda_{\alpha,z}(A|B)_{\hat{\rho}_{AB}(y)}$.
\end{proposition}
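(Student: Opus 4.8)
The plan is to prove the statement entirely at the level of the optimized trace functional $Q^\lambda_{\alpha,z}(A|B)_\rho=\opt_{\sigma_B}Q^\lambda_{\alpha,z}(\rho_{AB}|\sigma_B)$, and only pass to $H$ at the very end via the identity $H^\lambda_{\alpha,z}(A|B)_\rho=\frac{1}{1-\alpha}\log Q^\lambda_{\alpha,z}(A|B)_\rho$, which follows directly from \eqref{Q}--\eqref{Q2} once one checks that the supremum/infimum conventions in the two definitions match across $\D$. With this reduction, the claim becomes the direct-sum identity already quoted in Section~\ref{conditionally subunital}, namely
\[
\opt_{\sigma_{BY}}Q^\lambda_{\alpha,z}\Big(\sum\nolimits_y X^y_{AB}\otimes\ketbra{y}{y}_Y\Big|\sigma_{BY}\Big)=\Big(\sum\nolimits_y\big(\opt_{\sigma_B}Q^\lambda_{\alpha,z}(X^y_{AB}|\sigma_B)\big)^{\frac{1}{1-\lambda(1-\alpha)}}\Big)^{1-\lambda(1-\alpha)},
\]
applied to the sub-states $X^y_{AB}=\rho(y)\hat\rho_{AB}(y)$.

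\emph{First step: reduction to block-diagonal $\sigma_{BY}$.} Since $\rho_{ABY}$ and its $A$-marginal $\rho_{BY}$ are block-diagonal in $Y$, the factor $M:=\rho_{ABY}^{\alpha/2z}\,(I_A\otimes\rho_{BY}^{(1-\lambda)(1-\alpha)/2z})$ commutes with the dephasing unitaries $U_{\vec\theta}=I_{AB}\otimes\sum_y e^{i\theta_y}\ketbra{y}{y}$. Writing the objective as $\Trm\big(M\,(I_A\otimes\sigma_{BY}^{\lambda(1-\alpha)/z})\,M^\dagger\big)^z$, unitary invariance of the trace then shows it is invariant under $\sigma_{BY}\mapsto U_{\vec\theta}\sigma_{BY}U_{\vec\theta}^\dagger$. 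By construction $\opt$ is a supremum exactly when this objective is concave in the $C$-slot and an infimum when it is convex; averaging over $\vec\theta$ realizes the pinching onto the $Y$-block-diagonal subalgebra, so Jensen's inequality pushes the objective in the favourable direction and the optimum is attained at some $\sigma_{BY}=\sum_y\tilde\sigma^y_B\otimes\ketbra{y}{y}$ with $\tilde\sigma^y_B\ge0$, $\sum_y\Trm\tilde\sigma^y_B=1$.

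\emph{Second and third steps: factorization, optimization, and homogeneity.} For block-diagonal $\sigma_{BY}$ the operator $M(I_A\otimes\sigma_{BY}^{\lambda(1-\alpha)/z})M^\dagger$ is block-diagonal, so the outer $z$-power and trace split, yielding $\sum_y Q^\lambda_{\alpha,z}(X^y_{AB}|\tilde\sigma^y_B)$. Writing $\tilde\sigma^y_B=q_y\sigma^y_B$ with $\sigma^y_B$ a state and using that $Q$ scales as $q_y^{\lambda(1-\alpha)}$ in its second argument gives $\sum_y q_y^{\lambda(1-\alpha)}Q^\lambda_{\alpha,z}(X^y_{AB}|\sigma^y_B)$; since the coefficients are nonnegative, the optimization separates into a per-block optimization $v_y:=\opt_{\sigma_B}Q^\lambda_{\alpha,z}(X^y_{AB}|\sigma_B)$ and a one-line Lagrange/Hölder optimization of $\sum_y q_y^{\lambda(1-\alpha)}v_y$ over the probability simplex, whose stationary value is $\big(\sum_y v_y^{1/(1-\lambda(1-\alpha))}\big)^{1-\lambda(1-\alpha)}$ (the extremum type, max or min, coincides with the sign conventions of $\opt$ throughout $\D$). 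Finally, $Q^\lambda_{\alpha,z}$ is homogeneous of degree $1-\lambda(1-\alpha)$ in its first argument (tracking the powers of $\rho_{AB}$ and its marginal together), so $v_y=\rho(y)^{1-\lambda(1-\alpha)}Q^\lambda_{\alpha,z}(A|B)_{\hat\rho(y)}$; substituting $Q^\lambda_{\alpha,z}(A|B)_{\hat\rho(y)}=\exp\big((1-\alpha)H^\lambda_{\alpha,z}(A|B,Y=y)\big)$ and taking $\frac{1}{1-\alpha}\log(\cdot)$ produces the prefactor $\frac{1-\lambda(1-\alpha)}{1-\alpha}=\frac{1}{1-\alpha}-\lambda$ and the inner exponent $\frac{1-\alpha}{1-\lambda(1-\alpha)}$, which is exactly the asserted formula.

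The main obstacle is the first step: justifying that an optimal $\sigma_{BY}$ can be taken block-diagonal in $Y$. This rests on the concavity/convexity of the objective in the optimized slot—precisely the property that makes $\opt$ well defined via Theorem~\ref{log-concavity2}—combined with the dephasing invariance. The delicate point is to verify, region by region in $\D$, that the direction of concavity agrees with the direction of $\opt$ so that Jensen's inequality points the right way; the degenerate boundary $1-\lambda(1-\alpha)=0$ (where the prefactor $\frac{1}{1-\alpha}-\lambda$ vanishes) is handled separately by a continuity argument. Once block-diagonality is secured, the remaining factorization, Lagrange optimization, and homogeneity bookkeeping are routine.
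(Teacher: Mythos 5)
Your proof is correct, but it takes a genuinely different route from the paper's. The paper's own proof is essentially a one-liner that leans on the fixed-point characterization of optimizers (Theorem~\ref{fixed-point}): it \emph{guesses} the ansatz $\sigma_{BY}=\sum_y q_y\,\hat\sigma_B(y)\otimes\ketbra{y}{y}$ with weights $q_y\propto\rho(y)\,Q_{\alpha,z}^\lambda(A|B)_{\hat\rho_{AB}(y)}^{1/(1-\lambda(1-\alpha))}$, where $\hat\sigma_B(y)$ is the optimizer for $\hat\rho_{AB}(y)$, and verifies that this state satisfies the sufficient optimality condition \eqref{proportionality2}---exactly parallel to how additivity (Theorem~\ref{Additivity}) is proved. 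You instead \emph{derive} the optimizer rather than guess it: dephasing invariance plus Jensen (using the concavity/convexity of the objective in the optimized slot, which is the very property that fixes the direction of $\opt$) restricts the optimization to $Y$-block-diagonal $\sigma_{BY}$; block factorization, the scaling $Q(X|q\sigma)=q^{\lambda(1-\alpha)}Q(X|\sigma)$, a H\"older/Lagrange step over the simplex, and the degree-$(1-\lambda(1-\alpha))$ homogeneity of $Q$ in its first argument then produce the power-mean formula. What the paper's route buys is brevity (given the fixed-point machinery) and an explicit closed-form optimizer; what your route buys is independence from Theorem~\ref{fixed-point} and a derivation that explains where the exponent $1/(1-\lambda(1-\alpha))$ comes from, at the cost of the bookkeeping you already flag: the Jensen direction must match the $\opt$ direction, which indeed holds throughout $\D$ because $\lambda(1-\alpha)\le 1$ and $\lambda(1-\alpha)/z\ge -1$ there. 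One concrete remark on your degenerate case: $1-\lambda(1-\alpha)=0$ occurs in $\D$ only on the boundary of $\mathcal{D}_2$ at $z=\alpha$, $\lambda=1/(1-\alpha)$, where your H\"older step degenerates to a maximum (the $\ell^\infty$ limit of the power mean); note that the paper's ansatz involves the same exponent $1/(1-\lambda(1-\alpha))$ and degenerates there too, so both proofs require the same limiting interpretation at that point.
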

\begin{proof}
This follows by guessing as ansatz the state
\begin{equation}
\sigma_{BY} = \sum_y \frac{\rho(y) Q_{\alpha,z}^\lambda(A|B)^{\frac{1}{1-\lambda(1-\alpha)}}_{\hat{\rho}_{AB}(y)}}{\sum_x \rho(x) Q_{\alpha,z}^\lambda(A|B)^{\frac{1}{1-\lambda(1-\alpha)}}_{\hat{\rho}_{AB}(x)}} \; \hat{\sigma}_{B}(y) \otimes \ketbra{y}{y},
\end{equation}
and verify that it satisfies the sufficient conditions for the minimum derived in Lemma~\ref{fixed-point}. Here, $ \hat{\sigma}_{B}(y)$ is the optimizer of $\hat{\rho}_{AB}(y)$. 
\end{proof}

Let us consider a probability distribution $\rho(x,y)$. We define
\begin{align}
\label{Hayashi conditional entropy}
&H^{\mathrm{H}}_{\alpha}(X|Y)_{\rho}=\frac{1}{1-\alpha}\log{\Big(\sum\nolimits_{x,y} \rho(y) \rho(x|y)^\alpha\Big)} \,, \\
\label{Arimoto conditional entropy}
&H^{\mathrm{A}}_{\alpha}(X|Y)_{\rho}=\frac{\alpha}{1-\alpha}\log{\bigg(\sum\nolimits_y \rho(y)\Big(\sum\nolimits_x \rho(x|y)^\alpha\Big)^\frac{1}{\alpha}\bigg)} \,.
\end{align}
The first conditional entropy, $H^{\mathrm{H}}_{\alpha}$, was introduced by Hayashi and Skoric et al.~\cite{hayashi2011exponential,vskoric2011sharp}, while the second one $H^{\mathrm{A}}_{\alpha}$ was introduced by Arimoto~\cite{arimoto1977information}.

In the following, we derive a classical expression for the new conditional entropies. In the classical case, our quantity reduces to the one introduced by Hayashi and Tan in~\cite{hayashi2016equivocations}. For $\lambda=0$, we recover the quantity $H^{\mathrm{H}}_\alpha$ while for $\lambda=1$ we obtain $H^{\mathrm{A}}_\alpha$. Moreover, the new quantity interpolates (and extrapolates) between these two definitions. 

The fully classical expression for the new conditional entropy $H_{\alpha,z}^\lambda$ can be derived by considering the state $\rho_{x,y}=\sum_{xy}\rho(x,y)\ketbra{x}{x}\otimes \ketbra{y}{y}$ and, for example, using the solution for the Petz case $z=1$ in Corollary~\ref{closed-form Petz}. Indeed, in the classical case, the conditional entropy does not depend on $z$. Note that the closed form can also be extended outside the range for which the case $z=1$ satisfies the DPI. To this purpose, let us define $\beta=\alpha/(1-\lambda(1-\alpha))$. We obtain 
\begin{equation}
\label{classical closed-form}
H_{\alpha,\beta}(X|Y)_{\rho}=\frac{\alpha}{1-\alpha}\log{\bigg(\sum\nolimits_y \rho(y)\Big(\sum\nolimits_x \rho(x|y)^\alpha\Big)^\frac{\beta}{\alpha}\bigg)^\frac{1}{\beta}}.
\end{equation}
Hence, in the classical case, we obtain a weighted $\beta$ quasi-norm of the $\alpha$ quasi-norm of the conditional distribution.
From the result obtained in Theorem~\ref{DPItheorem} for the fully quantum case, we have that the classical conditional entropy satisfies the DPI in the region $-\infty \leq \lambda \leq 1$ for $0 <\alpha < 1$ and  $1-\alpha/(\alpha-1) \leq \lambda \leq 1$ for $\alpha>1$. In the coordinates $\alpha$ and $\beta$, we obtain
\begin{align}
\D_{\text{cl}} :&=\Big\{(\alpha,\beta)\in\mathbb{R}^2: 0 < \alpha < 1,  0 < \beta \leq 1 \Big\} 
\; \cup \; \Big\{(\alpha,\beta)\in\mathbb{R}^2:\alpha>1, \beta \geq 1\Big\}.
\end{align} 
Finally, by considering the state $\rho_{xyz}=\sum_{xyz}\rho(x,y,z)\ketbra{x}{x}\otimes \ketbra{y}{y} \otimes \ketbra{z}{z}$, Thereom~\ref{Final chain rule} yield a chain rule for classical states.  For a probability distribution $\rho(x,y,z)$, we obtain if $(\alpha-1)(\beta-1)(\gamma-1) > 0$
\begin{equation}
    H_{\alpha}^\lambda(XY|Z)_{\rho} \geq H_{\beta}^\mu(X|YZ)_{\rho} + H_{\gamma}^\lambda(Y|Z)_{\rho} \quad \text{when} \quad \frac{1}{1-\alpha}=\frac{1}{1-\beta}-\mu + \frac{1}{1-\gamma}
\end{equation}
that holds whenever there exist parameters \(z, w\), and \(v\) such that the corresponding triples lie within the DPI region. Moreover, the inequality is reversed when $(\alpha-1)(\beta-1)(\gamma-1) < 0$.

\subsection{Dimension bounds}
\begin{lemma}
\label{maximal and minimal}
    Let $(\alpha,z,\lambda) \in \D$ and $\rho_{AB}$ be a quantum state. Then, it holds that $H^{\downarrow}_{\min}(A|B)_{\rho} \leq H^\lambda_{\alpha,z}(A|B) _{\rho}\leq H^{\uparrow}_{\max}(A|B)_{\rho}$.
\end{lemma}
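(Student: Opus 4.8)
The plan is to establish the upper bound $H^\lambda_{\alpha,z}(A|B) \leq H^\uparrow_{\max}(A|B)$ directly for every $(\alpha,z,\lambda)\in\D$ and every $\rho_{AB}$, and then to deduce the lower bound from it by duality, exploiting that $H^\downarrow_{\min}$ and $H^\uparrow_{\max}$ are themselves a dual pair. This avoids the awkward geometry of trying to connect a generic parameter point to both extreme corners by monotone paths inside $\D$.

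\textbf{Upper bound.} First I would use the monotonicity in $\lambda$ from Proposition~\ref{mono.lambda}. Since $\lambda\leq 1$ throughout $\D$ and $\lambda\mapsto H^\lambda_{\alpha,z}(A|B)$ is non-decreasing, we get $H^\lambda_{\alpha,z}(A|B)\leq H^1_{\alpha,z}(A|B)=H^\uparrow_{\alpha,z}(A|B)=\sup_{\sigma_B}-D_{\alpha,z}(\rho_{AB}\|I_A\otimes\sigma_B)$. It then suffices to prove the pointwise divergence comparison
\[
\bar D_0(\rho_{AB}\|I_A\otimes\sigma_B)\leq D_{\alpha,z}(\rho_{AB}\|I_A\otimes\sigma_B)\qquad\text{for all }\sigma_B,
\]
since applying $-(\cdot)$ and $\sup_{\sigma_B}$ to both sides yields $H^\uparrow_{\alpha,z}(A|B)\leq\sup_{\sigma_B}-\bar D_0(\rho_{AB}\|I_A\otimes\sigma_B)=H^\uparrow_{\max}(A|B)$. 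To establish this comparison I would route through the sandwiched divergence $\widetilde D_\alpha=D_{\alpha,\alpha}$: the DPI constraints ($\delta_1$) and ($\delta_2$) force $z\geq\alpha$ when $\alpha<1$ and $z\leq\alpha$ when $\alpha>1$, and combined with the monotonicity of $z\mapsto D_{\alpha,z}$ (increasing for $\alpha<1$, decreasing for $\alpha>1$; see~\cite[Proposition 6]{lin2015investigating}) this gives $D_{\alpha,z}\geq D_{\alpha,\alpha}=\widetilde D_\alpha$ in both cases. Finally, the monotonicity of $\alpha\mapsto\widetilde D_\alpha$ together with the common $\alpha\to0^+$ limit $\widetilde D_\alpha\to\bar D_0$ yields $\widetilde D_\alpha\geq\bar D_0$, closing the chain.

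\textbf{Lower bound.} Here I would invoke duality. Fix a purification $\rho_{ABC}$ of $\rho_{AB}$. By Theorem~\ref{Duality}, $H^\lambda_{\alpha,z}(A|B)_\rho=-H^{\hat\lambda}_{\hat\alpha,\hat z}(A|C)_\rho$ with $(\hat\alpha,\hat z,\hat\lambda)\in\D$, the duality map being a bijection of $\D$ onto itself (Remark~\ref{explicit computation}). Moreover $H^\downarrow_{\min}$ and $H^\uparrow_{\max}$ form a dual pair, $H^\downarrow_{\min}(A|B)_\rho=-H^\uparrow_{\max}(A|C)_\rho$, which is the $\alpha\to\infty$ (equivalently $\hat\alpha\to0$) limiting case of the established relation~\eqref{relation 3}. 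Applying the already-proven upper bound to the dual parameters and the state $\rho_{AC}$ gives $H^{\hat\lambda}_{\hat\alpha,\hat z}(A|C)_\rho\leq H^\uparrow_{\max}(A|C)_\rho$, hence $H^\lambda_{\alpha,z}(A|B)_\rho=-H^{\hat\lambda}_{\hat\alpha,\hat z}(A|C)_\rho\geq-H^\uparrow_{\max}(A|C)_\rho=H^\downarrow_{\min}(A|B)_\rho$, as desired.

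\textbf{Main obstacle.} The delicate point is the pointwise comparison $\bar D_0\leq D_{\alpha,z}$: one cannot simply chain ``$D_{\alpha,z}$ increasing in $\alpha$'' with ``$D_{\alpha,z}$ monotone in $z$,'' because the $z$-monotonicity can point the wrong way relative to the Petz value $z=1$. The fix is to compare against $\widetilde D_\alpha=D_{\alpha,\alpha}$ and to use crucially that the DPI region places $z$ on the favourable side of $\alpha$. A secondary subtlety is that $\lambda$ may be negative, so the lower bound cannot be obtained by a naive reduction to $\lambda=0$; routing it through duality (which maps all of $\D$ to itself) sidesteps this cleanly. Support and rank technicalities are handled as usual, via the generalized-inverse conventions of Definition~\ref{trivariate} and a purification/continuity argument.
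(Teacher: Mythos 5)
Your overall skeleton is the same as the paper's: the upper bound via monotonicity in $\lambda$ (Proposition~\ref{mono.lambda}), reducing matters to $H^{\uparrow}_{\alpha,z}\leq H^{\uparrow}_{\max}$, and the lower bound by applying that upper bound at the dual parameters through Theorem~\ref{Duality} together with the dual pairing of $H^{\downarrow}_{\min}$ and $H^{\uparrow}_{\max}$; both of these steps are correct and match the paper. The difference is how the divergence comparison $D_{\alpha,z}\geq\bar D_0$ is justified: the paper simply invokes the minimality of the min-relative entropy among divergences satisfying data processing (\cite[Theorem 3]{gour2020optimal}), applied to $D_{\alpha,z}$, which satisfies DPI at the actual point $(\alpha,z)$. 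You instead try to prove the comparison by passing through the sandwiched divergence at $z=\alpha$, and this is where your argument breaks.

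The gap is your final step, which asserts that $\widetilde D_\alpha\to\bar D_0$ as $\alpha\to0^+$ and hence $\widetilde D_\alpha\geq\bar D_0$ by monotonicity in $\alpha$. This limit claim is false: Datta and Leditzky showed that the $\alpha\to0^+$ limit of the sandwiched R\'enyi divergence is in general \emph{strictly smaller} than $\bar D_0$. Concretely, take $\rho=|\psi\rangle\langle\psi|$ pure and $\sigma=q|e_1\rangle\langle e_1|+(1-q)|e_2\rangle\langle e_2|$ with $q>\tfrac12$ and $|\psi\rangle=\cos\theta\,|e_1\rangle+\sin\theta\,|e_2\rangle$, $\theta\in(0,\tfrac{\pi}{2})$. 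Then
\begin{equation*}
\widetilde D_\alpha(\rho\|\sigma)=\frac{\alpha}{\alpha-1}\log\Bigl(q^{\frac{1-\alpha}{\alpha}}\cos^2\theta+(1-q)^{\frac{1-\alpha}{\alpha}}\sin^2\theta\Bigr)\;\xrightarrow[\alpha\to0^+]{}\;-\log q,
\end{equation*}
while $\bar D_0(\rho\|\sigma)=-\log\bigl(q-(2q-1)\sin^2\theta\bigr)>-\log q$, so $\widetilde D_\alpha<\bar D_0$ for all sufficiently small $\alpha$. This is fatal precisely where your chain is needed: $\D$ contains points with $\alpha<\tfrac12$ (e.g.\ $(\alpha,z,\lambda)=(0.1,\,0.95,\,1)$ satisfies $(\delta_1)$), and for such $\alpha$ the intermediate point $(\alpha,\alpha)$ lies \emph{outside} the DPI region (since $\alpha<1-\alpha$), where the inequality $\widetilde D_\alpha\geq\bar D_0$ genuinely fails, as the example shows. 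For $\alpha\geq\tfrac12$ your route could be salvaged by anchoring at $\alpha=\tfrac12$ (where $\widetilde D_{1/2}=-\log F\geq\bar D_0$) or by appealing to minimality of $\bar D_0$ among DPI divergences applied to $\widetilde D_\alpha$; but then you are invoking exactly the Gour-type result you were trying to avoid, and the clean fix is to apply it directly to $D_{\alpha,z}$, as the paper does. Your first step ($D_{\alpha,z}\geq\widetilde D_\alpha$ from the $z$-monotonicity and the DPI constraints) and your duality argument for the lower bound are both sound.
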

\begin{proof}
    The upper bound is established as  
    \begin{align}
        H^\lambda_{\alpha,z}(A|B)_{\rho} \leq H^{\lambda=1}_{\alpha,z}(A|B)_{\rho} \leq H^{\uparrow}_{\max}(A|B)_{\rho},
    \end{align} 
where the first inequality follows from the monotonicity of \(H^\lambda_{\alpha,z}(A|B)_{\rho}\) with respect to \(\lambda\) (Proposition~\ref{mono.lambda}), and the second inequality arises from the property \(D_{\alpha,z} \geq \widebar{D}_{0}\)~\cite[Theorem 3]{gour2020optimal}.

The lower bound is obtained by applying the duality relations from Theorem~\ref{Duality} in conjunction with the previously derived upper bound.
\end{proof}
 The latter bounds imply the following result~\cite[Lemma 5.11]{tomamichel16_book}
\begin{corollary}\label{dimension bound}
Let $(\alpha,z,\lambda) \in \D$ and  $\rho_{AB}$ a state. Then, the following holds
\begin{equation}
-\log{\min\{\textup{rank}(\rho_A),\textup{rank}(\rho_B)\}} \leq H^\lambda_{\alpha,z}(A|B)_{\rho} \leq \log{\textup{rank}(\rho_A)} \,.
\end{equation}
Moreover, $H^\lambda_{\alpha,z}(A|B)_{\rho} \geq 0$ if $\rho_{AB}$ is separable.
\end{corollary}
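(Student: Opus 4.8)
The plan is to reduce the statement to the extremal members of our family via the sandwich bound of Lemma~\ref{maximal and minimal}, namely
\[
H^{\downarrow}_{\min}(A|B) \leq H^\lambda_{\alpha,z}(A|B) \leq H^{\uparrow}_{\max}(A|B),
\]
and then invoke the standard dimension estimates for $H^{\downarrow}_{\min}$ and $H^{\uparrow}_{\max}$, so that the whole corollary becomes a direct consequence of the already-established ordering together with textbook facts.

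For the upper bound I would combine $H^\lambda_{\alpha,z}(A|B) \leq H^{\uparrow}_{\max}(A|B)$ with the estimate $H^{\uparrow}_{\max}(A|B)_\rho \leq \log\textup{rank}(\rho_A)$. The latter follows by writing $H^{\uparrow}_{\max}(A|B)_\rho = \sup_{\sigma_B}\log\|\sqrt{\rho_{AB}}\,(I_A\otimes\sqrt{\sigma_B})\|_1^2$, replacing $I_A$ by the projector onto $\textup{supp}(\rho_A)$ (which is harmless since $\rho_{AB}$ is supported there), and applying the Cauchy--Schwarz inequality $\|XY\|_1\leq\|X\|_2\|Y\|_2$ to obtain $\|\sqrt{\rho_{AB}}\,(I_A\otimes\sqrt{\sigma_B})\|_1\leq\sqrt{\textup{rank}(\rho_A)}$. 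This is precisely the content of \cite[Lemma 5.11]{tomamichel16_book}.

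For the lower bound I would use $H^{\downarrow}_{\min}(A|B)\leq H^\lambda_{\alpha,z}(A|B)$ together with $H^{\downarrow}_{\min}(A|B)_\rho = -D_{\max}(\rho_{AB}\|I_A\otimes\rho_B)\geq -\log\min\{\textup{rank}(\rho_A),\textup{rank}(\rho_B)\}$. The estimate on $D_{\max}$ amounts to the operator inequality $\rho_{AB}\leq\min\{\textup{rank}(\rho_A),\textup{rank}(\rho_B)\}\,I_A\otimes\rho_B$, which is again standard \cite[Lemma 5.11]{tomamichel16_book} and tight on the maximally entangled state.

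Finally, for the separability statement I would once more invoke $H^\lambda_{\alpha,z}(A|B)\geq H^{\downarrow}_{\min}(A|B)$ and show $H^{\downarrow}_{\min}(A|B)\geq 0$ for separable $\rho_{AB}$. Writing $\rho_{AB}=\sum_i p_i\,\rho_A^i\otimes\rho_B^i$ and using $\rho_A^i\leq I_A$ for each $i$ gives $\rho_{AB}\leq\sum_i p_i\,I_A\otimes\rho_B^i = I_A\otimes\rho_B$, hence $D_{\max}(\rho_{AB}\|I_A\otimes\rho_B)\leq 0$ and therefore $H^{\downarrow}_{\min}(A|B)\geq 0$. Since the entire argument rests only on the established sandwich bound and elementary estimates for the extremal entropies, there is no genuine obstacle; the only point requiring care is matching the precise rank-versus-dimension form of the cited bounds, which is exactly why passing to the supports of $\rho_A$ and $\rho_B$, rather than to the full local spaces, is essential.
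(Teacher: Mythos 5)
Your overall route is exactly the paper's: the paper obtains this corollary in one line from the sandwich of Lemma~\ref{maximal and minimal} together with the dimension bounds of \cite[Lemma 5.11]{tomamichel16_book}, and your lower bound and separability arguments (via the operator inequalities $\rho_{AB}\leq\min\{\mathrm{rank}(\rho_A),\mathrm{rank}(\rho_B)\}\,I_A\otimes\rho_B$ and, for separable states, $\rho_{AB}\leq I_A\otimes\rho_B$) are precisely the standard facts being invoked there.

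There is, however, one concrete slip in your upper bound. In this paper $H^{\uparrow}_{\max}$ is \emph{not} the fidelity-based max-entropy: it is defined as $\sup_{\sigma_B}-\bar{D}_{0}(\rho_{AB}\|I_A\otimes\sigma_B)$ with $\bar{D}_{0}(\rho\|\sigma)=-\log\Trm\big(\Pi(\rho)\,\sigma\big)$, where $\Pi(\rho)$ is the projector onto the support of $\rho$. The fidelity-based quantity $\sup_{\sigma_B}\log\big\|\sqrt{\rho_{AB}}\,(I_A\otimes\sqrt{\sigma_B})\big\|_1^2$ is denoted $H_{\max}$ here, and it is the \emph{smaller} of the two, since $D_{1/2,1/2}\geq\bar{D}_{0}$ --- the same inequality \cite[Theorem 3]{gour2020optimal} that Lemma~\ref{maximal and minimal} relies on. Consequently, your chain $H^\lambda_{\alpha,z}\leq H^{\uparrow}_{\max}=\sup_{\sigma_B}\log\|\cdots\|_1^2\leq\log\mathrm{rank}(\rho_A)$ bounds the wrong (smaller) quantity in the middle step, so the upper bound is not closed as written. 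The repair is immediate and even simpler than Cauchy--Schwarz: since $\mathrm{supp}(\rho_{AB})\subseteq\mathrm{supp}(\rho_A)\otimes\mathrm{supp}(\rho_B)$, we have $\Pi(\rho_{AB})\leq\Pi(\rho_A)\otimes I_B$, hence $\Trm\big(\Pi(\rho_{AB})(I_A\otimes\sigma_B)\big)\leq\Trm\big(\Pi(\rho_A)\big)\Trm(\sigma_B)=\mathrm{rank}(\rho_A)$ for every state $\sigma_B$, which gives $H^{\uparrow}_{\max}(A|B)\leq\log\mathrm{rank}(\rho_A)$. With that substitution your argument is complete and coincides with the paper's.
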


\section{Limits}
\label{section limits}
The first limit we examine leads to the von Neumann conditional entropy. We show that for any path within the DPI region, the new conditional entropy converges to the von Neumann entropy as $\alpha \rightarrow 1$.

\begin{proposition}
\label{limit alpha to 1}
Let $\rho_{AB}$ be a quantum state. Let $J$ be an open interval of positive real numbers containing $1$, and let $f,g: J \rightarrow \mathbb{R}$ be two functions continuous at $\alpha=1$, such that $(\alpha,g(\alpha),f(\alpha))$ lies in $\D$ for all $\alpha \in J$. Then we have
 \begin{equation}
     \lim_{\alpha \rightarrow 1} H_{\alpha,g(\alpha)}^{f(\alpha)}(A|B)_\rho = H(A|B)_\rho.
 \end{equation}
\end{proposition}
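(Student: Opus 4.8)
The plan is to pull the limit inside the optimization over $\sigma_B$, treating the pointwise limit and the interchange separately. Write $\lambda_0 := f(1)$, $z_0 := g(1)$, and for positive definite $\sigma_B$ set
\begin{equation*}
F_\alpha(\sigma_B) := -\Upsilon^{f(\alpha)}_{\alpha,g(\alpha)}\big(\rho_{AB},\,I_A\otimes\rho_B,\,I_A\otimes\sigma_B\big),
\end{equation*}
so that $H^{f(\alpha)}_{\alpha,g(\alpha)}(A|B)_\rho=\opt_{\sigma_B}F_\alpha(\sigma_B)$, with $\opt=\sup$ when $f(\alpha)\ge 0$ and $\opt=\inf$ when $f(\alpha)\le 0$. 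Using the reformulation employed in the proof of Proposition~\ref{DPItheoremAlice}, $F_\alpha(\sigma_B)=-D_{\alpha,g(\alpha)}\big(\rho_{AB}\,\big\|\,I_A\otimes\xi_B(\alpha)\big)$ with $\xi_B(\alpha)=\Xi_{f(\alpha),\frac{1-\alpha}{g(\alpha)}}(\rho_B,\sigma_B)^{\frac{g(\alpha)}{1-\alpha}}$. I would then (i) identify the pointwise limit $F_1:=\lim_{\alpha\to1}F_\alpha$, (ii) optimize it, and (iii) justify the interchange $\lim_{\alpha\to1}\opt_{\sigma_B}F_\alpha=\opt_{\sigma_B}F_1$.

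\textbf{Pointwise limit and its optimization.} For $\sigma_B$ positive definite a first order expansion in $1-\alpha$ gives $\xi_B(\alpha)\to e^{(1-\lambda_0)\log\rho_B+\lambda_0\log\sigma_B}$, and together with the standard limit $\lim_{\alpha\to1}D_{\alpha,z}=D$ (for any $z\to z_0$, cf.~\eqref{D}) and joint continuity of $D_{\alpha,z}$ in $(\alpha,z)$ and in its second argument, one obtains
\begin{equation*}
F_1(\sigma_B)=-\Tr\big[\rho_{AB}\log\rho_{AB}\big]+(1-\lambda_0)\Tr\big[\rho_B\log\rho_B\big]+\lambda_0\Tr\big[\rho_B\log\sigma_B\big].
\end{equation*}
Crucially the dependence on $z_0$ (and on the slopes of $f,g$) cancels, which is exactly what forces path independence. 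The only $\sigma_B$-dependent term is $\lambda_0\Tr[\rho_B\log\sigma_B]$, and Klein's inequality gives $\Tr[\rho_B\log\sigma_B]\le\Tr[\rho_B\log\rho_B]$ with equality iff $\sigma_B=\rho_B$. Hence, whether $\lambda_0>0$ (supremum), $\lambda_0<0$ (infimum), or $\lambda_0=0$, the optimizer is $\sigma_B=\rho_B$ and $\opt_{\sigma_B}F_1(\sigma_B)=-D(\rho_{AB}\|I_A\otimes\rho_B)=H(A|B)_\rho$.

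\textbf{Interchange.} One inequality is free from the feasible choice $\sigma_B=\rho_B$: since $F_\alpha(\rho_B)\to F_1(\rho_B)=H(A|B)_\rho$, we get $\liminf\sup_{\sigma_B}F_\alpha\ge H(A|B)$ in the supremum case and $\limsup\inf_{\sigma_B}F_\alpha\le H(A|B)$ in the infimum case. When $0\le\lambda_0\le1$ (so $f(\alpha)\in[0,1]$ near $\alpha=1$ by continuity), monotonicity in $\lambda$ (Proposition~\ref{mono.lambda}) yields the sandwich $H^\downarrow_{\alpha,g(\alpha)}(A|B)\le H^{f(\alpha)}_{\alpha,g(\alpha)}(A|B)\le H^\uparrow_{\alpha,g(\alpha)}(A|B)$; the down entropy converges to $H(A|B)$ directly from $\lim_{\alpha\to1}D_{\alpha,z}=D$, and the up entropy converges to $H(A|B)$ by the already recorded up-arrow limit (whose interchange with the optimization is standard, cf.~Section~\ref{sec:notation}). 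The squeeze theorem then closes this regime.

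\textbf{The main obstacle.} The delicate case is $\lambda_0<0$, where the remaining direction is the lower bound $\liminf_{\alpha\to1}\inf_{\sigma_B}F_\alpha\ge H(A|B)$: the $\lambda$-bracket degenerates (the lower boundary of $\mathcal{D}$ drives $\lambda\to-\infty$, producing min-entropy-type quantities), and duality is useless because the dual parameter $\hat\lambda=(z-1)/(\alpha-1)$ diverges unless $z_0=1$. I would instead confine the infimizers to a compact interior set. Using antitonicity of $D_{\alpha,z}$ in its second argument together with the fact that a singular $\sigma_B$ forces $\xi_B(\alpha)$ to acquire a diverging eigenvalue (the exponent of $\sigma_B$ carrying, in each of the two sub-regions, the sign that blows $\xi_B$ up as an eigenvalue of $\sigma_B$ tends to $0$), one shows $F_\alpha(\sigma_B)\to+\infty$ as $\sigma_B$ approaches the boundary, uniformly for $\alpha$ near $1$. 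Since the optima are attained (by the remark following Definition~\ref{new entropy}), the infimizers $\sigma_B^{(\alpha)}$ therefore lie in a fixed compact set of positive definite states bounded away from the boundary; passing to a convergent subsequence $\sigma_B^{(\alpha)}\to\sigma_B^{\ast}$ with $\sigma_B^{\ast}$ nonsingular, joint continuity gives $F_\alpha(\sigma_B^{(\alpha)})\to F_1(\sigma_B^{\ast})\ge H(A|B)_\rho$, which is the sought bound. Establishing this uniform boundary blow-up, with careful sign tracking of the $\sigma_B$-exponent in $\mathcal{D}_1$ ($\alpha<1$) versus $\mathcal{D}_2$ ($\alpha>1$), is the principal technical step of the argument.
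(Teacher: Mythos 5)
Your pointwise-limit computation and the sandwich argument for $0\le\lambda_0\le 1$ are sound in spirit and use the same ingredients as the paper (monotonicity in $\lambda$ from Proposition~\ref{mono.lambda}, the up-arrow limit via the minimax argument). However, the case $\lambda_0<0$ — which you yourself identify as "the principal technical step" — is left unproven, and this is a genuine gap rather than a routine detail. Your proposed route rests on a uniform blow-up of $F_\alpha(\sigma_B)$ near singular $\sigma_B$, uniformly for $\alpha$ near $1$, but this is precisely where the argument is delicate: for $\alpha>1$ the exponent $\lambda(1-\alpha)/z$ of $\sigma_B$ is \emph{positive}, and while the outer power $z/(1-\alpha)<0$ does give $\xi_B(\alpha)$ a diverging eigenvalue, a single large eigenvalue of the conditioning operator does not force $-D_{\alpha,z}\to+\infty$ when $\alpha>1$ (the offending term enters as $\xi^{1-\alpha}$ with $1-\alpha<0$, so it is suppressed rather than amplified; contrast the classical computation $\sum_x\rho(x)^\alpha\xi(x)^{1-\alpha}$). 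Moreover all exponents degenerate simultaneously as $\alpha\to1$, so uniformity in $\alpha$ is exactly what is in doubt. As written, the compactness of the family of infimizers is asserted, not established.

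More importantly, the claim that motivates this detour — that "duality is useless because $\hat\lambda=(z-1)/(\alpha-1)$ diverges unless $z_0=1$" — is incorrect, and duality is exactly how the paper closes the lower bound for \emph{all} $\lambda_0\le 1$ at once. Under the duality map of Theorem~\ref{Duality} (Remark~\ref{explicit computation}), the dual parameters satisfy $\hat\alpha=u(\alpha)\to1$ and $\hat z=v(\alpha)\to1$ as $\alpha\to1$; the dual $\hat\lambda=w(\alpha)$ may indeed diverge to $-\infty$, but this is harmless, because one only needs the one-sided bound $-H^{w(\alpha)}_{u(\alpha),v(\alpha)}(A|C)\ge -H^{\uparrow}_{u(\alpha),v(\alpha)}(A|C)$, which holds by monotonicity in $\lambda$ throughout $\D$ regardless of how negative $w(\alpha)$ is. Combining this with monotonicity in $z$ (to freeze $z_0$) and the up-arrow limit on the purifying system yields
\begin{equation*}
\liminf_{\alpha\to1}H^{f(\alpha)}_{\alpha,g(\alpha)}(A|B)\;\ge\;\liminf_{\alpha\to1}\,\bigl(-H^{\uparrow}_{u(\alpha),z_0}(A|C)\bigr)\;=\;-H(A|C)\;=\;H(A|B)_\rho,
\end{equation*}
with no case distinction on the sign of $\lambda_0$ and no boundary analysis of the optimizers. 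You should replace your compactness sketch by this duality argument; the rest of your proof can then be retained.
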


\begin{proof}
  The proof is constructed by leveraging the duality relations and the monotonicity properties in $z$ and $\lambda$ established in the previous sections. By combining these results, we derive both lower and upper bounds involving the von Neumann conditional entropy. Let us start with the upper bound. The monotonicity in $\lambda$ established in Proposition~\ref{mono.lambda} implies that
   \begin{equation}
       \limsup_{\alpha \rightarrow 1} H^{f(\alpha)}_{\alpha,g(\alpha)}(A|B)_{\rho} \leq \limsup_{\alpha \rightarrow 1} H^\uparrow_{\alpha,g(\alpha)}(A|B)_{\rho} \leq \limsup_{\alpha \rightarrow 1} H^\uparrow_{\alpha,z_0}(A|B)_{\rho} = H(A|B)_{\rho} \,.
   \end{equation}
   In the second inequality, we used the fact that the \(\alpha\)-\(z\) Rényi relative entropy is monotone in \(z\) for any $z>0$~\cite[Proposition 6]{lin2015investigating}. Consequently, a fixed \(z_0>0\) exists that allows us to bound the conditional entropy using a path where \(z\) is held constant. 
   
   In the final inequality, we utilized a well-established argument based on the minimax theorem~\cite[Corollary A.2]{mosonyi2011quantum} (see, e.g.,~\cite[Appendix A]{rubboli2022new}). 
   Specifically, the limits \(\alpha \to 1^-\) and \(\alpha \to 1^+\) can be expressed as an infimum or supremum, due to the monotonicity of the \(\alpha\)-\(z\) Rényi relative entropy with respect to \(\alpha\) for any $z>0$~\cite[Theorem 3.1]{hiai2024log}. Moreover, the \(\alpha\)-\(z\) Rényi relative entropy is lower semicontinuous in its second argument (see, e.g.,~\cite[Lemma 18]{rubboli2022new}). Consequently, by combining these properties, the limit can be exchanged with the optimization over all states using the minimax theorem in~\cite[Corollary A.2]{mosonyi2011quantum}. Finally, the $\alpha$-$z$ R\'enyi relative entropy converges to the Umegaki relative entropy for $\alpha \rightarrow 1$ for any $z>0$~\cite[Theorem 4]{lin2015investigating}. Let us illustrate the argument with an example. Let us consider the limit $\alpha \rightarrow 1^-$. We then have
   \begin{align}
\lim \limits_{\alpha \rightarrow 1^-}\sup \limits_{\sigma_B} -D_{\alpha,z_0}(\rho_{AB} \| I_A \otimes \sigma_B) &= \inf_{\alpha < 1} \sup_{\sigma_B} -D_{\alpha,z_0}(\rho_{AB} \| I_A \otimes \sigma_B) \\
&=  \sup_{\sigma_B}\inf_{\alpha < 1} -D_{\alpha,z_0}(\rho_{AB} \| I_A \otimes \sigma_B) \\
&= \sup_{\sigma_B} -D(\rho_{AB} \| I_A \otimes \sigma_B) \\
&= -D(\rho_{AB} \| I_A \otimes \rho_B) \,.
\end{align}

For the lower bound, we apply the same reasoning in conjunction with the duality established in Theorem~\ref{Duality}. We have 
\begin{align}
       \liminf_{\alpha \rightarrow 1} H^{f(\alpha)}_{\alpha,g(\alpha)}(A|B)_{\rho} &= \liminf_{\alpha \rightarrow 1} -H^{w(\alpha)}_{u(\alpha),v(\alpha)}(A|C)_{\rho} \\
       &\geq \liminf_{\alpha \rightarrow 1} -H^\uparrow_{u(\alpha),v(\alpha)}(A|C)_{\rho} \\
       &\geq \liminf_{\alpha \rightarrow 1} -H^\uparrow_{u(\alpha),z_0}(A|C)_{\rho} \\
       &= -H(A|C)_{\rho} \\
       &= H(A|B)_{\rho} \,.
   \end{align}
In the first equality, we defined the functions $u,v,w$ induced by the duality relations (see equation~\eqref{explcit computation duality}). In the second inequality, we used the fact that $u(\alpha) \rightarrow 1$ as $\alpha\rightarrow1$, which follows from the duality relations~\eqref{explcit computation duality}.
\end{proof}

We observe that the condition requiring the function \( f \) to be continuous excludes the case \( f(1) = -\infty \). Indeed, in this scenario, the newly defined entropy does not converge to the von Neumann conditional entropy (see the next section).  
Similarly, the assumption that the path lies within \( \mathcal{D} \) excludes the case where \( z \to 0 \) as \( \alpha \to 1 \). For instance, it is well-known that in such cases, the \( \alpha \)-\( z \) Rényi relative entropy does not converge to the Umegaki relative entropy (see~\cite[Section 3]{audenaert13_alphaz}).

The second limits we consider are \(\lambda = 0\) and \(\lambda = 1\), which recover the quantities \(H^\downarrow_{\alpha,z}\) and \(H^\uparrow_{\alpha,z}\), respectively. 
The limit as \(\lambda \to 0\) is not immediately clear, as the exponent of the states \(\sigma_B\) in the definition of the new conditional entropy introduces an optimization over all projectors. Furthermore, the behavior of the limit depends on the direction of approach: for \(\lambda \to 0^-\) (from below) and \(\lambda \to 0^+\) (from above), the optimization involves either an infimum and a supremum or vice versa. As a result, it is not evident a priori which optimization should be adopted. 

In the following, we demonstrate that the limits \(\lambda \to 0^+\) and \(\lambda \to 0^-\) converge to the same value. This value coincides with the expression obtained by taking the supremum at \(\lambda = 0\) and minimizing over all projectors \(\sigma_B^0\), where the optimal choice for \(\sigma_B^0\) is clearly the identity operator. This observation justifies the choice of the supremum, rather than the infimum, in Definition~\ref{new entropy} for the new conditional entropy when $\lambda=0$.

\begin{proposition}
\label{limit lambda to zero and one}
    Let $(\alpha,z,\lambda) \in \D$ and $\rho_{AB}$ be a quantum state. Then, we have
    \begin{equation}
    \lim_{\lambda \rightarrow 0} H^{\lambda}_{\alpha,z}(A|B)_{\rho} = H^\downarrow_{\alpha,z}(A|B)_{\rho} \qquad
    \text{and} \qquad \lim_{\lambda \rightarrow 1} H^{\lambda}_{\alpha,z}(A|B)_{\rho} = H^\uparrow_{\alpha,z}(A|B)_{\rho}.
    \end{equation}

\end{proposition}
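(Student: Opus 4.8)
The plan is to write the objective pointwise as
\[
g_\lambda(\sigma_B) := -\Upsilon^\lambda_{\alpha,z}\big(\rho_{AB}, I_A\otimes\rho_B, I_A\otimes\sigma_B\big) = \frac{1}{1-\alpha}\log\Trm\big(M_\lambda(\sigma_B)^z\big),
\]
where $M_\lambda(\sigma_B) := \rho_{AB}^{\frac{\alpha}{2z}}\,\rho_B^{\frac{(1-\lambda)(1-\alpha)}{2z}}\,\sigma_B^{\frac{\lambda(1-\alpha)}{z}}\,\rho_B^{\frac{(1-\lambda)(1-\alpha)}{2z}}\,\rho_{AB}^{\frac{\alpha}{2z}}$ (the $B$-operators carrying an implicit $I_A$), so that $H^\lambda_{\alpha,z}(A|B)_\rho=\opt_{\sigma_B}g_\lambda(\sigma_B)$, and then to squeeze this optimum from both sides. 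I will use two elementary facts. First, $M\mapsto\Trm(M^z)$ is monotone on positive operators for every $z>0$, since it equals $\sum_i\lambda_i(M)^z$ and eigenvalues are monotone under the operator order. Second, the choice $\sigma_B=\rho_B$ collapses the central exponents, $\rho_B^{\frac{(1-\lambda)(1-\alpha)}{2z}}\rho_B^{\frac{\lambda(1-\alpha)}{z}}\rho_B^{\frac{(1-\lambda)(1-\alpha)}{2z}}=\rho_B^{\frac{1-\alpha}{z}}$, whence $g_\lambda(\rho_B)=-D_{\alpha,z}(\rho_{AB}\|I_A\otimes\rho_B)=H^\downarrow_{\alpha,z}(A|B)_\rho$ for every admissible $\lambda$.

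For $\lambda\to0$ the key point is that $\sigma_B$ enters only through $\sigma_B^{\lambda(1-\alpha)/z}$, whose exponent tends to $0$. Set $N_\lambda:=\rho_{AB}^{\frac{\alpha}{2z}}\rho_B^{\frac{(1-\lambda)(1-\alpha)}{z}}\rho_{AB}^{\frac{\alpha}{2z}}$, i.e.\ $M_\lambda$ with the $\sigma_B$-factor replaced by $I$, and $C(\lambda):=\frac{1}{1-\alpha}\log\Trm\big(N_\lambda^z\big)$. On the support dictated by the admissibility conditions of Definition~\ref{trivariate}, the operator inequality $\sigma_B^{\lambda(1-\alpha)/z}\le I$ holds whenever the exponent is positive and $\sigma_B^{\lambda(1-\alpha)/z}\ge I$ whenever it is negative; conjugating by $\rho_{AB}^{\alpha/2z}\rho_B^{(1-\lambda)(1-\alpha)/2z}$ turns this into $M_\lambda(\sigma_B)\le N_\lambda$ or $M_\lambda(\sigma_B)\ge N_\lambda$, respectively. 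Tracking the sign of $\frac{1}{1-\alpha}$ against the sign of the exponent, these combine to the single clean statement that $g_\lambda(\sigma_B)\le C(\lambda)$ uniformly in $\sigma_B$ when $\lambda>0$, and $g_\lambda(\sigma_B)\ge C(\lambda)$ uniformly in $\sigma_B$ when $\lambda<0$. Since $N_\lambda\to\rho_{AB}^{\alpha/2z}\rho_B^{(1-\alpha)/z}\rho_{AB}^{\alpha/2z}$ as $\lambda\to0$ by continuity of matrix powers in the exponent, $C(\lambda)\to H^\downarrow_{\alpha,z}(A|B)$. Combining with the plug-in value $g_\lambda(\rho_B)=H^\downarrow_{\alpha,z}(A|B)$, for $\lambda\to0^+$ (where $\opt=\sup$) we obtain $H^\downarrow_{\alpha,z}(A|B)\le\sup_{\sigma_B}g_\lambda\le C(\lambda)$, and for $\lambda\to0^-$ (where $\opt=\inf$) we obtain $C(\lambda)\le\inf_{\sigma_B}g_\lambda\le H^\downarrow_{\alpha,z}(A|B)$; both squeeze to $H^\downarrow_{\alpha,z}(A|B)$.

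For $\lambda\to1$ the $\sigma_B$-exponent does not degenerate, so the sandwich is unavailable and I argue by continuity instead. At $\lambda=1$ the factor $\rho_B^{(1-\lambda)(1-\alpha)/2z}$ becomes the support projector of $\rho_B$, which is absorbed by $\rho_{AB}^{\alpha/2z}$ because $\rho_{AB}\ll I_A\otimes\rho_B$; hence $g_1(\sigma_B)=-D_{\alpha,z}(\rho_{AB}\|I_A\otimes\sigma_B)$ and $H^1_{\alpha,z}(A|B)=H^\uparrow_{\alpha,z}(A|B)$ by definition. The upper bound $\limsup_{\lambda\to1^-}H^\lambda\le H^\uparrow_{\alpha,z}(A|B)$ follows from the monotonicity of $\lambda\mapsto H^\lambda_{\alpha,z}(A|B)$ on the interval $(0,1]$ established in Proposition~\ref{mono.lambda} (used away from the exceptional point $\lambda=0$). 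For the matching lower bound, fix $\eta>0$ and choose a full-rank state $\sigma_B^\eta$ with $g_1(\sigma_B^\eta)\ge H^\uparrow_{\alpha,z}(A|B)-\eta$; such a state exists since full-rank states are dense and $\sigma_B\mapsto-D_{\alpha,z}(\rho_{AB}\|I_A\otimes\sigma_B)$ is (upper semi)continuous there. Because $\sigma_B^\eta$ is full rank, $\lambda\mapsto g_\lambda(\sigma_B^\eta)$ is continuous at $\lambda=1$, so $\liminf_{\lambda\to1^-}\opt_{\sigma_B}g_\lambda\ge\lim_{\lambda\to1^-}g_\lambda(\sigma_B^\eta)=g_1(\sigma_B^\eta)\ge H^\uparrow_{\alpha,z}(A|B)-\eta$, and letting $\eta\to0$ closes the squeeze.

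The main obstacle is the $\lambda\to0$ regime, which is genuinely delicate because the optimized functional is \emph{discontinuous} at $\lambda=0$: there the $\sigma_B$-exponent is exactly zero, $\sigma_B^0$ is only the support projector, and for $\alpha>1$ the supremum over $\sigma_B$ at the point $\lambda=0$ can strictly exceed $H^\downarrow_{\alpha,z}(A|B)$ (taking $\sigma_B$ low-rank). The content of the statement is precisely that the two one-sided limits erase this boundary anomaly, and the mechanism that forces this is the uniform-in-$\sigma_B$ sandwich above, available exactly because the offending exponent vanishes in the limit. The careful bookkeeping lies in verifying the operator inequalities only on the support dictated by finiteness of $\Upsilon$ (so that the admissible states over which the infimum is taken for $\lambda<0$ do not drive it to $-\infty$), and in checking that the distinct sign combinations collapse to the two clean one-sided bounds on $C(\lambda)$. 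By contrast the $\lambda\to1$ limit is an honest continuity statement whose only subtlety, interchanging limit and supremum, is handled by monotonicity on one side and a density/continuity argument on the other.
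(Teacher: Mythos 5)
Your proof is correct, but it takes a genuinely different route from the paper's. The paper derives both limits from the variational form of Lemma~\ref{variational form with norms} (an optimization over $\tau_{AB}$), the monotonicity in $\lambda$ of the function $F(\lambda,\tau_{AB})$ inside that optimization, its semicontinuity in $\tau_{AB}$, and a minimax lemma to interchange the limit with the optimization. You avoid any minimax interchange: for $\lambda\to 0$ you exploit the degeneration of the $\sigma_B$-dependence, sandwiching $\sigma_B^{\lambda(1-\alpha)/z}$ between the identity and the relevant support projector (after conjugation, and only on the supports where $\Upsilon$ is finite), which gives the uniform bound $g_\lambda(\sigma_B)\le C(\lambda)$ for $\lambda>0$ and $g_\lambda(\sigma_B)\ge C(\lambda)$ for $\lambda<0$; combined with the exact plug-in identity $g_\lambda(\rho_B)=H^\downarrow_{\alpha,z}(A|B)$ this squeezes the optimum, and I checked that the four sign combinations of $(1-\alpha)$ and $\lambda$ do collapse to these two one-sided bounds as you claim. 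For $\lambda\to 1$ you replace the minimax step by the already-established monotonicity in $\lambda$ (Proposition~\ref{mono.lambda}) on one side and a density-of-full-rank-states argument on the other. Your route is more elementary and self-contained for $\lambda\to0$ (no interchange of limit and optimization is ever needed), at the cost of explicit case analysis; the paper's route treats both limits by one uniform mechanism and reuses machinery it already needs for monotonicity in $\lambda$. There is no circularity in your use of Proposition~\ref{mono.lambda}, since the paper proves it independently of the limit statement (and indeed also invokes that machinery in its own proof).

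Two small points deserve attention. First, your justification for the existence of a full-rank $\sigma_B^\eta$ with $g_1(\sigma_B^\eta)\ge H^\uparrow_{\alpha,z}(A|B)-\eta$ should not rest on upper semicontinuity, which controls perturbations from the wrong side; the clean argument is monotonicity of $g_1$ in $\sigma_B$ (Lemma~\ref{decreasing}) together with scaling: for any $\sigma^*_B$ one has $g_1\big((1-\epsilon)\sigma^*_B+\epsilon\pi_B\big)\ge g_1\big((1-\epsilon)\sigma^*_B\big)=g_1(\sigma^*_B)+\log(1-\epsilon)$, so the supremum over full-rank states equals the supremum over all states. Second, for $\lambda<0$ your restriction of the infimum to states with $\rho_{AB}\ll I_A\otimes\sigma_B$ is the reading your sandwich needs and is the one consistent with the rest of the paper, but be aware that, taken literally, Definition~\ref{trivariate} assigns $\Upsilon=+\infty$ (hence $-\Upsilon=-\infty$) to non-admissible states, which an unrestricted infimum in Definition~\ref{new entropy} would pick up; this is a convention issue inherited from the paper rather than a gap in your argument, and flagging it, as you do, is appropriate.
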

\begin{proof}
     The proof is similar to the one used to prove Proposition~\ref{limit alpha to 1} and it is a consequence of the minimax theorem~\cite[Corollary A.2]{mosonyi2011quantum} together with the variational form in Lemma~\ref{variational form with norms} and Lemma~\ref{monotonicity}. 
     
   Let us consider first the limit $\lambda \rightarrow 0$. The limits \(\lambda \to 0^-\) and \(\lambda \to 0^{+}\) can be expressed as an infimum or supremum. Indeed, according to Lemma~\ref{variational form with norms}
   \begin{equation}
       H^{\lambda}_{\alpha,z}(A|B)_{\rho}=\opt_{\tau_{AB}} F(\lambda,\tau_{AB}) \,,
   \end{equation}
   where we defined 
   \begin{equation}
		 F(\lambda,\tau_{AB}) := \frac{1}{\hat{z}(1-\alpha)} \log{\textup{Tr}\left( X_B(\tau_{AB}) \rho_B^{ \frac{(1-\lambda)(1-\alpha)}{z}} X_B(\tau_{AB}) \right)^{\hat{z}}} \,.
	\end{equation}
	Here, \( X_B(\tau_{AB}) := \left(\Tr_A\left(\rho_{AB}^{\frac{\alpha}{2z}} \tau_{AB}^{1-\frac{1}{z}} \rho_{AB}^{\frac{\alpha}{2z}}\right)\right)^{\frac{1}{2}} \) and $\hat{z} = z/(z+\lambda(\alpha-1))$. From the proof of Proposition~\ref{mono.lambda}, the function $F(\lambda,\tau_{AB})$ is monotone in $\lambda$. Moreover, using the same argument used to prove Lemma~\ref{lemma semicontintuity}, it follows that $\tau_{AB} \rightarrow F(\lambda,\tau_{AB})$ is lower/upper semicontinuous. Indeed, in the DPI region, it holds that $1-1/z \in [-1,1]$ and in this range the power is either operator monotone or antimonotone.
    Consequently, by combining these properties, the limit can be exchanged with the optimization over all states using the minimax theorem in~\cite[Corollary A.2]{mosonyi2011quantum}.  To illustrate the result, let us examine a specific case. Let us consider the case $\alpha<1$, $z>1$ and the limit $\lambda \rightarrow 0^{+}$. We then have
   \begin{align}
\lim \limits_{\lambda \rightarrow 0^+}\sup \limits_{\tau_{AB}} F(\lambda,\tau_{AB}) &= \inf_{\lambda > 0} \sup_{\tau_{AB}} F(\lambda,\tau_{AB}) \\
&=  \sup_{\tau_{AB}}\inf_{\lambda > 0} F(\lambda,\tau_{AB}) \\
&=  \sup_{\tau_{AB}}\lim_{\lambda \rightarrow 0^{+}} F(\lambda,\tau_{AB}) \\
&=  H_{\alpha,z}^\downarrow(A|B)_{\rho} \,.
\end{align}
where the last equality can be verified using the variational form of the norm as done in Lemma~\ref{variational form with norms}. The limit \(\lambda \to 0^-\) is more straightforward, as it involves two supremums and does not require the application of the minimax theorem.
    
The limit $\lambda \rightarrow 1$ is analogous to the limit $\lambda \rightarrow 0$ where the last step follows by noticing that $X_B(\tau_{AB}) \ll\rho_B^0$.
\end{proof}

Finally, we compute some limits for the classical conditional entropy derived in~\eqref{classical closed-form}.
The first limits we examine are \(\beta \to +\infty\) and \(\beta \to 0\), while keeping \(\alpha\) constant. In these limits, the weighted \(\beta\) quasi-norm yields 
\begin{align}
    &\lim_{\beta \rightarrow + \infty}H_{\alpha, \beta}(X|Y)_{\rho} = \min_{y} H_{\alpha} (X|Y=y)_{\rho} \,, \qquad \alpha>1 \\
    & \lim_{\beta \rightarrow 0}H_{\alpha, \beta}(X|Y)_{\rho} = \sum_{y}\rho(y) H_{\alpha}(X|Y=y)_{\rho} \,, \qquad 0<\alpha<1 \,,
\end{align}
where $H_{\alpha} (X|Y=y)_{\rho} = \frac{1}{1-\alpha}\log{\big(\sum_x\rho(x|y)^\alpha\big)}$. The second limit we consider is \(\beta \to \infty\) and \(\alpha \to 1\) within the region \(\alpha > 1\). To analyze this, we introduce the parameterization \(\alpha = 1 + x a\) and \(\beta = 1/x\), where \(a> 0\), and take the limit \(x \to 0\). Explicitly,
\begin{align}
&\lim_{x \rightarrow 0} H_{1+xa, \frac{1}{x}}(X|Y)_{\rho}= -\frac{1}{a} \log{\bigg(\sum \nolimits_{y} \rho(y) \text{exp}{\Big(-a H(X|Y=y)_{\rho}\Big)}\bigg)} \,,
\end{align}
where $H(X|Y=y)_{\rho} = -\sum_x \rho(x|y)\log{\rho(x|y)}$. The limit of the sum over \(x\) can be computed, for instance, by taking the logarithm and applying L'Hôpital's rule.

\section{Conclusion}
\label{sec:conclusion}

We have introduced a new family of quantum conditional R\'enyi entropies that encompasses various previously studied examples. Our quantity satisfies the DPI for the range of parameters defined by the three-dimensional region $\mathcal{D}$. While we conjecture that the DPI holds exclusively within this region and is violated outside of it, further proof is needed to confirm this. It is worth mentioning that this family does not appear to include quantum conditional entropies derived from maximal quantum R\'enyi divergences, such as those based on the Belavkin--Staszewski relative entropy~\cite{belavkin82_relent}. Although these entropies have yet to find clear operational significance, this highlights that a complete characterization of all R\'enyi-type conditional entropies satisfying DPI remains elusive.

Finally, we believe that it is worthwhile to investigate limiting behaviors of our quantity, especially as $\alpha \to 1$ while other parameters diverge. This is exemplified by the dual of the log-Euclidean conditional entropy ($z=\infty$ and $0 < \alpha < 1$). For a formal definition of the log-Euclidean divergence $D_{\alpha,+\infty} = D_\alpha^\flat$, see e.g.~\cite[Example III.34]{mosonyi2024geometric}. From the relation between the parameters in the duality relations~\ref{explicit computation}, we observe that as $z \rightarrow \infty$, the parameters of the dual behave such that $\hat{\alpha}, \hat{z} \rightarrow 1$ and  $\hat{\lambda}\rightarrow-\infty$.  For instance, in the arrow down case, we observe that the dual of the log-Euclidean conditional entropy is given by
\begin{equation}
    D^\flat_{\alpha}\left(\rho_B \middle\| \exp(\rho_B^{-\frac{1}{2}} \Tr_A(\rho_{AB}\log{\rho_{AB}})\rho_B^{-\frac{1}{2}}) \right).
\end{equation}
 Interestingly, the latter quantity behaves like the conditional von Neumann entropy for states of the form $ \rho_A \otimes \ketbra{\sigma}{\sigma}_B$ while it yields the R\'enyi entropy of the marginal $\rho_A$ for pure states $\ket{\rho}_{AB}$. Nonetheless, we leave this and other cases for future investigation.

\medskip
\paragraph*{\emph{\textbf{Acknowledgments.}}}
We thank Christoph Hirche for preliminary discussions on duality relations involving $\alpha$-$z$ R\'enyi relative entropies. This project is supported by the National Research Foundation, Singapore (NRF) through the National Quantum Office, hosted in A*STAR, under its Centre for Quantum Technologies Funding Initiative (S24Q2d0009) and by the NRF Investigatorship award (NRF-NRFI10-2024-0006). R.R. also acknowledges financial support from the ERC grant GIFNEQ 101163938.

\medskip
\paragraph*{\emph{\textbf{Data availability.}}}
We do not analyse or generate any data.

\medskip
\paragraph*{\emph{\textbf{Conflict of interest.}}}
The authors have no competing interests to declare.

\bibliographystyle{ultimate}
\bibliography{library}

@Article{li2025_reliability,
  author    = {Li, Ke and Yang, Dong},
  journal   = {Physical Review Letters},
  title     = {Reliability Function of Classical-Quantum Channels},
  year      = {2025},
  month     = {Jan},
  pages     = {010802},
  volume    = {134},
  doi       = {10.1103/PhysRevLett.134.010802},
  issue     = {1},
  numpages  = {8},
  publisher = {American Physical Society},
  url       = {https://link.aps.org/doi/10.1103/PhysRevLett.134.010802},
}

@ARTICLE{renes2025_reliability,
  author={Renes, Joseph M.},
  journal={IEEE Transactions on Information Theory}, 
  title={Tight Lower Bound on the Error Exponent of Classical-Quantum Channels}, 
  year={2025},
  volume={71},
  number={1},
  pages={530-538},
  keywords={Entropy;Privacy;Upper bound;Security;Decoding;Uncertainty;Reliability;Quantum state;Encoding;Error probability;Channel coding;error exponent;classical-quantum (CQ) channels;quantum communication;privacy amplification},
  doi={10.1109/TIT.2024.3500578}}

@article{pirandola20_review,
   abstract = {Quantum cryptography is arguably the fastest growing area in quantum information science. Novel theoretical protocols are designed on a regular basis, security proofs are constantly improving, and experiments are gradually moving from proof-of-principle lab demonstrations to in-field implementations and technological prototypes. In this paper, we provide both a general introduction and a state-of-the-art description of the recent advances in the field, both theoretical and experimental. We start by reviewing protocols of quantum key distribution based on discrete variable systems. Next we consider aspects of device independence, satellite challenges, and protocols based on continuous-variable systems. We will then discuss the ultimate limits of point-to-point private communications and how quantum repeaters and networks may overcome these restrictions. Finally, we will discuss some aspects of quantum cryptography beyond standard quantum key distribution, including quantum random number generators and quantum digital signatures.},
   author = {S. Pirandola and U. L. Andersen and L. Banchi and M. Berta and D. Bunandar and R. Colbeck and D. Englund and T. Gehring and C. Lupo and C. Ottaviani and J. L. Pereira and M. Razavi and J. Shamsul Shaari and M. Tomamichel and V. C. Usenko and G. Vallone and P. Villoresi and P. Wallden},
   doi = {10.1364/AOP.361502},
   issn = {1943-8206},
   issue = {4},
   journal = {Advances in Optics and Photonics},
   month = {12},
   pages = {1012},
   title = {Advances in quantum cryptography},
   volume = {12},
   url = {https://opg.optica.org/abstract.cfm?URI=aop-12-4-1012},
   year = {2020},
}

@Article{ando79convex,
  author  = {T Ando},
  journal = {Linear Algebra and its Applications},
  title   = {Concavity of Certain Maps on Positive Definite Matrices and Applications to Hadamard Products},
  year    = {1979},
  pages   = {203-241},
  volume  = {26},
  doi     = {10.1016/0024-3795(79)90179-4},
}

@article{lieb73subadd,
   author = {Elliott H. Lieb and Mary Beth Ruskai},
   doi = {10.1063/1.1666274},
   issn = {00222488},
   issue = {12},
   journal = {Journal of Mathematical Physics},
   month = {12},
   pages = {1938},
   title = {Proof of the Strong Subadditivity of Quantum-Mechanical Entropy},
   volume = {14},
   url = {http://link.aip.org/link/?JMAPAQ/14/1938/1},
   year = {1973}
}

@Book{wilde13_book,
  author    = {Mark M. Wilde},
  publisher = {Cambridge University Press},
  title     = {Quantum Information Theory},
  year      = {2013},
  isbn      = {9781107034259},
  month     = {6},
  abstract  = {Finally, here is a modern, self-contained text on quantum information theory suitable for graduate-level courses. Developing the subject 'from the ground up' it covers classical results as well as major advances of the past decade. Beginning with an extensive overview of classical information theory suitable for the non-expert, the author then turns his attention to quantum mechanics for quantum information theory, and the important protocols of teleportation, super-dense coding and entanglement distribution. He develops all of the tools necessary for understanding important results in quantum information theory, including capacity theorems for classical, entanglement-assisted, private and quantum communication. The book also covers important recent developments such as superadditivity of private, coherent and Holevo information, and the superactivation of quantum capacity. This book will be warmly welcomed by the upcoming generation of quantum information theorists and the already established community of classical information theorists.},
  doi       = {10.1017/cbo9781139525343},
}

@book{holevo12_book,
   author = {Alexander S. Holevo},
   city = {Berlin, Boston},
   doi = {10.1515/9783110273403},
   isbn = {9783110273403},
   month = {1},
   publisher = {De Gruyter},
   title = {Quantum Systems, Channels, Information},
   url = {http://www.degruyter.com/view/books/9783110273403/9783110273403/9783110273403.xml},
   year = {2012},
}

@book{hayashi17_book,
   author = {Masahito Hayashi},
   city = {Berlin, Heidelberg},
   doi = {10.1007/978-3-662-49725-8},
   isbn = {978-3-662-49723-4},
   publisher = {Springer Berlin Heidelberg},
   title = {Quantum Information Theory},
   year = {2017},
}

@article{chitambar19_review,
   author = {Eric Chitambar and Gilad Gour},
   doi = {10.1103/RevModPhys.91.025001},
   issn = {0034-6861},
   issue = {2},
   journal = {Reviews of Modern Physics},
   month = {4},
   pages = {025001},
   title = {Quantum resource theories},
   volume = {91},
   year = {2019},
}

@book{tomamichel16_book,
   author = {Marco Tomamichel},
   doi = {10.1007/978-3-319-21891-5},
   isbn = {978-3-319-21890-8},
   publisher = {Springer International Publishing},
   title = {Quantum Information Processing with Finite Resources},
   volume = {5},
   url = {http://link.springer.com/10.1007/978-3-319-21891-5},
   year = {2016},
}

@article{lennert13_renyi,
   abstract = {The Renyi entropies constitute a family of information measures that generalizes the well-known Shannon entropy, inheriting many of its properties. They appear in the form of unconditional and conditional entropies, relative entropies or mutual information, and have found many applications in information theory and beyond. Various generalizations of Renyi entropies to the quantum setting have been proposed, most notably Petz's quasi-entropies and Renner's conditional min-, max- and collision entropy. Here, we argue that previous quantum extensions are incompatible and thus unsatisfactory. We propose a new quantum generalization of the family of Renyi entropies that contains the von Neumann entropy, min-entropy, collision entropy and the max-entropy as special cases, thus encompassing most quantum entropies in use today. We show several natural properties for this definition, including data-processing inequalities, a duality relation, and an entropic uncertainty relation.},
   author = {Martin Müller-Lennert and Frédéric Dupuis and Oleg Szehr and Serge Fehr and Marco Tomamichel},
   doi = {10.1063/1.4838856},
   issn = {00222488},
   issue = {12},
   journal = {Journal of Mathematical Physics},
   month = {6},
   pages = {122203},
   title = {On Quantum Rényi Entropies: A New Generalization and Some Properties},
   volume = {54},
   url = {http://link.aip.org/link/JMAPAQ/v54/i12/p122203/s1&Agg=doi},
   year = {2013},
}

@article{tomamichel08_aep,
   author = {Marco Tomamichel and Roger Colbeck and Renato Renner},
   issn = {0018-9448},
   issue = {12},
   journal = {IEEE Transactions on Information Theory},
   month = {12},
   pages = {5840-5847},
   title = {A Fully Quantum Asymptotic Equipartition Property},
   volume = {55},
   url = {http://ieeexplore.ieee.org/lpdocs/epic03/wrapper.htm?arnumber=5319753},
   year = {2009},
   doi={10.1109/TIT.2009.2032797}
}

@Article{lieb1973convex,
  title={Convex trace functions and the Wigner-Yanase-Dyson conjecture},
  author={Lieb, Elliott H},
  journal={Advances in Mathematics},
  volume={11},
  number={3},
  pages={267--288},
  year={1973},
  publisher={Elsevier},
  doi = {10.1016/0001-8708(73)90011-X}
}

@Article{wigner1963information,
  author  = {Wigner, Eugene P and Yanase, Mutsuo M},
  journal = {Proceedings of the National Academy of Sciences},
  title   = {Information contents of distributions},
  year    = {1963},
  number  = {6},
  pages   = {910--918},
  volume  = {49},
  doi     = {10.1073/pnas.49.6.910},
}

@article{bennett95_privacyamp,
   author = {Charles H. Bennett and Gilles Brassard and C. Crepeau and Ueli M. Maurer},
   doi = {10.1109/18.476316},
   issn = {00189448},
   issue = {6},
   journal = {IEEE Transactions on Information Theory},
   pages = {1915-1923},
   title = {Generalized Privacy Amplification},
   volume = {41},
   url = {http://ieeexplore.ieee.org/lpdocs/epic03/wrapper.htm?arnumber=476316},
   year = {1995},
}

@article{gour21_axiomatic,
   author = {Gilad Gour and Marco Tomamichel},
   doi = {10.1109/TIT.2021.3078337},
   issn = {0018-9448},
   issue = {10},
   journal = {IEEE Transactions on Information Theory},
   month = {10},
   pages = {6313-6327},
   title = {Entropy and Relative Entropy From Information-Theoretic Principles},
   volume = {67},
   year = {2021},
}

@article{beigi13_sandwiched,
   author = {Salman Beigi},
   doi = {10.1063/1.4838855},
   issn = {00222488},
   issue = {12},
   journal = {Journal of Mathematical Physics},
   month = {6},
   pages = {122202},
   title = {Sandwiched Rényi Divergence Satisfies Data Processing Inequality},
   volume = {54},
   url = {http://arxiv.org/abs/1306.5920 http://scitation.aip.org/content/aip/journal/jmp/54/12/10.1063/1.4838855},
   year = {2013},
}

@article{frank13_sandwiched,
   abstract = {We show that a recent definition of relative R\'enyi entropy is monotone under completely positive, trace preserving maps. This proves a recent conjecture of M\"uller-Lennert et al.},
   author = {Rupert L. Frank and Elliott H. Lieb},
   doi = {10.1063/1.4838835},
   issn = {00222488},
   issue = {12},
   journal = {Journal of Mathematical Physics},
   month = {6},
   pages = {122201},
   title = {Monotonicity of a Relative Rényi Entropy},
   volume = {54},
   url = {http://arxiv.org/abs/1306.5358 http://scitation.aip.org/content/aip/journal/jmp/54/12/10.1063/1.4838835},
   year = {2013},
}

@article{arqand2025generalized,
  title={Generalized R{\'e}nyi entropy accumulation theorem and generalized quantum probability estimation},
  author={Arqand, Amir and Hahn, Thomas A and Tan, Ernest Y-Z},
  journal={Physical Review X},
  volume={15},
  number={4},
  pages={041013},
  year={2025},
  publisher={APS},
  doi      = {10.1103/pgrn-mz9j}
}

@article{audenaert13_alphaz,
   author = {Koenraad M. R. Audenaert and Nilanjana Datta},
   doi = {10.1063/1.4906367},
   journal = {Journal of Mathematical Physics},
   month = {10},
   pages = {022202},
   title = {$\alpha$-z-Relative Renyi Entropies},
   volume = {56},
   url = {http://arxiv.org/abs/1310.7178},
   year = {2015},
}

@article{carlen18_alphaz,
   author = {Eric A Carlen and Rupert L Frank and Elliott H Lieb},
   doi = {10.1088/1751-8121/aae8a3},
   issn = {1751-8113},
   issue = {48},
   journal = {Journal of Physics A: Mathematical and Theoretical},
   month = {11},
   pages = {483001},
   title = {Inequalities for quantum divergences and the Audenaert–Datta conjecture},
   volume = {51},
   year = {2018},
}

@article{zhang20_alphaz,
   author = {Haonan Zhang},
   doi = {10.1016/j.aim.2020.107053},
   issn = {00018708},
   journal = {Advances in Mathematics},
   month = {5},
   pages = {107053},
   title = {From Wigner-Yanase-Dyson conjecture to Carlen-Frank-Lieb conjecture},
   volume = {365},
   year = {2020},
}

@article{tomamichel13_duality,
   abstract = {Recently a new quantum generalization of the R\'enyi divergence and the corresponding conditional R\'enyi entropies was proposed. Here we report on a surprising relation between conditional R\'enyi entropies based on this new generalization and conditional R\'enyi entropies based on the quantum relative R\'enyi entropy that was used in previous literature. This generalizes the well-known duality relation H(A|B) + H(A|C) = 0 for tripartite pure states to R\'enyi entropies of two different kinds. As a direct application, we prove a collection of inequalities that relate different conditional R\'enyi entropies.},
   author = {Marco Tomamichel and Mario Berta and Masahito Hayashi},
   doi = {10.1063/1.4892761},
   issn = {0022-2488},
   issue = {8},
   journal = {Journal of Mathematical Physics},
   month = {8},
   pages = {082206},
   title = {Relating Different Quantum Generalizations of the Conditional Rényi Entropy},
   volume = {55},
   url = {http://arxiv.org/abs/1311.3887 http://scitation.aip.org/content/aip/journal/jmp/55/8/10.1063/1.4892761},
   year = {2014},
}

@article{belavkin82_relent,
   author = {V. P. Belavkin and P. Staszewski},
   issue = {1},
   journal = {Annals Henri Poincaré},
   pages = {51-58},
   title = {C*-algebraic Generalization of Relative Entropy and Entropy},
   volume = {37},
   year = {1982},
}

@article{lin2015investigating,
  title={Investigating properties of a family of quantum R{\'e}nyi divergences},
  author={Lin, Simon M and Tomamichel, Marco},
  journal={Quantum Information Processing},
  volume={14},
  number={4},
  pages={1501--1512},
  year={2015},
  publisher={Springer},
  doi={10.1007/s11128-015-0935-y}
}

@Article{sion1958general,
  author  = {Sion, Maurice},
  journal = {Pacific Journal of Mathematics},
  title   = {On general minimax theorems.},
  year    = {1958},
  doi     = {10.2140/pjm.1958.8.171},
}

@article{vskoric2011sharp,
  title={Sharp lower bounds on the extractable randomness from non-uniform sources},
  author={{\v{S}}kori{\'c}, Boris and Obi, Chibuzo and Verbitskiy, Evgeny and Schoenmakers, Berry},
  journal={Information and Computation},
  volume={209},
  number={8},
  pages={1184--1196},
  year={2011},
  publisher={Elsevier},
  doi={10.1016/j.ic.2011.06.001}
}

@article{arimoto1977information,
  title={Information measures and capacity of order $\alpha$ for discrete memoryless channels},
  author={Arimoto, Suguru},
  journal={Topics in information theory},
  year={1977},
  publisher={The Netherlands}
}

@article{hayashi2011exponential,
  title={Exponential decreasing rate of leaked information in universal random privacy amplification},
  author={Hayashi, Masahito},
  journal={IEEE Transactions on Information Theory},
  volume={57},
  number={6},
  pages={3989--4001},
  year={2011},
  publisher={IEEE},
  doi={10.1109/TIT.2011.2110950}
}

@article{dupuis2015chain,
  title={Chain rules for quantum R{\'e}nyi entropies},
  author={Dupuis, Fr{\'e}d{\'e}ric},
  journal={Journal of Mathematical Physics},
  volume={56},
  number={2},
  year={2015},
  publisher={AIP Publishing},
  doi={10.1063/1.4907981}
}

@article{berta2017variational,
  title={On variational expressions for quantum relative entropies},
  author={Berta, Mario and Fawzi, Omar and Tomamichel, Marco},
  journal={Letters in Mathematical Physics},
  volume={107},
  pages={2239--2265},
  year={2017},
  publisher={Springer},
  doi={10.1007/s11005-017-0990-7}
}

@article{carlen2016some,
  title={Some operator and trace function convexity theorems},
  author={Carlen, Eric A and Frank, Rupert L and Lieb, Elliott H},
  journal={Linear algebra and its applications},
  volume={490},
  pages={174--185},
  year={2016},
  publisher={Elsevier},
  doi={10.1016/j.laa.2015.11.006}
}

@article{rubboli2022new,
  title={New additivity properties of the relative entropy of entanglement and its generalizations},
  author={Rubboli, Roberto and Tomamichel, Marco},
  journal={Communications in Mathematical Physics},
  volume={405},
  number={7},
  pages={162},
  year={2024},
  publisher={Springer},
  doi={10.1007/s00220-024-05025-3}
}

@article{konig2009operational,
  title={The operational meaning of min-and max-entropy},
  author={Konig, Robert and Renner, Renato and Schaffner, Christian},
  journal={IEEE Transactions on Information theory},
  volume={55},
  number={9},
  pages={4337--4347},
  year={2009},
  publisher={IEEE},
  doi={10.1109/TIT.2009.2025545}
}

@Book{Watrous,
  author    = {Watrous, John},
  publisher = {Cambridge university press},
  title     = {The theory of quantum information},
  year      = {2018},
  doi       = {10.1017/9781316848142},
}

@article{Renner,
  title={Security of quantum key distribution},
  author={Renner, Renato},
  journal={International Journal of Quantum Information},
  volume={6},
  number={01},
  pages={1--127},
  year={2008},
  publisher={World Scientific},
  doi={10.1142/S0219749908003256}
}

@article{Wilde3,
  title={Strong converse for the classical capacity of entanglement-breaking and Hadamard channels via a sandwiched R{\'e}nyi relative entropy},
  author={Wilde, Mark M and Winter, Andreas and Yang, Dong},
  journal={Communications in Mathematical Physics},
  volume={331},
  number={2},
  pages={593--622},
  year={2014},
  publisher={Springer},
  doi={10.1007/s00220-014-2122-x}
}

@article{Datta_rob2,
  title={Min-and max-relative entropies and a new entanglement monotone},
  author={Datta, Nilanjana},
  journal={IEEE Transactions on Information Theory},
  volume={55},
  number={6},
  pages={2816--2826},
  year={2009},
  publisher={IEEE},
  doi={10.1109/TIT.2009.2018325}
}

@article{petz1986quasi,
  title={Quasi-entropies for finite quantum systems},
  author={Petz, D{\'e}nes},
  journal={Reports on Mathematical Physics},
  volume={23},
  number={1},
  pages={57--65},
  year={1986},
  publisher={Elsevier},
  doi={10.1016/0034-4877(86)90067-4}
}

@article{carlen2010trace,
  title={Trace inequalities and quantum entropy: an introductory course},
  author={Carlen, Eric},
  journal={Entropy and the quantum},
  volume={529},
  pages={73--140},
  year={2010},
  doi={10.1090/conm/529/10428}
}

@article{rubboli2023mixed,
  title={Mixed-state additivity properties of magic monotones based on quantum relative entropies for single-qubit states and beyond},
  author={Rubboli, Roberto and Takagi, Ryuji and Tomamichel, Marco},
  journal={Quantum},
  volume={8},
  pages={1492},
  year={2024},
  publisher={Verein zur F{\"o}rderung des Open Access Publizierens in den Quantenwissenschaften},
  doi={10.22331/q-2024-10-04-1492}
}

@article{hiai2024log,
  title={Log-majorization and matrix norm inequalities with application to quantum information},
  author={Hiai, Fumio},
  journal={Acta Scientiarum Mathematicarum},
  pages={1--23},
  year={2024},
  publisher={Springer},
  doi={10.1007/s44146-024-00142-w}
}

@inproceedings{hayashitomamichel15c,
abstract = {A variety of new measures of quantum R{\'{e}}nyi mutual information and quantum R{\'{e}}nyi conditional entropy have recently been proposed, and some of their mathematical properties explored. Here, we show that the R{\'{e}}nyi mutual information attains operational meaning in the context of composite hypothesis testing, when the null hypothesis is a fixed bipartite state and the alternative hypothesis consists of all product states that share one marginal with the null hypothesis. This hypothesis testing problem occurs naturally in channel coding, where it corresponds to testing whether a state is the output of a given quantum channel or of a "useless" channel whose output is decoupled from the environment. Similarly, we establish an operational interpretation of R{\'{e}}nyi conditional entropy by choosing an alternative hypothesis that consists of product states that are maximally mixed on one system. Specialized to classical probability distributions, our results also establish an operational interpretation of R{\'{e}}nyi mutual information and R{\'{e}}nyi conditional entropy.},
author = {Hayashi, Masahito and Tomamichel, Marco},
booktitle = {Proc. IEEE ISIT 2015},
isbn = {978-1-4673-7704-1},
issn = {0022-2488},
month = {jun},
number = {},
pages = {1447--1451},
publisher = {IEEE},
title = {Correlation detection and an operational interpretation of the R{\'{e}}nyi mutual information},
url = {http://ieeexplore.ieee.org/lpdocs/epic03/wrapper.htm?arnumber=7282695 http://aip.scitation.org/doi/10.1063/1.4964755},
volume = {57},
year = {2015},
doi = {10.1109/ISIT.2015.7282695}
}

@article{sutter16,
abstract = {We prove several trace inequalities that extend the Golden-Thompson and the Araki-Lieb-Thirring inequality to arbitrarily many matrices. In particular, we strengthen Lieb's triple matrix inequality. As an example application of our four matrix extension of the Golden-Thompson inequality, we prove remainder terms for the monotonicity of the quantum relative entropy and strong sub-additivity of the von Neumann entropy in terms of recoverability. We find the first explicit remainder terms that are tight in the commutative case. Our proofs rely on complex interpolation theory as well as asymptotic spectral pinching, providing a transparent approach to treat generic multivariate trace inequalities.},
archivePrefix = {arXiv},
arxivId = {1604.03023},
author = {Sutter, David and Berta, Mario and Tomamichel, Marco},
doi = {10.1007/s00220-016-2778-5},
eprint = {1604.03023},
issn = {0010-3616},
journal = {Communications in Mathematical Physics},
number = {1},
pages = {37--58},
title = {{Multivariate trace inequalities}},
url = {http://arxiv.org/abs/1604.03023 http://link.springer.com/10.1007/s00220-016-2778-5},
volume = {352},
year = {2017}
}

@article{hiai2024alpha,
  title={$\alpha$-z-R{\'e}nyi Divergences in von Neumann Algebras: Data Processing Inequality, Reversibility, and Monotonicity Properties in $\alpha$, z},
  author={Hiai, Fumio and Jen{\v{c}}ov{\'a}, Anna},
  journal={Communications in Mathematical Physics},
  volume={405},
  number={11},
  pages={271},
  year={2024},
  publisher={Springer},
  doi={10.1007/s00220-024-05124-1}
}

@article{mosonyi2023strong,
  title={The strong converse exponent of discriminating infinite-dimensional quantum states},
  author={Mosonyi, Mil{\'a}n},
  journal={Communications in Mathematical Physics},
  volume={400},
  number={1},
  pages={83--132},
  year={2023},
  publisher={Springer},
  doi={10.1007/s00220-022-04598-1}
}

@article{mosonyi2024geometric,
  title={Geometric relative entropies and barycentric R{\'e}nyi divergences},
  author={Mosonyi, Mil{\'a}n and Bunth, Gergely and Vrana, P{\'e}ter},
  journal={Linear Algebra and its Applications},
  year={2024},
  publisher={Elsevier},
  doi={0.1016/j.laa.2024.06.005}
}

@article{gour2020optimal,
  title={Optimal extensions of resource measures and their applications},
  author={Gour, Gilad and Tomamichel, Marco},
  journal={Physical Review A},
  volume={102},
  number={6},
  pages={062401},
  year={2020},
  publisher={APS},
  doi     = {10.1103/PhysRevA.102.062401}
}

@article{gour2024inevitability,
  title={Inevitability of knowing less than nothing},
  author={Gour, Gilad and Wilde, Mark M and Brandsen, Sarah and Geng, Isabelle Jianing},
  journal={Quantum},
  volume={8},
  pages={1529},
  year={2024},
  publisher={Verein zur F{\"o}rderung des Open Access Publizierens in den Quantenwissenschaften},
  doi={10.22331/q-2024-11-20-1529}
}

@inproceedings{lami2023attainability,
  title={Attainability and Lower Semi-continuity of the Relative Entropy of Entanglement and Variations on the Theme},
  author={Lami, Ludovico and Shirokov, Maksim E},
  booktitle={Annales Henri Poincar{\'e}},
  volume={24},
  number={},
  pages={4069--4137},
  year={2023},
  organization={Springer},
  doi={10.1007/s00023-023-01313-1}
}

@article{lami2024continuity,
  title={Continuity of the relative entropy of resource},
  author={Lami, Ludovico and Shirokov, Maksim},
  journal={International Journal of Quantum Information},
  volume={22},
  number={05},
  pages={2440009},
  year={2024},
  publisher={World Scientific},
  doi={10.1142/S0219749924400094}
}

@Article{mosonyi2011quantum,
  author    = {Mosonyi, Mil{\'a}n and Hiai, Fumio},
  journal   = {IEEE Transactions on Information Theory},
  title     = {On the quantum R{\'e}nyi relative entropies and related capacity formulas},
  year      = {2011},
  number    = {4},
  pages     = {2474--2487},
  volume    = {57},
  doi       = {10.1109/TIT.2011.2110050},
  publisher = {IEEE},
}

@article{beigi2023operator,
  title={Operator-valued Schatten spaces and quantum entropies},
  author={Beigi, Salman and Goodarzi, Milad M},
  journal={Letters in Mathematical Physics},
  volume={113},
  number={5},
  pages={91},
  year={2023},
  publisher={Springer},
  doi={10.1007/s11005-023-01712-9}
}

@article{ji2025fundamental,
  title={Fundamental work costs of preparation and erasure in the presence of quantum side information},
  author={Ji, Kaiyuan and Gour, Gilad and Wilde, Mark M},
  journal={arXiv:2503.09012},
  year={2025}
}

@article{jaksic2012entropic,
  title={Entropic Fluctuations in Quantum Statistical Mechanics. An Introduction},
  author={Jaksic, Vojkan and Ogata, Yoshiko and Pautrat, Yan and Pillet, Claude-Alain},
  journal={in Quantum Theory from Small to Large Scales: Lecture Notes of the Les Houches Summer School (Oxford University Press), Vol. 95.},
  pages={213--410},
  year={2012},
  publisher={Oxford University Press},
  doi     = {10.1093/acprof:oso/9780199652495.003.0004}
}

@phdthesis{rubboli2025thesis,
  author       = {R. Rubboli},
  title        = {Optimization and additivity of quantum relative entropies},
  school       = {National University of Singapore},
  year         = {2025},
  address      = {Singapore},
  arxivid         ={https://scholarbank.nus.edu.sg/entities/publication/c79e40ee-5084-4a61-936f-4ee0e6f30884}
}

@article{hiai2017different,
  title={Different quantum f-divergences and the reversibility of quantum operations},
  author={Hiai, Fumio and Mosonyi, Mil{\'a}n},
  journal={Reviews in Mathematical Physics},
  volume={29},
  number={07},
  pages={1750023},
  year={2017},
  publisher={World Scientific},
  doi={10.1142/S0129055X1750023}
}

@inproceedings{matsumoto2015new,
  title={A new quantum version of f-divergence},
  author={Matsumoto, Keiji},
  booktitle={Nagoya Winter Workshop: Reality and Measurement in Algebraic Quantum Theory},
  pages={229--273},
  year={2015},
  organization={Springer},
  doi={10.1007/978-981-13-2487-1_10}
}

@article{hayashi2016equivocations,
  title={Equivocations, exponents, and second-order coding rates under various Rényi information measures},
  author={Hayashi, Masahito and Tan, Vincent YF},
  journal={IEEE Transactions on Information Theory},
  volume={63},
  number={2},
  pages={975--1005},
  year={2016},
  publisher={IEEE},
  doi={10.1109/TIT.2016.2636154}
}

@article{vempati2022conditional,
   author = {Mahathi Vempati and Saumya Shah and Nirman Ganguly and Indranil Chakrabarty},
   doi = {10.22331/q-2022-02-02-641},
   issn = {2521-327X},
   journal = {Quantum},
   month = {2},
   pages = {641},
   title = {A-unital Operations and Quantum Conditional Entropy},
   volume = {6},
   year = {2022}
}

@book{nielsen2001quantum,
  title={Quantum computation and quantum information},
  author={Nielsen, Michael A and Chuang, Isaac L},
  volume={2},
  year={2001},
  publisher={Cambridge University Press},
  doi={10.1017/CBO9780511976667}
}

@article{li2025two,
  title={Two-Parameter Rényi Information Quantities with Applications to Privacy Amplification and Soft Covering},
  author={Li, Shi-Bing and Li, Ke and Yu, Lei},
  journal={arXiv:2511.02297},
  year={2025}
}

@article{rubboli2026strong,
  title={The strong converse exponent of composable randomness extraction against quantum side information},
  author={Rubboli, Roberto and Tomamichel, Marco},
  journal={arXiv:2601.19182},
  year={2026}
}

\appendix

\section{Lower and upper semicontinuity and its consequences}
\label{lower semicontinuity}
In this appendix, we establish that the conditional entropy 
\( H_{\alpha,z}^\lambda(A|B)_\rho \) is continuous as a function of the state \( \rho \). 
This continuity property allows us to restrict several arguments in the main text 
to the case of full-rank states, thereby avoiding technical complications related 
to supports and operator powers that may otherwise be ill-defined.

We first start by proving that in the DPI region $\D$, the function $\Upsilon_{\alpha,z}^\lambda$ defined in Definition~\ref{trivariate} is lower or upper semicontinuous in $\sigma_B$, depending on the sign of $\lambda$. This argument is standard in the literature~\cite{mosonyi2011quantum,lami2024continuity} (see also \cite[Appendix A]{rubboli2022new}). 
We start by proving the monotonicity of $\Upsilon_{\alpha,z}^\lambda$ in the third argument.
\begin{lemma}
\label{decreasing}
Let $(\alpha,z, \lambda) \in \mathcal{D}$, $\rho \in \mathcal{S}(A)$, and $\tau,\sigma,\sigma' \in \mathcal{P}(A)$ such that $\rho \ll \tau$. If $\sigma \leq \sigma' $, then for $\lambda \geq 0$ we have the inequality
\begin{align}
&\Upsilon_{\alpha,z}^\lambda(\rho, \tau, \sigma) \geq \Upsilon_{\alpha,z}^\lambda(\rho, \tau, \sigma')  \,.
\end{align}
Moreover, if $\lambda < 0$, the inequality is reversed. 
\end{lemma}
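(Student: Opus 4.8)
The plan is to reduce the claim to a chain of three order-preserving operations applied to the operator inside the trace, and then to read off the final inequality from the sign of the prefactor $\tfrac{1}{\alpha-1}$. Write $\Upsilon^\lambda_{\alpha,z}(\rho,\tau,\sigma)=\tfrac{1}{\alpha-1}\log\Trm\big(M(\sigma)\big)^{z}$ with $M(\sigma):=A\,\sigma^{e}\,A^\dagger$, where $A:=\rho^{\frac{\alpha}{2z}}\tau^{\frac{(1-\lambda)(1-\alpha)}{2z}}$ and $e:=\frac{\lambda(1-\alpha)}{z}$ is the exponent carried by $\sigma$. The first step is to show that $e\in[-1,1]$ throughout $\D$: on $\mathcal{D}_1$ the bound $\lambda\le1$ together with $1-\alpha\le z$ gives $e\le\frac{1-\alpha}{z}\le1$, while $\lambda\ge1-\frac{z}{1-\alpha}$ gives $e\ge\frac{1-\alpha}{z}-1\ge-1$; on $\mathcal{D}_2$ the same two constraints, multiplied through by $1-\alpha<0$, yield $-1\le e\le1$ using $\alpha-1\le z$ and $\frac{1-\alpha}{z}\le0$.

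With $e\in[-1,1]$ in hand, I would invoke operator monotonicity of the power function: from $\sigma\le\sigma'$ we get $\sigma^{e}\le\sigma'^{e}$ when $0\le e\le1$ and $\sigma^{e}\ge\sigma'^{e}$ when $-1\le e\le0$. Conjugation by the fixed operator $A$ preserves the Loewner order, so $M(\sigma)$ and $M(\sigma')$ inherit the same (or reversed) ordering. Finally, since $t\mapsto t^{z}$ is monotone increasing on $[0,\infty)$ for $z>0$, the trace functional $M\mapsto\Trm(M^{z})$ is order preserving (the standard fact that $M\mapsto\Trm f(M)$ inherits monotonicity from $f$), so the ordering of $M(\sigma),M(\sigma')$ transfers to $\Trm(M(\sigma))^{z}$ versus $\Trm(M(\sigma'))^{z}$. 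The conclusion follows by tracking two signs simultaneously: the sign of $e$ (which equals $\mathrm{sgn}(\lambda)\cdot\mathrm{sgn}(1-\alpha)$) and the sign of $\tfrac{1}{\alpha-1}$. For $\lambda\ge0$ one checks that both in the case $\alpha<1$ (where $e\ge0$ and the prefactor is negative) and in the case $\alpha>1$ (where $e\le0$ and the prefactor is positive) the two reversals compose to give $\Upsilon(\sigma)\ge\Upsilon(\sigma')$; for $\lambda<0$ exactly one extra reversal occurs in each case, producing the reversed inequality.

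The delicate part—and the step I expect to be the main obstacle—is the boundary behaviour dictated by the $+\infty$ convention in Definition~\ref{trivariate}, where $\sigma$ may be singular and the powers are read as generalized inverses. Here I would argue directly from the absolute-continuity clauses rather than from the operator chain. For $\lambda\ge0$ the conventions align cleanly with the claimed direction: when $\alpha<1$ one has $\Upsilon(\rho,\tau,\sigma)=+\infty$ exactly when $\tau\perp\sigma$ (since $\rho\ll\tau$ makes $\tau\perp\sigma$ force $\rho\not\ll\sigma$), and $\tau\perp\sigma'$ forces $\tau\perp\sigma$ because $\sigma\le\sigma'$; when $\alpha>1$ one has $\Upsilon=+\infty$ exactly when $\rho\not\ll\sigma$, and $\rho\not\ll\sigma'$ forces $\rho\not\ll\sigma$ since $\operatorname{supp}\sigma\subseteq\operatorname{supp}\sigma'$. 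In either case $\Upsilon(\sigma')=+\infty$ implies $\Upsilon(\sigma)=+\infty$, so $\Upsilon(\sigma)\ge\Upsilon(\sigma')$ holds trivially, and when both values are finite it follows from the operator chain above. For $\lambda<0$ the support bookkeeping requires more care, since it is precisely the interaction between the negative exponent $e$ and the generalized-inverse convention that must be reconciled with the assigned boundary value; I would treat this by restricting to the regime where both quantities are finite and approximating singular $\sigma$ by full-rank perturbations $\sigma+\varepsilon\,\pi$, passing to the limit along the monotone family, which is where most of the effort would go.
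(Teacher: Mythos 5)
Your proof is correct and follows essentially the same route as the paper's: operator (anti)monotonicity of $t\mapsto t^{e}$ for the exponent $e=\lambda(1-\alpha)/z\in[-1,1]$, preservation of the Loewner order under conjugation and under $M\mapsto\Trm\left(M^{z}\right)$, and tracking the sign of the prefactor $\tfrac{1}{\alpha-1}$. Your additional bookkeeping of the singular/$+\infty$ boundary cases goes beyond the paper's own proof, which handles only the finite (full-rank) situation and disposes of $\lambda<0$ with a one-line remark.
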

\begin{proof}
We split the proof for the region $(\alpha,z, \lambda) \in \mathcal{D}$ with $\lambda > 0$ in the range $\alpha <1$ and $\alpha>1$. 

Let $\alpha <1$. We then have 
\begin{align}
 \sigma \leq \sigma'
&\implies  \sigma^{\lambda\frac{1-\alpha}{z}} \leq {\sigma'}^{\lambda \frac{1-\alpha}{z}} \\
& \implies  \Trm\left(\rho^\frac{\alpha}{2z} \tau^{\frac{1-\lambda}{2} \frac{1-\alpha}{z}} \sigma^{\lambda \frac{1-\alpha}{z}}  \tau^{\frac{1-\lambda}{2} \frac{1-\alpha}{z}}\rho^\frac{\alpha}{2z}\right)^z \leq \Trm\left(\rho^\frac{\alpha}{2z} \tau^{\frac{1-\lambda}{2} \frac{1-\alpha}{z}} {\sigma' }^{\lambda\frac{1-\alpha}{z}} \tau^{\frac{1-\lambda}{2} \frac{1-\alpha}{z}} \rho^\frac{\alpha}{2z}\right)^z  
\label{Impl2} \,.
\end{align}
In the first implication, we used that $\lambda (1-\alpha)/z \in (0,1]$ and that the power is operator monotone in the range $(0,1]$. In the last implication, we used that the trace functional $M \rightarrow \Tr(f(M))$ inherits the monotonicity from $f$~(see e.g.~\cite{carlen2010trace}). Therefore, in this range, we get $\Upsilon_{\alpha,z}^\lambda(\rho, \tau, \sigma) \geq \Upsilon_{\alpha,z}^\lambda(\rho, \tau, \sigma')$. 

The proof for $\alpha >1$ follows in the same way by noticing that $(1-\alpha)/z \in [-1,0)$ and that the power in this range is operator antimonotone. Finally, the proof for $\lambda < 0$ follows similarly, with the final inequality reversed.  
\end{proof}

\begin{lemma}
\label{lemma semicontintuity}
Let $(\alpha,z, \lambda)\in \mathcal{D}$, and let $\rho_{AB}$, $\sigma_B$ be states. The function
$
(\rho_{AB},\sigma_B) \rightarrow \Upsilon_{\alpha,z}^\lambda(\rho_{AB},I_A \otimes \rho_B,I_A \otimes \sigma_B)
$
is lower semicontinuous for $\lambda \geq 0$, and upper semicontinuous for $\lambda < 0$.   
\end{lemma}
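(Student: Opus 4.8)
The plan is to prove lower semicontinuity for $\lambda \geq 0$ in detail; the case $\lambda < 0$ then follows verbatim with all monotonicities and semicontinuities reversed, since Lemma~\ref{decreasing} reverses direction there. First I would reduce to sequences: fix $(\alpha,z,\lambda) \in \mathcal{D}$ with $\lambda \geq 0$ and a convergent sequence of states $(\rho_{AB}^{(n)}, \sigma_B^{(n)}) \to (\rho_{AB}, \sigma_B)$, writing $\rho_B^{(n)} = \Tr_A \rho_{AB}^{(n)} \to \rho_B$, and aim to show $\liminf_n \Upsilon_{\alpha,z}^\lambda(\rho_{AB}^{(n)}, I_A \otimes \rho_B^{(n)}, I_A \otimes \sigma_B^{(n)}) \geq \Upsilon_{\alpha,z}^\lambda(\rho_{AB}, I_A \otimes \rho_B, I_A \otimes \sigma_B)$. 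The only source of discontinuity is the generalized-inverse (negative) powers: the exponent $\frac{\alpha}{2z}$ of $\rho_{AB}$ is positive so $\rho_{AB}^{\alpha/2z}$ is continuous, while the exponents $q_\sigma := \frac{\lambda(1-\alpha)}{z}$ of $\sigma_B$ and $q_\tau := \frac{(1-\lambda)(1-\alpha)}{2z}$ of $\rho_B$ are non-positive precisely in region $\mathcal{D}_2$ (both being non-negative in $\mathcal{D}_1$). Hence on the open set where $\sigma_B$ and $\rho_B$ are full rank the functional is jointly continuous, and the entire difficulty is localized at the rank-deficient points.

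The core of the argument regularizes $\sigma_B$ and invokes Lemma~\ref{decreasing}. For $\epsilon > 0$ one has $\sigma_B^{(n)} \leq \sigma_B^{(n)} + \epsilon I$, so for $\lambda \geq 0$ the monotonicity of $\Upsilon$ in its third argument gives
\[
\Upsilon_{\alpha,z}^\lambda(\rho_{AB}^{(n)}, I_A \otimes \rho_B^{(n)}, I_A \otimes \sigma_B^{(n)}) \geq \Upsilon_{\alpha,z}^\lambda\big(\rho_{AB}^{(n)}, I_A \otimes \rho_B^{(n)}, I_A \otimes (\sigma_B^{(n)} + \epsilon I)\big).
\]
Since $\sigma_B^{(n)} + \epsilon I \geq \epsilon I$ is uniformly bounded away from $0$, the power $(\sigma_B^{(n)} + \epsilon I)^{q_\sigma}$ converges in operator norm, so (modulo the $\rho_B^{q_\tau}$ factor handled below) the regularized functional is continuous and passing $n \to \infty$ yields $\liminf_n \Upsilon^{(n)} \geq \Upsilon_{\alpha,z}^\lambda(\rho_{AB}, I_A \otimes \rho_B, I_A \otimes (\sigma_B + \epsilon I))$ for every $\epsilon > 0$. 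I would then let $\epsilon \downarrow 0$: by Lemma~\ref{decreasing} the right-hand side is monotone in $\epsilon$, and the key point is to check that its limit equals $\Upsilon_{\alpha,z}^\lambda(\rho_{AB}, I_A \otimes \rho_B, I_A \otimes \sigma_B)$. When the relevant domain condition of Definition~\ref{trivariate} holds at the limit point — namely $\rho_{AB} \ll I_A \otimes \sigma_B$ when $\alpha > 1$, or $\rho_B \not\perp \sigma_B$ when $\alpha < 1$ — the generalized-inverse power $\sigma_B^{q_\sigma}$ is recovered on the support that actually contributes to the trace, giving the exact value; when the domain condition fails and $\Upsilon = +\infty$, the regularization drives the trace either to $0$ (for $\alpha<1$, so $\frac{1}{\alpha-1}\log \to +\infty$) or to $+\infty$ (for $\alpha>1$, again $+\infty$), so the limit is $+\infty$ as required.

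The main obstacle is the joint dependence on $\rho_{AB}$ through the negative power $\rho_B^{q_\tau}$, which occurs in region $\mathcal{D}_2$ and is genuinely only semicontinuous: an eigenvalue of $\rho_B^{(n)}$ collapsing to $0$ makes $\rho_B^{(n),q_\tau}$ blow up. I would resolve this either by applying the same flooring idea to the second argument — replacing $\rho_B$ by $\rho_B + \epsilon I$, whose effect disappears in the $\epsilon \downarrow 0$ limit thanks to the absolute continuity $\rho_{AB} \ll I_A \otimes \rho_B$, which guarantees the extra mass does not contribute to the trace — or, more cleanly, by using the rewriting established in the proof of Proposition~\ref{DPItheoremAlice},
\[
\Upsilon_{\alpha,z}^\lambda(\rho_{AB}, I_A \otimes \rho_B, I_A \otimes \sigma_B) = D_{\alpha,z}\!\left(\rho_{AB} \,\middle\|\, I_A \otimes \Xi_{\lambda,\frac{1-\alpha}{z}}(\rho_B, \sigma_B)^{\frac{z}{1-\alpha}}\right),
\]
and invoking the known lower semicontinuity of $D_{\alpha,z}$ in its second argument~\cite[Lemma 18]{rubboli2022new} together with the continuity of the map $(\rho_B,\sigma_B) \mapsto \Xi_{\lambda,\frac{1-\alpha}{z}}(\rho_B,\sigma_B)^{\frac{z}{1-\alpha}}$ on full-rank inputs and the flooring argument above at its singular points. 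For $\lambda < 0$ the roles of infimum and supremum, and of lower and upper semicontinuity, are exchanged, and the same three steps carry through with every inequality reversed.
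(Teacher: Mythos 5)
Your proposal is correct and follows essentially the same route as the paper: regularize the third argument to $\sigma_B+\epsilon I$, invoke the antimonotonicity of Lemma~\ref{decreasing} to reduce to the regularized (full-rank, hence continuous) functional, and recover the unregularized value in the limit $\epsilon\downarrow 0$ — the paper packages this as "$\Upsilon$ is the pointwise supremum over $\epsilon>0$ of continuous functions, hence lower semicontinuous," which is equivalent to your sequential $\liminf$ argument. Your extra care about the negative power $\rho_B^{q_\tau}$ (flooring $\rho_B$ or rewriting via $D_{\alpha,z}$) corresponds to the paper's one-line appeal to $\rho_{AB}\ll I_A\otimes\rho_B$ together with the full support of the regularized third argument, so no genuinely different machinery is involved.
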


\begin{proof}
Let us first consider the case where $\lambda \geq 0$. For any $\varepsilon>0$, we have the operator inequality $I_A \otimes \sigma_B \leq I_A \otimes \sigma_B+\varepsilon I_{AB}$.
From the previous lemma we have $ \Upsilon_{\alpha,z}^\lambda(\rho_{AB},I_A \otimes \rho_B,I_A \otimes \sigma_B) \geq \Upsilon_{\alpha,z}^\lambda(\rho_{AB},I_A \otimes \rho_B,I_A \otimes \sigma_B+\varepsilon I_{AB})$ and hence we can write
\begin{equation}
\label{sup}
\Upsilon_{\alpha,z}^\lambda(\rho_{AB},I_A \otimes \rho_B,I_A \otimes \sigma_B) = \sup \limits_{\varepsilon>0} \Upsilon_{\alpha,z}^\lambda(\rho_{AB},I_A \otimes \rho_B,I_A \otimes \sigma_B+\varepsilon I_{AB}) \,.
\end{equation} 
For any fixed $\varepsilon>0$, the functions $(\rho_{AB},\sigma_B) \rightarrow \Upsilon_{\alpha,z}^\lambda(\rho_{AB},I_A \otimes \rho_B,I_A \otimes \sigma_B+\varepsilon I_{AB})$ are continuous since the third argument of $\Upsilon$ has full support and $\rho_{AB} \ll I_A \otimes \rho_B$. Since the pointwise supremum of continuous functions is lower semicontinuous, the function  $(\rho_{AB},\sigma_B) \rightarrow \Upsilon_{\alpha,z}^\lambda(\rho_{AB},I_A \otimes \rho_B,I_A \otimes \sigma_B)$ is lower semicontinuous. The case $\lambda < 0$ is similar with $\sup$ replaced by $\inf$ and lower semicontinuity substituted with upper semicontinuity.
\end{proof}
A direct consequence of the above lemma is that, when $\lambda > 0$, we have $\inf_{\sigma_B}\Upsilon_{\alpha,z}^\lambda(\rho_{AB}, I_A \otimes \rho_B, I_A \otimes \sigma_B) = \min_{\sigma_B}\Upsilon_{\alpha,z}^\lambda(\rho_{AB}, I_A \otimes \rho_B, I_A \otimes \sigma_B)$, meaning the infimum is always attained.   It is important to note that the set of quantum states is compact, for instance, in the trace norm topology. Therefore, the statement follows from the property that a lower semicontinuous function defined on a compact set attains its minimum within that set. Finally, in contrast, when $\lambda < 0$, the supremum becomes a maximum.

In the following, we establish the continuity of the conditional entropy $H_{\alpha,z}^\lambda(A|B)_\rho$ over the state space. We leverage this property. for instance, when establishing additivity in Section~\ref{sec:additivity} and
chain rules in Section~\ref{sec:chainrules}.
Our proofs are inspired by the strategies used in \cite[Theorem 1]{lami2024continuity} and \cite[Theorem 5]{lami2023attainability}.
\begin{proposition}\label{Continuity}
    Let $(\alpha,z,\lambda) \in \mathcal{D}$. Then the function $\mathcal{S}(AB)\ni\rho\mapsto H_{\alpha,z}^\lambda(A|B)_\rho$ is continuous.
\end{proposition}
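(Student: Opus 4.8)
The plan is to prove continuity through a two-sided semicontinuity argument built on the optimization structure of $H^\lambda_{\alpha,z}$, and to reduce the region $\mathcal{D}_2$ (where $\alpha>1$) to the region $\mathcal{D}_1$ (where $\alpha<1$) by duality. Throughout, Corollary~\ref{dimension bound} guarantees that $H^\lambda_{\alpha,z}(A|B)_\rho$ is finite for every state, so there are no infinities to track. The key structural observation is that in $\mathcal{D}_1$ the conditioning marginal enters $\Upsilon^\lambda_{\alpha,z}(\rho_{AB},I_A\otimes\rho_B,I_A\otimes\sigma_B)$ only through the \emph{nonnegative} exponent $(1-\lambda)(1-\alpha)/(2z)$ and $\rho_{AB}$ through the positive exponent $\alpha/(2z)$. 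Consequently, for any \emph{fixed full-rank} $\sigma_B$ the map $\rho\mapsto\Upsilon^\lambda_{\alpha,z}(\rho_{AB},I_A\otimes\rho_B,I_A\otimes\sigma_B)$ is continuous, being a continuous trace functional of nonnegative powers of the converging operators $\rho_{AB}$, $\rho_B$ together with the fixed bounded operator $\sigma_B^{\lambda(1-\alpha)/z}$.

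Fix $(\alpha,z,\lambda)\in\mathcal{D}_1$ and a sequence $\rho_n\to\rho$. For one semicontinuity direction I would use the optimizers of the $\rho_n$: by the attainment remark following Definition~\ref{new entropy}, each $H^\lambda_{\alpha,z}(A|B)_{\rho_n}$ is attained at some state $\sigma_n$; passing to a subsequence we may assume $\sigma_n\to\sigma_\ast$ by compactness of the state space, and the lower/upper semicontinuity of $\Upsilon^\lambda_{\alpha,z}$ from Lemma~\ref{lemma semicontintuity} then controls $\lim_n\Upsilon^\lambda_{\alpha,z}(\rho_n,\cdot,\sigma_n)$ against $\Upsilon^\lambda_{\alpha,z}(\rho,\cdot,\sigma_\ast)$ in the direction dictated by the sign of $\lambda$, yielding $\limsup_n H^\lambda_{\alpha,z}(A|B)_{\rho_n}\le H^\lambda_{\alpha,z}(A|B)_\rho$ in the supremum case and the reverse in the infimum case. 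For the opposite (harder) direction I would instead fix a near-optimal $\sigma_\ast$ for the \emph{limit} state $\rho$ and perturb it to be full-rank: the relevant optimum over $\sigma_B$ is never approached as $\sigma_B$ degenerates on $\mathrm{supp}(\rho_B)$, so $\sigma_\ast$ may be taken full-rank with value within $\varepsilon$ of the optimum. The continuity of $\rho\mapsto\Upsilon^\lambda_{\alpha,z}(\rho_{AB},I_A\otimes\rho_B,I_A\otimes\sigma_\ast)$ noted above then gives the missing inequality up to $\varepsilon$, and letting $\varepsilon\to0$ establishes continuity throughout $\mathcal{D}_1$. (When $0\le\lambda\le1$ all powers are nonnegative, so the objective is jointly continuous and Berge's maximum theorem applies directly to the supremum over the fixed compact set of states, bypassing the two-step argument.)

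To obtain the claim on $\mathcal{D}_2$ I would invoke duality. Given $\rho_{AB}$, pass to its canonical purification $\ket{\rho}_{ABC}=(\rho_{AB}^{1/2}\otimes I_C)\ket{\psi}_{AB:C}$ with $C\cong AB$, which depends continuously on $\rho_{AB}$ because $X\mapsto X^{1/2}$ is continuous; hence $\rho_n\to\rho$ forces the reduced states $\rho_{AC,n}\to\rho_{AC}$. Theorem~\ref{Duality} then gives $H^\lambda_{\alpha,z}(A|B)_\rho=-H^{\hat\lambda}_{\hat\alpha,\hat z}(A|C)_\rho$, and by Remark~\ref{explicit computation} the dual triple $(\hat\alpha,\hat z,\hat\lambda)$ lies in $\mathcal{D}_1$ whenever $(\alpha,z,\lambda)\in\mathcal{D}_2$. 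Since continuity of $\rho_{AC}\mapsto H^{\hat\lambda}_{\hat\alpha,\hat z}(A|C)$ was just established in $\mathcal{D}_1$, composing with the continuous purification map yields continuity of $\rho_{AB}\mapsto H^\lambda_{\alpha,z}(A|B)$ on $\mathcal{D}_2$. This duality detour is precisely what avoids the negative marginal exponent $(1-\lambda)(1-\alpha)/(2z)<0$ that obstructs a direct argument when $\alpha>1$.

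The main obstacle is the harder semicontinuity direction in $\mathcal{D}_1$, namely justifying that a near-optimal $\sigma_\ast$ for the limit state can be chosen full-rank on $\mathrm{supp}(\rho_B)$, and doing so with uniform control as $\rho$ approaches the boundary of the state space. For $\lambda<0$ the optimization variable appears with a negative exponent and the objective diverges as $\sigma_B$ degenerates there, so the optimum genuinely sits away from that boundary and the choice is legitimate; for $0\le\lambda\le1$ the conclusion is immediate from joint continuity. Making the support and attainment statements fully rigorous is the delicate part, and here one may lean on the fixed-point characterization of Theorem~\ref{fixed-point} (and its explicit form in Corollary~\ref{closed-form Petz} for $z=1$) to pin down the support and magnitude of the optimal $\sigma_B$. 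The remaining ingredients — compactness, the semicontinuity of $\Upsilon^\lambda_{\alpha,z}$, and continuity of the canonical purification — are standard.
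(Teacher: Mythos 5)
Your $\mathcal{D}_1$ argument is essentially the paper's own proof: the paper also establishes continuity via two-sided semicontinuity of the optimized functional, using for one direction the extraction of a convergent subsequence of optimizers together with the joint semicontinuity from Lemma~\ref{lemma semicontintuity} and compactness of $\mathcal{S}(B)$, and for the other direction a full-rank regularization $\sigma_B\mapsto(1-t)\sigma_B+t\pi_B$ combined with continuity of $\rho\mapsto Q^\lambda_{\alpha,z}(\rho|\sigma')$ at fixed full-rank $\sigma'$. The only cosmetic difference is that the paper implements the regularization inside the optimization for every $\rho^{(n)}$ (exploiting convexity/concavity of $Q^\lambda_{\alpha,z}(\rho|\cdot)$ in $\sigma_B$ and a minimax inequality), whereas you perturb a near-optimizer of the limit state. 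Your version can be made rigorous without Theorem~\ref{fixed-point}: the scaling identity $\Upsilon^\lambda_{\alpha,z}(\rho,\tau,c\sigma)=\Upsilon^\lambda_{\alpha,z}(\rho,\tau,\sigma)-\lambda\log c$ together with the monotonicity of Lemma~\ref{decreasing} bounds the loss incurred by replacing $\sigma_*$ with $(1-t)\sigma_*+t\pi_B$ by $|\lambda\log(1-t)|$, which makes precise your heuristic claim that the optimum "sits away from the boundary" (and handles the $\lambda\geq 0$ supremum case by the same token).

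Where you genuinely depart from the paper is $\mathcal{D}_2$. The paper does not invoke duality at all: it runs the identical two-sided argument on all of $\mathcal{D}$, splitting cases by the sign of $\lambda(1-\alpha)$ — i.e., by whether the optimization is an infimum of a convex function or a supremum of a concave one — rather than by $\alpha\lessgtr 1$. Your duality detour does buy something real: it sidesteps the negative exponent $(1-\lambda)(1-\alpha)/2z<0$ on $\rho_B$ when $\alpha>1$, which you correctly identify as the analytic obstruction to a naive direct argument. But it carries two costs. First, a logical-ordering risk: Proposition~\ref{Continuity} is used elsewhere in the paper precisely as a tool to extend statements to rank-deficient states (the chain-rule proof explicitly depolarizes, takes limits, and "applies the continuity result from Proposition~\ref{Continuity}"), so continuity is intended to be available independently of the heavier structural theorems. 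If you base continuity on Theorem~\ref{Duality}, you are obliged to verify that the duality proof — variational norm identities, Sion's minimax, the transpose trick with generalized-inverse powers — holds verbatim for arbitrary, possibly singular, states with no hidden limiting step; otherwise the argument is circular. Second, duality is far heavier machinery than needed: the paper's inf/sup case split shows the elementary argument covers $\mathcal{D}_2$ with nothing changed except the roles of upper and lower semicontinuity being interchanged.
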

\begin{proof}
    Since by Corollary \ref{dimension bound}, $Q_{\alpha,z}^\lambda(A|B)_\rho$ is bounded away from zero and infinity, it suffices to prove that the function
    \begin{equation}
        \rho\mapsto Q_{\alpha,z}^\lambda(A|B)_\rho=\opt_{\sigma_B}Q_{\alpha,z}^\lambda(\rho_{AB}|\sigma_B)
    \end{equation}
    is continuous, where
    \begin{equation}\label{QQ} Q_{\alpha,z}^\lambda(\rho_{AB}|\sigma_B) = \Trm\left(\rho_{AB}^\frac{\alpha}{2z} I_A \otimes \rho_B^{\frac{(1-\lambda)(1-\alpha)}{2z}}\sigma_B^{\frac{\lambda(1-\alpha)}{z}} \rho_B^{\frac{(1-\lambda)(1-\alpha)}{2z}}\rho_{AB}^\frac{\alpha}{2z}\right)^z.
    \end{equation}
        
    To prove continuity, we first prove upper semicontinuity and then lower semicontinuity.
    
    First, we assume that $\lambda(1-\alpha)<0$. In this case, $Q_{\alpha,z}^\lambda(\rho_{AB}|\sigma_B)$ is convex in $\sigma_B$, and by definition
    \begin{equation} Q_{\alpha,z}^\lambda(A|B)_\rho=\inf_{\sigma_B}Q_{\alpha,z}^\lambda(\rho_{AB}|\sigma_B).
    \end{equation}
    Let $\rho_{AB}^{(n)}$ be a sequence in $\mathcal{S}(AB)$ converging to $\rho_{AB}^* \in \mathcal{S}(AB)$. Inspired by the proof of~\cite[Theorem 1]{lami2024continuity}, for $t \in (0,1)$ and nonnegative integer $n$, we define
    \begin{equation}
        Y_{t,n}:=\inf_{\sigma_B} Q_{\alpha,z}^\lambda\big(\rho_{AB}^{(n)}\big|(1-t)\sigma_B+t\pi_B\big),
    \end{equation}
    where $\pi_B$ is the maximally mixed state in $\mathcal{S}(B)$. This approach is justified by the fact that \((1-t)\sigma_B + t\pi_B\) remains full-rank for any \( t \in (0,1) \), which guarantees the continuity of the function with respect to the first argument \( \rho_{AB}^{(n)} \). Then, clearly $Q_{\alpha,z}^\lambda(A|B)_{\rho^{(n)}} \leq Y_{t,n}$ since the states of the form $(1-t)\sigma_B + t \pi_B$ form a subset of $\mathcal{S}(B)$. Moreover, by the convexity of $Q_{\alpha,z}^\lambda(\rho_{AB}|\sigma_B)$ in $\sigma_B$, we have
    \begin{align}
        Y_{t,n} &\leq (1-t)\inf_{\sigma_B} Q_{\alpha,z}^\lambda\big(\rho_{AB}^{(n)}\big|\sigma_B\big)+tQ_{\alpha,z}^\lambda\big(\rho_{AB}^{(n)}\big|\pi_B\big) \\
        &=(1-t)Q_{\alpha,z}^\lambda(A|B)_{\rho^{(n)}}+tQ^\lambda_{\alpha,z}\big(\rho_{AB}^{(n)}\big|\pi_B\big).
    \end{align}
    As a result,
    \begin{equation}\label{eq1continuity}
        Q_{\alpha,z}^\lambda(A|B)_{\rho^{(n)}}=\inf_{t \in (0,1)} Y_{t,n}, \qquad \text{for all $n$},
    \end{equation}
    since $Q_{\alpha,z}^\lambda(A|B)_{\rho^{(n)}}-Q^\lambda_{\alpha,z}\big(\rho_{AB}^{(n)},\pi_B\big) \leq 0$. Using the preceding equality, we can write
    \begin{align}
        \limsup_{n\rightarrow\infty} Q_{\alpha,z}^\lambda(A|B)_{\rho^{(n)}} &=\limsup_{n\rightarrow\infty} \inf_{t \in (0,1)} Y_{t,n} \\
        \label{eq2continuity}
        &\leq \inf_{t \in (0,1)} \limsup_{n\rightarrow\infty} Y_{t,n},
    \end{align}
    where we used the minimax inequality
    \begin{equation}
        \limsup_{n\rightarrow\infty} \inf_{t \in (0,1)} f_n(t) \leq \inf_{t \in (0,1)} \limsup_{n\rightarrow\infty} f_n(t),
    \end{equation}
    for any sequence of functions $f_n(t)$. Also note that
    \begin{align}
        \limsup_{n\rightarrow\infty} Y_{t,n} &=\limsup_{n\rightarrow\infty} \inf_{\sigma_B}Q^\lambda_{\alpha,z}\big(\rho_{AB}^{(n)}\big|(1-t)\sigma_B+t\pi_B\big) \\
        &\leq \inf_{\sigma_B} \limsup_{n\rightarrow\infty}Q^\lambda_{\alpha,z}\big(\rho_{AB}^{(n)}\big|(1-t)\sigma_B+t\pi_B\big) \\
        &= \inf_{\sigma_B} Q^\lambda_{\alpha,z}\big(\rho_{AB}^{*}\big|(1-t)\sigma_B+t\pi_B\big)  ,       \label{eq3conyinuity}
    \end{align}
    where we used the fact that $Q^\lambda_{\alpha,z}(\cdot|\sigma)$ is continuous when $\sigma$ is full-rank, implying that
    \begin{equation}\label{PP}
        \limsup_{n\rightarrow\infty} Q^\lambda_{\alpha,z}\big(\rho_{AB}^{(n)}\big|(1-t)\sigma_B+t\pi_B\big)=Q^\lambda_{\alpha,z}\big(\rho_{AB}^{*}\big|(1-t)\sigma_B+t\pi_B\big).
    \end{equation}
    Combining \eqref{eq1continuity}, \eqref{eq2continuity} and \eqref{eq3conyinuity} yields
    \begin{equation}
        \limsup_{n\rightarrow\infty} Q_{\alpha,z}^\lambda(A|B)_{\rho^{(n)}} \leq Q_{\alpha,z}^\lambda(A|B)_{\rho^{*}}.
    \end{equation}
    Since the sequence $\rho_{AB}^{(n)}$ was arbitrary, this means that the function $\rho\mapsto Q_{\alpha,z}^\lambda(A|B)_\rho$ is upper semicontinuous.
    
    We now turn to lower semicontinuity. After possibly extracting a subsequence, we may assume without loss of generality that the sequence \(\{Q_{\alpha,z}^\lambda(A|B)_{\rho^{(n)}}\}_n\) converges (note that by Corollary \ref{dimension bound}, $Q_{\alpha,z}^\lambda(A|B)_\rho$ is bounded away from infinity). Moreover, let $\{\omega_B^{(n)}\}_n$ be the sequence in $\mathcal{S}(B)$ of optimizers such that
    \begin{equation} Q_{\alpha,z}^\lambda(A|B)_{\rho^{(n)}}=Q_{\alpha,z}^\lambda\big(\rho_{AB}^{(n)}\big|\omega_B^{(n)}\big).
    \end{equation}
    Since $\mathcal{S}(B)$ is compact, there always exists a converging subsequence $\{\omega_B^{(n_k)}\}_k$ of the sequence $\{\omega_B^{(n)}\}_n$ converging to a state $\omega_B^{*}\in \mathcal{S}(B)$. Now, we observe that
    \begin{align}
        Q_{\alpha,z}^\lambda(A|B)_{\rho^{*}} &\leq Q_{\alpha,z}^\lambda\big(\rho_{AB}^{*}\big|\omega_B^{*}\big)\leq \liminf_{k\rightarrow\infty}Q_{\alpha,z}^\lambda\big(\rho_{AB}^{(n_k)}\big|\omega_B^{(n_k)}\big)\\ &=\liminf_{k\rightarrow\infty}Q_{\alpha,z}^\lambda(A|B)_{\rho^{(n_k)}}=\liminf_{n\rightarrow\infty}Q_{\alpha,z}^\lambda(A|B)_{\rho^{(n)}},
    \end{align}
    where the second inequality follows from the lower semicontinuity of the function $(\rho_{AB},\sigma_B)\mapsto Q_{\alpha,z}^\lambda(\rho_{AB}|\sigma_B)$ proved in Lemma~\ref{lemma semicontintuity} and the last equality follows from the fact that we picked a converging sequence \(\{Q_{\alpha,z}^\lambda(A|B)_{\rho^{(n)}}\}_n\). 

    \medskip
    The proof for the case where $\lambda(1-\alpha)>0$, is completely analogous. The crucial point is that in this case the function $Q_{\alpha,z}^\lambda(\rho_{AB}|\sigma_B)$ is concave in $\sigma_B$, and
    \begin{equation} Q_{\alpha,z}^\lambda(A|B)_\rho=\sup_{\sigma_B}Q_{\alpha,z}^\lambda(\rho_{AB}|\sigma_B).
    \end{equation}
    The rest of the proof is similar with obvious modifications.
    \end{proof}

\section{Auxiliary lemmas}
In this Appendix, we list several well-known facts that we repeatedly use throughout the text. 

The first result serves as a fundamental building block for establishing concavity and convexity statements across various contexts.
We have~\cite[Proposition 5]{carlen18_alphaz}.
\begin{lemma}
\label{primitive concavity/convexity}
Let $K \in \mathcal{L}(A)$ and $A\in \mathcal{P}(A)$. Then, 
\begin{equation}
A \mapsto \textup{Tr}\left(KA^pK^\dagger\right)^s
\end{equation}
\begin{enumerate}[itemsep=1pt]
\item is concave if $0 \leq p \leq 1$ and $0 < s \leq \frac{1}{p}$;
\item is convex if $-1 \leq p \leq 0$ and $s>0$;
\item is convex if $1 \leq p \leq 2$ and $s \geq \frac{1}{p}$.
\end{enumerate}
\end{lemma}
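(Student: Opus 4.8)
The plan is to isolate a single ``generalized Epstein'' concavity/convexity statement and then recover all three regimes by composing with a scalar power. The starting point is the elementary rewriting, valid for invertible arguments and any $p\neq 0$,
\begin{equation}
\Trm\left(KA^pK^\dagger\right)^s=\Big(\Trm\left(KA^pK^\dagger\right)^{1/p}\Big)^{sp}=:g_p(A)^{sp},
\end{equation}
which moves the entire difficulty into the auxiliary functional $g_p(A):=\Trm(KA^pK^\dagger)^{1/p}$. The key claim I would establish is that $g_p$ is concave for $-1\le p\le 1$ (with $p\neq 0$) and convex for $1\le p\le 2$. In the commutative case this is nothing but the concavity ($p\le 1$) and convexity ($p\ge 1$) of weighted power means, i.e.\ of $\ell^p$ (quasi-)norms on the positive orthant, so the real content lies in the noncommutative extension.

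Granting this claim, each case becomes a one-line monotone-composition argument in which one only tracks the sign and magnitude of the outer exponent $sp$. In Case~(1) the functional $g_p$ is concave and $sp\in(0,1]$, so $t\mapsto t^{sp}$ is concave and nondecreasing and the composition $g_p^{sp}$ is concave. In Case~(3) the functional $g_p$ is convex and $sp\ge 1$, so $t\mapsto t^{sp}$ is convex and nondecreasing and $g_p^{sp}$ is convex. In Case~(2) the functional $g_p$ is concave while $sp<0$, so $t\mapsto t^{sp}$ is convex and \emph{nonincreasing}, and composing a convex nonincreasing function with a concave nonnegative one again yields a convex function. In every case the endpoint constraints on $s$ are exactly what forces $sp$ into the range that endows the outer power with the correct convexity type.

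The remaining task is to prove the claim about $g_p$, which I expect to be the main obstacle. I would invoke Epstein's theorem together with its convex reflection for $p\in[1,2]$ and its negative-exponent analogue for $p\in[-1,0)$; these are provable by an analytic-interpolation (Hadamard three-line) argument applied to a suitable operator-valued analytic family, or---more in the spirit of the present paper---obtained as the degenerate specialization of Zhang's full characterization~\cite{zhang20_alphaz} (equivalently \cite[Proposition~5]{carlen18_alphaz}) in which the second variable is taken trivial, so that joint $(A,B)$-convexity/concavity collapses to convexity/concavity in $A$ alone. It is worth stressing that this strong input is genuinely necessary: the naive route, which uses only operator convexity/concavity of $A\mapsto A^p$ to deduce concavity (resp.\ convexity) of $A\mapsto\Trm(KA^pK^\dagger)$ and then composes with $t\mapsto t^s$, reaches only $0<s\le 1$ in Case~(1) and $s\ge 1$ in Cases~(2)--(3); the $1/p$-exponent built into $g_p$ is precisely the device that bridges the remaining ranges of $s$ not accessible to this elementary argument.
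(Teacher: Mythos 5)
There is a genuine gap, and it sits in your very first line. In this paper (which follows the conventions of Carlen--Frank--Lieb and Zhang), the expression $\Trm\left(KA^pK^\dagger\right)^s$ denotes the trace of the $s$-th \emph{matrix} power, $\Trm\big[(KA^pK^\dagger)^s\big]$, not the $s$-th power of the number $\Trm(KA^pK^\dagger)$. This reading is forced by how the lemma is used: the paper invokes it to conclude concavity or convexity of maps such as $A \mapsto \Trm\big|A^{p/2}KX^{1/2}\big|^{2\mu/p}=\Trm\big[(A^{p/2}KXK^\dagger A^{p/2})^{\mu/p}\big]$, and its source, \cite[Proposition~5]{carlen18_alphaz}, concerns the same Schatten-type functional. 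Under this reading your ``elementary rewriting'' $\Trm(KA^pK^\dagger)^s=\big(\Trm(KA^pK^\dagger)^{1/p}\big)^{sp}$ is false: with $K=I$ and $s=1$ it asserts $\Trm[A^p]=(\Trm A)^p$, which already fails when $A$ is the $2\times 2$ identity. Your own gloss of the commutative case as weighted $\ell^p$ quasi-norms confirms that you are working with the power-of-the-trace functional $[\Trm(KA^pK^\dagger)]^s$ (for the matrix-power functional with commuting data, your $g_p$ is linear in $A$, not an $\ell^p$ norm). So what your composition arguments can at best establish is a statement about that other functional --- a statement which is true, and in fact provable by means far more elementary than Epstein's theorem (operator convexity/concavity of $t^p$ plus degree-one homogeneity and quasiconvexity/quasiconcavity) --- but it is a different and strictly easier statement, and it cannot be substituted where the paper applies the lemma.

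The misidentification also undermines your fallback citations. If one reinterprets your key claim in the matrix-power sense, then ``$g_p$ concave for $0<p\le1$'' is indeed Epstein's theorem and ``$g_p$ convex for $1\le p\le2$'' its known counterpart, but the case $-1\le p<0$ would require concavity of $A\mapsto\Trm\big[(KA^pK^\dagger)^{1/p}\big]$ with \emph{negative} outer exponent $1/p$; this is covered neither by Epstein, nor by \cite[Proposition~5]{carlen18_alphaz}, nor by Zhang's characterization, all of which take the outer power positive. Invoking CFL Proposition~5 elsewhere is moreover circular, since that proposition \emph{is} the lemma to be proved (the paper itself gives no proof, only this citation). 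Finally, for the matrix-power functional no reduction to the endpoint $s=1/p$ by outer scalar composition is possible at all, because $\Trm[(KA^pK^\dagger)^s]$ is not a function of $\Trm[(KA^pK^\dagger)^{1/p}]$. This is precisely why the genuine proofs combine Epstein's analytic (Pick-function) argument at the endpoint with variational formulas in $s$, or specialize Zhang's two-variable theorem to a trivial second variable: the hard ranges $1<s\le 1/p$ in case (1), $0<s<1$ in case (2), and $1/p\le s<1$ in case (3) remain out of reach of any composition argument, and those are exactly the ranges your proposal fails to address for the functional the lemma is actually about.
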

The primary proof of the data processing inequality (DPI) is based on variational characterizations. The earliest variational form of this nature was established for sandwiched Rényi divergences in \cite{frank13_sandwiched} (see also \cite{berta2017variational}), and was subsequently generalized to $\alpha$-$z$ R\'enyi relative entropies in \cite[Theorem 3.3]{zhang20_alphaz}, \cite[Theorem 2.4]{hiai2024alpha}, and ~\cite[Lemma III.23]{mosonyi2023strong} for $\alpha > 1$. In the following, we denote by $\mathcal{L}(A)^{\times}$ the set of invertible linear operators on $A$.

\begin{lemma}\label{variationalZhang}
	For $r_i>0,i=0,1,2$ such that $\frac{1}{r_0}=\frac{1}{r_1}+\frac{1}{r_2}$, we have for any $X,Y\in\mathcal{L}(A)^{\times}$ that
	\begin{equation}\label{equ:variational method min-concave}
	\Trm|XY|^{r_0}=\min_{Z\in\mathcal{L}(A)^{\times}}\left\{\frac{r_0}{r_1}\Trm|XZ|^{r_1}+\frac{r_0}{r_2}\Trm|Z^{-1}Y|^{r_2}\right\},
	\end{equation}
	and 
	\begin{equation}\label{equ:variational method max-convex}
	\Trm|XY|^{r_1}=\max_{Z\in\mathcal{L}(A)^{\times}}\left\{\frac{r_1}{r_0}\Trm|XZ|^{r_0}-\frac{r_1}{r_2}\Trm|Y^{-1}Z|^{r_2}\right\}.
	\end{equation}
\end{lemma}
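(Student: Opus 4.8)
The plan is to reduce each identity to the combination of a single operator inequality—Hölder's inequality for Schatten (quasi-)norms—and the scalar weighted arithmetic–geometric mean (Young) inequality, and then to exhibit an explicit optimizer built from the polar decomposition of $XY$. Throughout I write $\|W\|_p=(\Trm|W|^p)^{1/p}$, and I note that $\frac1{r_0}=\frac1{r_1}+\frac1{r_2}$ with all $r_i>0$ forces $r_0<r_1$ and $r_0<r_2$.

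For the first identity I would factor $XY=(XZ)(Z^{-1}Y)$ for any invertible $Z$, so that Hölder gives $\|XY\|_{r_0}\le\|XZ\|_{r_1}\|Z^{-1}Y\|_{r_2}$. Applying weighted AM–GM with weights $\frac{r_0}{r_1},\frac{r_0}{r_2}$ (which sum to $1$) to the nonnegative numbers $\Trm|XZ|^{r_1}$ and $\Trm|Z^{-1}Y|^{r_2}$ then yields $\frac{r_0}{r_1}\Trm|XZ|^{r_1}+\frac{r_0}{r_2}\Trm|Z^{-1}Y|^{r_2}\ge\|XZ\|_{r_1}^{r_0}\|Z^{-1}Y\|_{r_2}^{r_0}\ge\Trm|XY|^{r_0}$, so the infimum is at least $\Trm|XY|^{r_0}$. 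To see the bound is attained, I would use the polar decomposition $XY=UP$ with $P=|XY|>0$ and $U$ unitary (both invertible since $X,Y$ are), and set $Z=X^{-1}UP^{r_0/r_1}$. A direct check gives $XZ=UP^{r_0/r_1}$ and $Z^{-1}Y=P^{\,1-r_0/r_1}=P^{r_0/r_2}$, so both $\Trm|XZ|^{r_1}$ and $\Trm|Z^{-1}Y|^{r_2}$ equal $\Trm P^{r_0}$, and the objective collapses to $(\frac{r_0}{r_1}+\frac{r_0}{r_2})\Trm|XY|^{r_0}=\Trm|XY|^{r_0}$.

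For the second identity I would proceed dually. Factoring $XZ=(XY)(Y^{-1}Z)$ and using Hölder gives $\Trm|XZ|^{r_0}\le\|XY\|_{r_1}^{r_0}\|Y^{-1}Z\|_{r_2}^{r_0}$. Substituting this into the objective and abbreviating $a=\|XY\|_{r_1}$, $c=\|Y^{-1}Z\|_{r_2}$ bounds it above by the scalar expression $\frac{r_1}{r_0}a^{r_0}c^{r_0}-\frac{r_1}{r_2}c^{r_2}$; maximizing over $c\ge0$ (critical point $c=a^{r_0/(r_2-r_0)}$, using $r_0<r_2$) gives exactly $a^{r_1}=\Trm|XY|^{r_1}$ after invoking $r_1=\frac{r_0r_2}{r_2-r_0}$. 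Thus every $Z$ produces an objective at most $\Trm|XY|^{r_1}$. Attainment follows from the same polar decomposition: taking $Z=YP^{r_1/r_2}$ yields $Y^{-1}Z=P^{r_1/r_2}$ and $XZ=UP^{\,1+r_1/r_2}$, and the identity $r_0(1+r_1/r_2)=r_1$ makes both traces equal $\Trm P^{r_1}$, so the objective equals $\Trm|XY|^{r_1}$.

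The one point requiring care—and the main obstacle—is that the $r_i$ need not exceed $1$, so the functionals $\Trm|\cdot|^{r_i}$ are only quasi-norms and the usual norm form of Hölder's inequality is unavailable. I would handle this by deriving $\|AB\|_{r_0}\le\|A\|_{r_1}\|B\|_{r_2}$ directly from the log-majorization of singular values $\sigma(AB)\prec_{\log}\sigma(A)\odot\sigma(B)$ (Horn's inequality): log-majorization implies $\sum_j\sigma_j(AB)^{r_0}\le\sum_j(\sigma_j(A)\sigma_j(B))^{r_0}$, and a sequence-Hölder step with the exponents $r_1/r_0,r_2/r_0>1$ closes the estimate. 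Beyond this I would only need to confirm that the constructed optimizers lie in $\mathcal{B}(\mathcal{H})^{\times}$, which is immediate because $X,Y,P,U$ are all invertible, and to keep the exponent bookkeeping consistent throughout.
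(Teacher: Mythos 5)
Your proposal is correct. Note, however, that the paper does not prove this lemma at all: it is stated in the appendix of auxiliary facts and imported by citation from the literature (Zhang, Theorem~3.3 of the $\alpha$-$z$ paper, together with Hiai and Mosonyi--Ogawa), so there is no in-paper argument to compare against. Your proof is a correct, self-contained reconstruction of the standard argument behind those references: the lower (resp.\ upper) bound follows from H\"older's inequality for Schatten quasi-norms applied to the factorization $XY=(XZ)(Z^{-1}Y)$ (resp.\ $XZ=(XY)(Y^{-1}Z)$) combined with weighted AM--GM (resp.\ a one-variable scalar maximization using $r_1=\tfrac{r_0r_2}{r_2-r_0}$), and attainment is exhibited explicitly via the polar decomposition $XY=UP$, taking $Z=X^{-1}UP^{r_0/r_1}$ in the first identity and $Z=YP^{r_1/r_2}$ in the second; both choices are manifestly invertible and the exponent bookkeeping ($1-r_0/r_1=r_0/r_2$ and $r_0(1+r_1/r_2)=r_1$) checks out. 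You also correctly identified and resolved the only delicate point: since the exponents may lie below $1$, the norm form of H\"older is unavailable, and the inequality $\|AB\|_{r_0}\leq\|A\|_{r_1}\|B\|_{r_2}$ must instead be derived from Horn's weak log-majorization $\sigma(AB)\prec_{w\log}\sigma(A)\odot\sigma(B)$ (which yields $\sum_j\sigma_j(AB)^{r_0}\leq\sum_j(\sigma_j(A)\sigma_j(B))^{r_0}$ because $x\mapsto e^{r_0x}$ is increasing and convex) followed by scalar H\"older with conjugate exponents $r_1/r_0,r_2/r_0>1$. This is exactly how the cited sources handle it, so your route is not genuinely different---it is the proof the paper delegates to its references, carried out in full.
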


Moreover, we have the additional variational form that involves the optimization over quantum states~\cite{lennert13_renyi} (see also~\cite[Lemma 3.4]{tomamichel16_book}).
\begin{lemma}
\label{optimization norm}
Let $Y\geq 0$ and $-1 \leq \mu \leq 1$. Then the solution of $\opt_{\sigma}\Tr(Y\sigma^\mu)$ is $\tau \propto Y^\frac{1}{1-\mu}$ which yields the value $\big(\textup{Tr}\big(Y^\frac{1}{1-\mu}\big)\big)^{1-\mu}$.
\end{lemma}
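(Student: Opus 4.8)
The plan is to reduce the matrix optimization to a scalar optimization over the probability simplex by a pinching argument, and then to dispatch the scalar problem with Hölder's inequality. Throughout, the sign of $\mu$ dictates the sense of the optimization: for $0 < \mu \le 1$ the map $\sigma \mapsto \sigma^\mu$ is operator concave, so $\sigma \mapsto \Tr(Y\sigma^\mu)$ is concave and $\opt$ is a supremum, whereas for $-1 \le \mu < 0$ the map is operator convex and $\opt$ is an infimum. I will keep these two branches in parallel.

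\textbf{Step 1 (reduction to the commuting case).} Let $\mathcal{P}_Y$ denote the pinching map associated with the spectral decomposition of $Y$; it is a unital, trace-preserving, self-adjoint completely positive map with $\mathcal{P}_Y(Y)=Y$. Consequently, for any state $\sigma$ we have $\Tr(Y\sigma^\mu)=\Tr\big(\mathcal{P}_Y(Y)\,\sigma^\mu\big)=\Tr\big(Y\,\mathcal{P}_Y(\sigma^\mu)\big)$. The operator Jensen inequality for the conditional expectation $\mathcal{P}_Y$ yields $\mathcal{P}_Y(\sigma^\mu)\le \mathcal{P}_Y(\sigma)^\mu$ when $0<\mu\le 1$ and the reversed inequality when $-1\le\mu<0$. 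Since $Y\ge 0$, in both regimes the value at $\sigma$ is dominated (respectively bounded below) by the value at the pinched state $\mathcal{P}_Y(\sigma)$, which is a density matrix commuting with $Y$. Hence the optimum may be sought among states diagonal in the eigenbasis of $Y$, matching the supremum/infimum convention above.

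\textbf{Step 2 (scalar optimization).} Writing $Y=\sum_i y_i\ketbra{i}{i}$ and a commuting state $\sigma=\sum_i s_i\ketbra{i}{i}$ with $s_i\ge 0$ and $\sum_i s_i=1$, the objective becomes $\sum_i y_i s_i^\mu$. Applying Hölder's inequality with the conjugate exponents $p=\tfrac{1}{1-\mu}$ and $q=\tfrac1\mu$ (and its reverse form when $\mu<0$) gives
\[
\sum_i y_i s_i^\mu \le \Big(\sum_i y_i^{1/(1-\mu)}\Big)^{1-\mu}\Big(\sum_i s_i\Big)^{\mu}=\Big(\sum_i y_i^{1/(1-\mu)}\Big)^{1-\mu},
\]
with equality exactly when $s_i\propto y_i^{1/(1-\mu)}$. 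Translating back to operators yields the optimizer $\tau\propto Y^{1/(1-\mu)}$ and the optimal value $\big(\Tr Y^{1/(1-\mu)}\big)^{1-\mu}$, which is just the Schatten $\tfrac{1}{1-\mu}$-(quasi)norm of $Y$.

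\textbf{Main obstacle.} The crux is Step 1: one must confirm that the direction of the operator Jensen inequality lines up with the sense of $\opt$, so that pinching genuinely improves rather than worsens the objective in each regime. An alternative that sidesteps the reverse-Hölder bookkeeping is to verify optimality directly from the Lagrange stationarity condition $\mu\, y_i\, s_i^{\mu-1}=\nu$ together with the concavity (for $0<\mu<1$) or convexity (for $-1<\mu<0$) of the scalar objective on the simplex, which guarantees the stationary point is the global optimizer. A secondary technical point is support and endpoint handling: when $Y$ or $\sigma$ fails to be full rank one restricts to $\operatorname{supp}Y$ (using generalized inverses), and the degenerate exponents $\mu\in\{-1,1\}$ should be checked or excluded, since $Y^{1/(1-\mu)}$ is interpreted as a limit at $\mu=1$ (recovering $\|Y\|_\infty$).
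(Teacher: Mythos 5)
Your argument is essentially correct, but note that the paper itself does not prove this lemma at all: it is quoted as a known variational fact, with citations to~\cite{lennert13_renyi} and to~\cite[Lemma 3.4]{tomamichel16_book}. The standard proof in those references skips your Step 1 entirely and applies the H\"older inequality for Schatten (quasi-)norms directly to the non-commuting pair: for $0<\mu<1$, $\Tr(Y\sigma^\mu)\le \|Y\|_{1/(1-\mu)}\,\|\sigma^\mu\|_{1/\mu}=\|Y\|_{1/(1-\mu)}(\Tr\sigma)^{\mu}$, and the reverse matrix H\"older inequality for $-1\le\mu<0$; attainment is then checked by substituting $\tau\propto Y^{1/(1-\mu)}$ and using $1+\tfrac{\mu}{1-\mu}=\tfrac{1}{1-\mu}$. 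Your route---pinching in the eigenbasis of $Y$ plus the Davis--Choi--Jensen inequality, followed by classical H\"older on the simplex---is a valid alternative: the directions do line up with the sense of $\opt$ in both regimes (operator concavity of $x\mapsto x^\mu$ for $0<\mu\le 1$ makes pinching improve the objective for the supremum; operator convexity for $-1\le\mu<0$ makes it decrease the objective for the infimum), so what you call the main obstacle is not an obstacle. What your approach buys is that only scalar H\"older is needed; what it costs is the invocation of operator Jensen, which is a strictly stronger tool than the matrix H\"older inequality used in the cited proof.

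Two caveats you only partially address should be made explicit. At $\mu=1$ the exponents degenerate, and the claimed value must be read as the limit $(\Tr Y^{1/(1-\mu)})^{1-\mu}\to\|Y\|_\infty$ with optimizer the normalized projector onto the top eigenspace; your ``limit'' reading is the right one. More substantively, for $\mu<0$ your Step 1 needs $\sigma>0$, and the infimum must be taken with the convention that $\Tr(Y\sigma^\mu)=+\infty$ whenever $\operatorname{supp}(Y)\not\subseteq\operatorname{supp}(\sigma)$; restricting to $\operatorname{supp}(Y)$, as you suggest, does not repair this. Under the paper's generalized-inverse convention (negative powers ignore the kernel) the lemma as literally stated would be false: for $Y=I_2$ and $\mu=-1$, the rank-one state $\sigma=\ketbra{0}{0}$ gives $\Tr(Y\sigma^{-1})=1$, strictly below the claimed optimum $\big(\Tr Y^{1/2}\big)^{2}=4$. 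So the full-rank restriction (or the $+\infty$ convention) is a necessary hypothesis, not a cosmetic one.
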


The next lemma demonstrates the equivalence between additive and multiplicative variational forms. This is a well-established result, which can be derived, for example, by requiring that all inequalities in the proof of~\cite[Theorem 3.3]{zhang20_alphaz} hold as equalities.

\begin{lemma}\label{additive/multiplicative}
 Let \( r_0, r_1, r_2 \) be positive numbers such that
 \begin{align}
     \frac{1}{r_0} = \frac{1}{r_1} + \frac{1}{r_2}.
 \end{align}
Moreover, let \( f, g \) be two positive matrix functions satisfying
\begin{align}
    f(\lambda X) = \lambda^{r_1} f(X) \quad \text{and} \quad g(\lambda X) = \lambda^{-r_2} g(X).
\end{align}
  Then, the following equality holds: 
  \begin{align}
     \inf_{X > 0} \left\{ \frac{r_0}{r_1} f(X) + \frac{r_0}{r_2} g(X) \right\} = \inf_{X > 0} \left\{ f(X)^{\frac{r_0}{r_1}} g(X)^{\frac{r_0}{r_2}} \right\}. 
  \end{align}
\end{lemma}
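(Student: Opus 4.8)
The plan is to derive both of the required inequalities from a single application of the weighted arithmetic--geometric mean (AM--GM) inequality, using the homogeneity hypotheses to realize its equality case by rescaling. The crucial preliminary observation is that the two coefficients appearing in the additive form sum to one, since $\frac{r_0}{r_1} + \frac{r_0}{r_2} = r_0\left(\frac{1}{r_1}+\frac{1}{r_2}\right) = 1$ by hypothesis. Thus they qualify as convex weights in the weighted AM--GM inequality $w_1 u + w_2 v \geq u^{w_1}v^{w_2}$ (valid for $w_1,w_2\geq 0$ with $w_1+w_2=1$ and $u,v\geq0$, with equality iff $u=v$ when $w_1,w_2>0$).

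First I would establish the ``$\geq$'' direction pointwise. For every fixed $X > 0$, applying the weighted AM--GM inequality with weights $w_1 = r_0/r_1$, $w_2 = r_0/r_2$ and values $u = f(X)$, $v = g(X)$ gives
\begin{equation}
\frac{r_0}{r_1} f(X) + \frac{r_0}{r_2} g(X) \geq f(X)^{r_0/r_1} g(X)^{r_0/r_2}.
\end{equation}
Taking the infimum over $X > 0$ on both sides immediately yields $\inf_X\{\text{additive}\}\geq\inf_X\{\text{multiplicative}\}$.

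For the reverse ``$\leq$'' direction I would exploit the scaling behavior. Under $X \mapsto \lambda X$ with $\lambda > 0$, the multiplicative objective is \emph{invariant}, because
\begin{equation}
f(\lambda X)^{r_0/r_1} g(\lambda X)^{r_0/r_2} = \lambda^{r_0} f(X)^{r_0/r_1}\,\lambda^{-r_0} g(X)^{r_0/r_2} = f(X)^{r_0/r_1} g(X)^{r_0/r_2},
\end{equation}
using $f(\lambda X) = \lambda^{r_1} f(X)$, $g(\lambda X) = \lambda^{-r_2} g(X)$, and $\frac{r_0}{r_1}r_1 = r_0 = \frac{r_0}{r_2}r_2$. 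Now fix an arbitrary $X_0 > 0$ and choose the scaling $\lambda>0$ solving $\lambda^{r_1} f(X_0) = \lambda^{-r_2} g(X_0)$, i.e.\ $\lambda^{r_1+r_2} = g(X_0)/f(X_0)$, which is solvable since $f(X_0),g(X_0)>0$ and $r_1+r_2>0$. This choice forces $f(\lambda X_0)=g(\lambda X_0)$, so the AM--GM inequality becomes an equality at $X=\lambda X_0$, and combined with scale-invariance we get
\begin{equation}
\inf_{X>0}\left\{\frac{r_0}{r_1} f(X) + \frac{r_0}{r_2} g(X)\right\}
\leq \frac{r_0}{r_1} f(\lambda X_0) + \frac{r_0}{r_2} g(\lambda X_0)
= f(X_0)^{r_0/r_1} g(X_0)^{r_0/r_2}.
\end{equation}
Since $X_0$ is arbitrary, taking the infimum over $X_0$ gives the ``$\leq$'' direction, and combining the two inequalities completes the proof.

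The argument is essentially routine; the only point requiring care is that the homogeneity exponents have opposite signs ($r_1>0$ versus $-r_2<0$), which is precisely what makes the multiplicative objective scale-invariant while giving the additive objective a genuine interior minimum along each scaling ray --- guaranteeing that the equality case of AM--GM is realized by a positive $\lambda$. No continuity or compactness is needed, since the entire proof reduces to a pointwise inequality together with a one-parameter optimization.
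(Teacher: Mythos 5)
Your proof is correct and follows essentially the same route as the paper's: the weighted AM--GM inequality (using $\frac{r_0}{r_1}+\frac{r_0}{r_2}=1$) for one direction, and a rescaling $X_0 \mapsto \lambda X_0$ for the other. Indeed, your choice $\lambda = (g(X_0)/f(X_0))^{1/(r_1+r_2)}$ is exactly the paper's scaling factor (its exponents $\frac{r_0-r_1}{r_1^2}$ and $\frac{r_0}{r_1 r_2}$ simplify to $\mp\frac{1}{r_1+r_2}$ via $r_0 = \frac{r_1 r_2}{r_1+r_2}$); your framing through the AM--GM equality case plus scale-invariance of the multiplicative objective is just a cleaner way of saying the same thing.
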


\begin{proof} 
   The weighted arithmetic-geometric mean inequality states that for \( \alpha \in (0,1) \), 
   \begin{align}
        \alpha x + (1 - \alpha) y \geq x^\alpha y^{1 - \alpha}.
   \end{align}
   Applying this with \( \alpha = \frac{r_0}{r_1} \) (noting that \( 1 - \alpha = \frac{r_0}{r_2} \)), we obtain  
   \begin{align}
        \frac{r_0}{r_1} f(X) + \frac{r_0}{r_2} g(X) \geq f(X)^{\frac{r_0}{r_1}} g(X)^{\frac{r_0}{r_2}}.
   \end{align}
   Taking the infimum over \( X > 0 \) preserves the inequality, yielding  
\begin{align}
    \inf_{X > 0} \left\{ \frac{r_0}{r_1} f(X) + \frac{r_0}{r_2} g(X) \right\} \geq \inf_{X > 0} \left\{ f(X)^{\frac{r_0}{r_1}} g(X)^{\frac{r_0}{r_2}} \right\}.
\end{align}
 Let \( X > 0 \) be any feasible point for the right-hand side. Define \( \lambda X \), where  
 \begin{align}
      \lambda = f(X)^{\frac{r_0 - r_1}{r_1^2}} g(X)^{\frac{r_0}{r_2 r_1}} = f(X)^{-\frac{r_0}{r_1 r_2}} g(X)^{-\frac{r_0 - r_2}{r_2^2}}.
 \end{align}
   Since \( \lambda X > 0 \), it is a valid point for the left-hand side. Substituting \( \lambda X \) into the left-hand expression, we find that it equals the right-hand side value. Therefore,
   \begin{align}
        \inf_{X > 0} \left\{ \frac{r_0}{r_1} f(X) + \frac{r_0}{r_2} g(X) \right\} \leq \inf_{X > 0} \left\{ f(X)^{\frac{r_0}{r_1}} g(X)^{\frac{r_0}{r_2}} \right\}.
   \end{align}
   Combining both inequalities completes the proof.  
\end{proof}

The following result relates to complex interpolation theory and offers a version of Hadamard's three-line theorem specifically for Schatten norms, as established in~\cite{beigi13_sandwiched}.
\begin{lemma}
\label{Hadamard}
Let $F:S \rightarrow L(\mathcal{H})$ be a bounded map that is holomorphic in the interior of $S$ and continuous up to the boundary. Assume that $1 \leq p_0,p_1 \leq \infty$ and for $0<\theta<1$ define $p_\theta$ by
\begin{equation}
\frac{1}{p_\theta} = \frac{1-\theta}{p_0}+\frac{\theta}{p_1}.
\end{equation}
For $k=0,1$, define 
\begin{equation}
M_k=\sup_{t \in \mathbb{R}}\|F(k+it)\|_{p_k}.
\end{equation}
Then, we have
\begin{equation}
\|F(\theta)\|_{p_{\theta}}\leq M_0^{1-\theta}M_1^\theta.
\end{equation}
\end{lemma}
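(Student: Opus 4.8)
The plan is to reduce this operator-valued statement to the classical scalar Hadamard three-line theorem by exploiting Schatten-norm duality. The starting point is the identity $\|A\|_{p} = \sup\{|\Tr(GA)| : \|G\|_{p'} \leq 1\}$, where $p'$ denotes the Hölder conjugate of $p$. Thus I would fix an arbitrary $G$ with $\|G\|_{q_\theta} = 1$, where $q_\theta = p_\theta'$, and aim to bound $|\Tr(G\,F(\theta))|$; taking the supremum over all such $G$ at the end recovers $\|F(\theta)\|_{p_\theta}$. Since the conjugate exponents interpolate in the same fashion, namely $\frac{1}{q_\theta} = \frac{1-\theta}{q_0} + \frac{\theta}{q_1}$ with $q_k = p_k'$, there is enough room to build an analytic family around $G$ whose boundary norms are controlled.

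Next I would promote the fixed $G$ to a holomorphic family. Writing a singular value decomposition $G = \sum_j \mu_j \ketbra{w_j}{w_j'}$ with $\mu_j \geq 0$, I would set
\begin{equation}
G(z) := \sum_j \mu_j^{\,q_\theta\left(\frac{1-z}{q_0} + \frac{z}{q_1}\right)} \ketbra{w_j}{w_j'},
\end{equation}
which is entire in $z$ and satisfies $G(\theta) = G$ because the exponent evaluates to $1$ there. On the line $\Re z = 0$ the real part of the exponent equals $q_\theta/q_0$, so $\|G(it)\|_{q_0}^{q_0} = \sum_j \mu_j^{q_\theta} = \|G\|_{q_\theta}^{q_\theta} = 1$; likewise $\|G(1+it)\|_{q_1}^{q_1} = \sum_j \mu_j^{q_\theta} = 1$. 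Hence both boundary norms of $G(z)$ are identically $1$. The scalar function $f(z) := \Tr(G(z) F(z))$ is then holomorphic in the interior of the strip $S$, continuous up to its boundary, and bounded, the last point following in finite dimension from the boundedness of $F$ and of the entries of $G(z)$. Applying Hölder's inequality for Schatten norms on each boundary line yields $|f(it)| \leq \|G(it)\|_{q_0}\|F(it)\|_{p_0} \leq M_0$ and $|f(1+it)| \leq \|G(1+it)\|_{q_1}\|F(1+it)\|_{p_1} \leq M_1$.

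Finally, I would invoke the classical scalar Hadamard three-line theorem for the bounded holomorphic function $f$, which gives $|f(\theta)| \leq M_0^{1-\theta} M_1^{\theta}$. Since $f(\theta) = \Tr(G\,F(\theta))$ and $G$ was an arbitrary unit-norm element of the dual space, taking the supremum over $G$ produces $\|F(\theta)\|_{p_\theta} \leq M_0^{1-\theta} M_1^\theta$, as claimed. The main technical obstacle I anticipate is the bookkeeping at the degenerate exponents, namely the endpoints $p_k = \infty$ (where $q_k = 1$) and the vanishing singular values $\mu_j = 0$, for which the complex power $\mu_j^{\gamma(z)}$ must be consistently interpreted as $0$; these must be handled alongside a careful verification of the boundedness and boundary continuity of $f$ that legitimize the scalar three-line theorem. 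In finite dimensions each of these points is routine, but they should be addressed explicitly so that the interpolation argument is airtight.
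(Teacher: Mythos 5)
Your proof is correct, but note that the paper itself does not prove this lemma at all: it is stated as an auxiliary result and attributed to Beigi's work on sandwiched R\'enyi divergences (the reference \cite{beigi13_sandwiched}), where the proof given is precisely your argument --- Schatten-norm duality $\|A\|_{p}=\sup\{|\Tr(GA)|:\|G\|_{p'}\leq 1\}$, promotion of the dual element $G$ to a holomorphic family via its singular value decomposition with interpolated exponents, H\"older's inequality on the two boundary lines, and the scalar three-line theorem applied to $z\mapsto\Tr(G(z)F(z))$. So your proposal reconstructs the standard proof behind the citation; the only small slip is in your closing remark on degenerate cases: the endpoint $p_k=\infty$ gives $q_k=1$, which is harmless, whereas the genuinely degenerate situation is $p_0=p_1=1$ (so $q_\theta=\infty$), where the interpolating exponent is formally ill-defined but the claim is trivial since one can take $G(z)\equiv G$.
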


The next two lemmas will help us derive a variational form using Schatten norms, which we use in Section~\ref{sec:chainrules} to prove chain rules for the new conditional entropy $H_{\alpha,z}^\lambda$. We refer to~\cite[Section 2.4]{Watrous} and~\cite[Lemma 11-12]{dupuis2015chain} for more details.
\begin{lemma}
\label{Dupuis Lemma 12}
    Let $|\psi\rangle_{AB}$  be a rank-one vector, possibly unnormalized. Then, 
    \begin{equation}
        \langle\psi\ket{\psi} = \Trm \left(\Op \!\!\,_{A\rightarrow B}(\ket{\psi})^\dagger \Op \!\!\,_{A\rightarrow B}(\ket{\psi})\right)
    \end{equation}
and therefore, 
\begin{equation}
    \|\ket{\psi}\| = \| \Op \!\!\,_{A\rightarrow B}(\ket{\psi})\|_2 \,.
\end{equation}
\end{lemma}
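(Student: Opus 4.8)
The plan is to verify both identities by a direct computation in the fixed computational bases, using only the linearity of $\Op_{A \rightarrow B}$ and the orthonormality of the basis vectors; the ``rank-one'' hypothesis plays no essential role, since the argument works verbatim for an arbitrary vector. First I would expand the (possibly unnormalized) vector as $\ket{\psi}_{AB} = \sum_{i,j} \psi_{ij}\, \ket{i}_A \otimes \ket{j}_B$, so that orthonormality immediately gives $\langle \psi | \psi \rangle = \sum_{i,j} |\psi_{ij}|^2$.

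Next, applying the defining relation~\eqref{operator-vector} together with the linearity of $\Op_{A \rightarrow B}$, I obtain $\Op_{A \rightarrow B}(\ket{\psi}) = \sum_{i,j} \psi_{ij}\, \ket{j}_B \langle i|_A$, whose adjoint is $\Op_{A \rightarrow B}(\ket{\psi})^\dagger = \sum_{i,j} \overline{\psi_{ij}}\, \ket{i}_A \langle j|_B$. Multiplying the two and using $\langle j'| j \rangle = \delta_{jj'}$ to collapse the $B$-indices yields $\Op_{A \rightarrow B}(\ket{\psi})^\dagger \Op_{A \rightarrow B}(\ket{\psi}) = \sum_{i,i',j} \overline{\psi_{i'j}}\, \psi_{ij}\, \ket{i'}_A \langle i|_A$. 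Taking the trace over $A$ imposes $i = i'$, giving $\Trm\big(\Op_{A \rightarrow B}(\ket{\psi})^\dagger \Op_{A \rightarrow B}(\ket{\psi})\big) = \sum_{i,j} |\psi_{ij}|^2 = \langle \psi | \psi \rangle$, which is the first identity.

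Finally, the norm identity follows by taking square roots: the Euclidean norm obeys $\|\ket{\psi}\| = \sqrt{\langle \psi | \psi \rangle}$, while the Schatten $2$-norm is, by definition, $\|M\|_2 = \sqrt{\Trm(M^\dagger M)}$, so the first identity applied with $M = \Op_{A \rightarrow B}(\ket{\psi})$ gives $\|\ket{\psi}\| = \|\Op_{A \rightarrow B}(\ket{\psi})\|_2$. There is no genuine obstacle here; the only point demanding care is that $\Op_{A \rightarrow B}$ is complex-linear rather than conjugate-linear, which is precisely what places the complex conjugation on the correct factor when forming the adjoint, ensuring that the Hilbert--Schmidt inner product reproduces the vector inner product rather than its transpose. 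The statement is nothing but the standard unnormalized operator--vector (vectorization) isometry specialized to the present setting.
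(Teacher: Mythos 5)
Your computation is correct and complete: expanding $\ket{\psi}$ in the computational bases, using the defining relation \eqref{operator-vector} plus linearity, and collapsing the Kronecker deltas is exactly the standard vectorization-isometry argument, and your observation that the "rank-one" hypothesis is superfluous is also accurate. The paper itself offers no proof of this lemma — it is stated as an auxiliary fact with a citation to Watrous and to Dupuis et al. — and your direct verification is precisely the argument found in those references, so there is nothing to compare beyond noting full agreement.
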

\begin{lemma}
\label{Dupuis Lemma 11}
    Let $|\psi\rangle_{AB}$ be a rank-one vector, possibly unnormalized and let $X_A \in \mathcal{L}(A)$ and $Y_B \in \mathcal{L}(B)$. Then, we have that
    \begin{equation}
        \Op\!\!\,_{A \rightarrow B}(X_A \otimes Y_B \ket{\psi}) = Y_B \Op\!\!\,_{A \rightarrow B}(\ket{\psi})X_A^\top \,.
    \end{equation}
\end{lemma}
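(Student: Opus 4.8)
The plan is to reduce the identity to computational-basis vectors and then verify it by a direct calculation, exploiting that both sides depend linearly on $\ket{\psi}$. Indeed, $\Op_{A\rightarrow B}$ is linear and $\ket{\psi}\mapsto (X_A\otimes Y_B)\ket{\psi}$ is linear, so the left-hand side $\Op_{A\rightarrow B}\big((X_A\otimes Y_B)\ket{\psi}\big)$ is linear in $\ket{\psi}$; the right-hand side $Y_B\,\Op_{A\rightarrow B}(\ket{\psi})\,X_A^\top$ is as well. Hence it suffices to prove the identity for $\ket{\psi}=\ket{i}_A\otimes\ket{j}_B$, after which the general case follows by writing $\ket{\psi}=\sum_{ij}\psi_{ij}\,\ket{i}_A\otimes\ket{j}_B$ and summing.

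First I would record the defining relation \eqref{operator-vector} in the form $\Op_{A\rightarrow B}(\ket{i}_A\otimes\ket{j}_B)=\ket{j}_B\bra{i}_A$. Next, expanding $X_A\ket{i}_A=\sum_k (X_A)_{ki}\ket{k}_A$ and $Y_B\ket{j}_B=\sum_l (Y_B)_{lj}\ket{l}_B$ in the fixed computational bases and applying $\Op_{A\rightarrow B}$ together with linearity and the preceding relation, I obtain $\Op_{A\rightarrow B}\big((X_A\otimes Y_B)(\ket{i}_A\otimes\ket{j}_B)\big)=\sum_{k,l}(X_A)_{ki}(Y_B)_{lj}\,\ket{l}_B\bra{k}_A$. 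Factoring the double sum as $\big(\sum_l (Y_B)_{lj}\ket{l}_B\big)\big(\sum_k (X_A)_{ki}\bra{k}_A\big)$, the left factor equals $Y_B\ket{j}_B$ while the right factor equals $\bra{i}_A X_A^\top$, using $\sum_k (X_A)_{ki}\bra{k}_A=\sum_k (X_A^\top)_{ik}\bra{k}_A=\bra{i}_A X_A^\top$. Assembling these, the expression becomes $Y_B\big(\ket{j}_B\bra{i}_A\big)X_A^\top=Y_B\,\Op_{A\rightarrow B}(\ket{i}_A\otimes\ket{j}_B)\,X_A^\top$, which is exactly the claimed identity on basis vectors.

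The one step I would treat with care—and the only genuine subtlety here—is the emergence of the transpose $X_A^\top$, rather than $X_A$ or the adjoint $X_A^\dagger$. This is dictated by the convention in \eqref{operator-vector}, under which a ket on the input system is mapped to a bra, i.e.\ $\ket{i}_A\mapsto\bra{i}_A$; this is precisely the transpose with respect to the fixed computational basis and, importantly, involves no complex conjugation of the matrix entries $(X_A)_{ki}$, which is why the adjoint does not appear. Fixing the basis once and for all throughout the computation—so that $(X_A^\top)_{ik}=(X_A)_{ki}$ holds with no conjugates—is the place where a conjugation-type error could creep in; the remainder is routine bookkeeping that extends to arbitrary $\ket{\psi}$ by linearity.
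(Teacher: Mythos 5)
Your proof is correct. Note, however, that the paper does not actually prove this lemma: it is listed among the auxiliary results with the proof deferred to the cited references (\cite[Section 2.4]{Watrous} and \cite[Lemmas 11--12]{dupuis2015chain}). Your argument---reduce to product basis vectors by linearity of both sides, expand $X_A\ket{i}_A$ and $Y_B\ket{j}_B$ in the fixed computational basis, apply the defining relation \eqref{operator-vector}, and refactor the double sum---is precisely the standard verification given in those sources, so there is nothing to fault. Your emphasis on why the transpose $X_A^\top$ (and not the adjoint $X_A^\dagger$) appears is also the right point to flag: since \eqref{operator-vector} turns kets into bras with respect to a fixed basis without any complex conjugation of coefficients, one has $\sum_k (X_A)_{ki}\bra{k}_A=\bra{i}_A X_A^\top$, which is the only step where a conjugation error could arise.
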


\section{Stronger convexity/concavity statements}
Leveraging the DPI result from Theorem~\ref{DPItheorem}, and applying a standard argument that connects DPI with convexity, we derive additional results concerning the convexity and concavity properties.

\begin{proposition}
Let $(\alpha,z,\lambda) \in \D$. Then, the function 
\begin{align}
    \rho \mapsto \exp(\frac{1-\alpha}{1-\lambda(1-\alpha)} H^\lambda_{\alpha,z}(A|B)_{\rho} )
\end{align} 
is concave for $\alpha < 1$ and convex for $\alpha > 1$.
In particular, the function $\rho_{AB}\rightarrow H_{\alpha,z}^\lambda(A|B)_{\rho}$ is concave for $\alpha<1$ and quasi-concave for $\alpha>1$.
\end{proposition}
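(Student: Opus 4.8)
The plan is to run the standard ``DPI $\Rightarrow$ (quasi-)concavity'' argument, reusing the auxiliary function already isolated in the proof of Theorem~\ref{DPItheorem}. Recall that there we set
\[
f(\rho_{AB}) := \big(Q_{\alpha,z}^\lambda(A|B)_\rho\big)^{\frac{1}{1-\lambda(1-\alpha)}} = \exp\!\Big(\tfrac{1-\alpha}{1-\lambda(1-\alpha)}\,H^\lambda_{\alpha,z}(A|B)_\rho\Big),
\]
and established that $f$ is positively homogeneous and additive under direct sums (via Proposition~\ref{classical conditioning}), i.e.\ $f\big(\sum_i X^i_{AB}\otimes\ketbra{i}{i}_{B'}\big) = \sum_i f(X^i_{AB})$. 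These two structural properties are exactly what converts the data-processing inequality into a statement about mixtures. Note the logical direction is opposite to Section~\ref{sec:DPI}: there convexity gave DPI, here DPI gives convexity.

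First I would fix two states $\rho^0_{AB},\rho^1_{AB}$ and $p\in[0,1]$, introduce a classical flag register $X$ attached to Bob, and form the block-diagonal state $\rho_{ABX} = p\,\rho^0_{AB}\otimes\ketbra{0}{0}_X + (1-p)\,\rho^1_{AB}\otimes\ketbra{1}{1}_X$. Direct-sum additivity together with positive homogeneity then gives $f(\rho_{ABX}) = p\,f(\rho^0)+(1-p)\,f(\rho^1)$. On the other hand, the partial trace $\Tr_X$ is a channel acting on Bob's composite system, so Theorem~\ref{DPItheorem} (equivalently Proposition~\ref{DPItheoremBob}) yields $H^\lambda_{\alpha,z}(A|BX)_{\rho_{ABX}} \le H^\lambda_{\alpha,z}(A|B)_{p\rho^0+(1-p)\rho^1}$.

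The decisive step is then to apply the monotone map $t\mapsto \exp\big(\tfrac{1-\alpha}{1-\lambda(1-\alpha)}\,t\big)$ to this inequality, and the sign of the multiplier does all the work. I would verify that $1-\lambda(1-\alpha)>0$ throughout $\D$: for $\alpha<1$ the bound $\lambda\le1$ gives $1-\lambda(1-\alpha)\ge\alpha>0$, while for $\alpha>1$ the constraint $\lambda\ge 1+\tfrac{z}{1-\alpha}$ gives $1-\lambda(1-\alpha)\ge\alpha-z\ge0$, with equality only at the single boundary point $z=\alpha,\ \lambda=-1/(\alpha-1)$, which can be absorbed by continuity of $f$ (Proposition~\ref{Continuity}). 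Consequently the multiplier $\tfrac{1-\alpha}{1-\lambda(1-\alpha)}$ is strictly positive for $\alpha<1$ and strictly negative for $\alpha>1$. In the first case the map is increasing, so $f(\rho_{ABX})\le f(p\rho^0+(1-p)\rho^1)$, i.e.\ $p\,f(\rho^0)+(1-p)\,f(\rho^1)\le f(p\rho^0+(1-p)\rho^1)$, which is concavity; in the second case the map reverses the inequality, yielding convexity.

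Finally, for the ``in particular'' clause I would transport these back to $H^\lambda_{\alpha,z}$ through $H = \tfrac{1}{c}\log f$ with $c=\tfrac{1-\alpha}{1-\lambda(1-\alpha)}$. For $\alpha<1$, $c>0$ and $f$ is concave and positive, so $\log f$ is concave (composition of the concave increasing $\log$ with a concave map) and hence $H$ is concave. For $\alpha>1$, $c<0$ and $f$ is convex, so each superlevel set $\{H\ge t\}=\{f\le e^{ct}\}$ is a sublevel set of a convex function, hence convex, giving quasi-concavity. The only genuinely delicate point is the sign and degeneracy analysis of $1-\lambda(1-\alpha)$ on the boundary of $\D$; everything else follows directly from the machinery already in place.
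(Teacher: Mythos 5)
Your proof is correct and follows essentially the same route as the paper's: DPI under partial trace of a classical flag register on Bob's side, combined with the closed-form evaluation of $(Q_{\alpha,z}^\lambda)^{\frac{1}{1-\lambda(1-\alpha)}}$ on classically flagged states (Proposition~\ref{classical conditioning}, which you repackage as direct-sum additivity of $f$), followed by the sign analysis of the exponent $\frac{1-\alpha}{1-\lambda(1-\alpha)}$. Your only additions are the explicit verification that $1-\lambda(1-\alpha)>0$ on $\D$ and the spelled-out quasi-concavity argument for $\alpha>1$, both of which the paper leaves implicit.
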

\begin{proof}
Let us consider the case $\alpha < 1$. From the DPI under partial trace on the conditioning system $B$', we have
\begin{align}
\opt_{\sigma_B}Q_{\alpha,z}^\lambda\Big(\sum_i p_i \rho_{AB}^i\Big|\sigma_B\Big)& \geq \opt_{\sigma_{BB'}}Q_{\alpha,z}^\lambda\Big(\sum_{i} p_i \rho^i_{AB} \otimes \ketbra{i}{i}_{B'}\Big|\sigma_{BB'}\Big) \\
&= \Bigg(\sum_i p_i \Big(\opt_{\sigma_B}Q_{\alpha,z}^\lambda(\rho^i_{AB}\big|\sigma_B)\Big)^{\frac{1}{1-\lambda(1-\alpha)}}\Bigg)^{1-\lambda(1-\alpha)},
\end{align}
where, in the last line, we solved the optimization on the classical register (see Proposition~\ref{classical conditioning}).
The case $\alpha > 1$ is analogous. The statement for \( H^\lambda_{\alpha,z} \) follows by the properties of the logarithm of a concave or convex positive function.
\end{proof}
\begin{remark}
Note that for the case $\lambda > 0$, the previous statement gives a stronger concavity/convexity compared to the usual concavity/convexity of $\rho_{AB} \rightarrow Q_{\alpha,z}^\lambda(A|B)_{\rho_{AB}}$. However, in the case $\lambda < 0$, the function $\rho_{AB} \rightarrow Q_{\alpha,z}^\lambda(A|B)_{\rho_{AB}}$ is neither concave nor convex, as demonstrated by examining its behavior under the mixture of states $\sum_{i} p_i \rho^i_{ABB'}$, where $\rho_{ABB'}= \rho^i_{AB} \otimes \ketbra{i}{i}_{B'}$. However, log-convexity/concavity still holds as established in Proposition~\ref{log-concavity}.
\end{remark}

\newpage
\section{List of symbols}

The table below presents the notation employed throughout this paper.

    \begin{table}[h!]
\centering
\renewcommand{\arraystretch}{1.3}
\begin{tabular}{c l}
\toprule
\textbf{Symbol} & \textbf{Definition} \\
\midrule
$A$, $B$, $C$ & \parbox[t]{12cm}{\raggedright Quantum systems (subsystems)} \\
$\rho_{AB}$ & \parbox[t]{12cm}{\raggedright Bipartite quantum state of systems $A$ and $B$} \\
$\sigma_B$, $\sigma_C$ & \parbox[t]{12cm}{\raggedright Quantum states on systems $B$ and $C$} \\
$D(\rho \|\sigma)$ & \parbox[t]{12cm}{\raggedright Umegaki relative entropy} \\
$\widetilde{D}_\alpha(\rho\|\sigma)$ & \parbox[t]{12cm}{\raggedright Sandwiched R\'enyi relative entropy} \\
$\widebar{D}_\alpha(\rho\|\sigma)$ & \parbox[t]{12cm}{\raggedright Petz--R\'enyi relative entropy} \\
$D_{\alpha,z}(\rho \|\sigma)$ & \parbox[t]{12cm}{\raggedright $\alpha$-$z$ R\'enyi relative entropy} \\
$H(A|B)_\rho$ & \parbox[t]{12cm}{\raggedright Conditional von Neumann entropy} \\
$\widetilde{H}_{\alpha}^\uparrow(A|B)_\rho$ & \parbox[t]{12cm}{\raggedright Sandwiched conditional R\'enyi entropy (arrow up) at $\rho_{AB}$} \\
$\widetilde{H}_{\alpha}^\downarrow(A|B)_\rho$ & \parbox[t]{12cm}{\raggedright Sandwiched conditional R\'enyi entropy (arrow down) at $\rho_{AB}$} \\
$\widebar{H}^\uparrow_{\alpha}(A|B)_\rho$ & \parbox[t]{12cm}{\raggedright Petz conditional R\'enyi entropy (arrow up) at $\rho_{AB}$} \\
$\widebar{H}^\downarrow_{\alpha}(A|B)_\rho$ & \parbox[t]{12cm}{\raggedright Petz conditional R\'enyi entropy (arrow down) at $\rho_{AB}$} \\
$H^\uparrow_{\alpha,z}(A|B)_\rho$ & \parbox[t]{12cm}{\raggedright $\alpha$-$z$ conditional R\'enyi entropy (arrow up) at $\rho_{AB}$} \\
$H^\downarrow_{\alpha,z}(A|B)_\rho$ & \parbox[t]{12cm}{\raggedright $\alpha$-$z$ conditional R\'enyi entropy (arrow down) at $\rho_{AB}$} \\
$H^\lambda_{\alpha,z}(A|B)_{\rho}$ & \parbox[t]{12cm}{\raggedright New conditional entropy at $\rho_{AB}$} \\
$\D$ & \parbox[t]{12cm}{\raggedright DPI region of the new conditional entropy $H^\lambda_{\alpha,z}$} \\
$(\hat{\alpha},\hat{z},\hat{\lambda})$ & \parbox[t]{12cm}{\raggedright Dual parameters to $(\alpha,z,\lambda)$ defined through entropic duality relations \eqref{duality2}} \\
$\opt$ & \parbox[t]{12cm}{\raggedright Optimization over quantum states; it can be either an $\inf$ or $\sup$, depending on whether the exponent of the optimized states is negative or positive.} \\
$\M_{A \rightarrow A'}$ & \parbox[t]{12cm}{\raggedright Quantum channel from system $A$ to system $A'$} \\
$\Op_{A \rightarrow B}$ & \parbox[t]{12cm}{\raggedright Operator-vector identification defined through \eqref{operator-vector}} \\
$Q_{\alpha,z}^\lambda(\rho_{AB}|\sigma_B)$ & \parbox[t]{12cm}{\raggedright Trace functional defined in \eqref{Q}} \\
$Q_{\alpha,z}^\lambda(A|B)_\rho$ & \parbox[t]{12cm}{\raggedright Optimized trace functional defined in \eqref{Q2}} \\
$\Upsilon_{\alpha,z}^\lambda(\rho,\tau,\sigma)$ & \parbox[t]{12cm}{\raggedright Trivariate function defined in Def.~\ref{trivariate}} \\
$\Trm(\cdot)$ & \parbox[t]{12cm}{\raggedright Trace operation} \\
$\Trm_A(\cdot)$ & \parbox[t]{12cm}{\raggedright Partial trace operation with respect to subsystem $A$} \\
$A \ll B$ & \parbox[t]{12cm}{\raggedright $A$ is dominated by $B$, i.e. the kernel of $A$ contains the kernel of $B$} \\
$|A|$ & \parbox[t]{12cm}{\raggedright The absolute value of $A$; $(AA^\dagger)^\frac{1}{2}$} \\
$\|A\|_p$ & \parbox[t]{12cm}{\raggedright Schatten $p$-norm; $\left(\Trm|A|^p\right)^\frac{1}{p}$} \\
\bottomrule
\end{tabular}
\caption{Symbols and definitions}
\end{table}
\end{document}